\documentclass[11pt,letter]{article}

\usepackage[T1]{fontenc}

\usepackage{fullpage}
\usepackage{graphicx} 
\usepackage{dcolumn}
\usepackage{bm}
\usepackage{amsmath}
\usepackage{amsfonts}
\usepackage{amssymb}
\usepackage{appendix}
\usepackage{comment}
\usepackage{cite}
\usepackage{verbatim}
\usepackage{cases}
\usepackage{bbm}
\usepackage{caption,mdframed,setspace}
\usepackage{alltt}
\usepackage{amsmath,amsfonts,amsthm,alltt,amssymb}

\usepackage[usenames,dvipsnames,svgnames,table]{xcolor}

\usepackage[inline]{enumitem}

\usepackage[procnumbered,ruled,vlined,linesnumbered]{algorithm2e}
\DontPrintSemicolon

\makeatletter
\newcommand{\AlgoResetCount}{\renewcommand{\@ResetCounterIfNeeded}{\setcounter{AlgoLine}{0}}}
\newcommand{\AlgoNoResetCount}{\renewcommand{\@ResetCounterIfNeeded}{}}
\newcounter{AlgoSavedLineCount}

\makeatother

\usepackage{mathrsfs}

\usepackage[colorlinks=true,
citecolor=Green]{hyperref}
\usepackage{url}







\usepackage{aliascnt}

\newtheorem{theorem}{Theorem}[section]

\newaliascnt{corollary}{theorem}

\aliascntresetthe{corollary}

\newaliascnt{lemma}{theorem}
\newtheorem{lemma}[lemma]{Lemma}
\aliascntresetthe{lemma}

\newaliascnt{definition}{theorem}
\newtheorem{definition}[definition]{Definition}
\aliascntresetthe{definition}

\newtheorem{claim}[theorem]{Claim}
\newtheorem{fact}[theorem]{Fact}

\theoremstyle{definition}

\def\equationautorefname~#1\null{Equation~(#1)\null}	

\usepackage{thmtools}
\usepackage{thm-restate}
\usepackage{thm-autoref}


\global\long\def\boldVar#1{\mathbf{#1}}
\global\long\def\mvar#1{\boldVar{#1}}


\newcommand{\mxhat}{\widehat{\mvar{X}}}

\newcommand{\prob}[2]{\mathbb{P}_{#1}\left[ #2 \right]}

\def\expec#1#2{{\mathbb{E}}_{#1}\left[ #2 \right]}

\def\abs#1{\left|#1  \right|}
\newcommand{\norm}[1]{{\left\| #1 \right\|}}
\newcommand{\normInline}[1]{{\| #1 \|}}

\def\defeq{\stackrel{\mathrm{def}}{=}}
\def\setof#1{\left\{#1  \right\}}

\def\pr#1{\left ( #1 \right )}

\DeclareMathOperator*{\argmin}{arg\,min}

\newcommand\Otil{\widetilde{O}}
\newcommand\rea{\mathbb R}

\newcommand{\zero}{\mathbf{0}}
\DeclareMathOperator{\nnz}{nnz}
\newcommand{\vecind}{\boldsymbol{\mathit{\chi}}}

\newcommand{\vecone}{\bm{1}}
\newcommand{\veczero}{\bm{0}}
\newcommand{\matzero}{\bm{0}}

\newcommand{\vzero}{\veczero}
\newcommand{\vones}{\vecone}
\newcommand{\ma}{\mvar{A}}

\newcommand{\md}{\mvar{D}}

\newcommand{\mI}{\mvar{I}}
\newcommand{\mm}{\mvar{M}}
\newcommand{\ms}{\mvar{S}}
\newcommand{\mU}{\mvar{U}}

\newcommand{\mlap}{{\mvar{{{L}}}}}
\newcommand{\mzero}{\mvar{0}}
\newcommand{\E}{\mathbb{E}}
\newcommand{\R}{\rea}

\newcommand{\otilde}{\widetilde{O}}

\newcommand{\mApxNorm}{\mvar{F}}

\newcommand{\Ltil}{\widetilde{\mvar{L}}}
\newcommand{\Fhat}{\widehat{\mvar{F}}}
\newcommand{\pseudoOp}{\dagger}
\newcommand{\pseudoRoot}{\dagger/2}

\renewcommand{\sc}[2]{\textsc{Sc}\left(#1, #2\right)}

\renewcommand{\u}[1]{\mvar{U}_{#1}}
\renewcommand{\c}[1]{\mvar{C}_{#1}}


\newcommand{\lapid}{\mvar{\Pi}}

\newcommand{\eulLU}{\textsc{EulerianLU}}

\newcommand{\matlow}{\boldsymbol{\mathit{{\mathfrak{L}}}}}

\newcommand{\matup}{\boldsymbol{\mathit{{\mathfrak{U}}}}}

\def\aa{\pmb{\mathit{a}}}
\newcommand\bb{\boldsymbol{\mathit{b}}}
\newcommand\cc{\boldsymbol{\mathit{c}}}

\renewcommand\ll{\boldsymbol{\mathit{l}}}

\newcommand\rr{\boldsymbol{\mathit{r}}}

\newcommand\vv{\boldsymbol{\mathit{v}}}

\newcommand\yy{\boldsymbol{\mathit{y}}}
\newcommand\zz{\boldsymbol{\mathit{z}}}
\newcommand\xx{\boldsymbol{\mathit{x}}}

\newcommand\xxhat{\boldsymbol{\mathit{\widehat{x}}}}
\newcommand\xxtil{\boldsymbol{\mathit{\tilde{x}}}}

\renewcommand\AA{\mvar{A}}
\newcommand\BB{\mvar{B}}
\newcommand\CC{\mvar{C}}
\newcommand\DD{\mvar{D}}

\newcommand\II{\mvar{I}}

\newcommand\NN{\mvar{N}}
\newcommand\MM{\mvar{M}}
\newcommand\LL{\mvar{L}}
\newcommand{\PP}{\mvar{P}}
\newcommand{\QQ}{\mvar{Q}}
\newcommand\RR{\mvar{R}}
\renewcommand\SS{\mvar{S}}
\newcommand\TT{\mvar{T}}
\newcommand\UU{\mvar{U}}
\newcommand\WW{\mvar{W}}
\newcommand\VV{\mvar{V}}
\newcommand\XX{\mvar{X}}

\newcommand\ZZ{\mvar{Z}}

\newcommand{\vstar}[2]{\textsc{St}\!\left[#1\right]_{#2}}

\newcommand{\dlap}{\LL}

\newcommand{\mdir}{\NN}

\newcommand\wprob{\delta}
\newcommand{\eps}{\epsilon}

\DeclareMathOperator*{\im}{im}

\DeclareMathOperator*{\expct}{\mathbb{E}}

\newcommand{\trp}{\top}


\newcommand{\todo}[1]{{\bf \color{red} TODO: #1}}
\newcommand{\mc}[1]{{\bf \color{green} Michael: #1}}
\newcommand{\jk}[1]{{\bf \color{green} Jon: #1}}
\newcommand{\jp}[1]{{\bf \color{green} John: #1}}
\newcommand{\rp}[1]{{\bf \color{green} Richard: #1}}
\newcommand{\as}[1]{{\bf \color{green} Aaron : #1}}
\newcommand{\sidford}[1]{\as{#1}}
\newcommand{\rk}[1]{{\bf \color{green} Rasmus: #1}}



\newcommand{\todolow}[1]{{\bf \color{magenta} todolow: #1}}
\newcommand{\todolowest}[1]{{\bf \color{gray} todolowest: #1}}
\newcommand{\mhelp}[1]{{\reversesmarginpar{\color{cyan} #1}}}


\renewcommand{\todo}[1]{}
\renewcommand{\mc}[1]{}
\renewcommand{\jk}[1]{}
\renewcommand{\jp}[1]{}
\renewcommand{\rp}[1]{}
\renewcommand{\as}[1]{}
\renewcommand{\rk}[1]{}
\renewcommand{\todolow}[1]{}
\renewcommand{\todolowest}[1]{}
\renewcommand{\mhelp}[1]{}



\newcommand{\SinglePhase}{\textsc{SinglePhase}}
\newcommand{\SparsifyEulerian}{\textsc{SparsifyEulerian}}

\renewcommand{\Re}{\R}

\begin{document}

\title{
Solving Directed Laplacian Systems in Nearly-Linear Time through Sparse LU Factorizations
}
\date{\today}
\author{Michael B. Cohen\thanks{This material is based upon work supported by the National Science Foundation under Grant No. 1111109.}\\
MIT\\
micohen@mit.edu\and Jonathan Kelner\footnotemark[1] \\
MIT\\
kelner@mit.edu\and Rasmus Kyng\thanks{This material is based upon work supported by ONR Award N00014-16-1-2374.}\\
Yale University\\
rjkyng@gmail.com \and  John Peebles\thanks{This material is based upon work supported by the National Science
Foundation Graduate Research Fellowship under Grant No. 1122374 and
by the National Science Foundation under Grant No. 1065125.}\\
MIT\\
jpeebles@mit.edu\and  Richard Peng\thanks{This material is based upon work supported by the National Science
	Foundation under Grant No. 1637566.}\\
Georgia Tech \\
rpeng@cc.gatech.edu\and  Anup B. Rao\\
Georgia Tech\\
anup.rao@gatech.edu\and  Aaron Sidford\\
Stanford University\\
sidford@stanford.edu }

\maketitle

\begin{abstract}
In this paper, we show how to solve directed Laplacian systems in nearly-linear time. Given a linear system in an $n \times n$ Eulerian directed Laplacian with $m$ nonzero entries, we show how to compute an $\epsilon$-approximate solution in time $O(m \log^{O(1)} (n) \log (1/\epsilon))$. Through reductions from [Cohen et al. FOCS'16], this gives the first nearly-linear time algorithms for computing $\epsilon$-approximate solutions to row or column diagonally dominant linear systems (including arbitrary directed Laplacians) and computing $\epsilon$-approximations to various properties of random walks on  directed graphs, including stationary distributions, personalized PageRank vectors, hitting times, and escape probabilities. These bounds improve upon the recent almost-linear algorithms of [Cohen et al. STOC'17],
which gave an algorithm to solve Eulerian Laplacian systems in time  
$O((m+n2^{O(\sqrt{\log n \log \log n})})\log^{O(1)}(n  \epsilon^{-1}))$.

To achieve our results, we provide a structural result that we believe is of independent interest. We show that Eulerian Laplacians (and therefore the Laplacians of all strongly connected directed graphs) have sparse approximate LU-factorizations. That is, for every such directed Laplacian $\mlap$, there is a lower triangular matrix $\matlow$ and an upper triangular matrix $\matup$, each with at most $\tilde{O}(n)$ nonzero entries, such that their product $\matlow \matup$ spectrally approximates $\mlap$ in an appropriate norm. This claim can be viewed as an analogue of recent work on sparse Cholesky factorizations of Laplacians of undirected graphs. We show how to construct such factorizations in nearly-linear time and prove that, once constructed, they yield nearly-linear time algorithms for solving directed Laplacian systems.
\end{abstract}


\newpage
\section{Introduction}
\label{sec:introduction}

A matrix $\mm$ is \emph{(column) diagonally dominant} if  $\mm_{ii}\geq \sum_{j\neq i}|\mm_{ji}|$ for all $i$.  Such matrices, which notably include Laplacians of graphs, are ubiquitous in computer science, with applications spanning scientific computing, graph theory, machine learning, and the analysis of random processes, among others.
For \emph{symmetric} diagonally dominant matrices, which include the Laplacians of  \emph{undirected} graphs, Spielman and Teng gave an algorithm in 2004~\cite{ST04} to solve the corresponding linear systems in nearly-linear time. Since then, the ability solve such linear systems has emerged as a powerful algorithmic primitive, serving as a crucial subroutine in the design of faster algorithms for a long list of problems (e.g.,\cite{LeeS14,Madry16,Madry13,CKMST,madry2015fast,CohenMSV17,KelnerMillerPeng,KelnerMadry,spielman2011graph,DaitchSpielman,ShermanBreaking,KelnerMaymounkov,OSV}).

Moreover, the pursuit of faster, simpler, and more parallelizable Laplacian system solvers has driven numerous algorithmic advances, comprising both improvements in the underlying linear algebraic machinery~\cite{SpielmanTengSolver:journal,KoutisMP10,KoutisMP11,KelnerOSZ13,lee2013efficient,CohenKMPPRX14,PengS14,KyngLPSS16,KyngS16} and applications of their ideas and techniques to problems in other domains (e.g.,\cite{KLOS14,DurfeeKPRS16,Madry10,lee2013efficient,AbrahamDKKP16:arxiv}).
 
However, while this approach has been incredibly successful for symmetric linear systems and undirected graph optimization problems, comparable results for their asymmetric or directed counterparts have proven quite elusive. 
In particular, the techniques for solving Laplacian systems seemed to rely intrinsically on multiple properties of undirected graphs, and, until recently, the best algorithms in the directed case simply treated the Laplacians as unstructured matrices and applied general linear algebraic routines, leading to super-quadratic running times.

Two recent papers~\cite{cohen2016faster,CohenKPPRSV16} suggested that it may be possible to close this gap, potentially laying the foundation for a new class of nearly-linear time algorithms for directed graphs and asymmetric linear systems.
The first paper~\cite{cohen2016faster} showed that linear systems involving several natural classes of asymmetric matrices, including Laplacians of directed graphs, general square column diagonally dominant matrices, and their transposes (called \emph{row diagonally dominant} matrices), 
could be reduced 
 with only polylogarithmic overhead to solving linear systems in the Laplacians of Eulerian graphs.
They further showed how to use these solvers, with only polylogarithmic overhead, to compute a wide range of fundamental quantities associated with random walks on directed graphs, including the stationary distribution, personalized PageRank vectors, hitting times, and escape probabilities.
The paper combined these reductions with an algorithm to solve Eulerian Laplacian systems in time~$\otilde(m^{3/4}n + mn^{2/3})$ to achieve faster (but still significantly super-linear) algorithms for all of these problems.%
\footnote{ Following the notation and terminology of the previous papers, we use $\otilde$ notation to suppress terms that are polylogarithmic in $n$, the natural condition number of the problem $\kappa$, and the desired accuracy~$\epsilon$. We use the term ``nearly-linear'' for algorithms that run in time $\tilde{O}(m)=O(m) \log^{O(1)}(n \kappa \epsilon)$, and ``almost linear'' for algorithms that run in time $O(m(n\kappa \epsilon^{-1})^{o(1)})$. 
}
The second paper~\cite{CohenKPPRSV16} gave an improved solver for Eulerian systems that runs in almost-linear time $\otilde (m+n2^{O(\sqrt{\log n \log \log n})})$, 
providing almost-linear time algorithms for all of the problems reduced to such a solver in~\cite{cohen2016faster}.

In this paper, we close the algorithmic gap between the directed and undirected cases (up to polylogarithmic factors) by providing an algorithm to solve Eulerian Laplacian systems in time $\tilde{O}(m)$.  Combining this with the reductions from~\cite{cohen2016faster} yields the first nearly-linear time algorithms for all of the problems listed above.


To achieve our results, we prove a structural result that we believe to be of independent interest.
 We show that Laplacians of strongly connected directed graphs have sparse approximate LU-factorizations. 
 More precisely, we show that for every such directed Laplacian $\mlap \in \R^{n \times n}$, there is a lower triangular matrix $\matlow \in \R^{n \times n}$ and an upper triangular matrix $\matup \in \R^{n  \times n}$ such that  both matrices have at most $\otilde(n)$ nonzero entries, and $\matlow \matup$ spectrally approximates $\mlap$ in an appropriate norm. We show how to construct such factorizations in nearly-linear time for the special case of Laplacians of strongly connected directed graphs where weighted in-degree equals weighted out-degree for every vertex.
Such Laplacians are known as Eulerian Laplacians.
We prove that once constructed, these factorizations yield nearly-linear time algorithms for solving directed Laplacian systems, and also yield sparse LU-factorizations of all Laplacians of strongly connected directed graphs.

This claim is analogous to the result first obtained by Lee, Peng, and Spielman in~\cite{LeePS15}, which showed that undirected Laplacians have sparse Cholesky factorizations, and our algorithm builds on a combination of the techniques in~\cite{LeePS15, KyngS16} as well as the sparsification machinery developed in~\cite{CohenKPPRSV16}.
Unfortunately, as we will discuss below (\autoref{sub:main_ideas}), there were several aspects of the approach in~\cite{KyngS16} that relied strongly on properties of symmetric Laplacians that are not present in the asymmetric setting, and obtaining a similar result for directed Laplacians required the development of new algorithmic and analytical techniques.



\subsection{Our Results}

The main technical result in this paper is the following theorem, which asserts that we can compute sparse approximate LU factorizations of an Eulerian Laplacian in nearly-linear time. Eulerian Laplacians have a special property that a standard symmetrization of these ($\mlap + \mlap^T$) are PSD matrices and define useful norms. Our main algorithmic result for Eulerian Laplacians is:




\begin{restatable}[Sparse LU]{theorem}{thmeulLU}
\label{thm:eulLU}
Given an Eulerian Laplacian $\mlap \in \R^{n \times n}$ with $m$
nonzero entries and any $\epsilon \in (0, 1/2)$, and $\delta < 1/n$,
in ${\Otil(m + n \eps^{-8} \log^{O(1)}(1/\delta) )} $ time,
with probability at least $1-O(\delta)$
the algorithm $\eulLU(\mlap,\delta,\eps)$,
produces lower and upper triangular matrices $\matlow \in \R^{n \times n}$ and $\matup \in \R^{n \times n}$ such that for some symmetric PSD matrix $\mApxNorm \approx_{\text{poly}(n)} (\mlap + \mlap^\intercal)/2$,
 $(\matlow \matup)^\top \mApxNorm^{\pseudoOp} (\matlow \matup) \succeq 1/O(\log^2 n) \cdot \mApxNorm$, $\| \mApxNorm^{\pseudoOp / 2} (\mlap - \matlow \matup) \mApxNorm^{\pseudoOp / 2} \|_2 \leq \epsilon$ and $\max\{\nnz(\matlow),\nnz(\matup)\} \leq n \log^{O(1)}(1/\delta)  \cdot \epsilon^{-6}$.
\end{restatable}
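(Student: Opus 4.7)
The plan is to adapt the randomized sparse Cholesky framework of \cite{KyngS16} to the directed setting, using the Eulerian sparsification machinery from \cite{CohenKPPRSV16} to handle the Schur complements that arise during elimination. Concretely, I would pick a uniformly random elimination order, and maintain a current "remaining'' Eulerian (or approximately-Eulerian) Laplacian on the vertices not yet eliminated. At each step, when eliminating vertex $v$, the exact Schur complement contribution onto the remaining vertices is an outer-product type object (the in-star of $v$ times the out-star of $v$, divided by the diagonal entry at $v$); this is the directed analogue of the clique that appears in undirected Gaussian elimination. Rather than adding this (potentially dense) contribution, I would call the Eulerian sparsifier of \cite{CohenKPPRSV16} (or a clique-sparsification routine tuned for it) to replace it with a sparse directed object $\widehat{\cliq{v}{}}$ having $\otilde(\epsilon^{-O(1)})$ edges per eliminated vertex. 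The row/column of $v$ that we split off becomes the $v$-th column of $\matlow$ and row of $\matup$; the accumulated product $\matlow \matup$ is then the telescoping sum of (Laplacian minus sparsified Schur complements) across all eliminations.

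The analysis would proceed by defining a matrix martingale, where the $k$-th increment is the difference between the true Schur complement clique $\cliq{v_k}{}$ and its sparsified version $\widehat{\cliq{v_k}{}}$, measured in the norm induced by $\mApxNorm^{\pseudoOp/2}$. Each sparsification step is conditionally unbiased (or close to unbiased) in the appropriate asymmetric sense, and the per-step variance is controlled by the sparsifier guarantee from \cite{CohenKPPRSV16}; summing these via a matrix-Freedman-type inequality for non-symmetric increments yields the target bound $\|\mApxNorm^{\pseudoOp/2}(\mlap - \matlow\matup)\mApxNorm^{\pseudoOp/2}\|_2 \le \epsilon$. The PSD lower bound $(\matlow\matup)^\top \mApxNorm^{\pseudoOp} (\matlow\matup) \succeq \Omega(1/\log^2 n)\cdot \mApxNorm$ should fall out from combining this additive error bound with the fact that the symmetrization $(\mlap+\mlap^\top)/2$ is itself polynomially related to $\mApxNorm$, plus a standard triangle-inequality style argument in the $\mApxNorm$-norm.

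For the nearly-linear runtime, I would use the same batched/parallel elimination trick as \cite{LeePS15,KyngS16}: process the random order in phases where in each phase one eliminates a $1/2$-fraction of the remaining vertices simultaneously, relying on the fact that for a random subset the expected degree of a surviving vertex in the next Schur complement grows only by a constant factor, so the sum of degrees telescopes to $\otilde(m)$. Each sparsification call needs to run in time proportional to the number of edges in the clique it is sparsifying, which requires an efficient implementation of the \cite{CohenKPPRSV16} sparsifier on these specific star-star structures; here I would exploit that $v$'s clique is rank-one in a suitable sense, allowing sampling rather than explicit construction.

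The main obstacle, as the authors hint in \autoref{sub:main_ideas}, is that the undirected analysis uses two properties that fail here: (i)~Schur complements of PSD matrices remain PSD, giving a clean notion of spectral approximation preserved under elimination, and (ii)~the sampling distribution (leverage scores / effective resistances) is intrinsic and monotone. For Eulerian Laplacians, the Schur complement is only approximately Eulerian after sparsification, so one must argue that a small perturbation of Eulerian-ness is preserved under further elimination without blowing up the norm $\mApxNorm$, and one must design a sampling scheme whose importance weights can be computed and controlled in the asymmetric norm. Handling this drift---probably by periodically "re-Eulerianizing'' the remainder or by carrying a separate symmetric upper-bound matrix that dominates all intermediate symmetrizations---is where the $\epsilon^{-8}$ and $\log^{O(1)}$ factors in the stated runtime most plausibly come from, and is the technical crux of the proof.
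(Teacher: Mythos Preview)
Your proposal captures the high-level shape (randomized elimination, sparsify Schur complements, matrix martingale), but misses three ingredients that the paper identifies as essential and that are not supplied by simply porting \cite{KyngS16}.

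First, eliminating in a uniformly random order does not work in the directed setting. Unlike undirected Laplacians, the symmetrization of a Schur complement of an Eulerian Laplacian is \emph{not} spectrally dominated by the symmetrization of the original matrix; the paper notes that even for the directed cycle the blowup is linear in $n$. The paper's fix (Section~\ref{sec:SchurBounds}) is to restrict each phase to an $\alpha$-RCDD subset $F$ and prove that for such subsets $\mU_{\sc{\mlap}{F}} \preceq O(1)\,\mU_{\mlap}$, and moreover that this holds robustly under $1/2$-approximation of $\mlap$. Your ``random subset, degrees grow by a constant factor in expectation'' argument controls sparsity, not this spectral blowup; without the RCDD restriction the per-step variance bounds in your martingale would not be summable.

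Second, the norm matrix $\mApxNorm$ is not simply $(\mlap+\mlap^\top)/2$, and the lower bound $(\matlow\matup)^\top \mApxNorm^{\pseudoOp}(\matlow\matup)\succeq \Omega(1/\log^2 n)\,\mApxNorm$ does not follow from the additive error bound by any triangle-inequality argument. The paper takes $\mApxNorm = \sum_p \theta_p\, \mU_{\mvar{S}^{(i_p)}}$, a convex combination of the symmetrizations at the start of each phase, precisely because (a) each phase's error is small in its own $\mU_{\mvar{S}^{(i_p)}}$ norm, and (b) a separate argument (Lemmas~\ref{lem:FhatBound} and~\ref{lem:FhatAndF}, using nested-Schur-complement inequalities for the ``complicated'' operator $\c{\cdot}$) shows the final product satisfies the quadratic lower bound against this particular $\mApxNorm$. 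If you try to run the martingale directly in the $\mU_{\mlap}$ norm, the per-phase errors do not telescope, which is exactly the obstacle that kept \cite{CohenKPPRSV16} at almost-linear rather than nearly-linear time.

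Third, calling the \cite{CohenKPPRSV16} sparsifier on each biclique is not enough for the martingale: that sparsifier preserves degrees but is not an unbiased estimator, while independent sampling is unbiased but does not preserve degrees (so the result need not be Eulerian). The paper develops a new routine \textsc{SingleVertexElim} (Section~\ref{sec:SingleVertexElimination}) that achieves both simultaneously via a randomized flow-decomposition of the star, and this is what makes the matrix-Freedman step go through.
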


\rp{Can we just set $\delta = 1 / poly(n)$,
aka. do w.h.p.?
That way we don't have any $O(1)$s in exponents.}

\sidford{One reviewer asked for the version of this theorem for general directed Laplacians. It might be worth stating this (or maybe even something about RCDD more generally in this section) however I do not have a strong view.}


For simplicity, we assume all real number computations
are exact throughout this paper.
However, we believe a crude numerical stability analysis
for constant length inputs in the fixed point precision
model similar to the one sketched in Appendix A
of~\cite{cohen2016faster} is possible.  \jk{Saying that ``we believe'' a stability analysis to be possible feels little weak.  Should we strengthen the statement?  Is there any part of the algorithm where this isn't totally straightforward?} \sidford{Agreed, I am for strengthening the statement.}
\rp{My worry is that because we pivot $O(\log{n})$ rounds, if there is a $poly(n)$ blowup of error at each round, it may compound to an overhead in bit-length of $\log{n}$. I'm not sure if we have a statement like `if we perturb L, its Schur complement stays close in the RCDD sense'. In general I feel like there are strictly more moving pieces here because it's no longer `multiply a bunch of constant conditioned matrices together'.}

The conditions of this theorem can easily be shown to yield that the pseudoinverse of $\matlow \matup$ is a good preconditioner for solving $\mlap \xx = \bb$ for Eulerian $\mlap$. Consequently, we show in \autoref{sec:algorithmOverview} as a corollary of this main theorem that we can solve Eulerian Laplacian systems in nearly-linear time.

\begin{restatable}[Nearly-Linear Time Solver for Eulerian Laplacians]{corollary}{eulsolve}
\label{cor:solver}
Given an Eulerian Laplacian $\mlap \in \R^{n \times n}$ and a vector
$\bb \in \R^n$ with $\bb \perp \vecone$, and $\epsilon \in (0, 1/2)$,
in $O(m \log^{O(1)} n \log(1/\epsilon))$ time we can w.h.p. compute an
$\epsilon$-approximate solution $\xxtil$ to $\mlap \xx = \bb$ in the sense that $\norm{\xxtil- \mlap^{\pseudoOp} \bb}_{\u{\mvar{L}}} \leq \epsilon \norm{\mlap^\pseudoOp \bb}_{\u{\mvar{L}}}$ where $\u{\mlap} = (\mlap + \mlap^\trp)/2$.
\end{restatable}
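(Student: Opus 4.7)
The plan is to invoke \autoref{thm:eulLU} with accuracy $\eps_0 = \Theta(1/\log n)$, obtaining a sparse LU preconditioner, and then run preconditioned Richardson iteration. First, I would call \eulLU{} with $\eps_0 = c/\log n$ for a sufficiently small constant $c$ and $\delta = 1/\poly(n)$; by the theorem this takes $\tilde{O}(m)$ time and produces triangular factors $\matlow, \matup$ with $\nnz(\matlow), \nnz(\matup) = \tilde{O}(n)$, together with a symmetric PSD $\mApxNorm \approx_{\poly(n)} \u{\mlap}$ such that $\|\mApxNorm^{\pseudoOp/2}(\mlap-\matlow\matup)\mApxNorm^{\pseudoOp/2}\|_2 \le \eps_0$ and $(\matlow\matup)^\trp\mApxNorm^\pseudoOp(\matlow\matup) \succeq \Omega(1/\log^2 n)\,\mApxNorm$. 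Because $\matlow,\matup$ are triangular with $\tilde{O}(n)$ nonzeros, the map $\vv\mapsto(\matlow\matup)^\pseudoOp\vv$ can be computed in $\tilde{O}(n)$ time via forward/back substitution.

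Second, starting from $\xx_0 = \veczero$, I would iterate
\[
\xx_{k+1} = \xx_k - (\matlow\matup)^\pseudoOp(\mlap\xx_k - \bb).
\]
Since $\bb\perp\vecone$ and $\mlap\vecone = \mlap^\trp\vecone = \veczero$ for Eulerian $\mlap$, every residual lies in $\vecone^\perp$ and all iterates stay there. Each step costs $O(m) + \tilde O(n) = \tilde O(m)$.

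Third, I would analyze the error $\ee_k \defeq \xx_k - \mlap^\pseudoOp\bb$. On $\vecone^\perp$ we have $\ee_{k+1} = (\matlow\matup)^\pseudoOp(\matlow\matup - \mlap)\ee_k$, so
\[
\mApxNorm^{1/2}\ee_{k+1} = \bigl[\mApxNorm^{1/2}(\matlow\matup)^\pseudoOp\mApxNorm^{1/2}\bigr]\bigl[\mApxNorm^{\pseudoOp/2}(\matlow\matup-\mlap)\mApxNorm^{\pseudoOp/2}\bigr]\bigl[\mApxNorm^{1/2}\ee_k\bigr].
\]
Taking operator norms, the middle factor is at most $\eps_0$ by the first guarantee of \autoref{thm:eulLU}, while the first factor is at most $O(\log n)$ because $(\matlow\matup)^\trp\mApxNorm^\pseudoOp(\matlow\matup) \succeq \Omega(1/\log^2 n)\mApxNorm$ is equivalent to $\|\mApxNorm^{1/2}(\matlow\matup)^\pseudoOp\mApxNorm^{1/2}\|_2 \le O(\log n)$. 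Choosing $c$ so that $O(\log n)\cdot\eps_0 \le 1/2$ yields contraction by $1/2$ per step. After $K = O(\log(n/\eps))$ iterations, the $\mApxNorm$-norm of the error is below $\eps\cdot\poly(n)^{-1}\|\mApxNorm^{1/2}\mlap^\pseudoOp\bb\|_2$, which converts to the required $\u{\mlap}$-norm bound via $\mApxNorm \approx_{\poly(n)} \u{\mlap}$ at the cost of a further $\poly(n)$ factor that is absorbed by the extra $O(\log n)$ iterations. Since $\log(n/\eps) = O(\log n\cdot\log(1/\eps))$ for $\eps<1/2$, the total running time is $\tilde O(m) \cdot O(\log(n/\eps)) = O(m\log^{O(1)}n\log(1/\eps))$.

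The main obstacle is the asymmetry: without the usual Loewner-ordering machinery available for symmetric preconditioning, the contraction argument must go through the operator norm of the iteration matrix conjugated by $\mApxNorm^{1/2}$, combining both guarantees of \autoref{thm:eulLU} simultaneously. A secondary book-keeping item is ensuring that iterates and errors remain in $\vecone^\perp$ and that the $\poly(n)$ loss in the $\mApxNorm$-to-$\u{\mlap}$ norm conversion at the end is indeed subsumed by the $\log^{O(1)}(n)$ factor in the final runtime.
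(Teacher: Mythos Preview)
Your proposal is correct and follows essentially the same approach as the paper's own proof. The paper packages your contraction argument into \autoref{lem:ApproxPinv} (showing $(\matlow\matup)^\pseudoOp$ is an $\epsilon_0/\sqrt{\gamma}$-approximate pseudoinverse of $\mlap$ with respect to $\mApxNorm$) and then invokes \autoref{lem:precond_richardson}, whereas you inline the same computation directly; in both cases the key inputs are exactly the two guarantees of \autoref{thm:eulLU}, the contraction is measured in the $\mApxNorm$-norm, and the final conversion to the $\u{\mlap}$-norm uses $\mApxNorm \approx_{\poly(n)} \u{\mlap}$ at the cost of $O(\log n)$ extra iterations.
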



Combining this result with the reductions from general directed Laplacians to Eulerian Laplacians from~\cite{cohen2016faster} then leads to nearly-linear time algorithms for computing quantities related to directed random walks when the mixing time is $poly(n)$. Such a reduction is identical to the incorporation of the previous almost-linear time Eulerian solver outlined in Appendix D of~\cite{CohenKPPRSV16}. It results in running times that's $O(m \log^{O(1)}n \log^{O(1)}t_{mix} \log(1 / \epsilon)$: linear up to polylogarithmic factors in $n$, $m$, and a natural condition number-like quantity related to the mixing time fo the random walks, $t_{mix}$. The problems for which we readily obtain such a running time include:
\begin{itemize}
\item solving row diagonally dominant (or column diagonally dominant) linear systems including arbitrary (non-Eulerian) directed Laplacian systems
\item computing the stationary distribution of a Markov chain
\item personalized PageRank
\item obtaining polynomially good estimates of the mixing time of a Markov chain
\item computing the hitting time from one vertex to another
\item computing escape probabilities for any triple of vertices
\item approximately computing all-pairs commute times\footnote{If one wishes to compute commute times for a number of pairs greater than the number of edges in the graph, the runtime will be nearly-linear in the output size instead of the number of the number of edges.} 
\end{itemize}




\subsection{Overview of Approach}
\label{sub:main_ideas}%

Here we present the broad algorithmic and analytical approach we take to provably solving Eulerian Laplacian systems in nearly-linear time. Our algorithm is broadly inspired by recent nearly-linear time (undirected) Laplacian system solvers based on performing repeated vertex elimination and Schur complement sparsification to compute a sparse approximate Cholesky factorization \cite{KyngLPSS16,KyngS16}. 

Initially, one might hope to obtain our results by simply adapting the algorithm in \cite{KyngS16}, which is arguably the simplest known for provably solving undirected Laplacian systems in nearly-linear time. However, there are multiple substantial barriers to adapting any undirected Laplacian system solver based on Cholesky factorization to solve arbitrary Eulerian Laplacians. We obtain our results by providing new algorithmic and analytical tools to systematically overcome these issues. 

%
%
In the remainder of this subsection we outline the high level-ideas that underlie these tools and connect them to the later sections that present them rigorously. For the details of the mathematical notation we use see \autoref{sec:preliminaries}.

\paragraph{Vertex Elimination and Approximate LU-Factorization (\autoref{cha:dirlap})}
To solve a system of equations in an Eulerian Laplacian $\mlap \in \R^{V \times V}$,  we leverage the well-known fact that it suffices to compute a matrix $\ZZ$ that is a good \emph{preconditioner} or \emph{approximate pseudoinverse} (see \autoref{defn:approxInv}). As was shown in \cite{CohenKPPRSV16}, if $\ZZ$ can be applied in nearly-linear time and $\ZZ \mlap \approx \mI_{\im(\ZZ)}$ in a suitable norm, then this suffices to solve Laplacian systems in $\mlap$ in nearly-linear time through an iterative method known as \emph{preconditioned Richardson iteration} (see \autoref{lem:precond_richardson}). 

Whereas \cite{CohenKPPRSV16} constructed $\ZZ$ through repeated sparsification and squaring, here we take an approach inspired by~\cite{KyngLPSS16,KyngS16} based on  vertex elimination and sparsification of Schur complements. 
For any square matrix $\ma \in \R^{n \times n}$ where $F,C$ partition $[n]$ and $\ma_{FF}$ is invertible, it holds that
\[
\ma
=
\left[
\begin{array}{cc}
\ma_{FF} & \ma_{FC} \\
\ma_{CF} & \ma_{CC}
\end{array}
\right]
= 
\left[
\begin{array}{cc}
\mI & \mzero \\
\ma_{CF} \ma_{FF}^{-1} & \mI
\end{array}
\right]
\left[
\begin{array}{cc}
\ma_{FF} & \mzero \\
\mzero & \ma_{CC} - \ma_{CF} \ma_{FF}^{-1} \ma_{FC}
\end{array}
\right]
\left[
\begin{array}{cc}
\mI & \ma_{FF}^{-1} \ma_{FC} \\
\mzero & \mI
\end{array}
\right].
\]
We can easily invert the upper and lower triangular matrices in this factorization: 
\[
\left[
\begin{array}{cc}
\mI & \mzero \\
\ma_{CF} \ma_{FF}^{-1} & \mI
\end{array}
\right]^{-1}
=
\left[
\begin{array}{cc}
\mI & \mzero \\
-\ma_{CF} \ma_{FF}^{-1} & \mI
\end{array}
\right]
\text{ and }
\left[
\begin{array}{cc}
\mI & \ma_{FF}^{-1} \ma_{FC} \\
\mzero & \mI
\end{array}
\right]^{-1}
=
\left[
\begin{array}{cc}
\mI & -\ma_{FF}^{-1} \ma_{FC} \\
\mzero & \mI
\end{array}
\right]
,
\]
This reduces solving linear systems in $\ma$ to solving smaller linear systems in $\ma_{FF}$ and the \emph{Schur complement} $\ma_{CC} - \ma_{CF} \ma_{FF}^{-1} \ma_{FC}$.

The recent work of \cite{KyngLPSS16,KyngS16}
leverages the fact that, when $\ma$ is a undirected Laplacian, its Schur complement is as well. By cleverly choosing the block $F$ of vertices to eliminate, sparsifying the Schur complement, and recursing, these papers compute efficient preconditioners. 
In the work of \cite{KyngS16}, 
each $F$ is simply a single coordinate or vertex of the associated graph. In this case eliminating this vertex induces a Schur complement with the vertex removed and an appropriately weighted clique added to its neighbors. Repeated elimination and recursion then yields a $LU$ factorization of the original matrix. In the setting of  \cite{KyngS16} this matrix is symmetric and therefore the lower and upper triangular matrices are transposes of each other and the factorization is known as a \emph{Cholesky factorization}.
 By picking the vertex to eliminate randomly and sparsifying the cliques directly \cite{KyngS16} showed that a sparse approximate Cholesky factorization of the Laplacian can be obtained in nearly-linear time.

It is easy to see that for Eulerian Laplacians, it is also the case that its Schur complements are Eulerian Laplacians. Eliminating a single vertex corresponds to deleting that vertex and adding a weighted complete bipartite graph, or \emph{biclique}, from the vertices that yield incoming edges to the eliminated vertex to the vertices that yield outgoing edges from the eliminated vertex. Consequently, \cite{KyngS16} suggests a natural approach for solving directed Laplacian systems and producing sparse $LU$-factorizations of Eulerian Laplacians: repeatedly eliminate random vertices and directly sparsify the bicliques that eliminating these vertices induces.

Unfortunately, there are multiple issues with this approach. First, sparsifying directed Eulerian Laplacians is more delicate than sparsifying undirected Laplacians and even showing that sparsifiers exist was a major contribution of \cite{CohenKPPRSV16}. Second, analyzing the error induced by sparsification is much more difficult for directed Laplacians and reasoning about this error was the major contributor to the almost linear rather than nearly-linear running time of \cite{CohenKPPRSV16}. Third, the reasoning of \cite{KyngS16} required rather tight bounds on the sparsity induced by sparsification, and  showing that their algorithm works without this seems impossible. 

Nevertheless, we show how to overcome these issues by proving a biclique sparsification technique with favorable properties for our analysis, providing new results on the error induced by Schur complement sparsification as it relates to the quality of a preconditioner, and modifying the \cite{KyngS16} algorithm to only eliminate carefully chosen sets of vertices and alternate with full sparsification of the resulting graph. We discuss each of these further below. 

\paragraph{Unbiased Degree Preserving Vertex Elimination (\autoref{sec:SingleVertexElimination}):}

The first immediate issue in leveraging the insights from \cite{KyngS16} to develop a a single vertex elimination algorithm capable of producing sparse LU factorizations of an Eulerian Laplacians is to determine how to sparsify the bicliques that elimination creates. On the one hand, the analysis of \cite{KyngS16} crucially leverages that the sparsification procedure is unbiased, i.e. it produces a Laplacian that is in expectation the sparsified graph, with low variance so that matrix martingale arguments can be used to bound the error induced by the entire procedure. On the other the known sparsification results for Eulerian graphs \cite{CohenKPPRSV16} require that the sparsified graph is Eulerian and typically work by exactly preserving the in-degrees and out-degrees of vertices. 

Unfortunately, these two constraints on a sparsification procedure seem at odds with each other as independent sampling to create an unbiased estimator likely does not preserves degrees. Moreover, it is difficult to see how to completely drop either constraint. However, a different algorithmic primitive suggests optimism. Consider the edges of the vertex we eliminate.
If we treat these edges as flows coming into and out of the vertex, then running a standard flow path decomposition on the these flows will result in a collection of flow paths -- and if we treat these flow paths as edges that bypass the middle vertex, we get a graph on the neighbors of the eliminated vertex.
This graph will be sparse and have the same in- and out-degree for every neighbor as the biclique created by elimination. We show that it is possible to randomize this flow decomposition in such a way that the sparse graph becomes an unbiased estimator of the dense biclique.

Formally, in \autoref{sec:SingleVertexElimination} we remedy this issue by providing an efficient biclique sparsification procedure that is unbiased, with low variance, that exactly preserves the in and out degrees of vertices. This procedure works by careful sampling edges from the biclique so that no vertex has an excess of in-degree or out-degree and then carefully sampling from the remaining graph so that the conditional expectation is preserved. Crucially this procedure is not independent sampling from the entire graph. Moreover, we show that this procedure has low variance as desired. We hope the ideas of \autoref{sec:SingleVertexElimination} could be useful for sparsification more broadly.

%

\paragraph{Schur Complement Blow Up and Bounding Error Increase (\autoref{sec:SchurBounds} and \autoref{sec:ErrorAccumulation}):}

While our biclique sparsification procedure of \autoref{sec:SingleVertexElimination} provides hope of developing a single vertex elimination algorithm, it  leaves significant issues in bounding the error induced by sparsification.

First, a key fact used in the previous analysis~\cite{KyngS16} is that, when recursing, the Schur complement matrices that result are always spectrally dominated by the original matrix. This is used to show that the overall error of the algorithm is small as long as the error is small in every recursive phase. It is not even obvious how to define the right notion of ``spectral domination,'' in our context since the typical notion only applies to positive semidefinite matrices.
Moreover, once it is appropriately defined (using notions from~\cite{CohenKPPRSV16}), this spectral domination property fails even for simple graphs like the directed cycle, which can increase by a factor that is linear in the number of vertices. 

Nevertheless, in \autoref{sec:SchurBounds}
we show that it is possible to efficiently find a large fraction of the vertices such that eliminating them only causes a constant multiplicative growth in the relevant norms. 
We show that (highly) row-column diagonally dominant sets are well-behaved in this sense and thus we can simply perform random vertex elimination restricted to these large sets.
We note that though an analogous notion for undirected graphs was used in \cite{LeePS15,KyngLPSS15:arxiv}, it played a different role there. They show that Jacobi iterations for the Laplacian minor defined by a diagonally dominant set converges quickly and also, that a sparse approximation of Schur complement can be constructed efficiently. On the other hand, we use the row-column diagonal dominance  to show that an appropriately defined spectral norm of the Schur complement does not blow up. 

Unfortunately, if the multiplicative errors from eliminating the highly row-column diagonally dominant sets compounded, the overall error would still be too large for our elimination procedure to yield a good preconditioner. The previous solvers \cite{cohen2016faster, CohenKPPRSV16} essentially measured the quality of a preconditioner by bounding error in the norm induced by the undirected Laplacian obtained from an Eulerian Laplacian by removing the direction on every edge. However the compounding error from repeated Schur complement sparsification seems prohibitively large in this norm. In some sense, this was the critical issue which prevented \cite{CohenKPPRSV16} from achieving a nearly-linear running time. 


To circumvent this we develop a deeper understanding of the norms we can use to prove convergence of iterative solvers for directed Laplacian systems. We show that there is good deal of flexibility in the family of norms that can be used to prove convergence of iterative solvers for directed Laplacian systems (See \autoref{sec:ErrorAccumulation}).
%
%
%
Formally, we prove that based on our vertex elimination sequence, we can construct a new matrix (\autoref{eq:F}) that is both sufficient for proving the convergence of preconditioned Richardson iteration and for which the errors induced by repeated Schur complement sparsification is small. This matrix is a careful combination of the undirected graph norms induced by the sequence of Schur complement matrices we encountered.  We show this matrix spectrally dominates the ``large'' matrices created by vertex elimination and therefore allows us to  convert the local errors induced by one round of sparsification into a global bound on the quality of a preconditioner (\autoref{lem:CumulativeErrorNew}).
We ultimately prove that our solver converges quickly in this new norm, before converting the final error guarantee back to the norms we care about in \autoref{cor:solver} for application purposes. We believe that these tools for analyzing the error induced by sparsifying Schur complements could be useful for reasoning about asymmetric matrices more broadly.

\paragraph{Sparsification, Algorithm Design, and Analysis (\autoref{cha:dirlap} and \autoref{sec:singlePhase})}

The above 
gives the key building blocks for developing a nearly-linear time algorithm for solving Eulerian Laplacian systems. It suggests that picking a block of row-column diagonally dominant vertices, eliminating them with our unbiased degree-preserving vertex elimination procedure, and then analyzing this in a carefully chosen norm induced by the recursion could work. However, there is still one more issue that arises---the sparsity of the resulting Schur complements this procedure induces grows much faster then in \cite{KyngS16}. To overcome this, we simply leverage black box previous work from \cite{CohenKPPRSV16} that Eulerian Laplacians have sparsifiers which can be efficiently computed. Consequently, by occasionally sparsifying the resulting Schur complement we can fix this final issue.

Putting together these pieces yields our algorithm. \autoref{alg:sketch} gives a rough sketch of this algorithm, which alternates between vertex elimination on carefully chosen subsets and sparsification. 

\begin{algorithm}[H]							
\caption{Overall Sketch of the Algorithm}
\label{alg:sketch}
\SetAlgoVlined
					

\For{$i\gets 1$ \KwTo $O(\log n)$ }{
    Choose a highly row-column diagonally dominant set $F \subset V$ of size proportional to $|V|$. \label{algstep:ddSet}
    
    \For{$v \in F$  }{
    
    Vertex Elimination $v:$ Use the routine $\textsc{SingleVertexElim}$ to eliminate vertex $v$ and to add a sparse approximation of its Schur complement \label{algstep:vertexElim}
	
    Sparsify Graph: If the number of edges in the resulting graph is above a threshold, then use the routine  $\textsc{SparsifyEulerian}$ to sparsify the whole graph
    }
    $V \leftarrow V \backslash F$
    }
        
\end{algorithm}

Although this algorithm only describes how to eliminate vertices, as in Gaussian elimination, this also gives an LU factor decomposition. For a more detailed description of the algorithm, see \autoref{alg:multiPhase} and routines it calls. Details of \autoref{algstep:vertexElim} of \autoref{alg:sketch} can be found in \autoref{alg:rcdd}. See \autoref{alg:selim} for a description of $\textsc{SingleVertexElim}$. 

We provide the analysis of this procedure (assuming the pieces of the rest of the paper) in \autoref{cha:dirlap} and in \autoref{sec:singlePhase} we provide the careful martingale analysis of elimination and sparsification within a single highly row column diagonally dominant block. Though it has several pieces, ultimately, we believe this framework for solving Eulerian solvers is simpler than that in \cite{CohenKPPRSV16} and we hope that these pieces may find further use and possibly lead to even simpler algorithms.

\newcommand\AAcal{\boldsymbol{\mathcal{A}}}

\subsection{Paper Outline}

The presentation of these results is split into several pieces. In \autoref{sec:algorithmOverview} we give the main algorithm that computes an approximate LU-factorization of an Eulerian Laplacian and prove its correctness as well as \autoref{thm:eulLU} and \autoref{cor:solver}, assuming the analysis of later parts of the paper. In \autoref{sec:SingleVertexElimination}, we analyze our single vertex elimination or biclique sparsification procedure. In \autoref{sec:SchurBounds} we analyze the error induced by sparsifying Schur complements of highly diagonally dominant sets. In \autoref{sec:singlePhase} we perform the matrix martingale analysis of the error incurred by single vertex elimination and graph sparsification. In \autoref{sec:ErrorAccumulation} we provide proofs of facts regarding the norms we use to analyze the overall error of our approximate LU factorization procedure. In \autoref{sec:rcdd-analysis} we show how to find RCDD subsets and in \autoref{sec:mat_facts} we provide matrix facts we use throughout. 


\subsection{Preliminaries}
\label{sec:preliminaries}

\paragraph{Matrices:} For a square matrix $\mvar{M}$, we denote its symmetric part by 
$
\u{\mvar{M}} \defeq (1 /2 ) ( \mvar{M} + \mvar{M}^\top )
$. Typically, $\u{\mvar{M}} \succeq 0$ and the kernel of $\mvar{M}$ and $\mvar{M}^\top$ are same, i.e. $\ker(\mvar{M}) = \ker(\mvar{M}^\top)$.

For matrix $\ma \in \R^{n\times n}$ and subsets $F,C\subseteq[n]$
we let $\ma_{FC}\in\R^{F\times C}$ denote the sub-matrix of $\ma$
corresponding to the $F$ and $C$ entries. 
Furthermore, for $v\in\R^{N}$
and $F\subseteq[n]$ we let $v_{F}$ denote the restriction of $v$
to the coordinates of $v$. Consequently, if $F,C\subseteq[n]$ partition
$[n]$ then we have that 
\[
[\ma v]_{F}=\ma_{FF}v_{F}+\ma_{FC}v_{C}
\text{ and }
v^{\top}\ma v=v_{F}^{\top}\ma_{FF}v_{F}+v_{F}^{\top}\ma_{FC}v_{C}+v_{C}^{\top}\ma_{CF}v_{F}+v_{C}^{\top}\ma v_{C}\,.
\]
Such a partition naturally leads to the notion of Schur complements.
We will use $\sc{\mvar{A}}{C}$ to denote $n$-by-$n$ matrix where the only
non-zeros are on $C$, and these variables corresponding to the result
of eliminating all variables in $F$.
Formally, we let $\mzero_{FC}\in\R^{F\times C}$ be
the all zero matrix, from which we can write this Schur complement as: 
\[
\sc{ \mvar{A}}{F} \defeq\left[\begin{array}{cc}
\mzero_{FF} & \mzero_{FC}\\
\mzero_{CF} & \ma_{CC}-\ma_{CF}\ma_{FF}^{-1}\ma_{FC} 
\end{array}\right] ~.
\]

\paragraph{Norms:} 
Given PSD $\mvar{H} \in \R^{n \times n}$, we define a semi-norm on
vector $\norm{\cdot}_{\mvar{H}}$ by $\norm{x}_\mvar{H} = \sqrt{  x^\top
  \mvar{H} x }$.
For any norm
$\norm{\cdot}$ defined on $\R^{n}$ we define the \emph{seminorm} it induces on $\R^{n\times n}$ for all $\mvar{A}\in\R^{n\times n}$
by $\norm{\mvar{A}}=\max_{x\neq0} \norm{\mvar{A} x} / \norm{x}$.
When we wish to make clear that we are considering this ratio we use the $\rightarrow$ symbol; e.g., 
$\norm{\mvar{A}}_{\mvar{H} \to \mvar{H}}=\max_{x\neq 0} \norm{\mvar{A} x}_\mvar{H} / \norm{x}_\mvar{H}$ 
but we may also simply write $\norm{\mvar{A}}_{\mvar{H}} \defeq \norm{\mvar{A}}_{\mvar{H} \rightarrow \mvar{H}}$. 


\paragraph{Pseudoinverse and Square Roots:} For symmetric PSD matrix $\ma$ we use $\ma^\pseudoOp$ to denote its Moore-Penrose pseudoinverse of $\ma$, we use $\ma^{1/2}$ to denote its square root, i.e. the unique PSD matrix such that $[\ma^{1/2}]^2 = \ma$, and we use $\ma^{\pseudoRoot}$ to denote $[\ma^\pseudoOp]^{1/2} = [\ma^{1/2}]^\pseudoOp$.

\paragraph{Upper and Lower Triangular Matrices:} We say a square matrix $\matup$
  is upper triangular if it has non-zero entries $\matup(i,j) \neq 0$
  only for $i \leq j$ (i.e. above the diagonal). 
  Similarly, we say a square matrix $\matlow$
  is lower triangular if it has non-zero entries $\matup(i,j) \neq 0$
  only for $i \geq j$ (i.e. below the diagonal). 
  Often, we work with matrices that are not upper or lower
  triangular, but for which we know a permutation matrix $\PP$ s.t. 
  $\PP \matup \PP^{\trp}$ is upper (respectively lower) triangular.
  For computational purposes, this is essentially equivalent to having
  a upper or lower triangular matrix, and we refer to such
    matrices as upper (or lower) triangular. 
  The algorithms we develop for factorization will always compute the
  necessary permutation.

\paragraph{Asymmetric Matrix Approximation:} We use the asymmetric matrix approximation definition of \cite{CohenKPPRSV16} and say that 
matrix $\BB$ is said to be an \emph{$\epsilon$-approximation of matrix $\AA$} if and only if $\UU_\AA$ is symmetric PSD with 
$\ker(\UU_\AA) \subseteq \ker(\AA - \BB) \cap \ker((\AA - \BB)^\top)$ and       $\normInline{\u{\AA}^{\pseudoRoot} (\AA -
            \BB) \u{\AA}^{\pseudoRoot}}_2 \leq \eps$.
In this paper we only use this definition in the restricted setting where $\ker(\ma) = \ker(\ma^\top)$.

\paragraph{Row Column Diagonal Dominant (RCDD):} For square matrix $\mvar{A}$ we say that a subset of the coordinates $F$ is $\alpha$-RCDD if $\sum_{j\in F,j\neq i}|\mlap_{ij}|\leq\frac{1}{1+\alpha}|\mlap_{ii}|$ and $\sum_{j\in F,j\neq i}|\mlap_{ji}|\leq\frac{1}{1+\alpha}|\mlap_{ii}|$. 

\paragraph{Directed Laplacians:} We follow the conventions of \cite{cohen2016faster} and say matrix $\mlap\in\R^{n\times n}$ is a \emph{directed Laplacian} if its off diagonal entries are non-positive, i.e. $\mlap_{i,j} \leq 0$ for all $i \neq j$, and $\vones^\top \mlap = \vzero$, i.e.  $\mlap_{ii}=-\sum_{j\neq i}\mlap_{ji}$ for all $i$. For every directed Laplacian $\mlap \in \R^{n \times n}$ we associate a
graph $G_{\mlap}=(V,E,w)$ with vertices $V = [n]$, and edges $(i,j)$
of weight $w_{ij} =-\mlap_{ji}$, for all $i\neq j\in[n]$ with $\mlap_{ji}\neq0$.
Occasionally we write $\mlap=\md-\ma^{\top}$ to denote that we decompose
$\mlap$ into the diagonal matrix $\md$ (where $\md_{ii}=\mlap_{ii}$
is the out degree of vertex $i$ in $G_{\mlap}$) and non-negative matrix $\ma$ (which is weighted
adjacency matrix of $G_{\mlap}$, with $\ma_{ij}=w_{ij}$ if $(i,j)\in E$,
and $\ma_{ij}=0$ otherwise).
Letting $\vecind_v \in \rea^{n}$ be the vector whose $v$'th coordinate is set to one and all others to be zero and $\bb_{(u,v)} = \vecind_v - \vecind_u$, a directed Laplacian can be written as
$
\mlap = \sum_{(v,u) \in E} w_{(v,u)}\bb_{(u,v)}\vecind_{v}^\top$.

Finally, we call a matrix $\mlap$ is an
\emph{Eulerian Laplacian} if it is a directed Laplacian with
$\mlap\vecone = \veczero$. Note that $\mlap$ is an \emph{Eulerian
  Laplacian} if and only if its associated graph is Eulerian.

\section{Main Algorithm}
\label{cha:dirlap}


\label{sec:dirlap}
\label{sec:algorithmOverview}

In this section, we give an overview of the key components of our
algorithm for LU factorization and show how to use an LU factorization
it generates to solve Eulerian Laplacians. We also give proofs of
our two main statements on Eulerian Laplacians (\autoref{thm:eulLU}
and \autoref{cor:solver}) assuming auxiliary statements proven in this
section and later in the chapter.
\paragraph{Standard LU Factorization.}
The core of our algorithm is a sparsified version of LU factorization.
The starting point for standard LU factorization is the formula for
eliminating a single variable.
Letting $C = V \setminus \setof{v}$, the formula for eliminating a
single variable in a matrix $\mlap$ is
\begin{align*}
\mlap
=
\left[
\begin{array}{cc}
\mlap_{vv} & \mlap_{vC} \\
\mlap_{Cv}& \mlap_{CC}
\end{array}
\right]
&= 
\left[
\begin{array}{cc}
1 & \mzero \\
\frac{1}{\mlap_{vv}^{1/2}}\mlap_{Cv} & \mI
\end{array}
\right]
\left[
\begin{array}{cc}
1 & \mzero \\
  \mzero & \mlap_{CC}  -  \frac{1}{\mlap_{vv}} \mlap_{Cv} \mlap_{vC}
\end{array}
\right]
\left[
\begin{array}{cc}
1 & \frac{1}{\mlap_{vv}^{1/2}} \mlap_{vC} \\
\mzero & \mI
\end{array}
\right]
\end{align*}
We have decomposed the matrix into a lower triangular matrix, a block
diagonal matrix, and an upper triangular matrix.
Recursively applying the same elimination procedure to the matrix
$\mlap_{CC}  -  \frac{1}{\mlap_{vv}} \mlap_{Cv} \mlap_{vC}$ on the
block diagonal, we can then get a decomposition into a product of a
sequence of lower triangular matrices times a sequence of upper
triangular matrices.
Since the product of lower triangular matrices is lower triangular,
and similarly the product of upper triangular matrices is upper
triangular, we can then collect these into one lower triangular factor
$\mathfrak{L}$ 
and an upper triangular factor $\mathfrak{U}$ and write $\mlap =
\mathfrak{L} \mathfrak{U}$.
However, one can check that in fact the factors $\mathfrak{L}$ and
$\mathfrak{U}$ have a simple structure: 
If $d_i$ is the diagonal entry corresponding to the variable being
eliminated in round $i$, and $\cc_i$ is the corresponding column
scaled by  $d_i^{-1/2}$, and $\rr_i$ the row scaled scaled by $d_i^{-1/2}$
(e.g. so that in the
first round $d_1 = \mlap_{vv}$,
$\cc_1 = d_1^{-1/2} \mlap_{vV}$, and $\rr_1 = d_1^{-1/2} \mlap_{Vv}$),
then
\[
\mathfrak{L} = 
\begin{bmatrix}
\cc_1  & 0     &  0       & \\
   \vline &    \cc_2 &  0       & \cdots \\
    \vline &         \vline       & \cc_3 &\\
\end{bmatrix}
\text{ and }
\mathfrak{U} = 
\begin{bmatrix}
   \rr_1 & \hspace{-0.5em} \rule[0.5ex]{2em}{0.4pt} & \hspace{-1.5em}\rule[0.5ex]{2em}{0.4pt}\\
    0 &   \rr_2 & \hspace{-0.5em}\rule[0.5ex]{2em}{0.4pt} \\
    0 & 0 &  \rr_3 \\
    & \vdots & 
\end{bmatrix}
.
\]
For us, it will be more convenient to consider all rows and columns to be of
length $n$, and abusing notation by redefining  $\cc_2 \leftarrow
\begin{pmatrix}
 0 \\
 \cc_2
\end{pmatrix}
$ and so on, we can then write
\[
\mathfrak{L} = 
\begin{bmatrix}
\cc_1 &  \cc_2  & \cc_3 & \cdots \\
 \vline &   \vline  & \vline&
\end{bmatrix}
\text{ and }
\mathfrak{U} = 
\begin{bmatrix}
  \rr_1 & \hspace{-0.5em}\rule[0.5ex]{2em}{0.4pt} \\
  \rr_2 & \hspace{-0.5em}\rule[0.5ex]{2em}{0.4pt} \\
  \rr_3 & \hspace{-0.5em}\rule[0.5ex]{2em}{0.4pt} \\
  \vdots
\end{bmatrix}
\text{ and } 
\mlap = \mathfrak{L} \mathfrak{U}  = \sum_{i} \cc_i \rr_i.
\]

\rp{is this sentence/block needed?}
The fact that $C = V \setminus \{v\}$ allows us to write the top-left
block as the scalar entry $\LL_{vv}$, giving
\[
\sc{ \mvar{L}}{ C } = \left[
\begin{array}{cc}
0 & \mzero \\
  \mzero & \mlap_{CC}  -  \frac{1}{\mlap_{vv}} \mlap_{Cv} \mlap_{vC}
\end{array}
\right]
.
\]
and after simple algebraic manipulations,
\begin{align}
\label{eq:additiveschur}
\mlap
=
\frac{1}{\mlap_{vv}} \mlap_{Vv} \mlap_{vV}
+
\sc{ \mvar{L}}{ C }
=
\cc_1 \rr_1
+
\sc{ \mvar{L}}{ C }
\end{align}

\paragraph{LU Factorization of Eulerian Laplacians.}
Let us restrict our attention to LU factorization of a matrix $\mvar{L}$
which is a strongly connected Eulerian Laplacian.
In general, LU factorization may run into trouble if one tries to
eliminate a variable with a zero diagonal entry. 
But, one can show that this will never happen if the input matrix is
an Eulerian Laplacian -- except that the final diagonal entry will
always be zero.

We introduce notation for the directed Laplacian corresponding to
the star graph of edges coming into and going out of a vertex $v$ of
in the graph corresponding to $\mlap$, which we write as
$$ \vstar{\mvar{L}}{v} = 
\sum_{(v,u) \in E}
w_{(v,u)}\bb_{(u,v)}\vecind_{v}^T 
+ 
\sum_{(u,v) \in E} w_{(u,v)}\bb_{(v,u)}\vecind_{u}^T ,$$
Rearranging Equation~\eqref{eq:additiveschur}, we have
\begin{align*}
\label{eq:additiveschur}
\sc{ \mvar{L}}{ V \setminus \setof{v} }
=
\mlap
-
\frac{1}{\mlap_{vv}} \mlap_{Vv} \mlap_{vV}
=
\mlap
-
\vstar{\mvar{L}}{v} 
+
\vstar{\mvar{L}}{v} 
-
\frac{1}{\mlap_{vv}} \mlap_{Vv} \mlap_{vV}
\end{align*}
Importantly, when $\mvar{L}$ is a strongly connected Eulerian Laplacian, then so is $\sc{
  \mvar{L}}{ V \backslash \{v \} } $, in fact Eulerian Laplacians are
closed under taking Schur complements, and strong connectivity is preserved.
Furthermore, $\mlap
-
\vstar{\mvar{L}}{v}$ is simply the directed Laplacian corresponding to the
graph of $\mlap$ with $v$ and the edges incident on it removed,
and $\vstar{\mvar{L}}{v} 
-
\frac{1}{\mlap_{vv}} \mlap_{Vv} \mlap_{vV}$, is a directed
Laplacian of a weighted biclique on the neighbors of $v$.
In general, neither of these parts of the Schur complement are
individually Eulerian, but together they are.

\paragraph{Our algorithm.}
As stated in \autoref{thm:eulLU}, we assume that the input matrix, $\mvar{L}$, is a strongly
connected Eulerian Laplacian.
Our algorithm is based on standard LU factorization as outlined above.
Since each elimination creates a biclique on the neighbors of the
vertex being eliminated, the graph corresponding to the remaining
Schur complement matrix can quickly grow dense.
To rememdy this, we develop a method for computing sparse
approximations of the Schur complement, without ever having to write
down the dense biclique.
This throws up several difficulties: The sparse approximation must
accumulate very little error over many iterations and it must preserve
Eulerianness. See
\autoref{sub:main_ideas} for an overview of the challenges and the
techniques we use to resolve these difficulties.

The main algorithm for LU factorization,
\autoref{alg:multiPhase}, has $p_{\max}$ phases or iterations. The
routine for a single phase, given in \autoref{alg:singlePhase},
iteratively eliminates vertices belonging to a random set selected by
\autoref{alg:rcdd}.
We need to perform the eliminations in phases because eliminating
from the sets selected by \autoref{alg:rcdd} helps us constrain the
growth in the norm of certain matrices, which is necessary for
controlling overall error accumulation.

In every iteration within any phase of the
algorithm, a vertex, say $v,$ is eliminated.
We can afford to store the row and column that this creates in our LU
factorization, but we cannot afford to compute the dense biclique that
is created in the graph corresponding to the Schur complement.

Since the biclique directed Laplacian
 $\vstar{\mvar{L}}{v} 
-
\frac{1}{\mlap_{vv}} \mlap_{Vv} \mlap_{vV}$
 is dense in
general, we use the routine $\textsc{SingleVertexElim}(\cdot)$,
given in \autoref{sec:SingleVertexElimination}, to sparsify it during
every elimination, while ensuring we always output an Eulerian Laplacian
as our overall sparse approximation of $\sc{ \mvar{L}}{ V \backslash
  \{v \} }$.
To increase the accuracy of our approximation, we average the result
of $\Otil(1)$ calls to
$\textsc{SingleVertexElim}(\cdot)$ at every elimination step, and this
results in a slow increase in the the total number of edges at every
iteration. We will use $\textsc{SparsifyEulerian}(\cdot)$ from
\cite{CohenKPPRSV16} to sparsify the current graph every once in a
while to keep the total edge count low enough. This is a routine to sparsify any
Eulerian graph,  and can be found in  \cite{CohenKPPRSV16}. Recalling
the notion of asymmetric matrix approximation from 
\autoref{sec:preliminaries}, the
guarantees stated in Theorem~3.16 of \cite{CohenKPPRSV16} can be
stated as follows:

\begin{theorem}[Eulerian Spectral Sparsification - Theorem~3.16 of \cite{CohenKPPRSV16} Rephrased]
\label{thm:order_n_sparsifier} For Eulerian Laplacian $\mlap\in\R^{n\times n}$
and $\epsilon,\wprob,\in(0,1)$ with probability at least $1 - \wprob$ the $\textsc{SparsifyEulerian}(\mlap, \wprob, \epsilon)$ computes 
in $\otilde(\nnz(\mlap)+n\epsilon^{-2}\log(1/\wprob))$
time an Eulerian Laplacian $\widetilde{\mlap}\in\R^{n\times n}$ such that
\begin{enumerate}
\item $\widetilde{\mlap}$ is an
$\epsilon$-asymmetric spectral approximation 
of $\mlap$.
\item $\widetilde{\mlap}$ has $\Otil(n\epsilon^{-2}\log(1/\wprob)) $ non-zeros.
\item the weighted in and out degrees of the graphs associated with $\mlap$ and $\widetilde{\mlap}$ are identical.
\end{enumerate}
\end{theorem}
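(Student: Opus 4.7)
The plan is to invoke (or faithfully reproduce) the analysis of Theorem~3.16 in \cite{CohenKPPRSV16}; since the theorem is stated as a rephrasing, the proof is essentially a verification that the hypotheses and conclusions here match what is already established. The only real work is to translate between the ``asymmetric spectral approximation'' definition used in this paper and the norm guarantee $\|\u{\mlap}^{\pseudoRoot} (\mlap - \widetilde{\mlap}) \u{\mlap}^{\pseudoRoot}\|_2 \le \eps$ (together with a kernel condition) that their $\textsc{SparsifyEulerian}$ routine achieves. Observing that for any Eulerian Laplacian $\ker(\mlap) = \ker(\mlap^\top) = \ker(\u{\mlap}) = \mathrm{span}(\vecone)$, the kernel requirement of the asymmetric approximation definition is automatic, and conclusions~(1)--(3) follow directly.

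For the benefit of a reader without \cite{CohenKPPRSV16} at hand, the underlying proof proceeds as follows. Write $\mlap = \DD - \AA^\top$ and split $\AA$ into an undirected ``symmetric'' part and a circulation. The symmetric part is sparsified by standard effective-resistance sampling on an undirected Laplacian. The circulation is the dangerous piece: any edge-independent sampling destroys the Eulerian invariant that weighted in-degree equals weighted out-degree at every vertex. To preserve Eulerianness, one finds a \emph{short cycle decomposition} of the circulation in nearly-linear time (a collection of directed cycles of length $\poly\log n$ whose weighted union is the circulation) and then sparsifies one cycle at a time by randomly contracting or re-routing its edges in a degree-preserving fashion. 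Because each step is supported on an entire cycle, the in/out-degree at every vertex is preserved \emph{exactly}, yielding conclusion~(3).

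The error bound in conclusion~(1) is established by a matrix martingale / Freedman-type concentration inequality applied in the norm $\|\u{\mlap}^{\pseudoRoot} \cdot \u{\mlap}^{\pseudoRoot}\|_2$, with each sparsification step giving a mean-zero martingale difference $\mvar{X}_i$. The per-step spectral bound and variance proxy are controlled by the length of the cycles and by standard resistance-based sampling probabilities; summing yields $O(n \eps^{-2} \log(1/\wprob))$ edges after $\poly\log n$ factors, which is conclusion~(2). The runtime bound follows from the fact that the short cycle decomposition is computable in $\otilde(\nnz(\mlap))$ time and each of the $\otilde(n \eps^{-2} \log(1/\wprob))$ sampling steps costs $\poly\log n$.

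The hardest ingredient, and the one that makes this a nontrivial citation rather than a routine application of undirected sparsification, is reconciling the two opposing requirements: preserving Eulerianness (a \emph{global, combinatorial} constraint that forbids independent edge sampling) while simultaneously achieving the \emph{tight} $\eps^{-2}$ concentration characteristic of undirected spectral sparsification. Cycle-based sparsification handles the first, but naively it inflates the variance; the key technical contribution of \cite{CohenKPPRSV16} is a short cycle decomposition whose cycles are short enough that the resulting matrix martingale has variance comparable to the independent-edge case, yielding the stated $\otilde(n \eps^{-2})$ edge count. In our proof we rely on this black-box, and merely check that the guarantees of their $\textsc{SparsifyEulerian}$ imply items~(1)--(3) in the form stated.
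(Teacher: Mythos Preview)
The paper does not prove this theorem; it is stated purely as a black-box citation of Theorem~3.16 in \cite{CohenKPPRSV16}, and the only ``proof'' in the present paper is the implicit claim that the rephrasing is faithful. Your first paragraph correctly identifies this, and for the purposes of the present paper that paragraph alone is sufficient and matches what the authors do.

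Your subsequent sketch of the underlying construction in \cite{CohenKPPRSV16}, however, is factually wrong. You describe a \emph{short cycle decomposition} of the circulation part followed by per-cycle degree-preserving sampling. That technique was not available when \cite{CohenKPPRSV16} was written; it was introduced in later work (Chu--Gao--Peng--Sachdeva--Sawlani--Wang, FOCS 2018). The actual construction in \cite{CohenKPPRSV16} proceeds via \emph{expander decomposition}: the underlying undirected graph is partitioned into well-connected pieces, and within each piece edges are importance-sampled with probabilities based on their weight relative to endpoint degrees (which, on an expander, is within polylog factors of effective-resistance sampling). Because independent sampling does not preserve degrees, a separate degree-fixing step then adds back a sparse correction so that weighted in- and out-degrees match exactly. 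The error bound comes from matrix Bernstein applied to the sampled edges within each expander piece, not from a martingale over cycle operations.

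Since this is a cited result and the present paper never relies on the internals of the sparsifier, the misattribution does not create a gap in any downstream argument here. But if you intend the sketch to serve as an exposition of \cite{CohenKPPRSV16}, it should be corrected.
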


Let the vertices be labeled in the order in which we eliminate them so
that in phase $p$ of the algorithm, we eliminate vertices $(i_{p}+1) \ldots
i_{p + 1}$. The phases are numbered from $0$ to $p_{\max}-1$, while
vertices are numbered from $1$ to $n$.
Note, this implies that at the start of phase $p$, we
have a graph on $n - i_p$ vertices, starting initially with $i_0 = 0$,
before any eliminations have taken place.
We then index the elimination steps using the superscript
$^{(i)}$ to denote the state of the algorithm
\textbf{just after} we make
	the $i\textsuperscript{th}$ elimination step.

We denote the intermediate matrices produced by our approximate elimination
steps using $\mvar{S}^{(i)}$.
Each $\mvar{S}^{(i)}$ as an $n \times n$ matrix that's non-zero only on
entries that correspond to pairs of un-eliminated variables,
specifically in the block $[i + 1, n] \times [i + 1, n]$.
It is a sparse approximation of the Schur
complement of $\mlap$ onto the remaining variables,
with errors coming from two sources:
\begin{enumerate}
\item The randomized single vertex elimination procedure,
$\textsc{SingleVertexElim}(\cdot)$.
\item The global sparsification procedure
$\textsc{SparsifyEulerian}(\cdot)$.
\end{enumerate}
We then use $\mvar{L}^{(i)}$ to denote the original matrix perturbed by
the perturbations introduced by our elimination steps:
\[
\mvar{L}^{(i + 1)} \defeq \mvar{L}^{(i)}
+ \left( \mvar{S}^{(i+1)} - \sc{\mvar{L}^{(i)}}{\left[i+1, n\right] }\right).
\]

\rp{I thought we introduce $\mvar{S}^{(i)}$ in the single
phase algorithm, Algorithm~\ref{alg:singlePhase}.}
The quantity $\mvar{S}^{(i)}$ is formally defined in
\autoref{alg:multiPhase}, along with a set of indices $i_j$
(sometimes denoted $i_p$) for $j=0\ldots p_{\max}-1$,
where $p_{\max}$ is the total number of phases.
Recall the index $i_j$ is used to denote the state after the $i_j$th
elimination step, and the index of the initial state is $i_0 = 0$.

\begin{algorithm}	
\caption{$\textsc{SinglePhase}(\mvar{L},\wprob,\epsilon)$
\todolow{I'm using notation
    that's good for martingales, but has a too-dense matrix at
    intermediate steps, which one never actually needs to
    compute (although because we work with very sparse matrices, it
    wouldn't kill us). Still, better to clarify the right way to
    implement this somewhere. }}
\label{alg:singlePhase}
\SetAlgoVlined

\KwIn{an Eulerian Laplacian $\mvar{L}$  on vertex set $V$, and an error parameter $\epsilon$}
\KwOut{
Set of vertices eliminated, $F$,
$\mvar{S}^{\abs{F}}$ an Eulerian Laplacian on $V \setminus F$,
matrices $\matup^{(\abs{F})}$, $\matlow^{(\abs{F})}$ with
non-zeros only in the rows/columns corresponding to $F$ respectively
that are upper/lower triangle upon rearranging the vertices in $F$,
and $\mvar{L} \approx \mvar{S}^{(\abs{F})} + \matlow^{(\abs{F})} \matup^{(\abs{F})}$
\rp{I imagine we can eventually get rid of the super/sub scripts on the factorizations,
	but let's keep them here for now...}
}

$P \gets \Theta(\log^2(1/\delta)/\epsilon^2)$

$\epsilon' \gets \Theta(\epsilon/P)$ \;

Compute a sparsifier of $\mvar{L}$,
$\mvar{S}^{(0)} \gets \textsc{SparsifyEulerian}(\mvar{L} , \wprob/P,
\epsilon') $ \;

$T \gets \Otil(n\epsilon^{-6}\log^5(1/\wprob))  $  (the upper bound on
$\nnz(\mvar{S}^{(0)})$ from \autoref{thm:order_n_sparsifier})\;

Pick a $0.1$-RCDD subset (See \autoref{alg:rcdd} and \autoref{sec:rcdd-analysis}) of vertices $F^{(0)}$ from $\mvar{S}^{(0)}$ \;		

Initialize $\matup^{(0)}, \matlow^{(0)} \leftarrow 0$\;


Set $k_{\max} \leftarrow  \lfloor |F^{(0)}| / 2 \rfloor$\;

\For{$k = 1 \ldots  k_{\max}$ }  	{ 
 
 Among vertices in $F^{(k - 1)}$ with degree at most twice
 the average, pick a random $v_{k}$.
 \todolow{discuss how to do the DS for this.} \;
 $F^{(k)} \gets  F^{(k - 1)} \setminus \setof{ v_{k} }$\;

Set $d^{(k)} \leftarrow \mvar{S}^{(k - 1)}(v_{k} , v_{k} )$\;

Update the factorization:
set $\matup^{(k)}$ to $\matup^{(k - 1)}$ with row $v_{k} $ replaced
by $\frac{1}{d^{(k)}} \mvar{S}^{(k - 1)}(v_{k} , :)$ and 
 $\matlow^{(k)}$ to $\matlow^{(k - 1)}$ with column $v_{k} $ replaced
 by $\mvar{S}^{(k - 1)}(v_{k} , :)$.
 \rp{I imagine we eventually can get rid of the superscripts on the
 	LU factorization. But let's keep them for now as it allows us to
 	be maximally precise, and it's easy to modify this code to get rid of them.}
 \;
 
Set $\ll^{(k)}$, $\rr^{(k)}$ to length $n$ vectors containing
the the off-diagonal non-zeros in the column and row of $v_k$ in
$\mvar{S}^{(k - 1)}$ respectively.
\rp{This means that single vertex elim needs to take sparse vectors
	and manipulate them.}\;

Initialize the first matrix of the inner loop to be
the exact Schur complement of pivoting out $v_{k} $ from $\mvar{S}^{(k - 1)}$:
$\mvar{S}^{(k, 0)}
\gets \mvar{S}^{(k - 1)} - \frac{1}{d^{(k)}} \ll^{(k)} \rr^{(k) \top}$\;

\For{$t = 1 \ldots P$}{
$
\mvar{S}^{\left(k, t \right)} \leftarrow 	\mvar{S}^{\left( k, t-1 \right)}
-\frac{1}{P} \left(
\textsc{SingleVertexElim}\left(d^{(k)}, \ll^{(k)},\rr^{(k)} \right)
- \frac{1}{d^{(k)}} \ll^{(k)} \rr^{(k) \top}\right),
$ (see \autoref{alg:selim})
}      

\eIf{ $\nnz(\mvar{S}^{\left(k, P \right)}) \geq
  2 T$}{
$\mvar{S}^{(k)} \gets \textsc{SparsifyEulerian}(\mvar{S}^{\left(k, P \right)} , \wprob/P, \eps') $
}{
$\mvar{S}^{(k)} \gets  \mvar{S}^{\left(k, P \right)}$
}
}
Return $\mvar{S}^{(k_{\max})}$, $\matup^{(k_{\max})}$, $\matlow^{(k_{\max})}$, and $k_{\max}$ 
\end{algorithm}

\begin{algorithm}								
\caption{$\eulLU(\mvar{L},\wprob,\epsilon)$}
\label{alg:multiPhase}
\SetAlgoVlined
					
\KwIn{an Eulerian Laplacian $\mvar{L}$ and error parameter $0 < \epsilon < 1/2$}
\KwOut{lower and upper triangular matrices $\matlow, \matup$ whose product approximates $\mvar{L}$}
$\mvar{L} \gets \textsc{SparsifyEulerian}(\mvar{L} , \wprob/2, O(\epsilon/\log n)) $ \\
$\mvar{S}^{(0)} \gets \mvar{L}$ and set $\matlow, \matup$ to be empty matrices.\\
$j \gets 0$, $i_j \gets 0$\\
\While{$i_j < n$ \quad (i.e., for $p_{\max} = O(\log n)$ iterations)
  \quad}{
$\pr{ \mvar{T}, \matlow',\matup', k_{\max} } \gets
\textsc{SinglePhase}\left(\mvar{S}^{(i_j)},O(\frac{\wprob}{n}),O(\epsilon/\log n)\right)$ \\
$j \gets j+1$\\
$i_j \gets i_{j-1} + k_{\max},$ $\mvar{S}^{(i_j)} \gets \mvar{T}$ \\
Insert the nonzero vectors from the partial LU factorization $\matlow', \matup'$ into their corresponding locations $\matlow$ and $\matup$, respectively.\\
} 
\KwRet{$\matlow, \matup$}
\end{algorithm}


Using a matrix martingale concentration inequality, we prove the following statement about the distortion bounded in each phase. (See \autoref{sec:singlePhase} for the proof.)
\begin{theorem}
\label{thm:InPhaseErrorAccumulation}
Given an $n \times n$ Eulerian matrix $\mvar{L}$ with $m$ non-zeros and a $0.1$-RCDD subset $J$, and an
error parameter $\epsilon \leq 1/2$, and probability bound $\wprob < 1/n$, 
$\textsc{SinglePhase}(\mvar{L},\wprob,\epsilon)$
(\autoref{alg:singlePhase}) creates with probability $1-O(\wprob)$ matrices
$\widetilde{\mvar{S}}, \matlow', \matup'$,
where $\widetilde{\mvar{S}}$ is an Eulerian Laplacian, and $\matlow',
\matup'$ are upper and lower triangular respectively.
The algorithm also finds a subset $\widehat{J}$ such that 
$
\widetilde{\mvar{S}} = \sc{\Ltil}{V \setminus \widehat{J}}
$
for the matrix $\Ltil = \widetilde{\mvar{S}}  + \matlow'\matup' $
\todolow{clarify zero padding?}
such that $\widetilde{\mvar{L}}$ is an $\epsilon$-approximation of $\mvar{L}$ 
and $|\widehat{J}| \geq \frac{1}{2} |J|$.
Furthermore, the number of non-zeroes in $\widetilde{\mvar{S}}$, $ \matlow',$ and $\matup'$
is at most
$
\Otil( n \epsilon^{-6}\log^5(1/\wprob) )
$
and the runtime is at most $
\Otil(m 
+ n\epsilon^{-8}\log^7(1/\wprob) 
+ \epsilon^{-10} \log^{9}(1/\delta))
)
$
where  the $\otilde$ notation additionally hides $O(\log \log 1/\wprob)$ factors. 
\todolow{Add names for factors, fix/clarify SparsifierSize}
\end{theorem}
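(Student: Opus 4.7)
The plan is to interpret $\{\mvar{S}^{(k,t)}\}$ as a matrix martingale and apply a Freedman-type concentration inequality in an appropriate norm. The crucial input is that $\textsc{SingleVertexElim}(d^{(k)}, \ll^{(k)}, \rr^{(k)})$ is an unbiased estimator of the exact biclique Laplacian $\frac{1}{d^{(k)}} \ll^{(k)} \rr^{(k)\top}$ with controlled variance, which is the content of \autoref{sec:SingleVertexElimination}. Averaging $P = \Theta(\log^2(1/\wprob)/\epsilon^2)$ independent copies at each pivot step reduces the per-step increment to a bounded, low-variance, zero-conditional-mean perturbation of the true Schur complement. Thus, modulo the occasional full sparsification, the process $\mvar{L}^{(i)} = \mvar{L}^{(i-1)} + (\mvar{S}^{(i)} - \sc{\mvar{L}^{(i-1)}}{[i+1,n]})$ is a matrix-valued martingale started at $\mvar{L}^{(0)}$.

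The first main step is to fix the norm in which concentration is measured. Because the $\mvar{L}^{(i)}$ are asymmetric, I would work in the norm induced by the matrix $\mApxNorm$ built from the sequence of symmetrized Schur complements, as developed in \autoref{sec:ErrorAccumulation}. Since we only eliminate from a $0.1$-RCDD subset $J$, the results of \autoref{sec:SchurBounds} guarantee that successive Schur complements do not blow up in this norm, which lets me upgrade the per-step bound into a global bound. For each pivot step I need (i) an almost-sure operator-norm bound on the increment $\mvar{S}^{(k,t)} - \mvar{S}^{(k,t-1)}$ of order $O(\epsilon/P)$ after the $1/P$ scaling, obtained from the bounded-support guarantee on $\textsc{SingleVertexElim}$ and the restriction to vertices of degree at most twice the average, and (ii) a predictable quadratic variation bound, using the variance bound on $\textsc{SingleVertexElim}$. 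Feeding these into a matrix Freedman inequality yields $\|\mApxNorm^{\pseudoRoot}(\Ltil - \mvar{L})\mApxNorm^{\pseudoRoot}\|_2 \leq \epsilon$ with probability $1 - O(\wprob)$, which is precisely the $\epsilon$-approximation condition.

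Next, I would account for the \textsc{SparsifyEulerian} invocations. Each such call contributes an additional $\epsilon'$-approximation error in the asymmetric sense by \autoref{thm:order_n_sparsifier}, and at most $O(P)$ of them occur inside a phase because the $T$-threshold forces a sparsification whenever nonzero count doubles. Taking a union bound over the at most $O(P)$ sparsifications (each with failure probability $\wprob/P$), these accumulated errors are again $O(\epsilon)$ in the $\mApxNorm$ norm, and combine additively with the martingale error by the triangle inequality. Eulerianness is preserved throughout because both $\textsc{SingleVertexElim}$ and $\textsc{SparsifyEulerian}$ output Eulerian Laplacians, and Schur complements of Eulerian Laplacians are Eulerian.

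For the combinatorial guarantees, $|\widehat{J}| \geq |J|/2$ is immediate from $k_{\max} = \lfloor |F^{(0)}|/2 \rfloor$ and $F^{(0)} = J$. The fact that eligible vertices (degree at most twice the average) always exist follows from Markov: at least half the vertices in $F^{(k-1)}$ have degree at most twice the average, so the algorithm never runs out. For the sparsity bound, after each pivot step $\nnz(\mvar{S}^{(k,P)})$ grows by at most $P$ times the degree of $v_k$ plus the previous count, and the threshold $T$ caps this at $2T = \Otil(n\epsilon^{-6}\log^5(1/\wprob))$. The runtime is dominated by the $\Otil(\text{degree})$ cost of each of the $k_{\max} \cdot P$ single vertex elimination calls plus the $\Otil(n \epsilon'^{-2})$ cost of each of the $O(k_{\max}/(\text{edges added per pivot}))$ sparsifications, yielding the stated $\Otil(m + n\epsilon^{-8}\log^7(1/\wprob) + \epsilon^{-10}\log^9(1/\wprob))$ bound.

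\emph{The main obstacle} is the concentration argument: proving the bounded-difference and predictable-variance bounds needed to invoke matrix Freedman in the $\mApxNorm$ norm, while ensuring the periodic sparsifications interact cleanly with the martingale. This in turn hinges on the fact that $\mApxNorm$ spectrally dominates all intermediate Schur complements under $0.1$-RCDD pivoting, so that per-step local errors translate faithfully into the global approximation guarantee; this spectral domination is supplied by \autoref{sec:SchurBounds} and \autoref{sec:ErrorAccumulation}.
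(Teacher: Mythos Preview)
Your high-level plan—martingale from the $\textsc{SingleVertexElim}$ increments, matrix Freedman, union bound over the sparsifications, and RCDD to keep intermediate Schur complements controlled—matches the paper. But two points are off.

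\textbf{Wrong norm for this theorem.} You propose to run the concentration argument in the $\mApxNorm$ norm of \autoref{sec:ErrorAccumulation}. That matrix is a weighted sum of $\u{\mvar{S}^{(i_p)}}$ over \emph{all} phases of \eulLU; it is not defined when a single phase runs, and it is not what the theorem asks for. ``$\Ltil$ is an $\epsilon$-approximation of $\mvar{L}$'' means $\|\u{\mvar{L}}^{\pseudoRoot}(\Ltil-\mvar{L})\u{\mvar{L}}^{\pseudoRoot}\|_2\le\epsilon$ by the paper's definition of asymmetric approximation, so the martingale must be analyzed in the $\u{\mvar{L}}$ norm for the input $\mvar{L}$ of \emph{this} phase. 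The RCDD machinery of \autoref{sec:SchurBounds} enters exactly here: it gives $\u{local^{(k)}}\preceq\u{\mvar{S}^{(k-1)}}\preceq O(1)\,\u{\mvar{L}}$, which transfers the local per-step bound $\|\u{local}^{\pseudoRoot}\mvar{X}^{(k,t)}\u{local}^{\pseudoRoot}\|_2\le O(1/P)$ from \autoref{lem:SElim} into the $\u{\mvar{L}}$ norm. (Note also the increment bound is $O(1/P)$, not $O(\epsilon/P)$; the $\epsilon$ appears only through the choice $P=\Theta(\epsilon^{-2}\log^2(1/\wprob))$.) The $\mApxNorm$ norm is used only afterwards, in \autoref{lem:CumulativeErrorNew}, to aggregate these per-phase $\u{\mvar{L}}$-norm bounds.

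\textbf{Missing the second martingale for the variance.} You say the predictable quadratic variation is handled by ``the variance bound on $\textsc{SingleVertexElim}$,'' but that is not enough. The sum $\sum_{(\hat k,\hat t)}\expec{<(\hat k,\hat t)}{\mxhat^{(\hat k,\hat t)\top}\mxhat^{(\hat k,\hat t)}}$ is itself random: each term is controlled by $\u{local^{(k)}}$, which depends on the random vertex $v_k$. The paper (\autoref{lem:VarianceHalf}) sets up a \emph{second} zero-mean martingale on the per-elimination variance increments $\WW_k-\WW_{k-1}$, bounds their conditional expectation via $\expec{v}{\u{local}}\preceq\frac{2}{n}\u{\mvar{L}}$, and applies matrix Freedman again to show the total variance is $\le\sigma^2$ with probability $1-O(\wprob)$. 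Without this step the hypothesis of Freedman is not established. Relatedly, because the per-step bounds only hold when earlier sparsifications succeeded and the martingale stayed small, the paper works with a truncated martingale governed by a $\safe$ event (\autoref{def:safe}); you should expect to need the same device.
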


This means that with high probability we can bound the distortion within 	each phase by:
\[
\norm{\u{\mvar{S}^{\left(i_p\right)}}^{\pseudoRoot}
	\left(\mvar{L}^{\left(i_p\right)} - \mvar{L}^{\left(i_{p + 1}\right)} \right)
	\u{\mvar{S}^{\left(i_p\right)}}^{\pseudoRoot}}
\leq \theta_p \epsilon,
\]
while we also have $(n - i_{p + 1}) \leq 0.99 (n - i_p)$
%
%

We now define a PSD matrix which we use for measuring the
accumulation of errors:
\[
\mvar{F}^{(p)}
\defeq  \sum_{p' \leq p} \theta_{p'} \u{\mvar{S}^{\left(i_p \right)}} ~.
\]
For convenience, we denote $\mvar{F}^{(p_{\max})}$ by $\mvar{F}$:
\begin{equation}
\mvar{F}
\defeq \mvar{F}^{(p_{\max})}
= \sum_{0 \leq p < p_{\max}} \theta_p \u{\mvar{S}^{\left(i_p \right)}}.
\label{eq:F}
\end{equation}
We will set $\theta_p = \frac{1}{p_{\max}}$ so that $\sum_{p = 0}^{p_{\max}-1} \theta_p = 1.$
We will bound the final error in terms of $\mvar{F}$
as stated in the following lemma, which we state
for general matrices.
\begin{lemma}
	\label{lem:CumulativeErrorNew}
\rp{This is my attempt at a self-contained statement: sanity check?}
Consider a sequence of $n$-by-$n$ matrices
$\mvar{S}^{(0)}, \mvar{S}^{(1)}, \ldots, \mvar{S}^{(n)}$ such that
\begin{enumerate}
\item $\mvar{S}^{(0)}$ has non-zero entries only on the indices $[i + 1, n]$,
\item The left/right kernels of $\mvar{S}^{(i)}$ are equal, and the
  after restricting $\mvar{S}^{(i)}$ to the indices $[i + 1, n]$, the
  kernel of the resulting matrix equals the coordinate restriction
  of the vectors in the kernel of $\mvar{S}$. Formally, 
  $\ker(\mvar{S}^{(i)}_{[i + 1, n], [i + 1, n]}) = \setof{ \bb_{[i + 1,
      n]} : \bb \in \ker(\mvar{S}^{(0)})}$.
\item The undirectification of each $\mvar{S}^{(i)}$,
$\u{\mvar{S}^{(i)}} = \frac{1}{2} (\mvar{S}^{(i)} + (\mvar{S}^{(i)})^{\top})$
is positive semi-definite.
\end{enumerate}
Let $\mvar{M} = \mvar{M}^{(0)} = \mvar{S}^{(0)}$, and
define matrices $\mvar{M}^{(1)}, \mvar{M}^{(2)}, \ldots, \mvar{M}^{(n)}$
iteratively by
\[
\mvar{M}^{\left( i + 1 \right)}
\defeq
\mvar{M}^{\left( i \right)}
+
\left( \mvar{S}^{\left( i + 1 \right)}
- \sc{\mvar{M}^{\left( i \right)}}{\left[i + 1, n\right]} \right)
\qquad
\forall~0 \leq i < n.
\]
If for a subsequence of indices
$1 = i_0 < i_1 < i_2 < \ldots < i_{p_{\max}} = n$
associated scaling parameters
$0 < \theta_{0}, \theta_{1}, \ldots , \theta_{p_{\max} - 1} < 1/2$,
and some global error $0 < \epsilon < 1/2$,
we have for every $0 \leq p < p_{\max}$:
\[
\norm{\u{\mvar{S}^{\left(i_p\right)}}^{\pseudoRoot}
	\left(\mvar{M}^{\left(i_p\right)} - \mvar{M}^{\left(i_{p + 1}\right)} \right)
	\u{\mvar{S}^{\left(i_p\right)}}^{\pseudoRoot}}
\leq
\theta_{p} \epsilon,
\]
then for a matrix-norm defined from the symmetrization of the
$\mvar{S}^{(i_p)}$ matrices and the scaling parameters:
\[
\mvar{F}
=
\sum_{0 \leq p < p_{\max}}
\theta_{p} \mvar{U}_{\mvar{S}^{\left(i_{p}\right)}},
\]
we have:
\begin{enumerate}
\item \label{part:CumulativeError}
for each $0 \leq i \leq n$,
\[
\norm{\mvar{F}^{\pseudoRoot}
	\left(\mvar{M} - \mvar{M}^{\left(i\right)} \right)
	\mvar{F}^{\pseudoRoot}}_2
\leq \epsilon,
\]
\item \label{part:FNormBounds}
The final matrix $\MM^{(n)}$ satisfies
\[
\mvar{M}^{\left(n \right) \top}
\mvar{F}^{\dag}
\mvar{M}^{\left(n \right)}
\succeq \frac{1}{10 p^2} \cdot \mvar{F}.
\]
\end{enumerate}
\end{lemma}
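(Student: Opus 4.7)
The plan is to handle Parts~\ref{part:CumulativeError} and~\ref{part:FNormBounds} separately, exploiting that the weights $\theta_{p}$ appear both in the per-phase hypothesis and in the definition of $\mvar{F}$. For Part~\ref{part:CumulativeError} I would bilinearize the spectral norm via test vectors $u, v$ of unit $\ell_{2}$ norm, and telescope across phases:
\[
u^{\top} \mvar{F}^{\pseudoRoot} \left( \mvar{M} - \mvar{M}^{(i_{p})} \right) \mvar{F}^{\pseudoRoot} v
= \sum_{p' < p} u^{\top} \mvar{F}^{\pseudoRoot} \left( \mvar{M}^{(i_{p'})} - \mvar{M}^{(i_{p' + 1})} \right) \mvar{F}^{\pseudoRoot} v.
\]
Inserting the identity $\u{\mvar{S}^{(i_{p'})}}^{1/2} \u{\mvar{S}^{(i_{p'})}}^{\pseudoRoot}$ on both sides of each summand, the per-phase hypothesis controls it by $\theta_{p'} \epsilon \cdot \|\u{\mvar{S}^{(i_{p'})}}^{1/2} \mvar{F}^{\pseudoRoot} u\|_{2} \cdot \|\u{\mvar{S}^{(i_{p'})}}^{1/2} \mvar{F}^{\pseudoRoot} v\|_{2}$. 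Splitting $\theta_{p'} = \sqrt{\theta_{p'}} \cdot \sqrt{\theta_{p'}}$ symmetrically across the two factors and applying Cauchy--Schwarz over $p'$ yields an upper bound of $\epsilon \cdot \sqrt{u^{\top} \mvar{F}^{\pseudoRoot} \mvar{F} \mvar{F}^{\pseudoRoot} u} \cdot \sqrt{v^{\top} \mvar{F}^{\pseudoRoot} \mvar{F} \mvar{F}^{\pseudoRoot} v} \leq \epsilon$, since $\mvar{F}^{\pseudoRoot} \mvar{F} \mvar{F}^{\pseudoRoot}$ equals the orthogonal projector onto $\im(\mvar{F})$ by the kernel-alignment condition. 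Indices $i$ strictly inside a phase are handled by the same telescoping argument applied to a truncated last-phase contribution.

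For Part~\ref{part:FNormBounds} I would set $\Delta = \mvar{M}^{(n)} - \mvar{M}$ and invoke Part~\ref{part:CumulativeError} at $i = n$, together with the standard duality between spectral norms of symmetric sandwich operators and quadratic forms, to deduce $\Delta^{\top} \mvar{F}^{\dag} \Delta \preceq \epsilon^{2} \mvar{F}$. Expanding $\mvar{M}^{(n) \top} \mvar{F}^{\dag} \mvar{M}^{(n)} = (\mvar{M} + \Delta)^{\top} \mvar{F}^{\dag} (\mvar{M} + \Delta)$ and applying the matrix AM--GM inequality $\alpha A^{\top} A + \alpha^{-1} B^{\top} B \succeq A^{\top} B + B^{\top} A$ (with $A = \mvar{F}^{\pseudoRoot} \mvar{M}$, $B = \mvar{F}^{\pseudoRoot} \Delta$, and $\alpha = 1/2$) on the cross terms gives
\[
\mvar{M}^{(n) \top} \mvar{F}^{\dag} \mvar{M}^{(n)}
\succeq \tfrac{1}{2} \mvar{M}^{\top} \mvar{F}^{\dag} \mvar{M} - \epsilon^{2} \mvar{F},
\]
reducing the task to a clean lower bound $\mvar{M}^{\top} \mvar{F}^{\dag} \mvar{M} \succeq \Omega(1/p_{\max}^{2}) \mvar{F}$ (after which choosing $\epsilon$ sufficiently small yields the claimed $\tfrac{1}{10 p_{\max}^{2}} \mvar{F}$ bound). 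To prove that lower bound, the plan is a pointwise Cauchy--Schwarz: for $x \in \im(\mvar{F})$,
\[
\|\mvar{F}^{\pseudoRoot} \mvar{M} x\|_{2} \cdot \|\mvar{F}^{1/2} x\|_{2}
\geq x^{\top} \mvar{M}^{\top} \mvar{F}^{\pseudoRoot} \mvar{F}^{1/2} x
= x^{\top} \mvar{M} x
= x^{\top} \u{\mvar{M}} x,
\]
using $\mvar{F}^{\pseudoRoot} \mvar{F}^{1/2} x = x$ on $\im(\mvar{F})$ and that the skew-symmetric part of $\mvar{M}$ contributes nothing to a quadratic form; squaring and dividing by $x^{\top} \mvar{F} x$ gives $x^{\top} \mvar{M}^{\top} \mvar{F}^{\dag} \mvar{M} x \geq (x^{\top} \u{\mvar{M}} x)^{2} / (x^{\top} \mvar{F} x)$, and a spectral comparison $\u{\mvar{M}} \succeq \Omega(1/p_{\max}) \mvar{F}$ then produces the required matrix inequality.

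The main obstacle I anticipate is securing the spectral comparison $\mvar{F} \preceq O(p_{\max}) \u{\mvar{M}}$ used at the end of Part~\ref{part:FNormBounds}, since it is not among the lemma's stated hypotheses. In the algorithmic setting of this paper I would establish it via the Schur-complement spectral bounds from \autoref{sec:SchurBounds}, which control the growth of $\u{\mvar{S}^{(i_{p})}}$ relative to $\u{\mvar{S}^{(0)}}$ across the successive eliminations of RCDD subsets, so that $\mvar{F} = \sum_{p} \theta_{p} \u{\mvar{S}^{(i_{p})}}$ inherits the comparable bound. The kernel-alignment condition in the lemma will also need to be used carefully throughout, both to validate the identities involving $\mvar{F}^{\pseudoRoot} \mvar{F}$ and the projector onto $\im(\mvar{F})$, and to ensure that the perturbation $\Delta$ stays inside the correct subspaces when expanding the quadratic form in the AM--GM step.
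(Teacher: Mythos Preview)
Your argument for Part~\ref{part:CumulativeError} is correct and is essentially the same as the paper's: the paper uses the AM--GM reformulation of the operator norm from \autoref{lem:SingularValueOpt} to turn the per-phase bound into $2\xx^{\top}(\mvar{M}^{(i_p)}-\mvar{M}^{(i_{p+1})})\yy \leq \theta_p \epsilon\,(\xx^{\top}\u{\mvar{S}^{(i_p)}}\xx + \yy^{\top}\u{\mvar{S}^{(i_p)}}\yy)$, sums in $p$ to produce $\mvar{F}$ on the right, and converts back; your Cauchy--Schwarz version with the $\sqrt{\theta_p}\cdot\sqrt{\theta_p}$ split is the same computation.

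Your approach to Part~\ref{part:FNormBounds}, however, has a real gap, and the fix you propose does not close it. After the AM--GM reduction you need $\mvar{M}^{\top}\mvar{F}^{\dag}\mvar{M}\succeq \Omega(1/p_{\max}^{2})\,\mvar{F}$, and your Cauchy--Schwarz argument reduces this to $\u{\mvar{M}}\succeq \Omega(1/p_{\max})\,\mvar{F}$, i.e.\ $\mvar{F}\preceq O(p_{\max})\,\u{\mvar{S}^{(0)}}$. This inequality is \emph{not} available: the lemma does not assume it, and in the directed setting it is generally false. The bounds in \autoref{sec:SchurBounds} that you point to give only $\u{\mvar{S}^{(i_{p})}}\preceq O(1)\cdot\u{\mvar{S}^{(i_{p-1})}}$ per phase, which compounds to $O(1)^{p_{\max}}=\mathrm{poly}(n)$ over $p_{\max}=O(\log n)$ phases, not $O(p_{\max})$. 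Indeed, the paper's \autoref{lem:relate-u-and-f} only obtains $\mvar{F}\preceq O(n^{2}\log^{5}n)\,\u{\mvar{L}}$, and derives this \emph{from} Part~\ref{part:FNormBounds} rather than as an input to it. Your route would recover the lemma in the undirected case (where Schur complements are spectrally dominated by the original Laplacian) but breaks precisely at the directed-case difficulty the lemma is designed to handle.

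The paper's proof of Part~\ref{part:FNormBounds} avoids this obstacle by never comparing $\mvar{F}$ to $\u{\mvar{M}}$. Instead it introduces the auxiliary matrix
\[
\widehat{\mvar{F}} \defeq \sum_{p}\theta_p\,\u{\sc{\mvar{M}^{(n)}}{I_p}},
\]
built from \emph{exact} Schur complements of the final matrix $\mvar{M}^{(n)}$, and proves the clean algebraic fact $\widehat{\mvar{F}}\preceq (\mvar{M}^{(n)})^{\top}\widehat{\mvar{F}}^{\dag}\mvar{M}^{(n)}$ (\autoref{lem:FhatBound}) using nested Schur-complement identities (\autoref{lem:nested_schur_ineq}, \autoref{lem:complicated_matrix_SC}). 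It then shows $\widehat{\mvar{F}}\approx_{O(p_{\max})}\mvar{F}$ (\autoref{lem:FhatAndF}) by telescoping the differences $\u{\mvar{M}^{(i_{p'})}}-\u{\mvar{M}^{(i_{p'+1})}}$ and invoking Part~\ref{part:CumulativeError}. Transferring the inequality from $\widehat{\mvar{F}}$ to $\mvar{F}$ loses exactly the $O(p_{\max}^{2})$ factor in the statement. The key idea you are missing is to work with $\mvar{M}^{(n)}$ and its own Schur complements rather than with $\mvar{M}=\mvar{M}^{(0)}$.
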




%

We will provel \autoref{lem:CumulativeErrorNew} in
\autoref{sec:ErrorAccumulation}, after first bounding the errors
within each phase.
The bounds on the new norm from Part~\ref{part:FNormBounds}
of \autoref{lem:CumulativeErrorNew} and the ability to solve linear
systems in $\mvar{L}^{(n)}$
enable us to solve linear systems in $\mvar{L}$.
To formalize this, we need to draw upon the definition of approximate pseudoinverses
from~\cite{CohenKPPRSV16}.

\begin{definition}[Approximate Pseudoinverse]
\label{defn:approxInv}
Matrix $\mvar{Z}$ is an \emph{$\epsilon$-approximate pseudoinverse of
	matrix $\mvar{M}$ with respect to a symmetric positive semidefinite matrix $\mvar{F}$}, if
$\ker(\mvar{F}) \subseteq \ker(\mvar{M})=\ker(\mvar{M}^\intercal)=\ker(\mvar{Z})=\ker(\mvar{Z}^\intercal)$, and 
$
\normInline{ \mvar{I}_{\text{im}(\mvar{M})} - \mvar{Z} \mvar{M} }_{\mvar{F} \rightarrow \mvar{F}} \leq \epsilon
$.\footnote{Note that the ordering of $\mvar{Z}$ and $\mvar{M}$ is crucial: this definition is not equivalent to $\norm{\mvar{I}_{\im{\mvar{M}}} - \mvar{M}\mvar{Z}}_{\mvar{F} \rightarrow \mvar{F}}$
being small.}
\end{definition}

The reason why approximate pseudoinverses are useful is that if one preconditions with a solver for an approximate pseudoinverse, one can quickly solve the original system.

{
\newcommand*{\normFull}[1]{\norm{#1}}
\renewcommand*{\mU}{\mvar{F}}
\begin{lemma}[Preconditioned Richardson, \cite{CohenKPPRSV16} Lemma~4.2, pg. 30]
\label{lem:precond_richardson}
Let $\bb \in \R^{n}$ and $\MM, \ZZ,  \mU \in \R^{n \times n}$ such that
$\mU$ is symmetric positive semidefinite, $\ker(\mU) \subseteq
\ker(\MM)=\ker(\MM^\intercal) = \ker(\ZZ)=\ker(\ZZ^\intercal)$, and $b
\in \im{(\mm)}$. Then if one performs $t \geq 0$ iterative refinement
steps with step size $\eta > 0$, one obtains a vector $\xx_t =\textsc{PreconRichardson}(\MM, \ZZ, \bb, \eta, t)$ such that 
$$\norm{\xx_t - \mm^{\dag} \bb}_{\mU} \leq \norm{\mI_{\im{(\MM)}} -\eta \ZZ \MM}_{\mU \rightarrow \mU}^{t} 
\norm{\mm^\dag \bb}_{\mU}\,{.}$$ 
Furthermore, preconditioned Richardson implements a linear operator, in the sense that $\xx_t = \ZZ_{t} \bb$, for some matrix $\ZZ_t$ only depending on $\ZZ$, $\MM$, $\eta$ and $t$.
\end{lemma}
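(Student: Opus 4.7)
\textbf{Proof proposal for \autoref{lem:precond_richardson}.}
The plan is to analyze the standard preconditioned Richardson iteration $\xx_{k+1} = \xx_k + \eta \ZZ(\bb - \MM \xx_k)$, started from $\xx_0 = \veczero$, and track the error $\ee_k \defeq \xx_k - \xx^*$ with $\xx^* \defeq \MM^\dag \bb$. Since $\bb \in \im(\MM)$ we have $\MM \xx^* = \bb$, and substituting this into the recursion gives the clean error recursion
\[
\ee_{k+1} = (\mI - \eta \ZZ \MM)\, \ee_k,
\qquad
\ee_0 = -\xx^*.
\]
Iterating this $t$ times gives $\ee_t = (\mI - \eta \ZZ \MM)^t \ee_0$, and so to obtain the bound in the statement it suffices to replace $\mI$ by $\mI_{\im(\MM)}$ inside the operator.

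The key step is therefore to verify that $\ee_k$ always lies in $\im(\MM)$, so that $(\mI - \eta \ZZ \MM)\,\ee_k = (\mI_{\im(\MM)} - \eta \ZZ \MM)\,\ee_k$. From the kernel hypotheses $\ker(\MM) = \ker(\MM^\intercal)$ and $\ker(\ZZ) = \ker(\ZZ^\intercal)$ we conclude $\im(\MM) = \ker(\MM^\intercal)^\perp = \ker(\ZZ^\intercal)^\perp = \im(\ZZ)$, and also $\im(\MM^\intercal) = \im(\MM)$. In particular $\xx^* = \MM^\dag \bb \in \im(\MM^\intercal) = \im(\MM)$, and each update $\eta \ZZ(\bb - \MM \xx_k) \in \im(\ZZ) = \im(\MM)$, so a simple induction from $\xx_0 = \veczero$ shows $\xx_k \in \im(\MM)$ and hence $\ee_k \in \im(\MM)$. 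Consequently
\[
\ee_t = (\mI_{\im(\MM)} - \eta \ZZ \MM)^t \ee_0,
\]
and taking $\mU$-seminorms together with the definition of the induced operator seminorm yields
\[
\norm{\xx_t - \MM^\dag \bb}_{\mU}
= \norm{\ee_t}_{\mU}
\le \norm{\mI_{\im(\MM)} - \eta \ZZ \MM}_{\mU \to \mU}^t \, \norm{\MM^\dag \bb}_{\mU},
\]
which is the desired bound.

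For the linearity claim, unrolling the recursion from $\xx_0 = \veczero$ gives the closed form
\[
\xx_t = \eta \sum_{k=0}^{t-1} (\mI - \eta \ZZ \MM)^k \, \ZZ \, \bb = \ZZ_t \bb,
\]
where $\ZZ_t \defeq \eta \sum_{k=0}^{t-1} (\mI - \eta \ZZ \MM)^k \ZZ$ depends only on $\ZZ, \MM, \eta, t$, so preconditioned Richardson is indeed a linear operator in $\bb$. The main obstacle in this proof is really just the careful bookkeeping that $\ee_k$ stays in $\im(\MM)$ so that one may legitimately insert the projector $\mI_{\im(\MM)}$; once this is in hand, the rest is a one-line geometric-series argument. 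Note that the result is a special case of fixed-point iteration analysis and the entire proof is $\mU$-seminorm agnostic beyond the standard property that $\norm{\AA \vv}_\mU \le \norm{\AA}_{\mU\to\mU}\norm{\vv}_\mU$.
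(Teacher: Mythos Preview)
Your proof is correct and is the standard argument for preconditioned Richardson iteration. Note that the paper does not actually prove this lemma: it is quoted verbatim as Lemma~4.2 of \cite{CohenKPPRSV16}, so there is no in-paper proof to compare against. Your argument is exactly the one that underlies that cited result.
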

}

We now argue that the properties of the approximate LU factorization produced by our algorithm imply that a solver for systems in it is an approximate pseudoinverse of the original Laplacian.
 
\begin{lemma}
	\label{lem:ApproxPinv}
	Suppose we are given matrices $\mvar{L},$ $\Ltil$
	and a positive semi-definite matrix $\mvar{F}$ such that $\ker(\mvar{F}) \subseteq \ker(\mvar{L})=\ker(\mvar{L}^\intercal)=\ker(\Ltil)=\ker(\Ltil^\intercal)$ and
	\begin{enumerate}
		\item \label{cond:OperatorError}
		$\norm{\mvar{F}^{\pseudoRoot} (\mvar{L} - \widetilde{\mvar{L}})
			\mvar{F}^{\pseudoRoot}}_2 \leq \epsilon$,
		\item \label{cond:NormLower}
		$\Ltil^T \mvar{F}^{\dag} \Ltil \succeq \gamma \mvar{F}$.
		\todolow{say something about ignoring roundoff errors} 
	\end{enumerate}
	Then $\Ltil^{\dag}$ is an $\sqrt{\epsilon^2 \gamma^{-1}}$-approximate
	pseudoinverse for $\mvar{L}$ w.r.t. the norm $\mvar{F}$.
\end{lemma}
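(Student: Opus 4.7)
The plan is to verify the two requirements of \autoref{defn:approxInv} for $\ZZ = \widetilde{\mvar{L}}^\pseudoOp$ with respect to $\mvar{L}$ and the PSD matrix $\mvar{F}$. The kernel conditions are essentially immediate: using the standard identities $\ker(\widetilde{\mvar{L}}^\pseudoOp) = \ker(\widetilde{\mvar{L}}^\trp)$ and $\ker((\widetilde{\mvar{L}}^\pseudoOp)^\trp) = \ker(\widetilde{\mvar{L}})$, the hypothesis $\ker(\mvar{F}) \subseteq \ker(\mvar{L}) = \ker(\mvar{L}^\trp) = \ker(\widetilde{\mvar{L}}) = \ker(\widetilde{\mvar{L}}^\trp)$ translates directly into the kernel requirement of the definition. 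So the main work is bounding $\norm{\mvar{I}_{\im(\mvar{L})} - \widetilde{\mvar{L}}^\pseudoOp \mvar{L}}_{\mvar{F} \to \mvar{F}}$.

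Next, I would rewrite this operator in a form that exposes where to apply the two hypotheses. Since $\ker(\widetilde{\mvar{L}}) = \ker(\widetilde{\mvar{L}}^\trp)$, we have $\im(\widetilde{\mvar{L}}) = \im(\widetilde{\mvar{L}}^\trp)$, and hence $\widetilde{\mvar{L}}^\pseudoOp \widetilde{\mvar{L}} = \mvar{I}_{\im(\widetilde{\mvar{L}}^\trp)} = \mvar{I}_{\im(\widetilde{\mvar{L}})} = \mvar{I}_{\im(\mvar{L})}$, where the last equality uses the assumed equality of kernels. Therefore
\[
\mvar{I}_{\im(\mvar{L})} - \widetilde{\mvar{L}}^\pseudoOp \mvar{L}
\;=\; \widetilde{\mvar{L}}^\pseudoOp \widetilde{\mvar{L}} - \widetilde{\mvar{L}}^\pseudoOp \mvar{L}
\;=\; \widetilde{\mvar{L}}^\pseudoOp (\widetilde{\mvar{L}} - \mvar{L}).
\]

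Now I would bound this product in the $\mvar{F}\to\mvar{F}$ norm in two steps. Fix $\xx$ (we may assume $\xx \in \im(\mvar{F})$, since otherwise $\norm{\xx}_{\mvar{F}} = 0$ and the left side vanishes because $\ker(\mvar{F})$ lies in the right kernels of both $\mvar{L}$ and $\widetilde{\mvar{L}}$). Set $\yy = (\widetilde{\mvar{L}} - \mvar{L})\xx$, which lies in $\im(\widetilde{\mvar{L}}) = \im(\mvar{L})$. Condition~\ref{cond:OperatorError}, read as a bound on $\norm{\mvar{F}^{\pseudoRoot} \yy}_2 = \norm{\yy}_{\mvar{F}^{\pseudoOp}}$, gives $\norm{\yy}_{\mvar{F}^{\pseudoOp}} \leq \epsilon \norm{\xx}_{\mvar{F}}$. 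Condition~\ref{cond:NormLower}, applied to $\vv := \widetilde{\mvar{L}}^\pseudoOp \yy$, says $\norm{\widetilde{\mvar{L}} \vv}_{\mvar{F}^\pseudoOp}^2 \geq \gamma \norm{\vv}_{\mvar{F}}^2$; since $\yy \in \im(\widetilde{\mvar{L}})$ we have $\widetilde{\mvar{L}} \vv = \yy$, so $\norm{\widetilde{\mvar{L}}^\pseudoOp \yy}_{\mvar{F}} \leq \gamma^{-1/2} \norm{\yy}_{\mvar{F}^\pseudoOp}$. Chaining these gives
\[
\norm{\widetilde{\mvar{L}}^\pseudoOp (\widetilde{\mvar{L}} - \mvar{L}) \xx}_{\mvar{F}}
\;\leq\; \gamma^{-1/2} \norm{\yy}_{\mvar{F}^\pseudoOp}
\;\leq\; \gamma^{-1/2} \epsilon \norm{\xx}_{\mvar{F}},
\]
which is the desired operator-norm bound $\sqrt{\epsilon^2 \gamma^{-1}}$.

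I do not expect any real obstacle here; the only subtlety is keeping the image/kernel bookkeeping straight so that the identity $\widetilde{\mvar{L}} \widetilde{\mvar{L}}^\pseudoOp \yy = \yy$ and the rewriting of $\mvar{I}_{\im(\mvar{L})} - \widetilde{\mvar{L}}^\pseudoOp \mvar{L}$ as $\widetilde{\mvar{L}}^\pseudoOp (\widetilde{\mvar{L}} - \mvar{L})$ are actually justified on the relevant subspaces. Both follow from the assumed equality $\ker(\widetilde{\mvar{L}}) = \ker(\widetilde{\mvar{L}}^\trp) = \ker(\mvar{L}) = \ker(\mvar{L}^\trp)$, so the argument goes through cleanly.
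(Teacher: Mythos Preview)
Your proof is correct and follows essentially the same approach as the paper. Both arguments rewrite $\mvar{I}_{\im(\mvar{L})} - \Ltil^{\dag}\mvar{L} = \Ltil^{\dag}(\Ltil - \mvar{L})$, then apply Condition~\ref{cond:NormLower} to control the $\Ltil^{\dag}$ factor and Condition~\ref{cond:OperatorError} to control the $(\Ltil - \mvar{L})$ factor; the paper phrases this via PSD matrix inequalities while you phrase it vector-by-vector, but the content is identical.
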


\begin{proof}
With a slight abuse of notation for notational convenience we let $\mvar{I}$ denote$\mvar{I}_{\text{im}(\mvar{\dlap})} = \lapid$ throughout this proof. The condition we need to show
$
	\norm{\mvar{I} -  \Ltil^{\dag} \mvar{L} }_{\mvar{F} \rightarrow \mvar{F}}
	\leq \sqrt{\epsilon^2 \gamma^{-1}}
$
	is equivalent to
	\[
	\left(\mvar{I} - \Ltil^{\dag} \mvar{L}  \right)^{\top}
	\mvar{F}
	\left(\mvar{I} - \Ltil^{\dag} \mvar{L}  \right)
	\preceq \epsilon^2 \gamma^{-1} \mvar{F}.
	\]
	\rp{somehow expanding the difference terms further out is a bad idea :-(}
	By rearranging a factor of $\Ltil$ on the LHS, we get
	\begin{align}
	\left(\mvar{I} - \Ltil^{\dag} \mvar{L}  \right)^{\top}
	\mvar{F}
	\left(\mvar{I} - \Ltil^{\dag} \mvar{L}  \right)
	&=
	\left(\Ltil -  \mvar{L}  \right)^{\top}
	\Ltil^{\dag \top} \mvar{F} \Ltil^{\dag}
	\left(\Ltil -  \mvar{L}  \right)\\
	\label{eq:ineq1} & \preceq 
	\gamma^{-1}
	\left(\mvar{L}^{(n)} -  \mvar{L}  \right)^{\top}
	\mvar{F}^{\dag}
	\left(\mvar{L}^{(n)} -  \mvar{L}  \right),
	\end{align}
where in the last inequality we used
$ \Ltil^{\dag \top} \mvar{F} \Ltil^{\dag}
	\preceq \gamma^{-1} { \mvar{F}^{\dag}}. $
	Condition~\ref{cond:OperatorError}, i.e., 
	$\norm{\mvar{F}^{\pseudoRoot}
		\left(\mvar{L}^{(n)} -  \mvar{L}  \right)
		\mvar{F}^{\pseudoRoot}}_2 \leq \epsilon$ is equivalent to
	\begin{equation}
	\label{eq:ineq2}
	\left(\mvar{L}^{(n)} -  \mvar{L}  \right)^{\top}
	\mvar{F}^{\dag}
	\left(\mvar{L}^{(n)} -  \mvar{L}  \right)
	\preceq \epsilon^{2} \mvar{F}.
	\end{equation}
	Combining \autoref{eq:ineq1} and \autoref{eq:ineq2}, we get
$
	\left(\mvar{I} - \Ltil^{\dag} \mvar{L}  \right)^{\top}
	\mvar{F}
	\left(\mvar{I} - \Ltil^{\dag} \mvar{L}  \right)
	\preceq \epsilon^{2} \gamma^{-1} \mvar{F}.
$
\end{proof}

We would like for the error guarantees of our solver to be in terms of $\u{\mvar{L}}$. In order to provide such guarantees, we need to relate this matrix to $\mvar{F}.$ 

\begin{lemma}\label{lem:relate-u-and-f}
$\u{\mvar{L}} / O(\log(n)) \preceq \mvar{F} \preceq O(n^2 \log^5 n) \cdot \u{\mvar{L}}
$
\end{lemma}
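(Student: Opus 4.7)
I would handle the two inequalities separately, with the lower bound being nearly immediate and the upper bound requiring a chain of relationships back to the original matrix. Recall that $\mvar{F} = \sum_{p=0}^{p_{\max}-1} \theta_p \u{\mvar{S}^{(i_p)}}$ with $\theta_p = 1/p_{\max}$ and $p_{\max} = O(\log n)$, and that every $\mvar{S}^{(i_p)}$ is an Eulerian Laplacian by the invariants maintained by \autoref{alg:multiPhase} and \autoref{alg:singlePhase}.

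\textbf{Lower bound.} Since each $\u{\mvar{S}^{(i_p)}}$ is PSD (being the symmetrization of an Eulerian Laplacian), I can drop all terms except $p=0$ to get $\mvar{F} \succeq \theta_0\, \u{\mvar{S}^{(0)}}$. But the very first line of $\eulLU$ replaces the input $\mvar{L}$ with its $O(\epsilon/\log n)$-asymmetric approximation, which the algorithm then uses as $\mvar{S}^{(0)}$. By the definition of asymmetric approximation, this gives $\normInline{\u{\mvar{L}}^{\pseudoRoot}(\mvar{L}-\mvar{S}^{(0)})\u{\mvar{L}}^{\pseudoRoot}}_2 \leq O(\epsilon/\log n) \leq 1/2$, and taking symmetric parts yields $\u{\mvar{S}^{(0)}} \succeq \tfrac{1}{2}\u{\mvar{L}}$. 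Combining with $\theta_0 = 1/p_{\max} = \Omega(1/\log n)$ gives $\mvar{F} \succeq \Omega(1/\log n)\cdot \u{\mvar{L}}$.

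\textbf{Upper bound.} It suffices to show $\u{\mvar{S}^{(i_p)}} \preceq O(n^2 \log^4 n)\, \u{\mvar{L}}$ for every $p$, since $\mvar{F}$ is a convex combination of these matrices and gains a further $p_{\max}=O(\log n)$ factor from bookkeeping. I would prove this by tracing $\mvar{S}^{(i_p)}$ back to a Schur complement of $\mvar{L}$ in two stages. First, at the matrix level: the chain of operations producing $\mvar{S}^{(i_p)}$ from $\mvar{L}$ consists of (i) $O(\log n)$ global $\epsilon'$-asymmetric sparsification steps (from \autoref{thm:order_n_sparsifier}), each of which preserves the symmetric part up to a $(1\pm \epsilon')$ factor and hence produces at most $O(1)$ total multiplicative distortion on $\u{\cdot}$; and (ii) a sequence of exact single-vertex Schur complement steps, possibly with extra $\textsc{SingleVertexElim}$ perturbations whose accumulated contribution is controlled by \autoref{thm:InPhaseErrorAccumulation} and \autoref{lem:CumulativeErrorNew}, yielding an additional $O(\log n)$ factor in the symmetric part. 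Second, at the structural level: for any Eulerian Laplacian $\mvar{M}$ on at most $n$ vertices and any index set $C$, the bound $\u{\sc{\mvar{M}}{V \setminus C}} \preceq O(n^2)\, \u{\mvar{M}}$ holds — this is precisely the Schur complement ``blow-up'' phenomenon quantified in \autoref{sec:SchurBounds}, and it is at most quadratic in $n$ (the directed cycle saturates the linear dependence, and sequential eliminations only compound it once more). Applying this to $\mvar{M} = \mvar{L}$ and using the invariance $\u{\mvar{L}} = \u{\mvar{L}^{(i_p)}}$ up to an $O(1)$ factor (from \autoref{lem:CumulativeErrorNew} applied with the already-established lower bound on $\mvar{F}$ vs $\u{\mvar{L}}$) closes the chain.

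\textbf{Main obstacle.} The delicate step is the Schur complement bound $\u{\sc{\mvar{L}}{V \setminus C}} \preceq O(n^2)\, \u{\mvar{L}}$ and its propagation through sparsifications without losing additional polynomial factors. In the directed setting, the ``undirected slack'' one has when bounding Schur complements of a symmetric Laplacian by the Laplacian itself simply fails, and one must use the $0.1$-RCDD block structure of eliminated vertices (guaranteed by the choice of $F$ in \autoref{alg:singlePhase}) to keep the blow-up polynomial rather than exponential in the number of phases. Once this structural bound is in place, the rest of the argument is routine composition of asymmetric approximations, and the resulting $O(n^2 \log^5 n)$ factor is dominated by one factor of $n^2$ from the single Schur complement step, one $\log n$ from the number of phases, and $\log^4 n$ from tracking the $p_{\max}$-many sparsification rounds at their working precision $\epsilon' = \Theta(\epsilon/\log^2 n)$.
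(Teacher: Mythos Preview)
Your lower bound is fine and matches the paper's argument almost verbatim.

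Your upper bound, however, takes a different route from the paper and has a genuine gap. The paper does \emph{not} try to bound each $\u{\mvar{S}^{(i_p)}}$ directly against $\u{\mvar{L}}$. Instead it uses Part~\ref{part:FNormBounds} of \autoref{lem:CumulativeErrorNew}, which gives $\mvar{F} \preceq O(\log^2 n)\,\widetilde{\mvar{L}}^{\top}\mvar{F}^{\dag}\widetilde{\mvar{L}}$; it then replaces $\widetilde{\mvar{L}}$ by $\mvar{L}$ via \autoref{lem:approx-implies-lfl} (another $O(\log^2 n)$ factor), replaces $\mvar{F}^{\dag}$ by $\u{\mvar{L}}^{\dag}$ using the already-proven lower bound (an $O(\log n)$ factor), and finally invokes the bound $\mvar{L}^{\top}\u{\mvar{L}}^{\dag}\mvar{L}\preceq O(n^2)\,\u{\mvar{L}}$ from~\cite{cohen2016faster}.

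Your argument breaks at the step ``$\u{\mvar{L}} = \u{\mvar{L}^{(i_p)}}$ up to $O(1)$ from \autoref{lem:CumulativeErrorNew} plus the lower bound on $\mvar{F}$.'' \autoref{lem:CumulativeErrorNew} Part~\ref{part:CumulativeError} only gives $-\epsilon\,\mvar{F}\preceq \u{\mvar{L}} - \u{\mvar{L}^{(i_p)}}\preceq \epsilon\,\mvar{F}$. To conclude $\u{\mvar{L}^{(i_p)}}\preceq O(1)\,\u{\mvar{L}}$ from this you would need $\epsilon\,\mvar{F}\preceq O(1)\,\u{\mvar{L}}$, i.e.\ an \emph{upper} bound on $\mvar{F}$ in terms of $\u{\mvar{L}}$ --- exactly what you are trying to prove. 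The lower bound $\mvar{F}\succeq\u{\mvar{L}}/O(\log n)$ goes the wrong way. Separately, \autoref{sec:SchurBounds} does not establish an $O(n^2)$ Schur-complement blow-up for arbitrary subsets; it proves constant-factor bounds only for RCDD subsets. The universal $O(n^2)$ bound you want follows instead from \autoref{lem:complicated_matrix_SC}\ref{part:ComplicatedIneq} (giving $\u{\sc{\mvar{L}}{C}}\preceq\c{\mvar{L}}$) together with the cited Lemma~13 of~\cite{cohen2016faster}. Your general strategy of bounding each $\u{\mvar{S}^{(i_p)}}$ \emph{can} be made to work --- by iterating \autoref{lem:schur_robust} across phases to show $\u{\mvar{S}^{(i_p)}}\preceq O(1)\,\u{\sc{\mvar{L}}{[i_p+1,n]}}$ and then applying the universal $O(n^2)$ bound --- but that is not the chain you wrote down, and it bypasses the circularity.
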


\begin{proof}

Since $\mvar{F} = \sum_{p' \leq p} \theta_{p'} \u{\mvar{S}^{\left(i_p \right)}} $ and $\theta_{p'} = \frac{1}{O(\log n)}$, we have 
$
\mvar{F} \succeq \frac{1}{O(\log n)} \u{\mvar{L}}
$.
We have by Lemma~\ref{lem:CumulativeErrorNew}, Part~\ref{part:FNormBounds} that
with probability $1 - O(\delta)$,
\begin{align*}
\mvar{F} 
\preceq O(\log^2 n) \cdot \widetilde{\mvar{L}}^{T}  \mvar{F}^{\dag} \widetilde{\mvar{L}} 
\preceq O(\log^4 n) \cdot \mvar{L}^{T}  \mvar{F}^{\dag} \mvar{L} 
\preceq O(\log^5 n) \cdot \mvar{L}^{T}  \u{\mvar{L}}^{\dag} \mvar{L} 
\preceq  O(n^2 \log^5 n) \cdot \u{\mvar{L}},
\end{align*}
where we used \autoref{lem:approx-implies-lfl} and
Lemma~\ref{lem:CumulativeErrorNew} Part~\ref{part:CumulativeError}
for the second step and Lemma~{13} from \cite{cohen2016faster} pg. 19
for the last step.
\end{proof}

We have now stated the key theorems and lemmas needed to analyze
correctness.
With these tools, we can obtain the main theorem statement about finding sparse LU factorizations (\autoref{thm:eulLU}) as follows.

\begin{proof}[Proof of \autoref{thm:eulLU}]

It is clear from the statement of the algorithm and the guarantees of
\autoref{thm:InPhaseErrorAccumulation} \sidford{I changed the theorem slightly to have a low order term. Can we either change the algorithm to check if $\epsilon < n^{-1/2}$ and if it is run Gaussian elimination on the sparsifier or just make the overall bound $\epsilon^{-10}$? I don't have a strong preference.} that \eulLU
(\autoref{alg:multiPhase}) outputs an LU factorization with the
sparsity claimed and with the claimed bound on running time and error probability.
The remaining correctness guarantees were proven as
\autoref{lem:relate-u-and-f}, and both parts of
\autoref{lem:CumulativeErrorNew}, respectively.
\end{proof}

We now have all the tools we need to obtain a fast solver for strongly connected Eulerian Laplacian systems.

\begin{proof}[Proof of \autoref{cor:solver}]
Suppose we have an Eulerian Laplacian $\mvar{L}$ and find a $1 / O(\log^2(n))$-approximate LU factorization in nearly-linear time using \autoref{thm:eulLU} in the sense that $\| \mApxNorm^{\pseudoOp / 2} (\mlap - \matlow \matup) \mApxNorm^{\pseudoOp / 2} \|_2 \leq 1/O(\log^2 n)$. Because it is an LU factorization, we can solve systems in it in linear time. By \autoref{lem:ApproxPinv}, such a solver is an $0.1$-approximate pseudoinverse of $\mvar{L}$ with respect to $\mvar{F}$, provided we pick an appropriately small constant in the error guarantee we invoke our LU factorization algorithm \eulLU (\autoref{alg:multiPhase}) with. By \autoref{lem:precond_richardson}, if we precondition the original system with this solver, we can find a solution $x$ to the original system with $\epsilon/\text{poly}(n)$ error in the sense that $\norm{x - \mlap^{\pseudoOp} b}_\mvar{F} \leq \frac{\epsilon}{\text{poly}(n)} \cdot \norm{\mlap^\pseudoOp b}_\mvar{F}$ in nearly-linear time. Since $\mvar{F} \approx_{\text{poly}(n)} \u{\mvar{L}}$, this implies $\norm{x - \mlap^{\pseudoOp} b}_{\u{\mvar{L}}} \leq \epsilon \cdot \norm{\mlap^\pseudoOp b}_{\u{\mvar{L}}}$. \todolow{consider adding more details, especially about how we do Richardson}
\end{proof}




\newcommand{\SingleVertexElim}{\textsc{SingleVertexElim}}
\newcommand{\ulocal}{\u{local}}

\section{Unbiased Degree Preserving Vertex Elimination}
\label{sec:SingleVertexElimination}

In this section we provide and analyze $\SingleVertexElim$, see
\autoref{alg:selim}, which produces a sparse approximation of the clique
created by Gaussian Elimination on an
Eulerian directed Laplacian.
It can be implemented to run in $O(\deg(v) \log \deg(v))$ time where $\deg$ is the
combinatorial degree of the vertex $v$ being eliminated, i.e. the number of vertices incident to it.

The algorithm has three key features. When including self-loops, it
preserves the weighted in and out degree of each vertex.
This ensures the graph created by replacing the clique with
the sparse approximation is still Eulerian. Note that we may get
self-loops which will cancel out and change the degree of vertices,
but it won't change the fact that each vertex still has in-degree
equal to out-degree. 
Secondly, it produces a \emph{sparse} approximation of the biclique
created by elimination. Thirdly, it achieves these guarantees while being
an unbiased estimate of the biclique.

\begin{algorithm}								
\caption{$\SingleVertexElim(\ll,\rr)$}
\label{alg:selim}
\SetAlgoVlined
					
\KwIn{$\ll, \rr \in \rea^{n}_{\geq 0} $ such that 
$\vecone^{\top} \ll = \vecone^{\top} \rr$.  }
\KwOut{$\mdir \in \rea^{n \times n}$ \sidford{add english description of what this is as we do for other algorithms?}}

$s \leftarrow \vecone^{\top} \ll$ \;

\uIf {$ s = 0$} { 
	\KwRet{ $\mvar{\zero}$ } \;
}
\uElseIf{ $\min(\ll) \leq \min(\rr)$ \sidford{Where do we clarify that this is the smallest non-zero entry?}}
{
	$i \leftarrow \argmin(\ll)$\;
	Pick index $j \leftarrow k$ with probability $\rr(k) / s$\;
	\KwRet{ $\ll(i) \vecind_{i} \vecind_{j}^{\top}  
		+
		\textsc{SingleVertexElim}(\ll - \ll(i) \vecind_{i} ,\rr -\ll(i) \vecind_{j})$
  }\;
}
\uElse
{
$i \leftarrow \argmin(\rr)$\;
Pick index $j \leftarrow k$ with probability $\ll(k) / s$\;
 \KwRet{
	 $\rr(i) \vecind_{j} \vecind_{i}^{\top}  
   +
   \textsc{SingleVertexElim}(\ll - \rr(i) \vecind_{j} ,\rr -\rr(i) \vecind_{i})$
  }\;
}
\end{algorithm}

\begin{lemma}
The matrix $\ma$ returned by 
$\textsc{SingleVertexElim}(\ll,\rr)$
has $\nnz(\ma) \leq \nnz(\ll) + \nnz(\rr)$. Furthermore, the algorithm makes at most $\nnz(\ll) + \nnz(\rr)$
recursive calls to itself each of which is made to a 
vectors $\ll$ and $\rr$ with non-negative entries satisfying $\vecone^{\top} \ll = \vecone^{\top} \rr$. 
\end{lemma}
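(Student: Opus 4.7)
The plan is to analyze one recursive step and then argue inductively. I would first clarify that $\argmin$ in the algorithm refers to the minimum over the \emph{support} (i.e. over nonzero entries); otherwise picking an index with value $0$ would trivialize the subtractions. Under this reading, I would verify the three invariants in turn: (i) nonnegativity of $\ll$ and $\rr$, (ii) $\vecone^\top \ll = \vecone^\top \rr$, and (iii) a strict decrease of $\nnz(\ll) + \nnz(\rr)$ per non-base recursive call.

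For (i) and (ii), consider the first branch, where $i = \argmin(\ll)$ and $j$ is the random index. We subtract $\ll(i)\vecind_i$ from $\ll$ (which zeros the $i$th coordinate and leaves the rest unchanged), and subtract $\ll(i)\vecind_j$ from $\rr$. Since $\ll(i) = \min(\ll) \leq \min(\rr) \leq \rr(j)$, nonnegativity of $\rr$ is preserved; nonnegativity of $\ll$ is immediate. Both sums drop by exactly $\ll(i)$, maintaining the invariant $\vecone^\top\ll = \vecone^\top\rr$. The symmetric branch is analogous, using $\rr(i) \leq \min(\ll) \leq \ll(j)$. The base case $s=0$ forces $\ll = \zero$ (since $\ll \geq 0$), and then also $\rr = \zero$ by the invariant, so returning $\mzero$ is consistent.

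For (iii), in the first branch $\nnz(\ll)$ strictly decreases by exactly $1$ (entry $i$ is zeroed), while $\nnz(\rr)$ changes by $0$ or $-1$ (it decreases iff $\ll(i) = \rr(j)$). Hence $\nnz(\ll) + \nnz(\rr)$ strictly decreases by at least $1$; the same holds in the second branch by symmetry. It follows by induction on $\nnz(\ll) + \nnz(\rr)$ that the recursion terminates, and the total number of recursive calls is at most $\nnz(\ll) + \nnz(\rr)$ (each call either reduces the quantity by at least $1$ or is the base case reached with sum $0$).

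Finally, for the sparsity bound on $\ma$: each non-base-case invocation adds exactly one rank-one term $\ll(i)\vecind_i\vecind_j^\top$ or $\rr(i)\vecind_j\vecind_i^\top$ on top of the recursive result. Hence $\nnz(\ma)$ is bounded by the number of non-base recursive calls, which is at most $\nnz(\ll) + \nnz(\rr)$. I expect no serious obstacle here; the only subtle point is being careful that $\argmin$ is interpreted over the support so that the claimed strict decrease of $\nnz(\ll)+\nnz(\rr)$ actually holds on every non-terminal call.
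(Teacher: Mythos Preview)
Your proposal is correct and follows essentially the same inductive argument on $\nnz(\ll)+\nnz(\rr)$ as the paper. If anything, you are more careful than the paper: you explicitly verify the nonnegativity and sum invariants (the paper just asserts $\ll-\ll(i)\vecind_i,\ \rr-\ll(i)\vecind_j\geq \vzero$), and you flag the point that $\argmin$ must be read over the support for the strict decrease to hold.
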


\begin{proof}
We prove this by induction on $\nnz(\ll) + \nnz(\rr)$.
Base case: $\nnz(\ll) + \nnz(\rr) = 0$, then $\ll,\rr = \vec{\zero}$, so
$\ma = \mvar{\zero}$, and $\nnz(\ma)=0$.
This proves the base case.
For the inductive step, we suppose $\nnz(\ll) + \nnz(\rr) = k+1$ and
that the lemma holds whenever $\nnz(\ll) + \nnz(\rr) \leq k$.
Without loss of generality consider the case of $\min(\ll) \leq \min(\rr)$.
Note that $\ll - \ll(i) \vecind_{i} ,\rr -\ll(i) \vecind_{j}\geq
\vec{\zero}$, and that
\[ 
\nnz(\ll - \ll(i) \vecind_{i}) + \nnz(\rr -\ll(i) \vecind_{j}) 
\leq
k
\]
so by the induction hypothesis with $\ma' = \textsc{SingleVertexElim}(\ll - \ll(i)
\vecind_{i} ,\rr -\ll(i) \vecind_{j})$ we have $\nnz(\ma') \leq k$, and so $\nnz(\ma) \leq
k + 1$. This proves the lemma by induction.
The number of recursive calls can be bounded in the same way.
\end{proof}

\begin{lemma}
\label{clm:se:spaces}
The matrix $\ma$ returned by 
$\textsc{SingleVertexElim}(\ll,\rr)$
has only non-negative entries and
satisfies
$\ma \vecone = \ll$, and $\vecone^{\top} \ma = \rr^{\top}$.
\end{lemma}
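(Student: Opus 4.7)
My plan is to prove this by induction on $\nnz(\ll) + \nnz(\rr)$, mirroring the structure used in the previous lemma about sparsity. The three claims (non-negativity, row sums equal to $\ll$, column sums equal to $\rr^\top$) are all linear in the returned matrix, so each recursive call contributes additively and we just need to verify that the rank-one correction introduced at the top of the recursion fixes things up correctly.

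For the base case $\nnz(\ll) + \nnz(\rr) = 0$, both vectors are zero and so is $s$, so the algorithm returns $\mvar{\zero}$, which trivially satisfies all three conclusions. I also want to handle the case $s=0$ with $\ll,\rr$ possibly non-trivially, but since $\ll,\rr \geq \vzero$ and $\vecone^\top \ll = \vecone^\top \rr = s = 0$, this forces $\ll = \rr = \vzero$ and we are back in the trivial case.

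For the inductive step, it suffices by symmetry to treat the branch $\min(\ll) \leq \min(\rr)$, interpreting $\min$ and $\argmin$ as over the non-zero entries (the other branch is identical up to transpose). Let $i = \argmin(\ll)$ and let $j$ be the random index sampled with probability $\rr(k)/s$; note $j$ is only chosen from indices with $\rr(j) > 0$. I need to check that the recursive inputs $\ll' \defeq \ll - \ll(i)\vecind_i$ and $\rr' \defeq \rr - \ll(i)\vecind_j$ satisfy the preconditions: non-negativity and $\vecone^\top \ll' = \vecone^\top \rr'$. Non-negativity of $\ll'$ is immediate. For $\rr'$, the key observation is that $\ll(i) = \min(\ll) \leq \min(\rr) \leq \rr(j)$, where the last inequality uses $\rr(j) > 0$ and the convention that $\min(\rr)$ is the smallest non-zero entry. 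Hence $\rr'(j) \geq 0$. The sum condition follows since both vectors lose exactly $\ll(i)$ from their sum. This is the one slightly delicate step in the proof; it is essentially the reason the algorithm takes the minimum of $\ll$ and $\rr$ to decide which branch to execute.

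With the preconditions verified, the induction hypothesis gives a non-negative matrix $\ma' = \textsc{SingleVertexElim}(\ll', \rr')$ with $\ma'\vecone = \ll'$ and $\vecone^\top \ma' = \rr'^\top$. The returned matrix is $\ma = \ll(i)\vecind_i \vecind_j^\top + \ma'$, which is non-negative since $\ll(i) \geq 0$. Then
\[
\ma \vecone = \ll(i)\vecind_i + \ma'\vecone = \ll(i)\vecind_i + \ll' = \ll,
\]
and
\[
\vecone^\top \ma = \ll(i)\vecind_j^\top + \vecone^\top \ma' = \ll(i)\vecind_j^\top + \rr'^\top = \rr^\top,
\]
which closes the induction. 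The symmetric branch where $\min(\rr) < \min(\ll)$ is handled identically by swapping the roles of $\ll$ and $\rr$ (and transposing the rank-one term). I expect no further technical obstacles; the only subtlety worth highlighting in the write-up is the interpretation of $\min$ as the smallest \emph{non-zero} entry, since this is what makes $\rr'(j) \geq 0$ go through.
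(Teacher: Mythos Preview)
Your proof is correct and follows the same inductive strategy on $\nnz(\ll)+\nnz(\rr)$ that the paper uses. Your write-up is in fact a bit more careful than the paper's: you explicitly verify the preconditions of the recursive call (in particular the inequality $\ll(i)\leq \rr(j)$ via the $\min$-over-nonzero-entries convention) and you address non-negativity directly, whereas the paper's proof of this lemma leans on the preceding lemma for the recursive preconditions and leaves non-negativity implicit.
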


\begin{proof}
We prove the lemma by induction on $\nnz(\ll) + \nnz(\rr)$. 
It is true in the base case $\nnz(\ll) + \nnz(\rr) = 0$,
where $\ll,\rr = \vec{\zero}$, so
$\ma = \mvar{\zero}$, and $\ma \vecone = \vec{\zero} = \ll$, 
and $\vecone^{\top} \ma = \vec{\zero}^{\top} = \rr^{\top}$.

For the inductive step, we suppose $\nnz(\ll) + \nnz(\rr) = k+1$ and
that the lemma holds whenever $\nnz(\ll) + \nnz(\rr) \leq k$.
W.l.o.g. consider the case of $\min(\ll) \leq \min(\rr)$.

Let
$\ma' = \textsc{SingleVertexElim}(\ll - \ll(i) \vecind_{i} ,\rr -\ll(i)
\vecind_{j})$.
By the induction hypothesis,
\[
\ma  \vecone
=  \ll(i) \vecind_{i} \vecind_{j}^{\top} \vecone 
+  \ma' \vecone = \ll(i) \vecind_{i}  + \ll - \ll(i) \vecind_{i} = \ll
.
\]
and similarly
\[
\vecone^{\top} \ma  
=  \vecone^{\top} \ll(i) \vecind_{i} \vecind_{j}^{\top}
+ 
\vecone^{\top}  \ma'
 = 
\ll(i) \vecind_{j}^{\top}
 + \rr^{\top} - \ll(i) \vecind_{j}^{\top} = \rr
.
\]
\end{proof}

\begin{lemma}
\label{clm:se:expec}
Given $\ll, \rr \in \rea^{n} $ s.t. both have non-negative entries and
$\vecone^{\top} \ll = \vecone^{\top} \rr = s$,
let $\ma = \textsc{SingleVertexElim}(\ll,\rr)$.
Then $\expec{}{\ma} = \ll \rr^{\top} / s$.
\end{lemma}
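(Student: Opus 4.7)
My plan is to prove this by induction on $\nnz(\ll) + \nnz(\rr)$, which is the natural induction already used for the previous two lemmas about $\textsc{SingleVertexElim}$. The base case is $\nnz(\ll) + \nnz(\rr) = 0$, in which case $\ll = \rr = \vec{\zero}$ (equivalently $s = 0$), and the algorithm returns $\mzero$, matching $\ll \rr^{\top}/s = \mzero$ (interpreting the degenerate ratio as zero).

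For the inductive step, I will without loss of generality assume $\min(\ll) \leq \min(\rr)$ (the other branch is completely symmetric). Let $i = \argmin(\ll)$ and condition on the random index $j$, which is drawn with probability $\rr(j)/s$. Conditioned on $j$, the algorithm returns $\ll(i)\vecind_i \vecind_j^{\top} + \ma'$, where $\ma' = \textsc{SingleVertexElim}(\ll - \ll(i)\vecind_i,\ \rr - \ll(i)\vecind_j)$. By \autoref{clm:se:spaces}, the two input vectors to this recursive call are non-negative with common sum $s - \ll(i)$; in particular one checks that $\rr - \ll(i)\vecind_j \geq \vec{\zero}$ since $\ll(i) = \min(\ll) \leq \min(\rr) \leq \rr(j)$. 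Since $\nnz$ strictly decreases (the $i$-th entry of $\ll$ is zeroed out), the inductive hypothesis applies and gives
\[
\expec{}{\ma' \mid j}
= \frac{(\ll - \ll(i)\vecind_i)(\rr - \ll(i)\vecind_j)^{\top}}{s - \ll(i)}
\]
whenever $s - \ll(i) > 0$; the degenerate case $s = \ll(i)$ forces $\ll = \ll(i)\vecind_i$ and $\ma' = \mzero$, which can be handled separately (or absorbed by a limiting interpretation).

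The heart of the proof is then to take the expectation over $j$ and show the cross-term collapses. Writing
\[
\expec{}{\ma} = \sum_{j} \frac{\rr(j)}{s}\left(\ll(i)\vecind_i \vecind_j^{\top} + \expec{}{\ma' \mid j}\right),
\]
the first summand yields $\frac{\ll(i)}{s}\vecind_i \rr^{\top}$. For the second, pull the factor $(\ll - \ll(i)\vecind_i)/(s(s-\ll(i)))$ outside, and use $\sum_j \rr(j)(\rr - \ll(i)\vecind_j)^{\top} = s\rr^{\top} - \ll(i)\rr^{\top} = (s-\ll(i))\rr^{\top}$, so that the $(s-\ll(i))$ factors cancel and the term simplifies to $(\ll - \ll(i)\vecind_i)\rr^{\top}/s$. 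Adding the two pieces gives $\ll\rr^{\top}/s$, completing the induction.

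The only real subtlety is the cancellation of $(s - \ll(i))$ that makes the inductive step close up exactly, along with the boundary case $s = \ll(i)$; both are routine. The symmetric branch $\min(\rr) < \min(\ll)$ is handled identically with the roles of $\ll$ and $\rr$ (and the outer-product orientation) swapped, producing $\rr(i)\vecind_j \vecind_i^{\top}$ instead, and the same algebra yields $\ll\rr^{\top}/s$. No additional machinery beyond linearity of expectation and the induction hypothesis is needed.
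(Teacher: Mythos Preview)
Your proof is correct and follows essentially the same approach as the paper: induction on $\nnz(\ll)+\nnz(\rr)$, the same base case, the same WLOG branch, conditioning on the sampled index $j$, applying the inductive hypothesis to the recursive call, and then collapsing the sum over $j$ to $\ll\rr^{\top}/s$. Your algebraic simplification via $\sum_j \rr(j)(\rr-\ll(i)\vecind_j)^{\top}=(s-\ll(i))\rr^{\top}$ is in fact a bit cleaner than the paper's fully expanded computation; the only quibble is that the non-negativity and equal-sum property of the recursive inputs is established in the first lemma of the section rather than in \autoref{clm:se:spaces}.
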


\begin{proof}
 We prove the lemma by induction on $\nnz(\ll) + \nnz(\rr)$. 
It is true in the base case $\nnz(\ll) + \nnz(\rr) = 0$,
where $\ll,\rr = \vec{\zero}$, so
$\ma = \mvar{\zero}$, so $\expec{}{\ma} = \mvar{\zero}$.
For the inductive step, we suppose $\nnz(\ll) + \nnz(\rr) = k+1$ and
that the lemma holds whenever $\nnz(\ll) + \nnz(\rr) \leq k$.
Without loss of generality consider the case of $\min(\ll) \leq \min(\rr)$. In this case we have
\begin{align*}
\expec{}{\ma}
& =
\sum_{j} \frac{\rr(j)}{s}
\left(
\ll(i) \vecind_{i} \vecind_{j}^{\top}  
+
\frac{1}{s-\ll(i)}
(\ll - \ll(i) \vecind_{i} ) (\rr -\ll(i) \vecind_{j})^{\top}
\right)
\\
& =
\sum_{j} 
\frac{\ll(i)}{s}
\vecind_{i} \rr(j) \vecind_{j}^{\top}  
+
\frac{\rr(j)}{s}
\frac{1}{s-\ll(i)}
(\ll - \ll(i) \vecind_{i} ) \rr^{\top}
-
\frac{\ll(i)}{s}
\frac{1}{s-\ll(i)}
(\ll - \ll(i) \vecind_{i} ) \rr(j) \vecind_{j}^{\top}
\\
& =
\frac{\ll(i)}{s}
\vecind_{i} \rr^{\top}  
+
\frac{1}{s-\ll(i)}
(\ll - \ll(i) \vecind_{i} ) \rr^{\top}
-
\frac{\ll(i)}{s}
\frac{1}{s-\ll(i)}
(\ll - \ll(i) \vecind_{i} ) \rr^{\top}
\\
& =
\ll(i)\vecind_{i} \rr^{\top}  
\left(
\frac{1}{s}
-
\frac{1}{s-\ll(i)}
+
\frac{\ll(i)}{s}
\frac{1}{s-\ll(i)}
\right)
+
\ll\rr^{\top}
\frac{1}{s-\ll(i)}
\left(
1
-
\frac{\ll(i)}{s}
\right)
\\
&=
\ll \rr^{\top} / s
.
\end{align*}
\end{proof}
A crucial matrix used in analyzing the elimination of a single
vertex is the Schur complement of the star incident on the
eliminated vertex in the undirectification of the whole matrix.
\begin{definition}
	\label{def:local}
        Given an Eulerian Laplacian $\mlap$ and a vertex $v$, let 
        $$\u{local} = \vstar{\u{\mlap}}{v}  - \frac{1}{\u{\mlap}(v, v)}\u{\mlap}(:, v) \u{\mlap}(v, :).$$
\end{definition}
Note that
\[
\u{local} \preceq \vstar{\u{\mlap}}{v}.
\]
And hence for a random choice of vertex $v$ in a graph with $n$
remaining vertices 
\begin{equation}
\expec{v}{\u{local}} \preceq\frac{2}{n} \u{\mvar{L}}
\label{eq:exptlocalglobal}
\end{equation}

\begin{lemma}[Single Vertex Elimination Routine]
	\label{lem:SElim}
	There is a routine \textsc{SingleVertexElim} that takes
        the in and out adjacency list vectors $\ll$ and $\rr$
        of a vertex $u$ in an Eulerian Laplacian
	with $d$ non-zeros,
	and produces a matrix $\ma$ with at most $d$ 
	non-zeros such that the error matrix
	\[
	\mvar{X} = \frac{1}{\rr^{\trp}\vecone} \ll\rr^{\trp}  - \ma
	\]
	satisfies
\begin{enumerate}
\item \label{part:SElimNullSpace}
$\mvar{X} \vecone = 0$, $\mvar{X}^{\top} \vecone = 0$, and
\item \label{part:SElimExpectation}
$\expec{}{\mvar{X}} = 0$, and
\item \label{part:SElimError}
For the local undirectification,
$\mvar{U}_{local}$ as given in \autoref{def:local}, we have
$
\normInline{\mvar{U}_{local}^{\pseudoRoot} \mvar{X} \mvar{U}_{local}^{\pseudoRoot}}_2 \leq 4 ~.
$
	\end{enumerate}
\end{lemma}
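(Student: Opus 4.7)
Parts~\ref{part:SElimNullSpace} and~\ref{part:SElimExpectation} I plan to derive immediately from the two preceding lemmas. For Part~\ref{part:SElimNullSpace}, combining $\ma\vecone = \ll$ and $\vecone^\top \ma = \rr^\top$ from \autoref{clm:se:spaces} with the identities $\frac{1}{s}\ll\rr^\top\vecone = \ll$ and $\vecone^\top \frac{1}{s}\ll\rr^\top = \rr^\top$ (which hold since $\vecone^\top\ll = \vecone^\top \rr = s$) shows that both terms of $\mvar{X}$ contribute the same row and column sums, so $\mvar{X}\vecone = 0$ and $\mvar{X}^\top\vecone = 0$. Part~\ref{part:SElimExpectation} is immediate from \autoref{clm:se:expec}, which already gives $\E[\ma] = \frac{1}{s}\ll\rr^\top$, hence $\E[\mvar{X}] = 0$.

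For the spectral bound in Part~\ref{part:SElimError}, my plan is to establish the equivalent bilinear inequality $|x^\top \mvar{X} y| \leq 4 \|x\|_{\u{local}} \|y\|_{\u{local}}$ for every $x, y \in \R^n$. The key structural observation is that \emph{both} $\ma$ and $\frac{1}{s}\ll\rr^\top$ are non-negative matrices with row sums $\ll$ and column sums $\rr$. For any such non-negative $\mvar{M}$, Cauchy--Schwarz applied to the edge-indexed sum yields
\[
|x^\top \mvar{M} y|
\leq \Bigl( \textstyle\sum_{i,j} \mvar{M}_{ij} x_i^2 \Bigr)^{1/2}
   \Bigl(\textstyle\sum_{i,j} \mvar{M}_{ij} y_j^2 \Bigr)^{1/2}
= \|x\|_{\mathrm{diag}(\ll)} \|y\|_{\mathrm{diag}(\rr)} .
\]
Applying this to both terms of $\mvar{X}$, the triangle inequality, together with the entrywise bounds $\ll \leq 2\mathbf{m}$ and $\rr \leq 2\mathbf{m}$ (where $\mathbf{m} = (\ll+\rr)/2$), will give $|x^\top \mvar{X} y| \leq 4 \|x\|_{\mathrm{diag}(\mathbf{m})} \|y\|_{\mathrm{diag}(\mathbf{m})}$.

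The final step replaces the $\mathrm{diag}(\mathbf{m})$-norms by $\u{local}$-norms. Here I will use the null-space property from Part~\ref{part:SElimNullSpace}: since $\mvar{X}\vecone = 0 = \mvar{X}^\top\vecone$, we have $x^\top \mvar{X} y = (x + c\vecone)^\top \mvar{X}(y + c'\vecone)$ for every $c, c' \in \R$, so I may minimize the right-hand side over these shifts. The weighted-variance identity
\[
\min_{c} \|x + c\vecone\|_{\mathrm{diag}(\mathbf{m})}^2
= \|x\|_{\mathrm{diag}(\mathbf{m})}^2 - \frac{(\mathbf{m}^\top x)^2}{s}
= \|x\|_{\u{local}}^2 ,
\]
where the second equality comes from the explicit form $\u{local} = \mathrm{diag}(\mathbf{m}) - \frac{1}{s}\mathbf{m}\mathbf{m}^\top$ that I will read off from \autoref{def:local}, together with its analogue in $y$, upgrades the bound to $|x^\top \mvar{X} y| \leq 4 \|x\|_{\u{local}} \|y\|_{\u{local}}$.

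The main conceptual obstacle is precisely this last step: $\mathrm{diag}(\mathbf{m})$ and $\u{local}$ do not satisfy any PSD ordering with a universal constant, since the ratio $\|x\|_{\mathrm{diag}(\mathbf{m})}^2 / \|x\|_{\u{local}}^2$ is unbounded as $x$ approaches the kernel of $\u{local}$. Consequently, neither $\ma$ nor $\frac{1}{s}\ll\rr^\top$ individually admits a constant $\u{local}$-norm bound (for instance, a single large diagonal entry of $\ma$ can contribute an arbitrarily large $\u{local}$-norm that cancels only against the corresponding mass in $\frac{1}{s}\ll\rr^\top$). The argument therefore must be applied to $\mvar{X}$ directly, and it is the shift-invariance coming from Part~\ref{part:SElimNullSpace} that is the indispensable ingredient in converting the friendly $\mathrm{diag}(\mathbf{m})$-bound into the desired $\u{local}$-bound.
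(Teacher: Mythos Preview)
Your proposal is correct. Parts~\ref{part:SElimNullSpace} and~\ref{part:SElimExpectation} match the paper exactly, and your bilinear/Cauchy--Schwarz route to the diagonal bound $|x^\top\mvar{X}y|\le 4\|x\|_{\mathrm{diag}(\mathbf{m})}\|y\|_{\mathrm{diag}(\mathbf{m})}$ is essentially the paper's row/column--sum argument written in variational form.

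The genuine difference is in the transfer from $\mathrm{diag}(\mathbf{m})$ to $\u{local}$. You use the shift-invariance $\mvar{X}\vecone=\mvar{X}^\top\vecone=0$ together with the variational identity $\min_c\|x+c\vecone\|_{\mathrm{diag}(\mathbf{m})}^2=\|x\|_{\u{local}}^2$. The paper instead invokes the pseudoinverse ordering $\u{local}^{\dagger}\preceq\mvar{D}_{local}^{-1}$ (its \autoref{lem:dlocalulocal}), which immediately gives $\|\u{local}^{\pseudoRoot}\mvar{X}\u{local}^{\pseudoRoot}\|_2\le\|\mvar{D}_{local}^{\pseudoRoot}\mvar{X}\mvar{D}_{local}^{\pseudoRoot}\|_2$ for \emph{any} $\mvar{X}$, without appealing to Part~\ref{part:SElimNullSpace} at all. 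So your claim that the shift-invariance is ``indispensable'' is not quite right: you correctly observe that $\mathrm{diag}(\mathbf{m})\not\preceq c\,\u{local}$, but the comparison the paper needs is on the pseudoinverses, and that one does hold with constant~$1$. Your route is nonetheless valid and arguably more self-contained, since it derives the needed comparison from the explicit form $\u{local}=\mathrm{diag}(\mathbf{m})-\tfrac{1}{s}\mathbf{m}\mathbf{m}^\top$ rather than citing a separate lemma; the paper's route is shorter once that lemma is in hand and makes clear that the norm transfer does not depend on the particular structure of $\mvar{X}$.
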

\begin{proof}
\todolow{Not sure if refactoring will move this around.}
We already have Parts \ref{part:SElimNullSpace} and \ref{part:SElimExpectation} from \autoref{clm:se:spaces} and \autoref{clm:se:expec}.

For Part~\ref{part:SElimError}, we note that by $\ma$ (by \autoref{clm:se:spaces}) has the sum of
the absolute value of its entries in the $i$th row and $i$th column of at most $\ll(i)+\rr(i)$.
This similarly applies to the the expectation $\frac{1}{\rr^{\trp}\vecone} \ll\rr^{\trp}$.  Thus,
the sums for the error matrix $\mvar{X}$ are at most double this: $2(\ll(i)+\rr(i))$.

Now, we define a diagonal matrix $\mvar{D}_{local}$, whose $i$th
diagonal entry is $\frac{\ll(i)+\rr(i)}{2}$. Because the sums of the absolute values of the $i$th row and column are at most $4 (\mvar{D}_{local})_{ii}$, we have 
\[
\| \mvar{D}_{local}^{\pseudoRoot} \mvar{X} \mvar{D}_{local}^{\pseudoRoot} \| \leq 4.
\]
Now, $\| \mvar{U}_{local}^{\dag 1/2} \mvar{D}_{local}^{1/2} \|_2 \leq 1$ by \autoref{lem:dlocalulocal} (proved in
\autoref{sec:mat_facts},  $\mvar{U}_{local}^{\dag} \preceq \mvar{D}_{local}^{-1} $) and consequently
\[
\| \mvar{U}_{local}^{\pseudoRoot} \mvar{X} \mvar{U}_{local}^{\pseudoRoot} \|_2 = \| (\mvar{U}_{local}^{\dag 1/2} \mvar{D}_{local}^{1/2}) (\mvar{D}_{local}^{\pseudoRoot} \mvar{X} \mvar{D}_{local}^{\pseudoRoot}) (\mvar{U}_{local}^{\dag 1/2} \mvar{D}_{local}^{1/2})^{\trp} \|_2 \leq 4.
\]
\end{proof}

\section{Robustly Bounded Schur Complement Sets}
\label{sec:SchurBounds}

As we have discussed, one of the key difficulties in applying repeated vertex elimination to solve Eulerian Laplacian systems is that unlike with symmetric Laplacian systems the Schur complement of an Eulerian Laplacian may be much larger than that of the original Laplacian. Consequently, if we simply eliminate an arbitrary set of vertices the error we incur my too large to ensure we compute an effective preconditioner. 

To circumvent this we eliminate vertices in phases where in each phase we only eliminate vertices that do not "blow up" the Schur complement, i.e. induce Schur complements that are not spectrally dominated by a small multiple of the current Laplacian. Formally, we only eliminate vertices from what we call \emph{robustly bounded schur complement sets} defined below. These are sets, where even under a small amount of spectral error, the Schur complement is not too much larger than the original graph.


\begin{definition}[Robustly Bounded Schur Complement Set]
\label{def:robustSchurBound}
Given an Eulerian Laplacian  $\mvar{L}$, a 
\emph{robustly bounded Schur complement set} vertex set $J$,
is a subset of the vertices of $\mvar{L}$, such that  
for any $\widehat{J} \subseteq J$ and any $\Ltil$ that $1/2$-approximates $\mvar{L}$
we have 
$\mU_{\sc{\Ltil}{\widehat{J}}}\preceq O(1) \mU_{\mvar{L}}$. 
\end{definition}

In this section, we formally show that $\alpha$-RCDD subsets of the vertices are robustly bounded Schur complement sets and therefore we can easily find such sets. The main result of this section is the following formalization of this claim.

\begin{lemma}[$\alpha$-RCDD Sets are Robustly Bounded]
\label{lem:RCDDRobustSchurBound}
Given an Eulerian Laplacian $\mvar{L}$,
for any fixed constant $\alpha$, 
an $\alpha$-RCDD subset $J$
of the vertices of $\mvar{L}$
is robustly bounded.
\end{lemma}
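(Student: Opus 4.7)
The plan is to reduce the statement to a Schur-complement bound for $\mvar{L}$ itself via the spectral approximation, and then to prove that bound by combining (i) an exact block identity relating the symmetrization of an asymmetric Schur complement to the Schur complement of the symmetrization, and (ii) an invariant enforced by RCDD that prevents multiplicative blow-up across successive eliminations. Setting $F=\widehat{J}$ and $C=V\setminus F$, the $\tfrac12$-approximation of \autoref{sec:preliminaries} gives $\u{\Ltil}\preceq \tfrac32\,\u{\mvar{L}}$; and since any subset of an $\alpha$-RCDD set is $\alpha$-RCDD, $F$ itself is $\alpha$-RCDD in $\mvar{L}$. A perturbation argument for Schur complements (using that RCDD of $F$ in $\mvar{L}$ makes $\mvar{L}_{FF}$ well-conditioned, and the spectral closeness of $\Ltil$ to $\mvar{L}$ transfers that invertibility and closeness to $\Ltil_{FF}$) reduces the lemma to showing $\u{\sc{\mvar{L}}{F}}\preceq O(1)\,\u{\mvar{L}}$.

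The core computation is the single-vertex identity, provable by direct block expansion: for any matrix $\mvar{M}$ with invertible pivot $\mvar{M}_{vv}$,
\[
\u{\sc{\mvar{M}}{\{v\}}} \;=\; \sc{\u{\mvar{M}}}{\{v\}} \;+\; \frac{1}{4\mvar{M}_{vv}}\bigl(\mvar{M}_{Cv}-\mvar{M}_{vC}^{\trp}\bigr)\bigl(\mvar{M}_{Cv}-\mvar{M}_{vC}^{\trp}\bigr)^{\trp}.
\]
The first summand is bounded by the standard symmetric Schur-complement inequality, $\sc{\u{\mvar{M}}}{\{v\}}\preceq \u{\mvar{M}}_{CC}\preceq \u{\mvar{M}}$. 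For the rank-one correction, when $\mvar{M}$ is Eulerian a Cauchy--Schwarz bound using the identity $\sum_{i}(w_{iv}+w_{vi})=2\mvar{M}_{vv}$ yields $(\mvar{M}_{Cv}-\mvar{M}_{vC}^{\trp})(\mvar{M}_{Cv}-\mvar{M}_{vC}^{\trp})^{\trp}\preceq 4\mvar{M}_{vv}\,(\vstar{\u{\mvar{M}}}{v})_{CC}$, so the whole correction is bounded by the local undirected star at $v$. Thus a single Eulerian vertex elimination inflates the symmetrization by at most a constant factor.

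The central question is how this cost compounds along all $|F|$ eliminations in $F$. Naive iteration would yield a $2^{|F|}$ blow-up. This is exactly where RCDD is used: I would establish, as an inductive invariant along the elimination ordering $v_1,\dots,v_{|F|}$, that (a) $\alpha$-RCDD on the remaining subset of $F$ is preserved after each Schur complement (a classical fact about diagonally dominant matrices, analogous to the Jacobi-style analyses used in \cite{LeePS15} for undirected minors), and (b) the ratio between the current pivot $\mvar{M}^{(k)}_{v_kv_k}$ and the local undirected star mass at $v_k$ in $\u{\mvar{M}^{(k)}}$ stays bounded below by a constant depending only on $\alpha$. Together, (a) and (b) let us charge the rank-one excess at step $k$ against a constant fraction of the pivot mass at $v_k$, and an edge-counting argument on the \emph{original} undirected graph (each edge contributing to a bounded number of intermediate local stars) aggregates these excesses to $O(1)\,\u{\mvar{L}}$ rather than allowing them to compound geometrically. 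Telescoping the single-vertex identity along the ordering then gives $\u{\sc{\mvar{L}}{F}}\preceq O(1)\,\u{\mvar{L}}$.

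The step I expect to be the main obstacle is establishing this invariant and its associated charging argument: both parts genuinely require the entrywise RCDD hypothesis (as opposed to a purely spectral surrogate derived from $\u{\mvar{L}}\approx\u{\Ltil}$), and in particular require careful tracking of how the diagonal entries and off-diagonal sums of the Schur complement evolve across successive eliminations so that the pivot-to-star ratio does not degrade. Once the invariant is in place, the single-vertex identity and the final lift from $\mvar{L}$ to $\Ltil$ through the approximation are routine manipulations, yielding $\u{\sc{\Ltil}{\widehat{J}}}\preceq O(1)\,\u{\mvar{L}}$ as required by \autoref{def:robustSchurBound}.
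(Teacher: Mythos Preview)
Your two-step plan---first pass from $\Ltil$ to $\mlap$ by a stability argument, then bound $\u{\sc{\mlap}{F}}$ by $O(1)\,\u{\mlap}$---matches the paper's overall structure. Your single-vertex identity
\[
\u{\sc{\mvar{M}}{\{v\}}} \;=\; \sc{\u{\mvar{M}}}{\{v\}} \;+\; \frac{1}{4\mvar{M}_{vv}}\bigl(\mvar{M}_{Cv}-\mvar{M}_{vC}^{\trp}\bigr)\bigl(\mvar{M}_{Cv}-\mvar{M}_{vC}^{\trp}\bigr)^{\trp}
\]
is correct for Eulerian $\mvar{M}$ (where $\mvar{M}_{vv}=\u{\mvar{M}}_{vv}$), and so is your Cauchy--Schwarz bound of the rank-one term by the local undirected star.

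The genuine gap is the ``charging/edge-counting'' step. After telescoping you need $\sum_k \vstar{\u{\mvar{M}^{(k)}}}{v_k}\preceq O(1)\,\u{\mlap}$, but the stars live in the \emph{intermediate} Schur complements, whose edges are sums over paths through already-eliminated vertices---not edges of the original graph---so ``each original edge contributes to a bounded number of stars'' does not apply. RCDD being preserved along the sequence (your invariant (a)) is true, and one can show $\sum_k b^{(k)}_u\le \tfrac{1+\alpha}{\alpha}\mlap_{uu}$, but that only controls a diagonal matrix, which is \emph{not} $\preceq O(1)\,\u{\mlap}$ (the all-ones vector lies in $\ker(\u{\mlap})$). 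To upgrade this to a Loewner bound you would need to route every intermediate star edge through $\u{\mlap}$ with bounded congestion, and nothing in your outline indicates how RCDD yields that. As stated, the argument only gives the $2^{|F|}$ blow-up you were trying to avoid.

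The paper's proof is entirely different and sidesteps the iteration. It proves a one-shot block statement (\autoref{lem:gen_schur_complement}): $\u{\sc{\mdir}{F}}\preceq (1+\beta)\,\u{\mdir}$ whenever a certain operator inequality on $\mdir_{FF}^{-1}$ holds. For Eulerian $\mlap$ this inequality is verified by combining $\mlap^{\top}\md^{-1}\mlap\preceq 2\,\u{\mlap}$ (\autoref{lem:small_complicated}) with the RCDD consequence $\u{\mlap}_{FF}\succeq \tfrac{\alpha}{1+\alpha}\md_{FF}$; this gives \autoref{cor:rcdd_schur_bound} directly, with an explicit constant $3+\tfrac{2}{\alpha}$. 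The transfer to $\Ltil$ is then done by a separate, self-contained stability lemma (\autoref{lem:schur_robust}), which compares the two ``$F$-harmonic extensions'' and does not rely on RCDD or on conditioning of $\mlap_{FF}$---only on the $\epsilon$-approximation and $\u{\mlap}_{FF}\succ 0$. So the paper's route is a variational block argument rather than an inductive vertex-by-vertex one; it is both shorter and avoids the aggregation problem you flagged as the main obstacle.
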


We prove this lemma in several pieces. First in \autoref{sec:schur_complemt_gen} we provide a general lemma about how much a Schur complement of an arbitrary asymmetric matrix can increase. Then in \autoref{sec:schur_complement_eulerian} we show how to apply this to Eulerian Laplacians to bound show how much a Schur complement of an RCDD subset of an Eulerian Laplacian can increase. Finally, in \autoref{sec:schur_stability} we show that this analysis is robust to asymmetric approximation and with this prove \autoref{lem:RCDDRobustSchurBound}.

\subsection{Bounding General Schur Complements}
\label{sec:schur_complemt_gen}

Here we provide a general lemma that upper bounds
the symmetrization of the Schur complement of a general matrix
spectrally via its symmetrization. This lemma is the main tool we use to reason about the Schur complement of subsets of a $\alpha$-RCDD subset of a Eulerian Laplacian.

\begin{lemma}
	\label{lem:gen_schur_complement} If $\mdir\in\R^{n\times n}$ satisfies
	$\mU\defeq\mU_{\mdir}\succeq\mzero$ and $F,C\subseteq[n]$ is a partition
	of $[n]$ where $\mU_{FF}\succ\mzero$ then $\mU_{\sc{\mdir}{F}}\preceq(1+\alpha)\mU$ provided the following condition holds
	\[
	\mm\defeq\left[\begin{array}{cc}
	[\mdir_{FF}^{-1}]^{\top}\mU_{FF}\mdir_{FF}^{-1} & \mzero_{FC}\\
	\mzero_{CF} & \mzero_{CC}
	\end{array}\right]\preceq\alpha[\mdir^{\dagger}]^{\top}\mU\mdir^{\dagger}
	~.
	\]
\end{lemma}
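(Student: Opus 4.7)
}
My plan is to reduce the asymmetric Schur complement inequality to a symmetric one via a clever factorization, and then exploit the hypothesis by evaluating it at the ``right'' vector. First I would verify the identity $\sc{\mdir}{F} = \YY^{\top} \mdir \YY$ where $\YY \in \R^{n \times n}$ is the (non-square-looking) matrix
\[
\YY \defeq \begin{bmatrix} \mzero_{FF} & -\mdir_{FF}^{-1}\mdir_{FC} \\ \mzero_{CF} & \mI_{CC} \end{bmatrix}.
\]
This follows from a direct block computation: $\mdir \YY$ already zeroes out the $F$-rows (by construction of $\YY$'s top block), and one recovers $\sc{\mdir}{F}_{CC}$ in the $CC$-entry. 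Since $\YY$ is real, symmetrizing gives $\mU_{\sc{\mdir}{F}} = \YY^{\top} \mU \YY$, a genuinely PSD-valued expression. Thus the desired bound $\mU_{\sc{\mdir}{F}} \preceq (1+\alpha)\mU$ becomes $\YY^{\top} \mU \YY \preceq (1+\alpha)\mU$.

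Next I would reduce this inequality to one purely on the $C$-block. Because $\YY^{\top}\mU\YY$ has support only on $C \times C$, both sides of $\vv^{\top}\YY^{\top}\mU\YY\vv \leq (1+\alpha)\vv^{\top}\mU\vv$ depend on $\vv_F$ only through the RHS, so the tightest case is $\vv_F = -\mU_{FF}^{-1}\mU_{FC}\vv_C$, which minimizes $\vv^{\top}\mU\vv$ to $\vv_C^{\top}(\mU/\mU_{FF})\vv_C$. Consequently, writing $\mvar{B} = \mdir_{FF}^{-1}\mdir_{FC}$, the desired bound is equivalent to
\[
\mU_{CC} + \mvar{B}^{\top}\mU_{FF}\mvar{B} - \mU_{CF}\mvar{B} - \mvar{B}^{\top}\mU_{FC} \preceq (1+\alpha)(\mU_{CC} - \mU_{CF}\mU_{FF}^{-1}\mU_{FC}),
\]
which, after collecting terms, becomes the crisp inequality
\[
(\mvar{B} - \mU_{FF}^{-1}\mU_{FC})^{\top} \mU_{FF} (\mvar{B} - \mU_{FF}^{-1}\mU_{FC}) \preceq \alpha\,(\mU/\mU_{FF}). \qquad(\star)
\]

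Finally I would derive $(\star)$ from the hypothesis. Multiplying $\mm \preceq \alpha [\mdir^{\dagger}]^{\top}\mU \mdir^{\dagger}$ on the left by $\mdir^{\top}$ and on the right by $\mdir$ (using $\ker(\mdir) = \ker(\mdir^{\top}) \subseteq \ker(\mU)$ to absorb $\mdir^{\dagger}\mdir$) yields $\mdir^{\top}\mm\,\mdir \preceq \alpha \mU$. A direct block computation shows that for any $\vv = (\vv_F,\vv_C)$,
\[
\vv^{\top}\mdir^{\top}\mm\,\mdir\,\vv = (\vv_F + \mvar{B}\vv_C)^{\top}\mU_{FF}(\vv_F + \mvar{B}\vv_C),
\]
so the hypothesis reads $(\vv_F + \mvar{B}\vv_C)^{\top}\mU_{FF}(\vv_F + \mvar{B}\vv_C) \leq \alpha\vv^{\top}\mU\vv$. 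Plugging in the same optimal choice $\vv_F = -\mU_{FF}^{-1}\mU_{FC}\vv_C$ on the left gives precisely $(\star)$.

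The routine pieces here are the block-matrix identities; the one step that requires some insight is recognizing that one should use the \emph{same} minimizer $\vv_F = -\mU_{FF}^{-1}\mU_{FC}\vv_C$ on both sides of the argument, which is what ties the hypothesis (which has the quadratic form $(\vv_F+\mvar{B}\vv_C)^{\top}\mU_{FF}(\cdot)$) to the Schur complement reduction. The only subtlety I expect to have to treat carefully is the pseudoinverse: since $\mdir$ may be singular, the rearrangement $\mdir^{\top}\mm\,\mdir \preceq \alpha\mU$ uses that $\ker(\mU) \supseteq \ker(\mdir)$, so that $\mdir^{\dagger}\mdir$ acts as the identity on the relevant subspace; this is consistent with the paper's standing convention that the left and right kernels of the matrices in question agree.
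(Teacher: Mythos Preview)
Your proposal is correct and is essentially the same argument as the paper's, just packaged differently. The paper works pointwise with two vectors $\xx,\yy$ sharing $\zz_C$ on the $C$-coordinates---your $\YY$ encodes exactly the map $\zz\mapsto\xx$, and your minimizing choice $\vv_F=-\mU_{FF}^{-1}\mU_{FC}\vv_C$ is the paper's $\yy$. Your completing-the-square reduction to $(\star)$ is the algebraic form of the paper's Pythagorean step $\norm{\xx}_\mU^2=\norm{\yy}_\mU^2+\norm{\xx-\yy}_\mU^2$ (they use that $\xx-\yy$ is supported on $F$ while $(\mU\yy)_F=0$), and both proofs then invoke the hypothesis at $\yy$ in the same way, with the same implicit kernel assumption $\ker(\mdir)\subseteq\ker(\mU)$ to collapse $\mdir^\dagger\mdir$.
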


\begin{proof}
Let $\zz\in\R^{n}$ be arbitrary and let $\xx,\yy\in\R^{n}$ be defined
so that $\xx_{C}=\yy_{C}=\zz_{C}$, $\xx_{F}\defeq-\mdir_{FF}^{-1}\mdir_{FC}\zz_{C}$,
and $\yy_{F}=-\mU_{FF}^{-1}\mU_{FC}\zz_{C}$. (Note $\mdir_{FF}$ is invertible by \autoref{lem:psd_invert} as $\mU_{FF} \succ \mzero$.) Now, this definition was chosen so that
	\[
	\mdir \xx=\left[\begin{array}{cc}
	\mdir_{FF} & \mdir_{FC}\\
	\mdir_{CF} & \mdir_{CC}
	\end{array}\right]\left(\begin{array}{c}
	-\mdir_{FF}^{-1}\mdir_{FC}\zz_{C}\\
	\zz_{C}
	\end{array}\right)=\left(\begin{array}{c}
	\vec{0}_{F}\\
	\sc{\mdir}{F} \zz_{C}
	\end{array}\right)\,.
	\]
	and therefore $\xx^{\top} \mdir \xx=\zz^{\top}\sc{\mdir}{F}\zz$. Furthermore,
	note that
	\[
	\zz^{\top}\mU \zz=\zz_{F}^{\top}\mU_{FF}\zz_{F}+2\zz_{C}\mU_{CF}\zz_{F}+\zz_{C}^{\top}\mU_{CC}\zz_{C}
	\]
	and since $\mU_{FF}\succ0$ we have that $\zz^{\top}\mU \zz$ is minimized
	over $\zz_{F}$ when $\zz_{F}=-\mU_{FF}^{-1}\mU_{FC}\zz_{C}$ and thus $\yy^{\top}\mU \yy\leq \zz^{\top}\mU \zz$.
	Furthermore, 
	$
	\yy_{F}-\xx_{F} = \mdir_{FF}^{-1} [\mdir \yy]_{F}
	$, $\xx_{C}=\yy_{C}$, and \autoref{lem:complicated} yields
	\begin{align*}
	\norm{\xx-\yy}_{\mU}^{2} & =(\xx_{F}-\yy_{F})\mU_{FF}(\xx_{F}-\yy_{F})=\norm{[\mdir \yy]_{F}}_{\mdir_{FF}^{-1}\mU_{FF}\mdir_{FF}^{-1}}^{2}\,.\\
	& =\norm{\mdir \yy}_{\mm}^{2}\leq\alpha\norm{\mdir \yy}_{[\mdir^{\dagger}]^{\top}\mU\mdir^{\dagger}}^{2}\leq\alpha\norm \yy_{\mU}^{2}\,.
	\end{align*}
	Further, by the $\mU$-orthogonality of $\yy$ and $\xx-\yy$ (as $\xx-\yy$ is supported on $F$ and $\mU \yy$ is 0 on $F$),
	\[
	\norm \xx_{\mU}^{2}=\norm \yy_{\mU}^2+\norm{\xx-\yy}_{\mU} \leq (1+\alpha) \norm \yy_{\mU}^{2} ~.
	\]
	As $\norm \xx_{\mU}^{2}=\zz^{\top}\sc{\mdir}{F} \zz=\zz^{\top}\mU_{\sc{\mdir}{F}}\zz$
	and $\norm \yy_{\mU}^{2}\leq\norm \zz_{\mU}^{2}$ the result follows.
\end{proof}

\subsection{Schur Complements of Eulerian Laplacians}
\label{sec:schur_complement_eulerian}

Here we show how to apply the Schur complement bounds of the previous subsection to bound the increase in Schur complements for Eulerian Laplacians. In particular we bound the blowup of the Schur
complements as we pivot away $\alpha$-RCDD subsets
of vertices.The main result we prove is the following.

\begin{lemma}
\label{cor:rcdd_schur_bound}
  Suppose that $\mlap=\md-\ma^{\top}\in\R^{n\times n}$ is
an Eulerian Laplacian, and $F\subseteq [n]$ is an $\alpha$-RCDD
subset.
Then $\mU_{\sc{\mlap}{F}}\preceq(3+\frac{2}{\alpha} )\mU_{\mlap}$.
\end{lemma}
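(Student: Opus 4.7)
The plan is to invoke \autoref{lem:gen_schur_complement} with $\mdir = \mlap$ and its parameter set to $\beta := 2 + 2/\alpha$; the lemma then yields $\mU_{\sc{\mlap}{F}} \preceq (1+\beta)\mU_\mlap = (3 + 2/\alpha)\mU_\mlap$, which is exactly the desired bound. The whole argument therefore reduces to verifying the hypothesis
\[
\mm \;:=\; \begin{bmatrix}[\mlap_{FF}^{-1}]^\top \mU_{FF}\mlap_{FF}^{-1} & \mzero \\ \mzero & \mzero\end{bmatrix}
\;\preceq\; \beta\,[\mlap^\pseudoOp]^\top \mU_\mlap \mlap^\pseudoOp.
\]

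I will verify this via three chained estimates. \emph{(i)} A Gershgorin-type argument from the $\alpha$-RCDD property of $F$ gives $\mU_{FF} \succeq \tfrac{\alpha}{1+\alpha}\md_{FF}$: the symmetric matrix $\mlap_{FF} + \mlap_{FF}^\top$ has diagonal $2\md_{FF}$, and each off-diagonal row has absolute sum at most $\tfrac{2}{1+\alpha}\md_{ii}$ obtained by summing the RCDD bounds on both $\sum_{j\ne i}|\mlap_{ij}|$ and $\sum_{j \ne i}|\mlap_{ji}|$, so $\mU_{FF} - \tfrac{\alpha}{1+\alpha}\md_{FF}$ is diagonally dominant and hence PSD; inverting yields $\mU_{FF}^{-1} \preceq \tfrac{1+\alpha}{\alpha}\md_{FF}^{-1}$. \emph{(ii)} A short Cauchy--Schwarz gives $[\mlap_{FF}^{-1}]^\top \mU_{FF}\mlap_{FF}^{-1} \preceq \mU_{FF}^{-1}$: for any $\vv_F$, setting $\ww := \mlap_{FF}^{-1}\vv_F$ and using $\ww^\top \mU_{FF}\ww = \ww^\top \mlap_{FF}\ww = \ww^\top \vv_F$, we get $\norm{\ww}_{\mU_{FF}}^2 \leq \norm{\ww}_{\mU_{FF}}\cdot \norm{\vv_F}_{\mU_{FF}^{-1}}$. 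Combining (i) and (ii) yields $\mm \preceq \tfrac{1+\alpha}{\alpha}\bigl[\begin{smallmatrix}\md_{FF}^{-1}&\mzero\\ \mzero&\mzero\end{smallmatrix}\bigr]$. \emph{(iii)} The Eulerian structure gives $\norm{[\mlap \yy]_F}_{\md_{FF}^{-1}}^2 \leq 2\norm{\yy}_{\mU_\mlap}^2$ for every $\yy$: using $\sum_j w_{ji} = \md_{ii}$ (Eulerianness), $[\mlap \yy]_i = \sum_{j\ne i}w_{ji}(\yy_i - \yy_j)$, and Cauchy--Schwarz gives $[\mlap \yy]_i^2 \leq \md_{ii}\sum_{j\ne i}w_{ji}(\yy_i-\yy_j)^2$; summing over $i\in F$, dividing by $\md_{ii}$, and comparing with $\yy^\top \mU_\mlap \yy = \tfrac{1}{2}\sum_{i\neq j}w_{ij}(\yy_i-\yy_j)^2$ delivers the factor of $2$. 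Chaining the three estimates yields $\norm{\mlap \yy}_\mm^2 \leq \beta \norm{\yy}_{\mU_\mlap}^2$ for all $\yy$.

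The main subtlety to handle carefully is a kernel mismatch: the plug-in matrix $\bigl[\begin{smallmatrix}\md_{FF}^{-1}&\mzero\\ \mzero &\mzero\end{smallmatrix}\bigr]$ is nonzero on $\vones$, while $[\mlap^\pseudoOp]^\top \mU_\mlap \mlap^\pseudoOp$ vanishes on $\vones$ since $\ker(\mlap^\top) = \mathrm{span}(\vones)$ for strongly connected Eulerian $\mlap$. I will resolve this by observing that the proof of \autoref{lem:gen_schur_complement} only uses its hypothesis in the form $\norm{\mdir \yy}_\mm^2 \leq \beta\norm{\mdir \yy}_{[\mdir^\pseudoOp]^\top\mU\mdir^\pseudoOp}^2$, and that this right-hand side equals $\norm{\Pi_{\im(\mdir^\top)}\yy}_\mU^2 \leq \norm{\yy}_\mU^2$; hence the estimate $\norm{\mlap \yy}_\mm^2 \leq \beta\norm{\yy}_{\mU_\mlap}^2$ proved above is exactly what is needed to carry out the remainder of the proof of \autoref{lem:gen_schur_complement} and conclude $\mU_{\sc{\mlap}{F}} \preceq (1+\beta)\mU_\mlap = (3+2/\alpha)\mU_\mlap$.
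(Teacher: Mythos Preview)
Your proof is correct and follows essentially the same route as the paper: the paper proves an intermediate \autoref{lem:lap_schur_blowup} (which establishes $\mU_{\sc{\mlap}{F}}\preceq(1+2\alpha')\mU_\mlap$ whenever $\mU_{FF}\succeq\tfrac{1}{\alpha'}\md_{FF}$) by combining exactly your ingredients (ii) and (iii) with \autoref{lem:gen_schur_complement}, and then deduces the present lemma from your ingredient (i) by invoking that intermediate result with $\alpha'=(1+\alpha)/\alpha$. Your explicit treatment of the kernel mismatch---observing that the proof of \autoref{lem:gen_schur_complement} only requires $\norm{\mlap\yy}_\mm^2\leq\beta\norm{\yy}_{\mU_\mlap}^2$ rather than the stronger matrix inequality $\mm\preceq\beta[\mlap^\dagger]^\top\mU_\mlap\mlap^\dagger$---is in fact cleaner than the paper's chain of matrix inequalities at this point.
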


To prove this lemma, first we provide the following lemma, which is a self contained fact about Eulerian Laplacians that will allow us to leverage \autoref{lem:gen_schur_complement}.

\begin{lemma}
	\label{lem:small_complicated} Suppose that $\mlap=\md-\ma^{\top}\in\R^{n\times n}$
	is an Eulerian Laplacian associated with directed graph $G=(V,E,w)$. Then $\mlap^{\top}\md^{-1}\mlap\preceq2 \mU_{\mlap}$.
\end{lemma}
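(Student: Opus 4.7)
The plan is to expand the quadratic form $\mlap^{\top}\md^{-1}\mlap$ directly using the decomposition $\mlap = \md - \ma^{\top}$ and reduce the claim to a cleaner inequality that only involves $\ma$ and $\md$. A direct calculation gives
\[
\mlap^{\top}\md^{-1}\mlap = (\md - \ma)\md^{-1}(\md - \ma^{\top}) = \md - \ma - \ma^{\top} + \ma\md^{-1}\ma^{\top}.
\]
Since $2\mU_{\mlap} = \mlap + \mlap^{\top} = 2\md - \ma - \ma^{\top}$, the desired bound $\mlap^{\top}\md^{-1}\mlap \preceq 2\mU_{\mlap}$ is equivalent to the single matrix inequality $\ma\md^{-1}\ma^{\top} \preceq \md$.

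The main step will be to establish $\ma\md^{-1}\ma^{\top} \preceq \md$ by Cauchy--Schwarz, and here is where the Eulerian hypothesis enters. For any $\xx \in \R^{n}$,
\[
\xx^{\top}\ma\md^{-1}\ma^{\top}\xx = \sum_{k} \frac{1}{\md_{kk}} \Bigl(\sum_{i} \ma_{ik}\xx_{i}\Bigr)^{2}.
\]
Writing $\ma_{ik}\xx_{i} = \sqrt{\ma_{ik}} \cdot \sqrt{\ma_{ik}}\,\xx_{i}$ and applying Cauchy--Schwarz,
\[
\Bigl(\sum_{i} \ma_{ik}\xx_{i}\Bigr)^{2} \leq \Bigl(\sum_{i} \ma_{ik}\Bigr)\Bigl(\sum_{i} \ma_{ik}\xx_{i}^{2}\Bigr).
\]
The Eulerian condition is used in the key observation that $\sum_{i} \ma_{ik}$ is the weighted in-degree of vertex $k$, which for an Eulerian graph equals the out-degree $\md_{kk}$. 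Plugging this in cancels $\md_{kk}$ and leaves $\sum_{k}\sum_{i}\ma_{ik}\xx_{i}^{2} = \sum_{i}\xx_{i}^{2}\md_{ii} = \xx^{\top}\md\xx$, proving the inequality.

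I do not expect any serious obstacles here; essentially the only substantive step is identifying the right reduction ($\ma\md^{-1}\ma^{\top} \preceq \md$) and spotting that Cauchy--Schwarz together with the in-degree/out-degree identity closes the gap. The Eulerian assumption is used in exactly one place, namely to equate $\sum_{i} \ma_{ik}$ with $\md_{kk}$; without it one would only get $\ma\md^{-1}\ma^{\top} \preceq \md_{\mathrm{in}}$, where $\md_{\mathrm{in}}$ is the in-degree matrix, which is insufficient. Conversely, this makes clear why the factor $2$ on the right-hand side is tight up to constants even in the Eulerian setting.
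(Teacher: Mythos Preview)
Your proof is correct. Both your argument and the paper's hinge on a single application of Cauchy--Schwarz, so in spirit they are the same, but the organization differs enough to be worth a remark. The paper works directly with the quadratic form: it writes $\xx^\top \mlap^\top \md^{-1}\mlap\,\xx = \sum_i [\mlap\xx]_i^2/\md_{ii}$, expresses $[\mlap\xx]_i$ as a weighted sum over edges incident to $i$, applies Cauchy--Schwarz to each term, and then recognizes the resulting double sum as $2\xx^\top \mU_\mlap \xx$. You instead expand $\mlap^\top\md^{-1}\mlap$ algebraically and isolate the single clean inequality $\ma\md^{-1}\ma^\top \preceq \md$, which you then prove by Cauchy--Schwarz. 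Your route has the advantage of making completely explicit where the Eulerian hypothesis enters (only in the identity $\sum_i \ma_{ik} = \md_{kk}$), and the intermediate inequality $\ma\md^{-1}\ma^\top \preceq \md$ is a pleasant standalone fact. The paper's route is slightly more direct since it avoids expanding the product, but the use of the Eulerian property is less visible there.
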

\begin{proof}
	Let $\xx\in\R^{n}$ be arbitrary and recall that
	\[
	[\mlap \xx]_{i}=\sum_{(i,j)\in E}w_{ij}(\xx(i)-\xx(j))\enspace\text{ and }\enspace\md_{ii}=\sum_{(i,j)\in E}w_{ij}\,.
	\]
	Furthermore, by Cauchy-Schwarz we have that 
	\[
	[\mlap \xx]_{i}^{2}=\left[\sum_{(i,j)\in E}w_{ij}(\xx(i)-\xx(j))\right]^{2}\leq\left[\sum_{(i,j)\in E}w_{ij}\right]\cdot\left[\sum_{(i,j)\in E}w_{ij}(\xx(i)-\xx(j))^{2}\right]\,.
	\]
	Consequently,
	\begin{align*}
	\xx^{\top}\mlap^{\top}\md^{-1}\mlap \xx & =\sum_{i\in[n]}\frac{[\mlap \xx]_{i}^{2}}{\md_{ii}}\leq\sum_{i\in[n]}\sum_{(i,j)\in E}w_{ij}(\xx(i)-\xx(j))^{2}=2\cdot \xx^{\top}\mU_{\mlap} \xx\,.
	\end{align*}
	
\end{proof}

Using \autoref{lem:gen_schur_complement} and \autoref{lem:small_complicated} we can prove the following, a key bound on the increase of Schur complements of Eulerian Laplacians.

\begin{lemma}
\label{lem:lap_schur_blowup}
Suppose that $\mlap=\md-\ma^{\top}\in\R^{n\times n}$ is
an Eulerian Laplacian,
let $\mU\defeq\mU_{\mlap}$ and let $F,C\subseteq[n]$
be a partition of $[n]$ such that 
$\mU_{FF}\succeq\frac{1}{\alpha}\md_{FF}$
then $\mU_{\sc{\mlap}{F}}\preceq(1+2\alpha)\mU$.
\end{lemma}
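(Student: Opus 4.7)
The plan is to reduce to \autoref{lem:gen_schur_complement} applied with its internal parameter set to $2\alpha$, which conveniently produces exactly the bound $\mU_{\sc{\mlap}{F}} \preceq (1+2\alpha)\mU$. All that needs to be verified is the hypothesis
\[
\mm \;=\; \begin{pmatrix} [\mlap_{FF}^{-1}]^{\top}\mU_{FF}\mlap_{FF}^{-1} & \mzero \\ \mzero & \mzero \end{pmatrix} \;\preceq\; 2\alpha\,[\mlap^{\dagger}]^{\top}\mU\mlap^{\dagger},
\]
(strictly speaking, on $\im(\mlap)$, which is the only regime in which the proof of \autoref{lem:gen_schur_complement} uses this inequality via the expression $\norm{\mlap\yy}_{[\mlap^\dagger]^\top\mU\mlap^\dagger}^2$). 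Note that $\md$ is diagonal with positive entries, so the hypothesis $\mU_{FF}\succeq\tfrac{1}{\alpha}\md_{FF}$ gives $\mU_{FF}\succ 0$ and in particular $\mlap_{FF}$ is invertible.

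The first step is an auxiliary identity: for any invertible square matrix $A$ whose symmetric part $\mU_A := (A+A^\top)/2$ is positive definite,
\[
A^{-\top}\mU_A A^{-1} \;\preceq\; \mU_A^{-1}.
\]
To prove this, decompose $A = S + K$ with $S := \mU_A$ and $K := (A - A^\top)/2$ skew-symmetric (so $K^\top = -K$) and compute
\[
A S^{-1} A^{\top} \;=\; (S+K)S^{-1}(S-K) \;=\; S - K S^{-1} K \;=\; S + K^{\top} S^{-1} K \;\succeq\; S,
\]
using $-K S^{-1} K = K^\top S^{-1} K \succeq 0$; inverting both sides yields the desired inequality. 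Applying this with $A = \mlap_{FF}$, whose symmetric part is precisely $\mU_{FF}$, gives $[\mlap_{FF}^{-1}]^\top\mU_{FF}\mlap_{FF}^{-1}\preceq \mU_{FF}^{-1}$. Since the hypothesis $\mU_{FF}\succeq\tfrac{1}{\alpha}\md_{FF}$ is equivalent to $\mU_{FF}^{-1}\preceq \alpha \md_{FF}^{-1}$, combining these two inequalities yields
\[
\mm \;\preceq\; \begin{pmatrix} \alpha\md_{FF}^{-1} & \mzero \\ \mzero & \mzero \end{pmatrix} \;\preceq\; \alpha\,\md^{-1}.
\]

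The second step uses \autoref{lem:small_complicated}, which gives $\mlap^\top \md^{-1}\mlap \preceq 2\mU$. For any $\xx \in \im(\mlap)$, set $\zz := \mlap^\dagger \xx$; since $\mlap\mlap^\dagger$ is the orthogonal projection onto $\im(\mlap)$, we have $\mlap\zz = \xx$, whence
\[
\xx^\top[\mlap^\dagger]^\top \mU \mlap^\dagger \xx \;=\; \zz^\top \mU\zz \;\geq\; \tfrac{1}{2}\,\zz^\top \mlap^\top \md^{-1}\mlap\zz \;=\; \tfrac{1}{2}\,\xx^\top \md^{-1}\xx.
\]
Chaining this with the Step~1 bound, for every $\xx \in \im(\mlap)$,
\[
\xx^\top \mm \xx \;\leq\; \alpha\,\xx^\top\md^{-1}\xx \;\leq\; 2\alpha\,\xx^\top [\mlap^\dagger]^\top \mU \mlap^\dagger\, \xx,
\]
which is exactly the estimate needed to invoke \autoref{lem:gen_schur_complement} with parameter $2\alpha$, yielding $\mU_{\sc{\mlap}{F}}\preceq(1+2\alpha)\mU$.

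The main subtlety is the kernel mismatch: the PSD inequality $\mm \preceq 2\alpha[\mlap^\dagger]^\top \mU \mlap^\dagger$ cannot hold globally on $\R^n$ since the right-hand side annihilates $\ker(\mlap^\top) = \mathrm{span}(\vecone)$ while $\mm$ in general does not, so one must observe that \autoref{lem:gen_schur_complement}'s proof only uses this inequality on $\im(\mlap)$ (the set of vectors of the form $\mlap\yy$). Aside from this, the crux is recognizing that the auxiliary identity $A^{-\top}\mU_A A^{-1}\preceq\mU_A^{-1}$ is precisely the bridge needed to translate information about the symmetric part $\mU_{FF}$ (which is what the hypothesis controls) into information about the true inverse $\mlap_{FF}^{-1}$ (which is what appears in the Schur complement formula).
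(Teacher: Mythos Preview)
Your proof is correct and follows essentially the same approach as the paper: both reduce to \autoref{lem:gen_schur_complement} with parameter $2\alpha$ by first showing $[\mlap_{FF}^{-1}]^\top\mU_{FF}\mlap_{FF}^{-1}\preceq\mU_{FF}^{-1}\preceq\alpha\md_{FF}^{-1}$ and then invoking \autoref{lem:small_complicated} to pass from $\md^{-1}$ to $2[\mlap^\dagger]^\top\mU\mlap^\dagger$. The only cosmetic differences are that you re-derive the auxiliary inequality $A^{-\top}\mU_A A^{-1}\preceq\mU_A^{-1}$ directly via the skew-symmetric decomposition (the paper cites it as a known harmonic-symmetrization bound, essentially \autoref{lem:complicated}), and you are more explicit than the paper about the kernel mismatch, correctly noting that the final PSD comparison is only needed on $\im(\mlap)$.
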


\begin{proof}[Proof of \autoref{lem:lap_schur_blowup}]
	First note that $\mlap_{FF}$ and $\mU_{FF}$ must be RCDD as $\mlap$
	is Eulerian, and since $\mU_{FF}\succeq\frac{1}{\alpha}\md_{FF}$ it
	is the case that $\mU_{FF}$ is invertible and so is $\mlap_{FF}$.
	Consequently, as $\mU_{FF}\preceq\mlap_{FF}^{\top}\mU_{FF}^{-1}\mlap_{FF}$
	from our general bounds on harmonic symmetrizations we have that
	$[\mlap_{FF}^{-1}]^{\top}\mU_{FF}[\mlap_{FF}^{-1}]\preceq\mU_{FF}^{-1}$.
	Furthermore, as $\mU_{FF}\succeq\frac{1}{\alpha}\md_{FF}$ we have
	that $\mU_{FF}^{-1}\preceq\alpha\md_{FF}^{-1}$ and therefore that
	\[
	\left[\begin{array}{cc}
	[\mlap_{FF}^{-1}]^{\top}\mU_{FF}\mlap_{FF}^{-1} & \mzero_{FC}\\
	\mzero_{CF} & \mzero_{CC}
	\end{array}\right]\preceq\left[\begin{array}{cc}
	\alpha\md_{FF}^{-1} & \mzero_{FC}\\
	\mzero_{CF} & \mzero_{CC}
	\end{array}\right]\preceq\alpha\md^{-1}\preceq\alpha \mlap^{\top}[\mlap^{\dagger}]^{\top}\md^{-1}[\mlap^{\dagger}] \mlap\,.
	\]
	As $[\mlap^{\dagger}]^{\top}\md^{-1}[\mlap^{\dagger}]\preceq2\cdot\mU$
	by \autoref{lem:small_complicated} the result then follows from
	\autoref{lem:gen_schur_complement}.
\end{proof}

Using \autoref{lem:lap_schur_blowup} we can now prove \autoref{cor:rcdd_schur_bound}.

\begin{proof}[Proof of \autoref{cor:rcdd_schur_bound}]
Since $F$ is an $\alpha$-RCDD subset $\frac{1}{1+\alpha} \md_{FF} -(\mU_{\ma})_{FF}$ is SDD and PSD. Consequently,  $\mU_{FF} = 
\frac{\alpha}{1+\alpha} \md_{FF}
+ \frac{1}{1+\alpha}
\md_{FF} -(\mU_{\ma})_{FF}  
\succeq 
\frac{\alpha}{1+\alpha} \md$ 
%
and the claim follows from \autoref{lem:lap_schur_blowup}.
\end{proof}

\subsection{Schur Complement Stability}
\label{sec:schur_stability}

Here, we show that Schur complements are robust to
small changes to the original matrix. Using this stability result \autoref{lem:schur_robust},  and the results of the previous subsection we prove the main claim of this section, \autoref{lem:RCDDRobustSchurBound}. 

\begin{lemma}
\label{lem:schur_robust}
Suppose $\mdir\in\R^{n\times n}$ satisfies $\mU\defeq\mU_{\mdir}\succeq\mzero$ and $\ker(\mdir) = \ker(\mdir^\top)$ and suppose $F,C\subseteq[n]$ is a partition of $[n]$ where $\mU_{FF}\succ\mzero$. If $\widetilde \mdir$ $\epsilon$-approximates $\mdir$ then $\mU_{\sc{\widetilde\mdir}{F}} \preceq \left ( \frac{1+\epsilon}{1-\epsilon} \right )^2 \mU_{\sc{\mdir}{F}}$.
\end{lemma}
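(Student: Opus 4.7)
My plan is to derive an exact identity expressing $\zz^\top(\sc{\widetilde\mdir}{F} - \sc{\mdir}{F})\zz$, for test vectors $\zz$ supported on $C$, as a single bilinear form in the error matrix $\mvar{E} = \widetilde\mdir - \mdir$, and then to bound that form via Cauchy--Schwarz in the $\u\mdir$-seminorm. Because quadratic forms see only the symmetric part of a matrix, this difference coincides with $\zz^\top(\u{\sc{\widetilde\mdir}{F}} - \u{\sc{\mdir}{F}})\zz$, which is the quantity we need to control.

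The key step is to set up a \emph{mixed} pair of harmonic extensions of $\zz_C$: $\yy$ will be the $\mdir$-harmonic extension (so $\yy_C = \zz_C$ and $[\mdir\yy]_F = 0$), while $\xxtil$ will be the $\widetilde\mdir^\top$-harmonic extension (so $\xxtil_C = \zz_C$ and $[\widetilde\mdir^\top\xxtil]_F = 0$). Both are well-defined because $\mdir_{FF}$ is invertible by $\mU_{FF}\succ 0$ and \autoref{lem:psd_invert}, and because $\u{\widetilde\mdir}_{FF} \succeq (1-\epsilon)\mU_{FF} \succ 0$ forces $\widetilde\mdir_{FF}$ to be invertible as well. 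Following the template in the proof of \autoref{lem:gen_schur_complement}, a direct block computation shows that for \emph{any} matrix $\mm$ with invertible $\mm_{FF}$, one has $\xxtil^\top \mm \yy = \zz^\top \sc{\mm}{F}\zz$ whenever either $[\mm\yy]_F = 0$ or $[\mm^\top\xxtil]_F = 0$ holds. Applying this to $\mm = \widetilde\mdir$ (via $[\widetilde\mdir^\top \xxtil]_F = 0$) and to $\mm = \mdir$ (via $[\mdir\yy]_F = 0$) and subtracting collapses the two Schur-complement quadratic forms onto a single bilinear form in $\mvar{E}$:
\[
\zz^\top\bigl(\sc{\widetilde\mdir}{F} - \sc{\mdir}{F}\bigr)\zz \;=\; \xxtil^\top \mvar{E}\,\yy.
\]

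From here, Cauchy--Schwarz in the $\u\mdir$-seminorm (using the kernel conditions on $\mvar{E}$ from the $\epsilon$-approximation hypothesis to justify $\mvar{E} = \u\mdir^{1/2}(\u\mdir^{\pseudoRoot}\mvar{E}\u\mdir^{\pseudoRoot})\u\mdir^{1/2}$) yields $|\xxtil^\top \mvar{E}\yy| \leq \epsilon\,\|\xxtil\|_{\u\mdir}\|\yy\|_{\u\mdir}$. The $\mdir$-harmonicity of $\yy$ gives $\|\yy\|_{\u\mdir}^2 = \yy^\top \mdir\yy = \zz^\top \u{\sc{\mdir}{F}}\zz$, and the $\widetilde\mdir^\top$-harmonicity of $\xxtil$ analogously gives $\|\xxtil\|_{\u{\widetilde\mdir}}^2 = \zz^\top \u{\sc{\widetilde\mdir}{F}}\zz$; combining the latter with $\u{\widetilde\mdir}\succeq(1-\epsilon)\u\mdir$ yields $\|\xxtil\|_{\u\mdir}^2 \leq (1-\epsilon)^{-1}\zz^\top \u{\sc{\widetilde\mdir}{F}}\zz$. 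Writing $a = \zz^\top \u{\sc{\widetilde\mdir}{F}}\zz$ and $b = \zz^\top \u{\sc{\mdir}{F}}\zz$, these assemble into $a - b \leq \tfrac{\epsilon}{\sqrt{1-\epsilon}}\sqrt{ab}$; solving this as a quadratic in $\sqrt{a}$ (the discriminant telescopes to $(2-\epsilon)^2/(1-\epsilon)$) gives $a \leq b/(1-\epsilon)$, which immediately implies the claimed bound $a \leq \bigl(\tfrac{1+\epsilon}{1-\epsilon}\bigr)^2 b$ since $\tfrac{1}{1-\epsilon} \leq \tfrac{(1+\epsilon)^2}{(1-\epsilon)^2}$. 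As $\zz$ was arbitrary on $C$, the matrix inequality follows.

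The main obstacle is the derivation of the mixed identity: one must pair $\xxtil$ (harmonic for $\widetilde\mdir^\top$, \emph{not} $\mdir^\top$) with $\yy$ (harmonic for $\mdir$, \emph{not} $\widetilde\mdir$), so that the ``cross'' terms in the block decompositions vanish simultaneously for both $\widetilde\mdir$ and $\mdir$. This asymmetric choice is what makes the bilinear form collapse cleanly onto $\mvar{E}$ alone; a naive same-matrix pairing leaves residual terms quadratic in the harmonic-extension differences that are much harder to control.
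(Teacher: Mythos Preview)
Your argument is correct and takes a genuinely different route from the paper. The paper's proof defines both extensions on the \emph{same side}: $\xx$ with $[\mdir\xx]_F=0$ and $\xxtil$ with $[\widetilde\mdir\xxtil]_F=0$, then controls $\|\xxtil-\xx\|_{\widetilde\mU}$ via the identity $\xxtil_F-\xx_F=-\widetilde\mdir_{FF}^{-1}[(\widetilde\mdir-\mdir)\xx]_F$, invoking \autoref{lem:complicated} to pass from $\widetilde\mdir_{FF}^{-1}$ to $\widetilde\mU^\dagger$, and finishes with a triangle inequality. Your mixed choice ($\yy$ harmonic for $\mdir$, $\xxtil$ harmonic for $\widetilde\mdir^\top$) makes both quadratic forms $\zz^\top\sc{\widetilde\mdir}{F}\zz$ and $\zz^\top\sc{\mdir}{F}\zz$ equal to the \emph{same} bilinear pairing $\xxtil^\top\widetilde\mdir\yy$ and $\xxtil^\top\mdir\yy$, so their difference collapses exactly to $\xxtil^\top\mvar{E}\yy$. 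This bypasses \autoref{lem:complicated} and the triangle-inequality loss entirely, and after the quadratic-in-$\sqrt{a}$ step actually yields the sharper bound $\mU_{\sc{\widetilde\mdir}{F}}\preceq\frac{1}{1-\epsilon}\,\mU_{\sc{\mdir}{F}}$, of which the stated $\bigl(\tfrac{1+\epsilon}{1-\epsilon}\bigr)^2$ is a (weaker) consequence. The paper's approach is perhaps more routine and reuses existing lemmas; yours is cleaner and tighter but hinges on noticing the asymmetric pairing.
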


\begin{proof}
As in the proof of \autoref{lem:gen_schur_complement}, given any
vector $\zz$ we define $\xx$ and $\xxtil$ such that $\xx_{C}=\xxtil_{C}=\zz_{C}$,
$\xx_{F}\defeq-\mdir_{FF}^{-1}\mdir_{FC}\zz_{C}$,
$\xxtil_{F} \defeq -\widetilde \mdir_{FF}^{-1} \widetilde \mdir_{FC}\zz_{C}$.
(Note $\mdir_{FF}$ is invertible by \autoref{lem:psd_invert} as $\mU_{FF} \succ \mzero$.)
 As in the proof of \autoref{lem:gen_schur_complement} this yields $\zz^T \mU_{\sc{\mdir}{F}} \zz = \xx^T \mU \xx$ and
$\zz^T \mU_{\sc{\widetilde \mdir}{F}} \zz = \xxtil^T \widetilde \mU \xxtil$. Furthermore, direct calculation reveals that
\begin{align*}
\xxtil_{F} - \xx_{F} = -\widetilde{\mdir}_{FF}^{-1} [\widetilde \mdir \xx]_{F} 
= -\widetilde \mdir_{FF}^{-1} [(\widetilde \mdir-\mdir) \xx]_{F} ~.
\end{align*}
By the assumption of $\widetilde \mdir$, we have
\begin{align*}
\norm{(\widetilde \mdir-\mdir) \xx}_{\widetilde \mU^\dagger} \leq \frac{1}{1-\epsilon} \norm{(\widetilde \mdir-\mdir) \xx}_{\mU^\dagger} 
\leq \frac{\epsilon}{1-\epsilon} \norm{\xx}_{\mU}.
\end{align*}
By \autoref{lem:complicated} and the fact that $[(\widetilde{N} - N)x]_{C} = \zero$ 
this gives 
\begin{align*}
\norm{\xxtil_{F}-\xx_{F}}_{{\widetilde \mU}_{FF}} = \norm{\widetilde \mdir_{FF}^{-1} [(\widetilde \mdir-\mdir) \xx]_{F}}_{{\widetilde \mU}_{FF}} 
\leq \norm{(\widetilde \mdir-\mdir) \xx}_{\widetilde \mU^\dagger} 
\leq \frac{\epsilon}{1-\epsilon} \norm{\xx}_{\mU} ~.
\end{align*}
Therefore we can write
\begin{align*}
\norm{\zz}_{\mU_{\sc{\widetilde \mdir}{F}}}^2 &= \norm{\xxtil}_{\widetilde \mU}^2 
\leq \left ( \norm{\xx}_{\widetilde \mU} + \norm{\xxtil-\xx}_{\widetilde \mU} \right )^2 
\leq \left ( (1+\epsilon) \norm{\xx}_{\mU} + \frac{\epsilon}{1-\epsilon} \norm{\xx}_{\mU} \right )^2 \\
&\leq \left ( \frac{1+\epsilon}{1-\epsilon} \right )^2 \norm{\xx}_{\mU}^2 
= \left ( \frac{1+\epsilon}{1-\epsilon} \right )^2 \norm{\zz}_{\mU_{\sc{\widetilde \mdir}{F}}}^2 ~.
\end{align*}
\end{proof}


\begin{proof}[Proof of \autoref{lem:RCDDRobustSchurBound}] By \autoref{cor:rcdd_schur_bound} and the fact that subsets of $\alpha$-RCDD subsets are $\alpha$-RCDD we have that 
$\mU_{\sc{\mvar{L}}{\widehat{J}}}\preceq O(1) \mU_{\mvar{L}}$.
and \autoref{lem:schur_robust}. The result then follows by applying \autoref{lem:schur_robust} since for Eulerian $\mlap$ we have $\ker(\mlap) = \ker(\mlap)^\top$.
\end{proof}


\section{Single Phase Analysis}
\label{sec:singlePhase}

In this section we prove \autoref{thm:InPhaseErrorAccumulation} which gives the guarantees of $\SinglePhase$ (\autoref{alg:singlePhase}), our main subroutine for eliminating blocks of vertices.

We start by analyzing the running time of $\SinglePhase$ in \autoref{alg:singlePhase}. The following \autoref{lem:singlePhaseRuntime} bounds both the number of times the entire graph is sparsified, i.e. $\textsc{SparsifyEulerian}$, as well as the total running time of the algorithm. To prove this lemma, we bound how much the sparsity of the graph increases as we eliminate vertices, i.e. call \textsc{SingleVertexElim} (\autoref{alg:selim}).

\begin{lemma}[$\SinglePhase$
(\autoref{alg:singlePhase}) Running Time]
\label{lem:singlePhaseRuntime}
Suppose $\dlap$ is an Eulerian Laplacian on $n$ vertices with $m$
non-zeros. In the \emph{for}-loop of $\SinglePhase$
(\autoref{alg:singlePhase}), $\textsc{SparsifyEulerian}$ is
called $O(P) = O(\eps^{-2}\log^2(1/\wprob))$ times. Consequently, the total running time for $\SinglePhase$ is
\[
\Otil(m + P (T + T P/n))
= \Otil(m + n \epsilon^{-8} \log^7(1/\delta) + \epsilon^{-10} \log^{9}(1/\delta)) ~.
\]
where here the $\otilde$ notation additionally hides $O(\log \log 1/\wprob)$ factors. 
\end{lemma}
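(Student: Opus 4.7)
The strategy is to (i) show that $\nnz(\mvar{S}^{(k)}) = O(T)$ is maintained as an invariant, so every eliminated vertex $v_k$ has combinatorial degree $O(T/n)$; (ii) convert this degree bound into a per-elimination cost and a per-elimination nnz-growth bound via \autoref{lem:SElim} and the sparsity guarantee of $\textsc{SingleVertexElim}$; (iii) add up over the $O(P)$ ``sparsification cycles'' and charge the overall work.

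First, the initial call to $\textsc{SparsifyEulerian}$ costs $\Otil(m + n(\epsilon')^{-2}\log(P/\wprob)) = \Otil(m + T)$ by \autoref{thm:order_n_sparsifier}, and produces $\mvar{S}^{(0)}$ with $\nnz(\mvar{S}^{(0)}) \leq T$. Now consider a single elimination step indexed by $k$. Since we pick $v_k$ among vertices of $F^{(k-1)}$ whose degree is at most twice the average, and $|F^{(k-1)}| \geq |F^{(0)}|/2 = \Theta(n)$ with $\nnz(\mvar{S}^{(k-1)}) \leq 2T$ by the re-sparsification rule, the degree of $v_k$ satisfies $\nnz(\ll^{(k)}) + \nnz(\rr^{(k)}) = O(T/n)$. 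Each of the $P$ calls to $\textsc{SingleVertexElim}$ on $(\ll^{(k)},\rr^{(k)})$ therefore runs in time $O((T/n)\log(T/n))$ and, by \autoref{lem:SElim}, returns a matrix with at most $O(T/n)$ nonzeros. Hence both the per-elimination running time and the per-elimination nnz increase of $\mvar{S}^{(k,P)} - \mvar{S}^{(k-1)}$ are $\Otil(PT/n)$.

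Now I count sparsifications. Between two consecutive triggers of $\textsc{SparsifyEulerian}$, the nonzero count must grow from at most $T$ (right after the last re-sparsification) to at least $2T$. Since each elimination adds at most $O(PT/n)$ nonzeros, this requires $\Omega(n/P)$ eliminations between triggers. Combined with the total number of eliminations $k_{\max} \leq n/2$, this gives $O(k_{\max}\cdot P/n) = O(P)$ total calls to $\textsc{SparsifyEulerian}$ inside the for-loop, matching the first part of the lemma.

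Finally I sum the costs. Every call to $\textsc{SparsifyEulerian}$ acts on an $n$-vertex, $O(T)$-edge Eulerian graph with error parameter $\epsilon' = \Theta(\epsilon/P)$ and failure probability $\wprob/P$, so by \autoref{thm:order_n_sparsifier} costs $\Otil(T + n(\epsilon')^{-2}\log(P/\wprob)) = \Otil(T + nP^2\epsilon^{-2}\log(1/\wprob))$. Summed over the $O(P)$ sparsifications this contributes $\Otil(P\cdot T)$; summed over all $k_{\max} \leq n$ eliminations the $\textsc{SingleVertexElim}$ calls contribute $\Otil(k_{\max}\cdot PT/n) = \Otil(PT)$. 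Adding the initial $\Otil(m+T)$, plugging $T = \Otil(n\epsilon^{-6}\log^5(1/\wprob))$ and $P = O(\epsilon^{-2}\log^2(1/\wprob))$, and keeping the loose upper bound $P(T + TP/n)$ to cover both $PT = \Otil(n\epsilon^{-8}\log^7(1/\wprob))$ and $P^2T/n = \Otil(\epsilon^{-10}\log^9(1/\wprob))$ yields the claimed total. The only delicate step is the invariant $\nnz(\mvar{S}^{(k)}) = O(T)$, which is what lets us (a) bound the chosen vertex's degree by the average and (b) charge each sparsification to an $\Omega(n/P)$-length block of eliminations; all other accounting is additive.
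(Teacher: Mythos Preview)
Your argument is essentially the same as the paper's: bound the degree of each eliminated vertex via the $\nnz \leq 2T$ invariant and the twice-average selection rule, deduce $O(PT/n)$ nnz-growth per elimination, hence $\Omega(n/P)$ eliminations between re-sparsifications, hence $O(P)$ calls to $\textsc{SparsifyEulerian}$. The bookkeeping is the same and the conclusion matches.

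There is one small slip worth fixing. You assert that ``every call to $\textsc{SparsifyEulerian}$ acts on an $O(T)$-edge Eulerian graph,'' and from this you compute the summed sparsification cost as $\Otil(PT)$. But the threshold check in \autoref{alg:singlePhase} happens \emph{after} the elimination step, so the input to $\textsc{SparsifyEulerian}$ can overshoot $2T$ by one step's worth of growth, i.e.\ it has $O(T(1+P/n))$ nonzeros, not $O(T)$. In the regime $P \gtrsim n$ this is the dominant term, and it is precisely what produces the $P^2T/n = \Otil(\epsilon^{-10}\log^9(1/\wprob))$ piece of the final bound. You do write down $P(T+TP/n)$ at the end, but your derivation up to that point only supports $PT$; the paper gets the extra factor by tracking the multiplicative growth $(1+O(P/n))$ per step and noting the sparsifier input can be as large as $(1+O(P/n))\cdot 2T$. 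Patching this one sentence makes your argument complete.
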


\begin{proof}
Note that for all $k \in [0, k_{\max})$ the Eulerian Laplacian $\ms^{(k)}$ is non-zero on at most $n - k \geq n - k_{\max} \geq n/2$ vertices. Since in each iteration of the algorithm a vertex $v$ of at most twice the average degree and then adds $O(\deg(v)) = O(\nnz(\ms^{(k)}/n)$ edges are added to the
graph from each of the $P$ calls to \textsc{SingleVertexElim} and then sparsification only decreases sparsity we have that
\[
\nnz(\ms^{(k + 1)}) 
\leq \left(1 + O\left(\frac{P}{n}\right)\right) \nnz(\ms^{(k)})
\leq \nnz(\ms^{(k)}) \exp(O(P/n)) ~.
\]
Consequently, for all $t > 0$ we have that $\nnz(\ms^{(k + t)}) \leq \exp(O(Pt/n)) \nnz(\ms^{(k)})$. Consequently, it takes at least $t = \Omega(n/P)$ iterations for the sparsity of $\nnz(\ms^{(k)})$ to double and we have that $\SparsifyEulerian$ is called at most $O(k_{\max} / (n/P)) = O(P)$ times.  

With the exception of the first one, each invocations of $\SparsifyEulerian$ is on a graph with sparsity is at most $(1 + O(P/n)) T$. Consequently, the running time for all the sparsification calls combined (ignoring $O(\log \log 1/\wprob)$ factors) is
\[
\Otil(m + P \cdot (T + T P/n))
= \Otil(m + n \epsilon^{-8} \log^7(1/\delta) + \epsilon^{-10} \log^{9}(1/\delta)) ~.
\]
This also upper bounds the running time required for vertex
eliminations, as \textsc{SingleVertexElim} (\autoref{alg:selim}) can
be implemented to run in time $O(d \log d)$, where $d$ is the
combinatorial degree of the vertex being eliminated.
With probability $1-O(\delta)$, this also upper bounds
the running time required for performing the random vertex selections
of low degree vertices, which can be implemented using a simple
rejection sampling approach.
\end{proof}

To prove \autoref{alg:singlePhase} it only remains to show that  in $\SinglePhase$ it is the case that 
\begin{equation}
\Pr\left[
\norm{\u{\mvar{L}}^{\pseudoRoot}
	\left( \mvar{L}^{(k_{\max})} - \mvar{L} \right)
\u{\mvar{L}}^{\pseudoRoot}}_2 > \epsilon
\right]
\leq O(\wprob)
\label{eq:singlePhaseErrProb}
\end{equation}
where $\mvar{L}$ is the input to $\SinglePhase$
and $\mvar{L}^{(k_{\max})}$ is the output. To do this, we set up a matrix martingale as follows. For the inner loops (inside $k$, inside the $t$ loops) of $\SinglePhase$.
We define the change in each step to be:
\[
\mvar{X}^{(k,t)}
\defeq \mvar{S}^{(k,t)} -\mvar{S}^{(k,t-1)}.
\]
Each $\mvar{X}^{(k,t)}$ has zero expectation, hence we can define the following zero-mean martingale
sequence
\[
\mvar{M}^{(k,t)} 
=
\sum_{\hat{k}=1}^{k} \sum_{\hat{t} = 1}^{
	\substack{\text{$P$ if $\hat{k} < k$}\\ \text{$t$ if $\hat{k} = k$}}
}
\mvar{X}^{(\hat{k}, \hat{t})}
= \sum_{(\hat{k}, \hat{t}) \leq (k, t)}
\mvar{X}^{(\hat{k},\hat{t})}.
\]
Here and for the remainder of this section we overload the $\leq$ and $<$ notation to handle pair of variables in the
lexicographical sense, e.g. $(\hat{k}, \hat{t}) \leq (k, t)$ if and only if $\hat{k} < k$ or $\hat{k} = k$ and $\hat{t} \leq t$.
%

Note that this martingale does not include the changes introduced by calls to
$\textsc{SparsifyEulerian}$ and therefore it may be the case that 
$
\mvar{L}^{(k)}
\neq \mvar{L} + \mvar{M}^{(k, P)}
$.
To track the changes caused by $\textsc{SparsifyEulerian}$,
we further define changes from the sparsification steps.
We let
\[
\mvar{Z}^{(0)} \defeq \mvar{S}^{(0)} - \mvar{S}
\text{ and }
\mvar{Z}^{(k)}
\defeq
\mvar{S}^{(k)} - \mvar{S}^{(k, P)}
\text{ for all } 
k > 0 
\]
and with this notation define the output matrix $\mvar{L}^{(k)}$ and intermediate matrices $\mvar{L}^{(k,t)}$ by 
\[
\mvar{L}^{(k)}
= \mvar{L}
+ \mvar{M}^{(k,P)}
+ \sum_{\hat{k} = 0}^{k} \mvar{Z}^{(\hat{k})}
\text{ and } 
\mvar{L}^{(k,t)} =
\mvar{L} + \mvar{M}^{(k, t)}
+ \sum_{\hat{k} = 0}^{k - 1} \mvar{Z}^{(\hat{k})}.
\]

Now we wish to analyze this martingale conditioned on certain high probability events holding which make the martingale \emph{safe} or stable for analysis. We defined martingale safety as follows. 

\newcommand{\safe}{\textbf{SAFE}}

\begin{definition}[Martingale Safety]
\label{def:safe}
We let  $\safe^{(k, t)}$ denote the event that the \emph{martingale is safe until
$(k, t)$, for ${t \in \setof{1,\ldots,P+1}}$} which we define as the following two conditions holding:
\begin{enumerate}
\item All calls to \textsc{SparsifyEulerian} strictly before the $k$th elimination
have been successful,
e.g. 
$\normInline{\u{\mvar{S}^{(\hat{k}, P)}}^{\pseudoRoot}
\mvar{Z}^{(\hat{k})}
\u{\mvar{S}^{(\hat{k}, P)}}^{\pseudoRoot}}_2 \leq \epsilon'
$ for all $\hat{k} < k$.
\item
\label{enu:safemartingalepart}
For all indices $(\hat{k}, \hat{t}) < (k,
  t)$ we had
$
\normInline{\u{\mvar{L}}^{\pseudoRoot}
\mvar{M}^{(\hat{k}, \hat{t})}
\u{\mvar{L}}^{\pseudoRoot}}_2\leq \epsilon /2
$.
\end{enumerate}
\end{definition}
%
%

With this notation of Martingale safety established we define the following truncated Martingale as one where the steps incur
no additional error once it fails. Formally, we let
\begin{align}
\mvar{\overline{X}}^{(k,t)}
\defeq \begin{cases}
\mvar{X}^{(k,t)} & \text{if $\safe^{(k,t)}$}\\
\zero & \text{otherwise}.
\end{cases}
\end{align}
Note that the following sequence of sums of $\mvar{\overline{X}}^{(k,t)}$ is another zero
mean martingale:
\[
\mvar{\overline{M}}^{(k,t)} 
= \sum_{(\hat{k}, \hat{t}) \leq (k, t)}
\mvar{\overline{X}}^{(\hat{k},\hat{t})}.
\]



%
To analyze this martingale we first establish the follow nice consequences of 
$\safe^{(k,t)}$.

\begin{lemma}
\label{lem:SafeImplyLandUApprox}
\noindent
If $\safe^{(k,t)}$ holds,
\begin{itemize}
\item 
then for all $(\hat{k}, \hat{t}) < (k, t)$,
\begin{equation}
\normInline{
\u{\mvar{L}}^{\pseudoRoot}
(\mvar{L}^{(\hat{k}, \hat{t})} - \mvar{L} )
\u{\mvar{L}}^{\pseudoRoot}
}_2 \leq \epsilon
\text{ and }
\normInline{
\u{\mvar{L}}^{\pseudoRoot}
(\mvar{L}^{(\hat{k})} - \mvar{L} )
\u{\mvar{L}}^{\pseudoRoot}
}_2 \leq \epsilon 
\label{eq:safeglobal}
\end{equation}
\item
  and we have for all $\hat{k} < k$
  \begin{equation}
\u{\mvar{S}^{(\hat{k})}} \preceq
O\left( 1 \right) \cdot \u{\mvar{L}} ~.\label{eq:safeschur}
\end{equation}
\end{itemize}
\end{lemma}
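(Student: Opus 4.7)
I will prove both bullets jointly by strong induction on $\hat{k}$: for each $\hat{k}$, I first establish the first bullet for all valid $\hat{t}$ given the Schur-complement bound at smaller indices, then (when $\hat{k} < k$) use this to establish the second bullet at $\hat{k}$. The base case $\hat{k}=0$ is trivial for the first bullet since $\mvar{L}^{(0,0)} = \mvar{L}$, and for the second bullet $\mvar{S}^{(0)}$ is merely an $\epsilon'$-approximation of $\mvar{L}$ by \autoref{thm:order_n_sparsifier}, giving $\u{\mvar{S}^{(0)}} \preceq (1 + O(\epsilon')) \u{\mvar{L}}$.

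For the first bullet at $(\hat{k}, \hat{t})$, decompose
\[
\mvar{L}^{(\hat{k}, \hat{t})} - \mvar{L} \;=\; \mvar{M}^{(\hat{k}, \hat{t})} \;+\; \sum_{j=0}^{\hat{k}-1} \mvar{Z}^{(j)},
\]
with the analogous expression for $\mvar{L}^{(\hat{k})}$. The second safety condition yields $\normInline{\u{\mvar{L}}^{\pseudoRoot} \mvar{M}^{(\hat{k}, \hat{t})} \u{\mvar{L}}^{\pseudoRoot}}_2 \leq \epsilon/2$ directly in the global norm. For each sparsification error $\mvar{Z}^{(j)}$ with $j < \hat{k}$, the first safety condition gives $\normInline{\u{\mvar{S}^{(j,P)}}^{\pseudoRoot} \mvar{Z}^{(j)} \u{\mvar{S}^{(j,P)}}^{\pseudoRoot}}_2 \leq \epsilon'$. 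The inductive second bullet applied to index $j$, combined with the observation that $\mvar{S}^{(j,P)}$ and $\mvar{S}^{(j)}$ differ only by the small sparsification perturbation $\mvar{Z}^{(j)}$, yields $\u{\mvar{S}^{(j,P)}} \preceq O(1) \u{\mvar{L}}$. This converts the local bound into $\normInline{\u{\mvar{L}}^{\pseudoRoot} \mvar{Z}^{(j)} \u{\mvar{L}}^{\pseudoRoot}}_2 = O(\epsilon')$. By \autoref{lem:singlePhaseRuntime} at most $O(P)$ of the $\mvar{Z}^{(j)}$'s are nonzero, so a triangle-inequality summation bounds the total sparsification contribution by $O(P \epsilon') \leq \epsilon/2$ by the choice $\epsilon' = \Theta(\epsilon/P)$ (for a sufficiently small constant). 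Adding this to the martingale contribution gives the required $\epsilon$ bound.

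For the second bullet at $\hat{k} < k$, invoke the just-proved first bullet for $\mvar{L}^{(\hat{k})}$ to conclude that $\mvar{L}^{(\hat{k})}$ is an $\epsilon$-approximation of $\mvar{L}$, hence a $(1/2)$-approximation since $\epsilon < 1/2$. By the structural invariant maintained by \autoref{alg:singlePhase} (reflected in the update rule $\mvar{L}^{(i+1)} = \mvar{L}^{(i)} + (\mvar{S}^{(i+1)} - \sc{\mvar{L}^{(i)}}{[i+1,n]})$), the matrix $\mvar{S}^{(\hat{k})}$ is the exact Schur complement of $\mvar{L}^{(\hat{k})}$ obtained by eliminating the set $F_{\hat{k}}$ of vertices pivoted through step $\hat{k}$. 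Since $F_{\hat{k}}$ is a subset of the original $0.1$-RCDD set $J$, and subsets of $\alpha$-RCDD sets are $\alpha$-RCDD, \autoref{lem:RCDDRobustSchurBound} gives $\u{\mvar{S}^{(\hat{k})}} = \u{\sc{\mvar{L}^{(\hat{k})}}{F_{\hat{k}}}} \preceq O(1) \u{\mvar{L}}$, completing the induction.

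The main obstacle is handling the apparent circularity between the two bullets and performing the local-to-global norm translation for sparsification errors: each $\mvar{Z}^{(j)}$ is controlled only in the norm of its corresponding Schur complement $\u{\mvar{S}^{(j,P)}}$, and transferring these bounds to the global $\u{\mvar{L}}$ norm requires prior control of Schur-complement blow-up. The staged induction resolves this by ensuring the Schur bound at index $\hat{k}$ uses only the norm bound at the same $\hat{k}$, while the norm bound at $(\hat{k}, \hat{t})$ invokes Schur bounds only at strictly earlier indices $j < \hat{k}$.
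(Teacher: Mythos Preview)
Your proof is correct and follows essentially the same approach as the paper's: both arguments decompose $\mvar{L}^{(\hat{k},\hat{t})} - \mvar{L}$ into the martingale part $\mvar{M}^{(\hat{k},\hat{t})}$ (bounded by $\epsilon/2$ via the second safety condition) plus the accumulated sparsification errors $\sum \mvar{Z}^{(j)}$, convert each local $\u{\mvar{S}^{(j,P)}}$-norm bound on $\mvar{Z}^{(j)}$ to the global $\u{\mvar{L}}$-norm via the inductive Schur bound, invoke \autoref{lem:singlePhaseRuntime} to control the number of nonzero $\mvar{Z}^{(j)}$'s, and then derive the Schur bound at the next index from \autoref{lem:RCDDRobustSchurBound}. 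The paper organizes the induction over the lexicographic order on pairs $(k,t)$ and tracks the slightly stronger auxiliary inequality $\sum_{\hat{k}<k}\normInline{\u{\mvar{L}}^{\pseudoRoot}\mvar{Z}^{(\hat{k})}\u{\mvar{L}}^{\pseudoRoot}}_2 \leq (C\epsilon/P)\cdot N^{(k,t)}$, whereas you do strong induction on $\hat{k}$; these are equivalent bookkeeping choices. One small point: the paper obtains $\u{\mvar{S}^{(j,P)}} \preceq O(1)\,\u{\mvar{L}}$ directly by applying \autoref{lem:RCDDRobustSchurBound} to $\mvar{L}^{(j,P)}$ (which the first bullet shows is an $\epsilon$-approximation of $\mvar{L}$), rather than first bounding $\u{\mvar{S}^{(j)}}$ and transferring via $\mvar{Z}^{(j)}$; also note the $0.1$-RCDD set is selected with respect to $\mvar{S}^{(0)}$ rather than $\mvar{L}$, though since $\mvar{S}^{(0)}$ is an $\epsilon'$-approximation of $\mvar{L}$ this causes no real difficulty in applying \autoref{lem:RCDDRobustSchurBound}.
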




\begin{proof}[Proof of \autoref{lem:SafeImplyLandUApprox}]
We consider the ordering $(k,1) < (k,2) < \ldots < (k,P) < (k,P+1) < (k+1,1) < \ldots$ and prove by induction on this ordering, that \autoref{eq:safeschur} holds as well as 
\begin{equation}
\sum_{\hat{k} < k} 
\norm{
\u{\mvar{L}}^{\pseudoRoot}
\mvar{Z}^{(\hat{k})}
\u{\mvar{L}}^{\pseudoRoot}
}_2
\leq
\frac{C \eps}{P}
\cdot  N^{(k,t)}
\label{eq:strongsafeglobal}
\end{equation}
Where 
$N^{(k,t)}$ is defined as the number of calls $\textsc{SparsifyEulerian}$ before 
$(k,t)$ and $C$ is defined as a constant such that guarantee of
\autoref{lem:singlePhaseRuntime}
that $N^{(k,t)} = O(P)$ ensures $C \eps N^{(k,t)} / P \leq \eps/2$. Since $
\mvar{L}^{(k, t)} - \mvar{L}
= \mvar{M}^{(k, t)} 
+ \sum_{\hat{k} < k} \mvar{Z}^{(\hat{k})}
$, by triangle inequality and Part~\ref{enu:safemartingalepart} of \autoref{def:safe} this suffices to prove the result.

As our base case, consider the index $(1,1)$.
We call $\textsc{SparsifyEulerian}(\mvar{L} , \wprob/P,
\epsilon') $ to compute $\mvar{S}^{(0)}$, and $\safe^{(1,1)}$ guarantees
this call succeeded.
So \autoref{thm:order_n_sparsifier} immediately tells us
that $\normInline{
\u{\mvar{L}}^{\pseudoRoot}
\mvar{Z}^{(0)}
\u{\mvar{L}}^{\pseudoRoot}
}_2 
\leq \eps' \leq \frac{ C\eps}{P}$, establishing
\autoref{eq:strongsafeglobal} for this index.
\autoref{eq:safeglobal} follows from a triangle
inequality combined with Part~\ref{enu:safemartingalepart} of \autoref{def:safe} and 
Equation~\eqref{eq:safeschur} follows from \autoref{lem:RCDDRobustSchurBound}.

Next, we consider proving the inductive statements when
$\safe^{(k,t+1)}$ holds, assuming the induction hypothesis holds for $\safe^{(k,t)}$.
In this case, the condition in Equation~\eqref{eq:safeschur}
remains unchanged, so it follows immediately from the induction
hypothesis for $\safe^{(k,t)}$.
The sum $\sum_{\hat{k} < k} 
\normInline{
\u{\mvar{L}}^{\pseudoRoot}
\mvar{Z}^{(\hat{k})}
\u{\mvar{L}}^{\pseudoRoot}
}_2$ and upper bound we want for it also remain unchanged, so again we
get \autoref{eq:strongsafeglobal}.
This gives \autoref{eq:safeglobal}, from a triangle
inequality combined with Part~\ref{enu:safemartingalepart} of
\autoref{def:safe}. 
Now we consider proving the inductive statements when
$\safe^{(k+1,1)}$ holds, assuming the induction hypothesis holds for $\safe^{(k,P+1)}$.
From the induction hypothesis for $\safe^{(k,P+1)}$, we have that
$
\normInline{
\u{\mvar{L}}^{\pseudoRoot}
\left(\mvar{L}^{(k,P)}  - \mvar{L} \right)
\u{\mvar{L}}^{\pseudoRoot}
}_2 \leq \epsilon
$ and 
from \autoref{lem:RCDDRobustSchurBound}, we then get $\u{\mvar{S}^{(k,P)}} \preceq
O\left( 1 \right) \cdot \u{\mvar{L}}.$
This ensures that if a call to $\textsc{SparsifyEulerian}$ was made at
the end of $k$th elimination, then since $\safe^{(k+1,t)}$ guarantees
the call succeeded, we have
\[
 \norm{
\u{\mvar{L}}^{\pseudoRoot}
\mvar{Z}^{(k)}
\u{\mvar{L}}^{\pseudoRoot}
}_2 
\leq O(1)
\norm{
\u{\mvar{S}^{(k,P)}}^{\pseudoRoot}
\mvar{Z}^{(k)}
\u{\mvar{S}^{(k,P)}}^{\pseudoRoot}
}_2 
\leq
O(1)
\eps'
\leq 
\frac{ C\eps}{P}
.
\]
This then proves \autoref{eq:strongsafeglobal} for
$\safe^{(k+1,1)}$, which gives \autoref{eq:safeglobal} from a triangle
inequality combined with Part~\ref{enu:safemartingalepart} of \autoref{def:safe}.
Finally, by \autoref{lem:RCDDRobustSchurBound}, we then get 
${\u{\mvar{S}^{(k)}} \preceq O\left( 1 \right) \cdot \u{\mvar{L}},}$
which proves \autoref{eq:safeschur} for $\safe^{(k+1,1)}$.
\end{proof}

Note that this lemma shows that if we can prove $\Pr[\neg \safe^{(n+1,1)} ]
\leq O( \wprob )$, it will imply \autoref{eq:singlePhaseErrProb} and prove \autoref{thm:InPhaseErrorAccumulation}. To prove this, note that
$
\normInline{\u{\mvar{L}}^{\pseudoRoot}
\mvar{M}^{(k,t)}
\u{\mvar{L}}^{\pseudoRoot}}_2 > \epsilon /2
$
implies
$
\normInline{\u{\mvar{L}}^{\pseudoRoot}
\mvar{\overline{M}}^{(k,t)}
\u{\mvar{L}}^{\pseudoRoot}}_2 > \epsilon /2.
$
Hence, when upper bounding the probability of $\Pr[\neg \safe^{(k,t)}
]$, we can instead consider the higher probability event
\[
\neg
\left(
\left(
\forall \hat{k} \leq k \enspace
\norm{\u{\mvar{S}^{(\hat{k}, P)}}^{\pseudoRoot}
\mvar{Z}^{(\hat{k})}
\u{\mvar{S}^{(\hat{k}, P)}}^{\pseudoRoot}}_2 \leq \epsilon'
\right)
\wedge
\left(
\forall (\hat{k}, \hat{t}) \leq (k, P)
\enspace
\norm{\u{\mvar{L}}^{\pseudoRoot}
\mvar{\overline{M}}^{(\hat{k}, \hat{t})}
\u{\mvar{L}}^{\pseudoRoot}}_2\leq \epsilon /2
\right)
\right)
\]
To bound this we use the following rectangular
matrix martingale result from \cite{Tropp11:arxiv}.

\begin{lemma}[Matrix Freedman (from Cor 1.3. of~\cite{Tropp11:arxiv})]
\label{lem:martingale}
Let $\mvar{E}^{(1)} \ldots \mvar{E}^{(N)}$ be a sequence of matrices and let
$
\expec{j - 1}{\mvar{E}^{(j)}}
$
\sidford{Should we use $<$ notation here, i.e. change this to $\expec{< j}{\mvar{E}^{(j)}}$ for consistency with the rest?}
denote the expectation of $\mvar{E}^{(j)}$
conditioned on $\mvar{E}^{(j - 1)}, \mvar{E}^{(j - 2)}, \ldots, \mvar{E}^{(1)}$. If $
\expec{i}{\mvar{E}^{(i)}}  = 0
$
and
$\norm{\mvar{E}^{(i)}}_2 \leq \rho$ with probability 1 for all $i$ then for any error $t$ we have:
\begin{align*}
\prob{}{\exists k \geq 0 ~s.t.~\norm{\sum_{j \leq k} \mvar{E}^{(j)}}_2 \geq t
~\text{AND}~
\norm{\sum_{j \leq k} \expec{j - 1}{\mvar{E}^{(j)} \mvar{E}^{(j)\top}
			+ \mvar{E}^{(j)\top} \mvar{E}^{(j)}}}_2 \leq
                    \sigma^2}
\\
\leq n\cdot \exp\left( \frac{-t ^2}{100 \left(\sigma^2 + t \rho \right)}\right)
~.
\end{align*}
\end{lemma}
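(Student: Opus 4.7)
The plan is to defer this statement to the existing matrix martingale machinery rather than re-derive it from scratch. The lemma is essentially a restatement of Corollary~1.3 of \cite{Tropp11:arxiv}, so I would first verify that our notation aligns with that reference: the condition $\expec{i}{\mvar{E}^{(i)}} = 0$ makes $\{\mvar{E}^{(i)}\}$ a matrix martingale difference sequence, the uniform bound $\|\mvar{E}^{(i)}\|_2 \leq \rho$ is the standard bounded-increment hypothesis, and the quantity $\sum_{j\leq k} \expec{j-1}{\mvar{E}^{(j)}\mvar{E}^{(j)\top} + \mvar{E}^{(j)\top}\mvar{E}^{(j)}}$ is precisely the predictable quadratic variation used to measure martingale concentration in the rectangular case.

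The first step in the underlying proof is the \emph{Hermitian dilation} trick: replace each rectangular difference $\mvar{E}^{(j)}$ with the Hermitian block matrix $\mathcal{H}(\mvar{E}^{(j)}) = \bigl[\begin{smallmatrix} \mzero & \mvar{E}^{(j)} \\ \mvar{E}^{(j)\top} & \mzero \end{smallmatrix}\bigr]$, which satisfies $\|\mathcal{H}(\mvar{E}^{(j)})\|_2 = \|\mvar{E}^{(j)}\|_2$ and $\mathcal{H}(\mvar{E}^{(j)})^2 = \mathrm{diag}(\mvar{E}^{(j)}\mvar{E}^{(j)\top}, \mvar{E}^{(j)\top}\mvar{E}^{(j)})$. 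This reduces the rectangular problem to a Hermitian matrix martingale question in which both the spectral-norm tail and the predictable variance have clean block-diagonal expressions matching those in the lemma statement. It also explains why the squared operator norm appears through the sum $\mvar{E}^{(j)}\mvar{E}^{(j)\top} + \mvar{E}^{(j)\top}\mvar{E}^{(j)}$.

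The second step applies the matrix Laplace transform bound $\prob{}{\lambda_{\max}(\mvar{Y}) \geq t} \leq \inf_{\theta > 0} e^{-\theta t}\,\E\, \trace{\exp(\theta \mvar{Y})}$ to the Hermitian partial sum, and then uses Lieb's concavity theorem (in the form of Tropp's master bound for matrix martingales) to peel off one martingale increment at a time. This yields a Bernstein-type MGF bound in terms of the conditional variance and the uniform increment bound $\rho$. Optimizing $\theta$ produces the familiar $\exp(-t^2 / (c(\sigma^2 + t\rho)))$ shape in the conclusion.

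The main subtlety is the uniform-in-$k$ quantifier, which requires a \emph{stopping-time argument}: one defines a predictable stopping time $\tau$ that is triggered the first time the cumulative variance exceeds $\sigma^2$, and then observes that on the event appearing in the lemma, the stopped process agrees with the original through time $k$. Tropp's formulation of the matrix Freedman inequality handles this by applying the Laplace transform bound to the stopped martingale, so the main work in adapting the result here is just to verify that our filtration (generated by the random choices in $\textsc{SingleVertexElim}$ and $\textsc{SparsifyEulerian}$) is adapted to $\{\mvar{E}^{(j)}\}$. I expect this verification and the correct identification of $\sigma^2$ and $\rho$ (rather than any new probabilistic argument) to be the only nontrivial part of invoking the lemma in \autoref{sec:singlePhase}.
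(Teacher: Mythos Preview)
Your proposal is correct and in fact goes beyond what the paper does: the paper does not prove this lemma at all, it simply cites it as Corollary~1.3 of \cite{Tropp11:arxiv} and uses it as a black box. Your sketch of the Hermitian dilation, matrix Laplace transform, and stopping-time argument is an accurate outline of Tropp's proof, but none of that appears in the paper under review.
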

As our bounds normalize by $\u{\mvar{L}}$ for simplicity we can define
rescaled quantities:
\begin{align*}
\mvar{\widehat{M}}^{(k, t)}
 \defeq  \u{\mvar{L}}^{\pseudoRoot}
\mvar{\overline{M}}^{(k, t)}
\u{\mvar{L}}^{\pseudoRoot}
\text{ and }
\mvar{\widehat{X}}^{(k,t)}
\defeq  \u{\mvar{L}}^{\pseudoRoot}
\mvar{\overline{X}}^{(k, t)}
\u{\mvar{L}}^{\pseudoRoot}
\end{align*}
Together with a union bound, this allows us to bound $\Pr[\neg \safe^{(k_{\max}+1,1)} ]$ by:
\begin{align}
& \Pr[\neg \safe^{(k_{\max}+1,1)} ]\nonumber \\
& \leq 
\sum_{k} \Pr\left[ 
\norm{\u{\mvar{S}^{(k, P)}}^{\pseudoRoot}
\mvar{Z}^{(k)}
\u{\mvar{S}^{(k, P)}}^{\pseudoRoot}}_2 
> \eps'
\right]
\label{eq:SparseEulError}\\
& + \Pr \left[ \exists~(k, t)~s.t.~\norm{\mvar{\widehat{M}}^{(k,t)}}_2 \geq s
~\text{AND}~
\norm{\sum_{(\hat{k}, \hat{t}) \leq (k, t)}
	\expec{< (\hat{k}, \hat{t})}{\mxhat^{(\hat{k}, \hat{t})} \mxhat^{(\hat{k}, \hat{t})\top}
				+ \mxhat^{(\hat{k}, \hat{t})\top} \mxhat^{(\hat{k}, \hat{t})}}}_2
  \leq \sigma^2 \right]
\label{eq:SmallError}  \\
& + \Pr\left[{\exists (k, t)~s.t.~\norm{\sum_{(\hat{k}, \hat{t}) \leq (k, t)}
		\expec{< (\hat{k}, \hat{t}) }{\mxhat^{(\hat{k}, \hat{t})} \mxhat^{(\hat{k}, \hat{t})\top}
+ \mxhat^{(\hat{k}, \hat{t})\top} \mxhat^{(\hat{k}, \hat{t})}}}_2
  > \sigma^2} \right]
\label{eq:SmallVariance}.
\end{align}

Each call to \textsc{SparsifyEulerian} is made with error probability
parameter $\wprob/P$ and by
\autoref{lem:singlePhaseRuntime}, we call the routine at most
$O(P)$ times, so by a union bound and
\autoref{thm:order_n_sparsifier}, the probability at some call fails
is at most $O(\wprob)$, which bounds the term~\eqref{eq:SparseEulError}.

The other two terms rely on properties of the truncated Martingales,
which in turn rely on the condition $\safe^{(k, t)}$. To bound these first we provide the following simple lemma which uses 
\autoref{lem:SafeImplyLandUApprox} to bound
the error of $\mxhat^{(k, t)}$.
\begin{lemma}
\label{lem:StepErrorUpper}
We have
$
\normInline{ \mxhat^{(k, t)} }_2 \leq O\left( 1 / P \right)
$
over the entire support of $\mxhat^{(k, t)}$ unconditionally.
\end{lemma}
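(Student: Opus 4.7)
The plan is to split into two cases based on whether $\safe^{(k,t)}$ holds, and otherwise to reduce the bound for $\u{\mvar{L}}$ to the bound for $\u{local}$ that was already proven in \autoref{lem:SElim}. If $\safe^{(k,t)}$ fails, then by definition $\mvar{\overline{X}}^{(k,t)} = \zero$, so $\mxhat^{(k,t)} = \zero$ and the bound is trivial. The rest of the proof therefore assumes $\safe^{(k,t)}$ holds, in which case $\mvar{\overline{X}}^{(k,t)} = \mvar{X}^{(k,t)}$, and I need to show $\normInline{\u{\mvar{L}}^{\pseudoRoot} \mvar{X}^{(k,t)} \u{\mvar{L}}^{\pseudoRoot}}_2 \leq O(1/P)$.

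The key observation is that, by the update formula in \autoref{alg:singlePhase}, $\mvar{X}^{(k,t)} = -\frac{1}{P}\bigl(\textsc{SingleVertexElim}(d^{(k)},\ll^{(k)},\rr^{(k)}) - \frac{1}{d^{(k)}}\ll^{(k)}\rr^{(k)\top}\bigr)$, which is exactly $-1/P$ times the error matrix analyzed in \autoref{lem:SElim} when eliminating $v_k$ from $\mvar{S}^{(k-1)}$. Part~\ref{part:SElimError} of that lemma immediately gives $\normInline{\u{local}^{\pseudoRoot} \mvar{X}^{(k,t)} \u{local}^{\pseudoRoot}}_2 \leq 4/P$, and Part~\ref{part:SElimNullSpace} guarantees $\mvar{X}^{(k,t)}\vecone = 0$ and $\mvar{X}^{(k,t)\top}\vecone = 0$. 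Combined with the fact that $\mvar{X}^{(k,t)}$ is nonzero only on the rows/columns corresponding to $v_k$ and its neighbors, we get that both the row and column spans of $\mvar{X}^{(k,t)}$ lie in the image of $\u{local}$, so $\mvar{X}^{(k,t)} = \u{local}^{1/2}\bigl(\u{local}^{\pseudoRoot}\mvar{X}^{(k,t)}\u{local}^{\pseudoRoot}\bigr)\u{local}^{1/2}$.

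The remaining step is a norm transfer from $\u{local}$ to $\u{\mvar{L}}$. Since $\u{local} \preceq \vstar{\u{\mvar{S}^{(k-1)}}}{v_k} \preceq \u{\mvar{S}^{(k-1)}}$ (the first inequality is from \autoref{def:local} and the second because a star is a subgraph of the whole undirected graph), and since $\safe^{(k,t)}$ together with \autoref{lem:SafeImplyLandUApprox} gives $\u{\mvar{S}^{(k-1)}} \preceq O(1)\cdot \u{\mvar{L}}$, we obtain $\u{local} \preceq O(1)\cdot \u{\mvar{L}}$. Equivalently, $\normInline{\u{\mvar{L}}^{\pseudoRoot}\u{local}^{1/2}}_2 \leq O(1)$. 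Substituting into the factorization of $\mvar{X}^{(k,t)}$ above and applying submultiplicativity yields
\[
\normInline{\u{\mvar{L}}^{\pseudoRoot}\mvar{X}^{(k,t)}\u{\mvar{L}}^{\pseudoRoot}}_2
\leq \normInline{\u{\mvar{L}}^{\pseudoRoot}\u{local}^{1/2}}_2^{2} \cdot \normInline{\u{local}^{\pseudoRoot}\mvar{X}^{(k,t)}\u{local}^{\pseudoRoot}}_2
\leq O(1)\cdot \frac{4}{P} = O(1/P),
\]
as desired.

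The only subtle step is verifying that $\mvar{X}^{(k,t)}$ really does factor through the image of $\u{local}$ so that the local-to-global norm transfer applies cleanly; once this is in place, the rest of the argument is just combining the local error bound from \autoref{lem:SElim} with the Loewner domination consequence of $\safe^{(k,t)}$ supplied by \autoref{lem:SafeImplyLandUApprox}. Neither conditioning is needed, because both cases (safe or not) are handled directly, which is why the bound holds unconditionally on the entire support.
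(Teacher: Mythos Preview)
Your proof is correct and follows essentially the same approach as the paper: split on whether $\safe^{(k,t)}$ holds, and in the nontrivial case use $\u{local} \preceq \u{\mvar{S}^{(k-1)}} \preceq O(1)\cdot \u{\mvar{L}}$ (from \autoref{lem:SafeImplyLandUApprox}) to transfer the $O(1/P)$ bound of \autoref{lem:SElim} Part~\ref{part:SElimError} from the $\u{local}$-norm to the $\u{\mvar{L}}$-norm. Your write-up is in fact more careful than the paper's, which asserts the norm-transfer inequality without justifying that the row and column spans of $\mvar{X}^{(k,t)}$ lie in the image of $\u{local}$; your explicit use of Part~\ref{part:SElimNullSpace} together with the support observation fills that gap cleanly.
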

\begin{proof}
Note that if $\safe^{(k, t)}$ no longer holds, the truncation
process sets $\mxhat{(k, t)} = 0$.
Otherwise, since
by~\autoref{lem:SafeImplyLandUApprox} we have
$
\u{local^{(k)}}
\preceq \u{\mvar{S}^{(k - 1)}}
\preceq O\left( 1 \right) \cdot \u{\mvar{L}},
$
and therefore
\[
\norm{\mxhat^{(k, t)}}_2
= 
\norm{\u{\mvar{L}}^{\pseudoRoot}
\mvar{X}^{(k, t)} \u{\mvar{L}}^{\pseudoRoot}}_2
\leq O\left( 1 \right)
\norm{\u{local}^{\pseudoRoot}
\mvar{X}^{(k, t)} \u{local}^{\pseudoRoot}}_2
\leq O \left(\frac{1}{ P} \right).
\]
\rp{Is there a pointer of how increasing norm
can only increase this error?}
Here the last condition follows from the rescaling
factor of $1/ P$, and the bounds on the error
of the single vertex elimination algorithm given
by \autoref{lem:SElim} Part~\ref{part:SElimError}.
\end{proof}

\autoref{lem:StepErrorUpper} implies the steps of
$\mvar{\widehat{M}}^{(k,t)}$ have norm bounded by $ O \left( 1 / P \right)$. Consequently, \autoref{lem:martingale} yields that for
$\sigma^2 = \frac{\Theta(\log(1/\wprob))}{P} $
and $s = \eps^2 $,
with $P = \Theta(\eps^{-2} \log^2(1/\wprob))$ and
${\log(1/\wprob) \geq \Omega(\log n)}$, 
 the probability
\autoref{eq:SmallError} is
upper bounded by
\begin{equation}
n \exp\left( \frac{-t ^2}{100 \left(\sigma^2 + t \rho
    \right)}\right)
\leq 
O(\wprob)
\label{eq:dirMartingaleErrProb}
.
\end{equation}
Consequently, all that remains is to bound \eqref{eq:SmallVariance} which we do with the following lemma.


\begin{lemma}
\label{lem:VarianceHalf} For $\sigma^2 = \frac{\Theta(\log(1/\wprob))}{P} $ we have
\[
\prob{}{\exists (k, t)~s.t.~\norm{\sum_{(\hat{k}, \hat{t}) \leq (k, t)}
		\expec{< (\hat{k}, \hat{t}) }{\mxhat^{(\hat{k}, \hat{t})} \mxhat^{(\hat{k}, \hat{t})\top}
+ \mxhat^{(\hat{k}, \hat{t})\top} \mxhat^{(\hat{k}, \hat{t})}}}_2
  > \sigma^2}
\leq 
O(\wprob).
\] 
\end{lemma}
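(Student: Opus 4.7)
The plan is to bound the aggregate conditional second moment by a three-layer argument: a per-step operator bound expressing each term via the local symmetric matrix $\u{local^{(k)}}$, a trivial summation over the $P$ inner SElim iterations (during which $v_k$ is held fixed), and a matrix concentration bound over the outer randomness of the $v_k$ selections. Throughout I use the fact that the lemma is stated in terms of the truncated martingale, so contributions vanish as soon as $\safe^{(k,t)}$ fails and I may freely assume the safe condition wherever $\mxhat^{(k,t)}$ is nonzero.

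For the per-step bound I would combine \autoref{lem:SElim} Part~\ref{part:SElimError} -- which for the unnormalized single-vertex-elimination error $\mvar{Y}^{(k,t)} \defeq P \mvar{X}^{(k,t)}$ gives $\mvar{Y}^{(k,t)} \u{local^{(k)}}^{\dagger} (\mvar{Y}^{(k,t)})^\top \preceq O(1) \u{local^{(k)}}$ and the transposed analogue -- with two consequences of the safe condition. First, \autoref{lem:SafeImplyLandUApprox} yields $\u{local^{(k)}} \preceq O(1) \u{\mvar{L}}$. Second, \autoref{lem:SElim} Part~\ref{part:SElimNullSpace} guarantees that the rows and columns of $\mvar{Y}^{(k,t)}$ sum to zero and are supported on the neighborhood $N(v_k)$, which places $\im(\mvar{Y}^{(k,t)})$ and $\im((\mvar{Y}^{(k,t)})^\top)$ inside $\im(\u{local^{(k)}})$. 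On this common image, PSD-monotonicity of the pseudoinverse gives $(\mvar{Y}^{(k,t)}) \u{\mvar{L}}^\dagger (\mvar{Y}^{(k,t)})^\top \preceq O(1) (\mvar{Y}^{(k,t)}) \u{local^{(k)}}^\dagger (\mvar{Y}^{(k,t)})^\top$, yielding $\expec{<(k,t)}{\mvar{\overline{X}}^{(k,t)} \u{\mvar{L}}^\dagger (\mvar{\overline{X}}^{(k,t)})^\top} \preceq \frac{O(1)}{P^2} \u{local^{(k)}}$ and likewise for the transposed product. Summing over the $P$ inner iterations while $v_k$ is fixed produces $\frac{O(1)}{P} \u{local^{(k)}}$ per elimination.

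It then remains to control $\sum_{k=1}^{k_{\max}} \u{local^{(k)}}$ with high probability in the $\u{\mvar{L}}$-norm. \autoref{eq:exptlocalglobal}, extended to the uniformly random choice from the constant-density low-degree subset of $F^{(k-1)}$, yields $\expec{v_k \mid <k}{\u{local^{(k)}}} \preceq \frac{O(1)}{|F^{(k-1)}|} \u{\mvar{S}^{(k-1)}} \preceq \frac{O(1)}{n} \u{\mvar{L}}$ under safe, so the conditional mean sum is $\preceq O(k_{\max}/n) \u{\mvar{L}} = O(1) \u{\mvar{L}}$, while the deterministic bound $\u{local^{(k)}} \preceq O(1) \u{\mvar{L}}$ holds under safe. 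I would then apply \autoref{lem:martingale} to the Hermitian zero-mean differences $\mvar{W}^{(k)} \defeq \u{\mvar{L}}^{\pseudoRoot}(\u{local^{(k)}} - \expec{v_k \mid <k}{\u{local^{(k)}}}) \u{\mvar{L}}^{\pseudoRoot}$, which have operator norm $O(1)$ under safe and whose conditional quadratic variation satisfies $\sum_k \expec{<k}{(\mvar{W}^{(k)})^2} \preceq O(1) \mvar{I}$ via the elementary inequality $\expec{\mvar{Z}^2} \preceq R \cdot \expec{\mvar{Z}}$ valid for any $0 \preceq \mvar{Z} \preceq R \mvar{I}$. With the safe-failure truncation zeroing out $\mvar{W}^{(k)}$ as soon as safe fails, Freedman gives $\sum_k \u{local^{(k)}} \preceq O(\log(1/\wprob)) \u{\mvar{L}}$ with probability $1 - O(\wprob)$, and composing the three layers delivers the operator-norm bound $\sigma^2 = O(\log(1/\wprob))/P$ claimed.

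The main obstacle is the range-compatibility step used to transfer the SElim error bound from $\u{local^{(k)}}^\dagger$ to $\u{\mvar{L}}^\dagger$: this crucially relies on the fact that $\textsc{SingleVertexElim}$ preserves both row- and column-sums exactly, since naive independent edge sampling (which does not preserve degrees) would not place $\im(\mvar{Y}^{(k,t)})$ inside the small subspace $\im(\u{local^{(k)}})$ and the pseudoinverse transfer would fail. A secondary subtlety is meshing the $\safe^{(k,t)}$ truncation with the outer matrix-martingale concentration so that trajectories on which safe fails contribute zero rather than violating the deterministic bounds that Freedman requires.
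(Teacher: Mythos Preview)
Your plan is correct and follows essentially the same three-layer argument as the paper: (i) bound each conditional second moment by $O(1/P^{2})\,\u{\mvar{L}}^{\pseudoRoot}\u{local^{(k)}}\u{\mvar{L}}^{\pseudoRoot}$ using \autoref{lem:SElim} together with $\u{local^{(k)}}\preceq O(1)\u{\mvar{L}}$ under $\safe$, (ii) sum the $P$ inner samples, and (iii) apply matrix Freedman to the centred, truncated $\u{local^{(k)}}$-increments over the random vertex choices, using $\expec{}{(\mvar{Z}-\expec{}{\mvar{Z}})^{2}}\preceq \expec{}{\mvar{Z}^{2}}\preceq\|\mvar{Z}\|\cdot\expec{}{\mvar{Z}}$ for PSD $\mvar{Z}$ to control the quadratic variation. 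The only cosmetic difference is that the paper bakes the $1/P$ factor into the Freedman martingale (its $\VV_{k}$), whereas you apply Freedman to the unnormalised $\u{\mvar{L}}^{\pseudoRoot}\u{local^{(k)}}\u{\mvar{L}}^{\pseudoRoot}$ and multiply by $1/P$ afterward; the parameters rescale accordingly and the conclusion is identical. You are also more explicit than the paper about the range-compatibility step $\im(\mvar{X}^{(k,t)}),\,\im((\mvar{X}^{(k,t)})^{\top})\subseteq\im(\u{local^{(k)}})$ needed to pass from $\u{local}^{\dagger}$ to $\u{\mvar{L}}^{\dagger}$, which the paper uses implicitly when it inserts $\u{local}^{1/2}\u{local}^{\pseudoRoot}$ around $\mvar{X}^{(k,t)}$.
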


\begin{proof}
  \begin{align}
    \prob{}{\exists (k, t)~s.t.~\norm{\sum_{(\hat{k}, \hat{t}) \leq (k, t)}
    \expec{< (\hat{k}, \hat{t}) }{\mxhat^{(\hat{k}, \hat{t})} \mxhat^{(\hat{k}, \hat{t})\top}
    + \mxhat^{(\hat{k}, \hat{t})\top} \mxhat^{(\hat{k}, \hat{t})}}}_2
    > \sigma^2}
\nonumber
\\    
\leq
    \prob{}{\exists (k, t)~s.t.~\norm{\sum_{(\hat{k}, \hat{t}) \leq (k, t)}
    \expec{< (\hat{k}, \hat{t}) }{
    \mxhat^{(\hat{k}, \hat{t})\top} \mxhat^{(\hat{k}, \hat{t})}}}_2
    > \sigma^2 /2 }
\label{eq:FirstVarMartingale}
\\
+
    \prob{}{\exists (k, t)~s.t.~\norm{\sum_{(\hat{k}, \hat{t}) \leq (k, t)}
    \expec{< (\hat{k}, \hat{t}) }{\mxhat^{(\hat{k}, \hat{t})} \mxhat^{(\hat{k}, \hat{t})\top}
    }}_2
    > \sigma^2 /2 }
\label{eq:SecondVarMartingale}
  \end{align}
We will bound each of these two terms by $O(\wprob)$, giving the desired
result. Since the proofs for bounding each of the terms, \eqref{eq:FirstVarMartingale} and \eqref{eq:SecondVarMartingale} are
essentially identical, we only bound \eqref{eq:FirstVarMartingale}.

When  $\safe^{(k, t)}$ does not hold $\mxhat^{(\hat{t}, \hat{k})} = \matzero$. However, when  $\safe^{(k, t)}$ holds, then by \autoref{lem:SafeImplyLandUApprox}, 
\[
\u{local^{(k)}}
\preceq \u{\mvar{S}^{(k - 1)}}
\preceq O\left( 1 \right) \u{\mvar{L}},
\]
in which case 
\[
\mvar{X}^{(k, t) \trp}
\u{\mvar{L}}^{\dag}
\mvar{X}^{(k, t) \trp}
\preceq
O(1)
\mvar{X}^{(k, t) \trp}
\u{local^{(k)}}^{\dag}
\mvar{X}^{(k, t) \trp} ~.
\]
This implies that
\begin{align*}
\mxhat^{(k, t) \trp } \mxhat^{(k, t)}
&=
\u{\mvar{L}}^{\pseudoRoot}
\mvar{\overline{X}}^{(k, t) \trp}
\u{\mvar{L}}^{\dag}
\mvar{\overline{X}}^{(k, t)}
\u{\mvar{L}}^{\pseudoRoot}
\\
&\preceq
O(1)
\u{\mvar{L}}^{\pseudoRoot}
\u{local}^{1/2}
\u{local}^{\pseudoRoot}
\mvar{X}^{(k, t) \trp}
\u{local^{(k)}}^{\dag}
\mvar{X}^{(k, t)}
\u{local}^{\pseudoRoot}
\u{local}^{1/2}
\u{\mvar{L}}^{\pseudoRoot}
\\
&\preceq
\frac{O(1)}{P^2}
\u{\mvar{L}}^{\pseudoRoot}
\u{local}
\u{\mvar{L}}^{\pseudoRoot}
\end{align*}
where in the last line we used that $\normInline{\u{local}^{\pseudoRoot} \mvar{X}^{(k, t)} \u{local}^\pseudoRoot}_2 = O(1/P)$ by \autoref{lem:SElim}.
This bound holds unconditionally, so clearly we also have
\[
\expec{< (k, t) }{ \mxhat^{(k, t) \trp } \mxhat^{(k, t)}
} \preceq
\frac{O(1)}{P^2}
\u{\mvar{L}}^{\pseudoRoot}
\u{local}
\u{\mvar{L}}^{\pseudoRoot}
.
\]
%
%
%
%
Summing for a fixed $k$ over the $P$ samples made in one round of
elimination, we get
\[
\sum_{t} \expec{< (k, t) }{ \mxhat^{(k, t) \trp } \mxhat^{(k, t)}}
\preceq
\frac{O(1)}{P}
\u{\mvar{L}}^{\pseudoRoot}
\u{local}
\u{\mvar{L}}^{\pseudoRoot}
.
\]
We define $\WW_{0} \defeq \matzero$ and
$\WW_k \defeq \sum_{\hat{k} \leq k} \sum_{t} \expec{< (\hat{k}, t) }{
  \mxhat^{(\hat{k}, t) \trp } \mxhat^{(\hat{k}, t)}}$.
This gives $ \matzero \preceq \WW_k - \WW_{k-1} $.
Starting from \autoref{lem:StepErrorUpper}, direct algebraic manipulations then imply that if
$\safe^{(k,t)}$ holds, then
\[
\norm{\mxhat^{(k,t) \top} \mxhat^{(k,t)}}_2
\leq O\left( 1 / P^2\right),
\]
and when $\safe^{(k,t)}$ does not hold, we get this bound trivially
from $\mxhat^{(k,t)} = \matzero.$
Triangle inequality then gives
\begin{equation}
\norm{\WW_k - \WW_{k-1}} \leq O(P/P^2) = O(1/P)
\label{eq:dirvarnorm}
\end{equation}
and 
if we let $\expec{<k}{\cdot}$ denote expectation of $\WW_k$
over the random choice of vertex to eliminate, conditional on all the
random choices of the algorithm until the $k$th elimination, then by
\autoref{eq:exptlocalglobal},

\[
\expec{<k}{\WW_k - \WW_{k-1}} \preceq \frac{O(1)}{P(n-k)}
\lapid
.
\]
%
%
where $lapid$ is the projection matrix orthogonal to the all ones vector. Summing over all $\hat{k} \leq k$, and using $k \leq n / 2$
gives
\begin{equation}
\sum_{\hat{k} \leq k} 
\expec{<\hat{k} }{\WW_k - \WW_{k-1}} 
\preceq O\left( \frac{1}{P} \right)  \lapid.
\label{eq:dirvarexpec}
\end{equation}

We now construct a zero mean martingale which we will use to bound
the probability in term~\eqref{eq:FirstVarMartingale}, by an application of
\autoref{lem:martingale}.
%
Let
\[
\VV_{k} \defeq \WW_{k} - \WW_{k-1} - \expct_{<k}\left[ \WW_{k} -
  \WW_{k-1} \right] = \WW_{k} - \expct_{<k}\left[ \WW_{k}\right]
~.
\]
$\VV_{k}$ is zero-mean conditional on the random choices up to step $k$
and so $\RR_{k} = \sum_{j=1}^{k} \VV_{j}$ is a zero-mean martingale.
\begin{align*}
\RR_{k}
=
\sum_{j=1}^{k} \WW_{j} - \WW_{j-1} - \expct_{<j}\left[ \WW_{j} -
  \WW_{j-1} \right]
=
\WW_{k} - \sum_{j=1}^{k} \expct_{<j}\left[ \WW_{j} -\WW_{j-1} \right]
~.
\end{align*}
Let ${\mvar{H}_{k} = \sum_{j \leq k } \expct_{<j}
  \VV_{j}\VV_{j}^{\trp} +  \VV_{j}^{\trp}\VV_{j}= \sum_{j \leq k }
  \expct_{<j} 2\VV_{j}^2}$.
Note that unconditionally
\begin{align}
\mvar{H}_{k} 
&= 2\sum_{j \leq k } \expct_{<j} \VV_{j}^2
= 2\sum_{j \leq k } \expct_{<j} 
\left(
\WW_{j} - \WW_{j-1} - \expct_{<j}\left[ \WW_{j} -
  \WW_{j-1} \right] 
\right)^2
\preceq
2\sum_{j \leq k } \expct_{<j} 
\left(
\WW_{j} - \WW_{j-1}
\right)^2
\nonumber
\\
&\preceq
2\sum_{j \leq k } \expct_{<j} 
\left(
\WW_{j} - \WW_{j-1}
\right)
\norm{\WW_{j} - \WW_{j-1}}
\preceq
\frac{O(1)}{P^2} \lapid
\label{eq:dirvarvarnorm}
.
\end{align}

Note, in the line above, we are using that $\WW_{j} - \WW_{j-1} \succeq
\matzero$, i.e. this difference is PSD, and 
we used that for any symmetric matrix $\AA$,
$\expct (\AA - \expct \AA)^2 = \expct \AA^2 - (\expct \AA)^2$,
and when $\AA$ is PSD, we hence get
$\expct (\AA - \expct \AA)^2 \preceq \expct \AA^2$.


%

Let $\omega^2 = \frac{C}{P^2} $, for some absolute constant 
$C$ chosen such that 
${\Pr[\exists i :\lambda_{\max}( \mvar{H}_{i} )> \omega^2  ] = 0}$
by  \autoref{eq:dirvarvarnorm}.
Now, by \autoref{eq:dirvarexpec}
we get 
\begin{align*}
  \Pr[ \exists k :
\lambda_{\max}(\WW_{k}) > \sigma^2 ]
&=
  \Pr\left[ \exists k :
\lambda_{\max}\left(\RR_{k}+
\sum_{j=1}^{k} \expct_{<j}\left[ \WW_{j} -\WW_{j-1}
  \right]\right) 
> \sigma^2 \right]
\\
&\leq
  \Pr\left[ \exists k :
\lambda_{\max}\left(\RR_{k}\right) 
> \sigma^2
  - \frac{O(1)}{P}
\right]
\\
&
=
 \Pr\left[ \exists k : \lambda_{\max}(\RR_{k}) \geq \sigma^2 
- \frac{O(1)}{P}
\text{ and }
  \lambda_{\max}( \mvar{H}_{k} ) \leq \omega^2  \right]
.
\end{align*}
We now want to apply \autoref{lem:martingale}.
By \autoref{eq:dirvarnorm}
\begin{align*}
\norm{\VV_{k}}
&= \norm{\WW_{k} - \WW_{k-1} - \expct_{<k}\left[ \WW_{k} -
  \WW_{k-1} \right] } \\
&\leq \max\setof{ \norm{\WW_{k} - \WW_{k-1}} ,
  \norm{ \expct_{<k}\left[ \WW_{k} - \WW_{k-1} \right]}}
& \text{(since both terms are PSD)}
\\
&\leq \frac{O(1)}{P}
,
\end{align*}
which gives us a value for the norm control parameter $R$.
Thus by \autoref{lem:martingale}, 
and using $\sigma^2 = \frac{\Theta(\log(1/\wprob))}{P}$, 
$\log(1/\wprob) \geq \Omega(\log n)$,
and $\omega^2 =
\frac{O(1)}{P^2}$,
we get for an appropriate choice of constants that
\begin{align*}
\Pr\left[ \exists i : \lambda_{\max}(\RR_{i}) \geq \sigma^2 - \frac{O(1)}{P}
\text{ and }
  \lambda_{\max}( \mvar{H}_{i} ) \leq \omega^2  \right]
\leq
n\exp
\left( 
-\frac{ \left(\sigma^2 - \frac{O(1)}{P} \right)^2  }
{\sigma^2 +
  \frac{O(1)}{P^2}}
\right)
\leq
\wprob
~.
\end{align*}

This completes the bound on the probability
term~\eqref{eq:FirstVarMartingale}, and similarly, we can show the
term~\eqref{eq:SecondVarMartingale} is bounded by $\wprob$.
\end{proof}

We now how everything we need to prove \autoref{thm:InPhaseErrorAccumulation}.

\begin{proof}[Proof of \autoref{thm:InPhaseErrorAccumulation}]
The running time guarantees we need were established in \autoref{lem:singlePhaseRuntime}.
Based on \autoref{lem:SafeImplyLandUApprox}, we observed earlier that
${\Pr[\neg \safe^{(n+1,1)} ] \leq O(\wprob) }$ implies
Equation~\eqref{eq:singlePhaseErrProb}.
Our bounds on each
of the terms~\eqref{eq:SparseEulError}, \eqref{eq:SmallError} (see Equation~\eqref{eq:dirMartingaleErrProb})
and~\eqref{eq:SmallVariance} (see \autoref{lem:VarianceHalf}))
establish this, hence proving the theorem.

\end{proof}

\section{Bounding Error Accumulations}
\label{sec:ErrorAccumulation}

In this section we study of the overall accumulation of errors
resulting from the single phases, and prove \autoref{lem:CumulativeErrorNew}.
In particular, we're interesting in the operator $\mvar{F}$ obtained by summing
the undirectifications of the Eulerian Laplacians obtained at the end of each
phase.

\todo{would it be useful to restate the setup from Lemma~\ref{lem:CumulativeErrorNew} here again?}

\rp{If we do stick with that setup, there should not be any
$L$ in this portion: should have just $M$s.}

We first show the overall error accumulation
from Part~\ref{part:CumulativeError}.
For this we need to invoke Lemma~B.2 of~\cite{CohenKPPRSV16},
which we state below for completeness.

\begin{lemma}(Lemma~B.2 of~\cite{CohenKPPRSV16})
\label{lem:SingularValueOpt}
For all $\AA \in \Re^{n \times n}$ and symmetric PSD $\MM, \NN
\in \Re^{n \times n}$ such that
$\ker(\MM) \subseteq \ker(\AA^{\top})$
and
$\ker(\NN) \subseteq \ker(\AA)$
we have
\[
\norm{\MM^{-1/2} \AA \NN^{-1/2}}_2
=
\max_{\xx, \yy \neq \zero}
\frac{\xx^{\top} \AA \yy}{\sqrt{\left( \xx^{\top} \MM \xx \right)
	\left( \yy^{\top} \NN \yy \right)}}
=
2 \cdot
\max_{\xx, \yy \neq \zero}
\frac{\xx^{\top} \AA \yy}{\xx^{\top} \MM \xx + \yy^{\top} \NN \yy}.
\]
where in each of the maximization problems we define $0/0$ to be 0.   
\end{lemma}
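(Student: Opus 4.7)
The plan is to prove both equalities by reducing everything to the standard variational characterization of the operator norm, with the understanding that $\MM^{-1/2}$ means the pseudoinverse square root $\MM^{\dagger/2}$ (which is forced by the kernel hypotheses). The first equality comes from a change of variables $\uu = \MM^{1/2}\xx$, $\vv = \NN^{1/2}\yy$, and the second is AM-GM together with scaling invariance.

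For the first equality, I would start from $\|\MM^{\dagger/2}\AA\NN^{\dagger/2}\|_2 = \sup_{\uu,\vv\neq 0}\frac{\uu^\top \MM^{\dagger/2}\AA\NN^{\dagger/2}\vv}{\|\uu\|_2\|\vv\|_2}$. Because $\MM^{\dagger/2}$ and $\NN^{\dagger/2}$ annihilate $\ker(\MM)$ and $\ker(\NN)$ respectively, the supremum is unchanged if we restrict to $\uu \in \operatorname{im}(\MM)$ and $\vv \in \operatorname{im}(\NN)$. For such $\uu$, writing $\xx = \MM^{\dagger/2}\uu$ gives $\MM^{1/2}\xx = P_{\operatorname{im}(\MM)}\uu = \uu$ and $\xx^\top \MM\xx = \uu^\top\uu = \|\uu\|_2^2$; the numerator becomes $\xx^\top \AA \yy$. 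This yields the middle expression with the supremum restricted to $\xx \in \operatorname{im}(\MM)$, $\yy \in \operatorname{im}(\NN)$. To extend to all nonzero $\xx,\yy$, I would use the kernel containment: any $\xx = \xx_1 + \xx_2$ with $\xx_2 \in \ker(\MM) \subseteq \ker(\AA^\top)$ satisfies $\xx^\top\AA\yy = \xx_1^\top\AA\yy$ while $\xx^\top\MM\xx = \xx_1^\top\MM\xx_1$, so adding $\ker(\MM)$ components never helps, and the $0/0 := 0$ convention covers the degenerate case $\xx \in \ker(\MM)$. The same argument applied symmetrically in $\yy$ completes the first equality.

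For the second equality, the first ratio $\frac{\xx^\top\AA\yy}{\sqrt{(\xx^\top\MM\xx)(\yy^\top\NN\yy)}}$ is invariant under independent positive rescalings $\xx \mapsto t\xx$, $\yy \mapsto s\yy$. By AM-GM, $\sqrt{(\xx^\top\MM\xx)(\yy^\top\NN\yy)} \leq \tfrac{1}{2}(\xx^\top\MM\xx + \yy^\top\NN\yy)$, with equality iff the two quadratic forms are equal. Hence the ratio with factor $2$ is pointwise $\leq$ the geometric-mean ratio, while scaling one of $\xx,\yy$ to equalize the two denominators saturates AM-GM. This shows the two suprema coincide.

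The main obstacle I expect is bookkeeping around the pseudoinverse and kernel conditions, since $\MM$ and $\NN$ need not be invertible: one must check that the map $\uu \mapsto \MM^{\dagger/2}\uu$ is a bijection from $\operatorname{im}(\MM)$ to itself whose inverse is $\xx \mapsto \MM^{1/2}\xx$, and that the kernel hypotheses $\ker(\MM)\subseteq\ker(\AA^\top)$ and $\ker(\NN)\subseteq\ker(\AA)$ are exactly what is needed to prevent the unconstrained sup over $\xx,\yy \in \mathbb{R}^n$ from exceeding the restricted sup over images. The rest is routine algebra and a standard AM-GM/scaling argument.
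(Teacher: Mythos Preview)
The paper does not supply its own proof of this lemma: it is stated with a citation to Lemma~B.2 of~\cite{CohenKPPRSV16} and then used as a black box in the proof of \autoref{lem:CumulativeErrorNew}. So there is no paper-side argument to compare against.

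Your proposal is correct and is essentially the standard proof. The first equality is exactly the variational characterization of the largest singular value after the change of variables $\uu = \MM^{1/2}\xx$, $\vv = \NN^{1/2}\yy$, with the kernel hypotheses $\ker(\MM) \subseteq \ker(\AA^{\top})$ and $\ker(\NN) \subseteq \ker(\AA)$ ensuring that enlarging the domain from $\operatorname{im}(\MM)\times\operatorname{im}(\NN)$ to all of $\R^n\times\R^n$ does not change the supremum (and the $0/0 := 0$ convention handles the degenerate pairs). The second equality is precisely the AM--GM argument you describe: the additive-denominator ratio is pointwise dominated by the geometric-mean ratio when the numerator is nonnegative, and the homogeneity of the geometric-mean ratio lets you rescale to achieve equality $\xx^{\top}\MM\xx = \yy^{\top}\NN\yy$, so the two suprema agree. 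The only caveats you already flagged (that $\MM^{-1/2}$ must be read as $\MM^{\dagger/2}$, and that the bijection $\operatorname{im}(\MM)\to\operatorname{im}(\MM)$ via $\MM^{\dagger/2}$ is needed) are indeed the only subtleties, and you handle them correctly.
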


\begin{proof}[Proof of \autoref{lem:CumulativeErrorNew}
Part~\ref{part:CumulativeError}]

Recall that the assumption fo error per phase gives:
\[
\norm{\u{\mvar{S}^{\left(i_p\right)}}^{\dag 1/2}
	\left(\mvar{M}^{\left(i_p\right)} - \mvar{M}^{\left(i_{p + 1}\right)} \right)
\u{\mvar{S}^{\left(i_p\right)}}^{\dag 1/2}}_2
\leq \theta_p \epsilon,
\]
for any $0 \leq p < p_{\max}.$.

By \autoref{lem:SingularValueOpt}, this implies for every $p$
	$$2 \xx \left(\mvar{M}^{\left(i_p\right)} - \mvar{M}^{\left(i_{p + 1}\right)} \right) \yy 
	\leq \theta_p \epsilon  \pr{\xx^{\top} \u{\mvar{S}^{\left(i_p\right)}} \xx + \yy^{\top} \u{\mvar{S}^{\left(i_p\right)}} \yy}.$$
	  So summing over these gives
	  $$2 \xx \left(\mvar{M}^{\left(0\right)} - \mvar{M}^{\left(i\right)} \right) \yy 
	\leq  \epsilon \pr{\xx^{\top} \mvar{F} \xx + \yy^{\top} \mvar{F}\yy}.$$
	Again by \autoref{lem:SingularValueOpt}, this gives
	\[
	\norm{\mvar{F}^{\dag 1/2}
		\left( \mvar{M}^{\left(i\right)} - \mvar{M}^{\left(0\right)} \right)
		\mvar{F}^ {\dag 1/2}}
	\leq \epsilon.
	\]
	
\end{proof}

We now turn our attention to the additional condition on $\mvar{F}$
outlined in Part~\ref{part:FNormBounds}.
For this proof, we define a new matrix $\widehat{\mvar{F}}$,
which is made up of the various
Schur complements of the final matrix $\mvar{M}^{(n)}$ onto the
corresponding intermediate spaces.

Let $I_p = \left \{i_p \ldots n \right\}$ be the set of vertices remaining after first $p$ phases.
Let
\begin{equation}
\Fhat
\defeq \sum_{0 \leq p < p_{\max}}
\theta_p \cdot \u{\sc{\mvar{M}^{\left(n\right)}}{\left \{i_p \ldots n \right\} }}
\label{eq:Fhat}
\end{equation}
where $\sum_p \theta_p = 1$ and $\theta_p \geq 1/O(p_{\max}) = 1/O(\log n)$.

Rewriting $\Fhat$ as differences between consecutive steps
shows that it is in fact close to $\mvar{F}$.
\begin{lemma}
	\label{lem:FhatAndF}
The matrices $\mvar{F}$ and $\Fhat$ as defined
in Equations~\ref{eq:F} and~\ref{eq:Fhat} respectively satisfy:
	\[
	\Fhat \approx_{O(p_{\max})}  \mvar{F},
	\]
	where $p_{\max} = O(\log n)$ is the number of invocations of \textsc{SinglePhase} (\autoref{alg:singlePhase}) by $\eulLU$ (\autoref{alg:multiPhase}). 
\end{lemma}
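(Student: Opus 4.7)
The plan is to reduce $\Fhat - \mvar{F}$ to a sum of terms of the form $\u{\sc{\mvar{M}^{(n)}}{I_p}} - \u{\mvar{S}^{(i_p)}}$, identify each such term with $\u{\mvar{M}^{(n)} - \mvar{M}^{(i_p)}}$, and then invoke Part~\ref{part:CumulativeError} to control the resulting spectral error.

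The key structural identity I would establish first, by induction on $i$, is that $\sc{\mvar{M}^{(i)}}{I_i} = \mvar{S}^{(i)}$, where $I_i$ denotes the support of $\mvar{S}^{(i)}$. For $i = 0$ this is immediate since $\mvar{M}^{(0)} = \mvar{S}^{(0)}$ and no vertices have yet been eliminated. For the inductive step, I would use that the update $\mvar{M}^{(i+1)} - \mvar{M}^{(i)}$ is supported on $I_i \times I_i$, so the rows and columns of $\mvar{M}^{(i+1)}$ indexed by already-eliminated vertices coincide with those of $\mvar{M}^{(i)}$. Expanding the Schur complement of $\mvar{M}^{(i+1)}$ onto $I_{i+1}$ and substituting the inductive hypothesis, the off-diagonal cross-term $\mvar{M}^{(i)}_{\cdot,[1,i]}(\mvar{M}^{(i)}_{[1,i],[1,i]})^{-1}\mvar{M}^{(i)}_{[1,i],\cdot}$ cancels exactly against the corresponding contribution in the update, leaving $\mvar{S}^{(i+1)}$.

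Given this identity, for each phase index $p$, all later updates $\mvar{M}^{(j+1)} - \mvar{M}^{(j)}$ with $j \geq i_p$ are supported on subsets of $I_{i_p} \times I_{i_p}$, so the rows and columns of $\mvar{M}^{(n)}$ indexed by $[1, i_p]$ agree with those of $\mvar{M}^{(i_p)}$. Consequently, the Schur complements of $\mvar{M}^{(n)}$ and of $\mvar{M}^{(i_p)}$ onto $I_{i_p}$ differ precisely by the $I_{i_p} \times I_{i_p}$ block of $\mvar{M}^{(n)} - \mvar{M}^{(i_p)}$, which is the entire difference since that matrix is already supported on $I_{i_p} \times I_{i_p}$. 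Combining this with the identity $\sc{\mvar{M}^{(i_p)}}{I_{i_p}} = \mvar{S}^{(i_p)}$, I obtain $\sc{\mvar{M}^{(n)}}{I_{i_p}} - \mvar{S}^{(i_p)} = \mvar{M}^{(n)} - \mvar{M}^{(i_p)}$, and hence also $\u{\sc{\mvar{M}^{(n)}}{I_{i_p}}} - \u{\mvar{S}^{(i_p)}} = \u{\mvar{M}^{(n)} - \mvar{M}^{(i_p)}}$.

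To finish, I would apply Part~\ref{part:CumulativeError} at $i = n$ and $i = i_p$, and combine via the triangle inequality to get $\norm{\mvar{F}^{\pseudoRoot}(\mvar{M}^{(n)} - \mvar{M}^{(i_p)})\mvar{F}^{\pseudoRoot}}_2 \leq 2\epsilon$. This bound is invariant under transposition, so it passes to the symmetric part; since $\mvar{F}^{\pseudoRoot}\u{\mvar{M}^{(n)} - \mvar{M}^{(i_p)}}\mvar{F}^{\pseudoRoot}$ is symmetric, the operator-norm bound translates to the PSD sandwich $-2\epsilon \mvar{F} \preceq \u{\mvar{M}^{(n)} - \mvar{M}^{(i_p)}} \preceq 2\epsilon \mvar{F}$ after using that the relevant matrices have image contained in that of $\mvar{F}$. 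Averaging over $p$ with weights $\theta_p$ summing to $1$ then yields $-2\epsilon \mvar{F} \preceq \Fhat - \mvar{F} \preceq 2\epsilon \mvar{F}$, and since $\epsilon < 1/2$ this in fact gives the stronger conclusion $\Fhat \approx_{O(1)} \mvar{F}$, which implies the claimed $\Fhat \approx_{O(p_{\max})} \mvar{F}$. The main obstacle is the careful bookkeeping in the inductive identity $\sc{\mvar{M}^{(i)}}{I_i} = \mvar{S}^{(i)}$, since it requires tracking how the Schur complement formula interacts with the recursive update across the eliminated and active index blocks; once that identity is in place, the rest is a routine averaging of Part~\ref{part:CumulativeError}.
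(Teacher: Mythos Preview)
Your proof is correct and rests on the same structural identity the paper uses, namely that
\[
\u{\sc{\mvar{M}^{(n)}}{I_{p}}} - \u{\mvar{S}^{(i_p)}} \;=\; \u{\mvar{M}^{(n)} - \mvar{M}^{(i_p)}},
\]
which the paper writes in the telescoped form
$\u{\sc{\mvar{M}^{(n)}}{I_p}} = \u{\mvar{S}^{(i_p)}} + \sum_{p' \geq p}\big(\u{\mvar{M}^{(i_{p'})}} - \u{\mvar{M}^{(i_{p'+1})}}\big)$.
Both arguments then finish by invoking Part~\ref{part:CumulativeError} of \autoref{lem:CumulativeErrorNew}.

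The one genuine difference is in how the final bound is assembled. The paper substitutes the telescoped sum into $\Fhat$, interchanges the order of summation to get coefficients $\sum_{p \leq p'} \theta_p \leq 1$ on each of $p_{\max}$ consecutive-phase error terms, and then applies the triangle inequality, arriving at $\norm{\mvar{F}^{\pseudoRoot}(\Fhat - \mvar{F})\mvar{F}^{\pseudoRoot}}_2 \leq \epsilon\, p_{\max}$. You instead collapse the telescope first, writing $\Fhat - \mvar{F} = \sum_p \theta_p\,\u{\mvar{M}^{(n)} - \mvar{M}^{(i_p)}}$ and bounding each summand by $2\epsilon$ via Part~\ref{part:CumulativeError}; since $\sum_p \theta_p = 1$, the convex combination stays at $2\epsilon$. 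This is strictly sharper: you obtain $\Fhat \approx_{O(1)} \mvar{F}$, whereas the paper only records $\Fhat \approx_{O(p_{\max})} \mvar{F}$. The tighter bound is not needed downstream (only the $O(p_{\max})$ relation is used in \autoref{lem:CumulativeErrorNew} Part~\ref{part:FNormBounds} and \autoref{lem:relate-u-and-f}), but your route is both simpler and loses less.
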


\begin{proof}
		The key observation is that because the Schur complement
	steps after step $i_p$ are completely contained among
	the vertices $\{i_p, \ldots n\}$,
	the difference between $\mvar{F}$ and $\widehat{\mvar{F}}$
	can be bounded using the discrepancies at the steps.
	
	Formally, the choice of pivots means we have
	\[
	\u{\sc{\mvar{M}^{\left(n\right)}}{\left \{i_p \ldots n \right\} }}
	= \u{\mvar{S}^{(i_p)}}
	+ \sum_{p' \geq p} \left( \u{\mvar{M}^{\left(i_{p'}\right)}} 
	- \u{\mvar{M}^{\left(i_{p' + 1}\right)}}\right),
	\]
	which when substituted into the formula for $\widehat{\mvar{F}}$ gives:
	\[
	\widehat{\mvar{F}}
	= \sum_{0 \leq p < p_{\max}}
	\theta_p \u{\mvar{S}^{(i_p)}}
	+ \sum_{0 \leq p < p_{\max} }\sum_{p' \geq p} \theta_p \left( \u{\mvar{M}^{\left(i_{p'}\right)}} 
	- \u{\mvar{M}^{\left(i_{p' + 1}\right)}}\right).
	\]
	Collecting the terms related to $\mvar{F}$, and reversing
	the summation on the $p'$s turns this into:
	\[
	\widehat{\mvar{F}}
	= \mvar{F} 
	+ \sum_{p' \geq p} \left( \sum_{p \leq p'} \theta_{p}\right)
	\left( \u{\mvar{M}^{\left(i_{p'}\right)}} 
	- \u{\mvar{M}^{\left(n\right)}}\right).
	\]
	By triangle inequality we then get:
	\[
	\norm{\mvar{F}^{\dag 1/2} \left( \widehat{\mvar{F}} - \mvar{F} \right) \mvar{F}^{\dag 1/2}}_2
	\leq \sum_{p'} \left( \sum_{p \leq p'} \theta_{p}\right)
	\norm{\mvar{F}^{\dag 1/2} \left(  \u{\mvar{M}^{\left(i_{p'}\right)}} 
	- \u{\mvar{M}^{\left(n\right)}} \right) \mvar{F}^{\dag 1/2} }_2
	\]
	\todolow{can probably prove a tighter bound by invoking Lemma~something?}
	Since $\sum \theta_p =1$, the above is at most $\epsilon p_{\max}$
	provided the maximum error over any consecutive sequences of phases is $\epsilon$, which happens $1 - O(\delta)$ by Part~\ref{part:CumulativeError} of \autoref{lem:CumulativeErrorNew} shown above. Since the $\epsilon$ argument to $\eulLU$ (\autoref{alg:multiPhase}) is required to be $\leq 1$, the desired result follows.
\end{proof}


\begin{lemma}
\label{lem:FhatBound}
Let $\mvar{M} = \mvar{M}^{(0)}$ be a (possibly asymmetric) matrix,
$\mvar{M}^{(1)}, \mvar{M}^{(2)}, \ldots \mvar{M}^{(n)}$
be the intermediate elimination states with errors
defined in \autoref{lem:CumulativeErrorNew}
using $I_0 = V, \ldots ,I_{p_{\max} - 1}$ that are
nested  subsets of indices, i.e.,
	$I_0 \subseteq I_{p + 1} \subseteq \ldots \subseteq I_{p_{\max}-1}$
	and $c_0,c_1, \ldots ,c_p$ be constants.
Then the matrix $\widehat{\mvar{F}}$ as defined above in
\autoref{eq:Fhat} and
$\widetilde{\mvar{M}} \defeq \mvar{M}^{(n)}$
satisfy
	\[
	\widehat{\mvar{F}}
	\preceq  
	\widetilde{\mvar{M}}^{\top}
	\widehat{\mvar{F}}^{\dag}
	\widetilde{\mvar{M}},
	\]
\todo{are these definitions needed? can we point to a global
set of dfns above?}
 \sidford{How are $L^{(n)}$ and $L$ connected when arbitrary?}
\end{lemma}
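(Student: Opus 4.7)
The plan is to reduce the matrix inequality $\widehat{\mvar{F}} \preceq \widetilde{\mvar{M}}^{\top} \widehat{\mvar{F}}^{\dag} \widetilde{\mvar{M}}$ to a per-phase estimate $U_{\mvar{N}_p} \preceq \widetilde{\mvar{M}}^{\top} \widehat{\mvar{F}}^{\dag} \widetilde{\mvar{M}}$, where $\mvar{N}_p \defeq \sc{\widetilde{\mvar{M}}}{I_p}$, and then combine by weighting with $\theta_p$ and using the normalization $\sum_p \theta_p = 1$ asserted just after \autoref{eq:Fhat}. This cleanly isolates the difficulty in a single Schur complement. Note $\mvar{N}_0 = \widetilde{\mvar{M}}$ since $I_0 = V$, so the $p = 0$ term already resembles the target inequality modulo an extra factor of $\widetilde{\mvar{M}}$, hinting that the right tool is a self-domination argument.

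For the per-phase estimate I would combine three structural ingredients. First, the self-domination identity $\mvar{N}_p^{\top} U_{\mvar{N}_p}^{\dag} \mvar{N}_p \succeq U_{\mvar{N}_p}$: splitting $\mvar{N}_p = U_{\mvar{N}_p} + S_{\mvar{N}_p}$ into symmetric and antisymmetric parts and expanding yields $\mvar{N}_p^{\top} U_{\mvar{N}_p}^{\dag} \mvar{N}_p = U_{\mvar{N}_p} + S_{\mvar{N}_p}^{\top} U_{\mvar{N}_p}^{\dag} S_{\mvar{N}_p}$ on the range, and the second term is PSD. Second, the $\widetilde{\mvar{M}}$-harmonic extension $\Pi_p \colon \mathbb{R}^{I_p} \to \mathbb{R}^{n}$, characterized by $\widetilde{\mvar{M}} \Pi_p = \iota_{I_p} \mvar{N}_p$ (with $\iota_{I_p}$ the inclusion); this operator exists because the pivot order used in the algorithm guarantees $\widetilde{\mvar{M}}_{F_p F_p}$ is invertible, where $F_p \defeq V \setminus I_p$. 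It lets us transport $\mvar{N}_p$-quadratic forms into $\widetilde{\mvar{M}}$-quadratic forms. Third, the variational characterization of the Schur complement of the PSD matrix $\widehat{\mvar{F}}$: because $\theta_p U_{\mvar{N}_p}$ appears as a summand of $\widehat{\mvar{F}}$ with support in $I_p \times I_p$, we obtain $\sc{\widehat{\mvar{F}}}{I_p} \succeq \theta_p U_{\mvar{N}_p}$, which in turn controls the $(I_p, I_p)$-block of $\widehat{\mvar{F}}^{\dag}$ via the block-inverse identity $[\widehat{\mvar{F}}^{\dag}]_{I_p I_p} = (\sc{\widehat{\mvar{F}}}{I_p})^{\dag}$.

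Chaining these tools: for $\xx \in \mathbb{R}^n$, decompose $\xx = \Pi_p \xx_{I_p} + \www$ with $\www$ supported on $F_p$. Since $U_{\mvar{N}_p}$ is supported in $I_p \times I_p$, $\xx^{\top} U_{\mvar{N}_p} \xx = \xx_{I_p}^{\top} U_{\mvar{N}_p} \xx_{I_p}$, and on the harmonic piece we have
\[
(\Pi_p \xx_{I_p})^{\top} \widetilde{\mvar{M}}^{\top} \widehat{\mvar{F}}^{\dag} \widetilde{\mvar{M}} (\Pi_p \xx_{I_p})
= (\mvar{N}_p \xx_{I_p})^{\top} [\widehat{\mvar{F}}^{\dag}]_{I_p I_p} (\mvar{N}_p \xx_{I_p}),
\]
which one then lower-bounds by $\xx_{I_p}^{\top} U_{\mvar{N}_p} \xx_{I_p}$ using the three ingredients above (converting the block of $\widehat{\mvar{F}}^{\dag}$ to $U_{\mvar{N}_p}^{\dag}$ via the Schur-complement bound, and then applying self-domination). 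The remainder $\www$ is then absorbed by an orthogonality argument in the $\widehat{\mvar{F}}^{\dag}$-pairing.

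The main obstacle I foresee is twofold. The first, routine but delicate, is pseudoinverse bookkeeping: chaining the inequality $\widehat{\mvar{F}} \succeq \theta_p U_{\mvar{N}_p}$ into a usable bound on $\widehat{\mvar{F}}^{\dag}$ requires compatible kernels, which is secured by the kernel hypothesis on the $\mvar{S}^{(i)}$'s in \autoref{lem:CumulativeErrorNew} (this propagates cleanly through Schur complements to the $\mvar{N}_p$'s). The second, and subtler, is the pseudoinverse comparison: the natural direction gives $[\widehat{\mvar{F}}^{\dag}]_{I_p I_p} \preceq \theta_p^{-1} U_{\mvar{N}_p}^{\dag}$, whereas the argument appears to need control in the opposite direction. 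Resolving this will likely require combining the variational Schur formula with a Cauchy--Schwarz argument against $\widetilde{\mvar{M}}$, or alternatively extending the per-$p$ bound from $\Pi_p(\mathbb{R}^{I_p})$ to all of $\mathbb{R}^n$ by carefully analyzing the $F_p$-contribution so that the cross-terms in the expansion of $\widetilde{\mvar{M}} \xx$ preserve the lower bound.
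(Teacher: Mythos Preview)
Your per-phase reduction to $\u{\mvar{N}_p} \preceq \widetilde{\mvar{M}}^{\top}\Fhat^{\dag}\widetilde{\mvar{M}}$ is right, and the self-domination ingredient $\mvar{N}_p^{\top}\u{\mvar{N}_p}^{\dag}\mvar{N}_p \succeq \u{\mvar{N}_p}$ is exactly \autoref{lem:complicated}. But the obstacle you flag at the end is genuine and you do not resolve it: from $\sc{\Fhat}{I_p} \succeq \theta_p \u{\mvar{N}_p}$ you only get $[\Fhat^{\dag}]_{I_pI_p} \preceq \theta_p^{-1}\u{\mvar{N}_p}^{\dag}$, which is the wrong direction for lower-bounding $(\mvar{N}_p\xx_{I_p})^{\top}[\Fhat^{\dag}]_{I_pI_p}(\mvar{N}_p\xx_{I_p})$. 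The underlying problem is that $\Fhat^{\dag}$ is not linear in the summands $\theta_{p'}\u{\mvar{N}_{p'}}$, so you cannot isolate a single phase's contribution on the right-hand side; the Cauchy--Schwarz suggestion does not overcome this. (The harmonic-extension decomposition $\xx = \Pi_p\xx_{I_p} + \ww$ has a second unresolved point as well: $\widetilde{\mvar{M}}\ww$ is not supported on $F_p$, so the cross-terms with the harmonic piece in the $\Fhat^{\dag}$-pairing do not vanish.)

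The missing idea, which the paper supplies, is to invert the inequality \emph{before} exploiting linearity. Since $\u{\mvar{N}_p}$ is supported on $I_p$ and $\sc{\mvar{P}}{I_p} \preceq \mvar{P}$ for PSD $\mvar{P}$ (\autoref{lem:SchurLess}), it suffices to show the stronger bound $\u{\mvar{N}_p} \preceq \sc{\widetilde{\mvar{M}}^{\top}\Fhat^{\dag}\widetilde{\mvar{M}}}{I_p}$. Rewriting this Schur complement via \autoref{lem:SchurPinv} as the pseudoinverse of a projected coordinate restriction and then inverting both sides turns the target into
\[
\bigl(\widetilde{\mvar{M}}^{\dag}\,\Fhat\,\widetilde{\mvar{M}}^{\dag\top}\bigr)[I_p] \;\preceq\; \u{\mvar{N}_p}^{\dag}.
\]
Now the left side \emph{is} linear in $\Fhat = \sum_{p'}\theta_{p'} \u{\mvar{N}_{p'}}$, so it remains to show $\bigl(\widetilde{\mvar{M}}^{\dag}\,\u{\mvar{N}_{p'}}\,\widetilde{\mvar{M}}^{\dag\top}\bigr)[I_p] \preceq \u{\mvar{N}_p}^{\dag}$ for each $p'$. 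This is \autoref{lem:nested_schur_ineq}, whose proof packages your self-domination idea into the operator $\c{\mvar{Z}} = \mvar{Z}^{\top}\u{\mvar{Z}}^{\dag}\mvar{Z}$ and uses the commutation $\c{\sc{\mvar{M}}{I}} = \sc{\c{\mvar{M}}}{I}$ (\autoref{lem:complicated_matrix_SC}) to handle both nesting orders $I_p \subseteq I_{p'}$ and $I_{p'} \subseteq I_p$ at once.
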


\newcommand{\Smat}{\SS} 

Throughout this section, we will frequently use the following
definition, which allows us to extend the view of Schur complements as
inverses of coordinate restrictions of the inverse of a matrix
($\sc{\MM}{C} = (\MM^{-1})_{CC}^{-1}$) to the setting pseudo-inverses.
Such characterization requires this definition of
restricting pseudo-inverses to a subset of coordinates,
\begin{definition}[Projected coordinate restriction]
\label{def:projcoordres}
  Consider any $\MM \in \rea^{[n] \times [n]}$,
  and $C,F$ a partition of $[n]$,  where  $\MM_{FF}$ is invertible.
  Let  $\SS = \sc{\MM}{C}$.
  We define the projected coordinate restriction of $\MM^{\dag}$ to $C$ as
\[
\MM^{\dagger}[C] \defeq  \PP_{\SS} (\MM^{\dagger})_{CC} \PP_{\SS^{\top}}
.\]
\end{definition}

Now, let $\mvar{Z}$ be any matrix with $\mvar{Z} + \mvar{Z}^T \succeq 0$ and define \sidford{use the U notation for simplicity?}
\[
\c{\mvar{Z}} \defeq \mvar{Z}^{\top}
\left( \frac{ \mvar{Z} + \mvar{Z}^{\top}}{2} \right)^{\dag} \mvar{Z}
= \left(\frac{\mvar{Z}^{\dag} + \mvar{Z}^{\dag \top}}{2} \right)^{\dag}.
\]


\begin{lemma}\label{lem:complicated_matrix_SC}
If $\MM\in \R^{n\times n}$
such that $\ker(\MM) = \ker(\MM^{\top})$,
 and $I$ is a subset of indices, $\bar{I} = [n] \setminus I$,
such that $\MM_{\bar{I} \bar{I}}$, the principal minor of $\MM$
on the indices outside of $I$, is invertible, then
\begin{enumerate}[label=(\roman*)]
	\item  $\c{\sc{\MM}{I} } =\sc{\c{\MM}}{I}$,
\label{part:SchurComplementRepresentation}
	\item  $\u{\sc{\MM}{I}} \preceq \c{\sc{\MM}{I}}
	\preceq \c{\MM}$.
\label{part:ComplicatedIneq}
\end{enumerate} 
\end{lemma}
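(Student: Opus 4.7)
The plan is to exploit the pseudo-inverse characterization $\c{\mvar{Z}} = \bigl(\tfrac{\mvar{Z}^\dag + \mvar{Z}^{\dag\top}}{2}\bigr)^\dag$ given in the definition, together with the principle (formalized by \autoref{def:projcoordres}) that the Schur complement onto $I$ corresponds, up to the projections $\PP_\SS, \PP_{\SS^\top}$, to coordinate restriction of the pseudo-inverse. Both $\c{\cdot}$ and $\sc{\cdot}{I}$ can thus be realized as operations on the pseudo-inverse level: $\c{\cdot}$ as averaging with the transpose, and $\sc{\cdot}{I}$ as coordinate restriction. Since these two pseudo-inverse-level operations commute trivially (transposition and averaging act entry-wise, hence commute with the selector that zeros out rows and columns outside $I$), part (i) should follow by chasing the definitions.

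More precisely, for part (i) I will show both $\c{\sc{\MM}{I}}$ and $\sc{\c{\MM}}{I}$ have the same pseudo-inverse by the chain
\[
\c{\sc{\MM}{I}}^{\dag} = \u{(\sc{\MM}{I})^{\dag}} = \u{\MM^{\dag}[I]} = (\u{\MM^{\dag}})[I] = \c{\MM}^{\dag}[I] = \sc{\c{\MM}}{I}^{\dag},
\]
where the middle step uses that $\u{\cdot} = \tfrac12(\cdot) + \tfrac12(\cdot)^{\top}$ manifestly commutes with the projected coordinate restriction $[I]$, and the first and last steps invoke the pseudo-inverse/Schur-complement correspondence from \autoref{def:projcoordres}. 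Taking pseudo-inverses of both sides (justified by the hypothesis $\ker(\MM)=\ker(\MM^\top)$ and invertibility of $\MM_{\bar I \bar I}$, which ensure the images on both sides agree) yields the claim.

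For part (ii) I split into two inequalities. The bound $\u{\sc{\MM}{I}} \preceq \c{\sc{\MM}{I}}$ is an instance of the general fact that $\u{\mvar{Z}} \preceq \c{\mvar{Z}}$ whenever $\u{\mvar{Z}} \succeq 0$; this drops out of expanding the PSD inequality $(\mvar{Z} - \u{\mvar{Z}})^{\top}\u{\mvar{Z}}^{\dag}(\mvar{Z} - \u{\mvar{Z}}) \succeq 0$ and using $\mvar{Z} + \mvar{Z}^{\top} = 2\u{\mvar{Z}}$ to obtain $\mvar{Z}^{\top}\u{\mvar{Z}}^{\dag}\mvar{Z} \succeq \u{\mvar{Z}}$. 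I apply this with $\mvar{Z} = \sc{\MM}{I}$, after first observing that $\u{\sc{\MM}{I}} \succeq 0$ follows from $\c{\sc{\MM}{I}}$ being symmetric PSD by construction and the (about-to-be-proven) second inequality, or alternatively from a direct Schur complement calculation. For $\c{\sc{\MM}{I}} \preceq \c{\MM}$, I use part (i) to rewrite the left-hand side as $\sc{\c{\MM}}{I}$ and then invoke the standard monotonicity of Schur complements for symmetric PSD matrices: $\c{\MM}$ is symmetric PSD by definition, and the Schur complement of any symmetric PSD matrix onto a principal block, zero-padded into the ambient space, is $\preceq$ the original matrix.

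The main obstacle will be making the pseudo-inverse manipulations rigorous: verifying that coordinate restriction in the sense of \autoref{def:projcoordres} (with the projectors $\PP_\SS$ and $\PP_{\SS^\top}$ built in) genuinely commutes with the undirectification map, and that all pseudo-inverses appearing in the commuting chain for part (i) share compatible images and kernels. The hypothesis $\ker(\MM) = \ker(\MM^{\top})$ is exactly what is needed so that $\MM^{\dag}$ has matching left and right null spaces, which in turn means $\u{\MM^{\dag}}$ inherits the same kernel and the projection matrices in \autoref{def:projcoordres} agree across the identity. These verifications should reduce to routine applications of the matrix facts collected in \autoref{sec:mat_facts}.
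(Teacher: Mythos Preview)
Your proposal is correct and follows essentially the same route as the paper. Both arguments push the computation to the pseudo-inverse level via \autoref{lem:SchurPinv}, use that symmetrization commutes with the projected coordinate restriction $[I]$ once $\ker(\MM)=\ker(\MM^\top)$ forces $\PP_\SS=\PP_{\SS^\top}$ (and likewise for $\c{\MM}$), and then handle part (ii) by invoking \autoref{lem:complicated} for the first inequality and \autoref{lem:SchurLess} together with part (i) for the second. The one point the paper makes explicit that you leave implicit is checking that $(\c{\MM})_{\bar I\bar I}$ is invertible so that $\sc{\c{\MM}}{I}$ is well-defined; this follows from the kernel agreement, as the paper notes.
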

\begin{proof}
To prove $(i)$, we invoke the characterization of Schur
complements as minors of inverses
(which we formalize in \autoref{sec:pinv-facts},
specifically \autoref{lem:SchurPinv})
to write the Schur complement as $\sc{\MM}{I}=\left(\MM^\dag[I]\right)^\dag$.
It follows that
\begin{align*}
\c{\sc{\MM}{I} } 
&=\left(\frac{\sc{\MM}{I} ^{\dag} + \sc{\MM}{I}^{\dag \top}}{2} \right)^{\dag}
=\left(\frac{{\MM}^{\dag}[I] + {\MM}^{\dag \top}[I]}{2} \right)^{\dag}
=\left(\c{\MM}^\dag[I]\right)^\dag =\sc{\c{\MM}}{I},
\end{align*}
where to see ${\MM}^{\dag}[I] + {\MM}^{\dag \top}[I] =
  \c{\MM}^\dag[I]$, we need the fact that left and right kernels of
  the involved matrices agree, from which it also follows that
  $(\c{\MM})_{\bar{I} \bar{I}}$ is invertible (otherwise some
  nonzero vector in
  $\ker((\c{\MM})_{\bar{I} \bar{I}})$ would also lie in
  $\ker(\MM_{\bar{I} \bar{I}})$, and the latter matrix is invertible).
This ensures  $\c{\MM}^\dag[I]$ is well-defined.

The first inequality in $(ii)$ is an immediate consequence of \autoref{lem:complicated}.
For the second inequality in $(ii)$, we note that the monotonicity property of the Schur complement stated in \autoref{lem:SchurLess} implies that $\sc{\c{\MM}}{I}\preceq\c{\MM}$.  
Combining this with $(i)$ yields the asserted bound.
\end{proof}

\begin{lemma}\label{lem:nested_schur_ineq} 
Let $\MM\in\R^{n\times n}$ be a matrix 
such that $\ker(\MM) = \ker(\MM^{\top})$,
with subset of indices
$I\subseteq J \subseteq [n]$ such that
the principal minors of $\MM$ on $[n] \setminus I$
and $[n] \setminus J$ are both invertible.
Then
\begin{equation}\label{eq:nested_schur_ineq1}
\left(\MM^\dag \u{\sc{\MM}{I}} \MM^{\dag \top}\right)[J]
\preceq 
\u{\sc{\MM}{J}}^\dag,
\end{equation}
and
\begin{equation}\label{eq:nested_schur_ineq2}
\left(\MM^\dag \u{\sc{\MM}{J}} \MM^{\dag \top}\right)[I]
\preceq 
\u{\sc{\MM}{I}}^\dag.  
\end{equation}
\end{lemma}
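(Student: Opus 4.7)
My plan is to prove both inequalities by reducing them to pointwise quadratic-form statements and appealing to \autoref{lem:complicated_matrix_SC} together with monotonicity of the Schur complement on PSD matrices (\autoref{lem:SchurLess}). For readability I write $\SS_I \defeq \sc{\MM}{I}$ and $\SS_J \defeq \sc{\MM}{J}$, and argue in the invertible setting; the general pseudoinverse case follows by standard kernel/projection bookkeeping, since $\ker(\MM) = \ker(\MM^\top)$ propagates to $\SS_I$ and $\SS_J$. Two auxiliary facts I will use throughout are (a) the transitivity $\SS_I = \sc{\SS_J}{I}$ of nested Schur complements, and (b) the identity $\c{\mvar{Z}} = \c{\mvar{Z}^\top}$, which follows from a short direct computation by writing $\mvar{Z} = \u{\mvar{Z}} + \tfrac{1}{2}(\mvar{Z}-\mvar{Z}^\top)$ and noting that both sides equal $\u{\mvar{Z}} - \tfrac{1}{4}(\mvar{Z}-\mvar{Z}^\top)\u{\mvar{Z}}^{\dag}(\mvar{Z}-\mvar{Z}^\top)$.

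For \autoref{eq:nested_schur_ineq1}, I would test the claimed inequality against an arbitrary vector $\xx$ supported on $J$ and set $\yy \defeq \MM^{-\top}\xx$. The standard block-LU description of $\MM^{-\top}$, applied with the observation that the Schur complement of $\MM^\top$ onto $J$ is $\SS_J^\top$, gives $\yy_J = \SS_J^{-\top}\xx_J$. Since $\u{\SS_I}$ is supported on $I \subseteq J$, the LHS quadratic form equals $\yy_I^\top (\u{\SS_I})_{II}\,\yy_I$; setting $\ww \defeq \SS_J^{-\top}\xx_J \in \R^J$ this becomes $\ww^\top \u{\SS_I}\,\ww$ with $\u{\SS_I}$ viewed inside $\R^{J\times J}$. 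A parallel calculation rewrites the RHS as $\xx_J^\top (\u{\SS_J})^{-1}\xx_J = \ww^\top \SS_J(\u{\SS_J})^{-1}\SS_J^\top\,\ww = \ww^\top \c{\SS_J^\top}\,\ww = \ww^\top \c{\SS_J}\,\ww$, using (b) in the last step. It therefore suffices to prove $\u{\SS_I} \preceq \c{\SS_J}$ on $\R^J$, which I would establish by the chain
\[
\u{\SS_I} \preceq \c{\SS_I} = \sc{\c{\SS_J}}{I} \preceq \c{\SS_J},
\]
where the first step is \autoref{lem:complicated_matrix_SC}(ii), the equality combines (a) with \autoref{lem:complicated_matrix_SC}(i) applied to $\SS_J$ in place of $\MM$, and the last step is Schur monotonicity applied to the PSD matrix $\c{\SS_J}$.

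For \autoref{eq:nested_schur_ineq2}, I would test against $\xx$ supported on $I \subseteq J$ and reuse the same block identity for $\yy = \MM^{-\top}\xx$. The LHS evaluates to $\yy_J^\top \u{\SS_J}\,\yy_J = \xx_J^\top \SS_J^{-1}\u{\SS_J}\SS_J^{-\top}\xx_J = \xx_J^\top (\c{\SS_J})^{\dag}\xx_J$, again invoking (b). Because $\xx_J$ is in fact supported on $I$, this collapses to $\xx_I^\top\, \c{\SS_J}^{\dag}[I]\, \xx_I$; applying \autoref{lem:SchurPinv} to the PSD matrix $\c{\SS_J}$ gives $\c{\SS_J}^{\dag}[I] = (\sc{\c{\SS_J}}{I})^{\dag}$, and combining with (a) and \autoref{lem:complicated_matrix_SC}(i) applied to $\SS_J$ simplifies this to $\c{\SS_I}^{\dag}$. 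The target inequality thus reduces to $\c{\SS_I}^{\dag} \preceq \u{\SS_I}^{\dag}$, i.e.\ $\u{\SS_I} \preceq \c{\SS_I}$, which is once more \autoref{lem:complicated_matrix_SC}(ii).

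The main obstacle I anticipate is the careful bookkeeping needed to lift these invertible-case quadratic-form manipulations to the pseudoinverse setting with the projections built into $\NN^{\dag}[C]$. This amounts to checking that the invertibility of $\MM_{\bar I\bar I}$ and $\MM_{\bar J \bar J}$ propagates to all the interior minors we invoke (so that $\sc{\SS_J}{I}$, $\sc{\c{\SS_J}}{I}$, and $\sc{\c{\MM}}{J}$ are well-defined and agree with the direct pseudoinverse formulas), and that the condition $\ker(\cdot) = \ker(\cdot^\top)$ is preserved under each Schur complement and $\c{\cdot}$ operation so that \autoref{lem:complicated_matrix_SC} and \autoref{lem:SchurPinv} can be legitimately applied to $\SS_J$ and $\c{\SS_J}$ in the exact form used above.
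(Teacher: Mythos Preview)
Your proposal is correct and follows essentially the same route as the paper. Both proofs hinge on the identity $\c{\SS_J} = \c{\SS_J^\top}$, the chain $\u{\SS_I} \preceq \c{\SS_I} = \sc{\c{\SS_J}}{I} \preceq \c{\SS_J}$ from \autoref{lem:complicated_matrix_SC} and \autoref{lem:SchurLess}, and the replacement of $\MM^\dag$ by $\SS_J^\dag$ on the $J$-block; the paper just carries this out directly at the matrix level (using $\MM^\dag[J] = \SS_J^\dag$) rather than testing quadratic forms against vectors supported on $J$ or $I$.
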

\begin{proof}
By \autoref{lem:SchurPinvCompose},  $\sc{ \sc{\MM}{J} }{I} =  \sc{\MM}{I}$.
By \autoref{lem:complicated_matrix_SC}
Part~\ref{part:ComplicatedIneq} and the assumption that $I\subseteq J$, 
$\u{\sc{\MM}{I}}\preceq \c{\sc{\MM}{J}}$, so
\[
\left(\MM^\dag \u{\sc{\MM}{I}} \MM^{\dag \top}\right)[J]\preceq \left(\MM^\dag \c{\sc{\MM}{J}} \MM^{\dag \top}\right)[J]
\]
Since the kernels agree, and $\c{\sc{\MM}{J}}$ is supported on the submatrix with row and column indices in $J$, we can replace $\MM^\dag$ and $\MM^{\dag \top}$ with their respective projected coordinate restrictions $\MM^\dag[J]=\sc{\MM}{J}^\dag$ and $\MM^{\dag\top}[J]=\sc{\MM}{J}^{\dag\top}$.
We thus have
\begin{align*}
	\left(\MM^\dag \u{\sc{\MM}{I}} \MM^{\dag \top}\right)[J]
	&\preceq \sc{\MM}{J}^\dag \c{\sc{\MM}{J}} \sc{\MM}{J}^{\dag \top}\\
	&= \sc{\MM}{J}^\dag \left(\sc{\MM}{J} \u{\sc{\MM}{J}}^\dag \sc{\MM}{J}\right) \sc{\MM}{J}^{\dag \top}\\
	&= \u{\sc{\MM}{J}}^\dag,
\end{align*}
which proves \autoref{eq:nested_schur_ineq1}.  

To prove \autoref{eq:nested_schur_ineq2}, we can restrict $\MM^\dag$ and $\MM^{\dag\top}$ to indices in $J$ as above to rewrite its left-hand side as
\[
\left(\MM^\dag \u{\sc{\MM}{J}} \MM^{\dag \top}\right)[I]=\left(\sc{\MM}{J}^\dag \u{\sc{\MM}{J}} \sc{\MM}{J}^{\dag \top}\right)[I]=\c{\sc{\MM}{J}}^\dag [I]
=\c{\sc{\MM}{I}}^\dag.
\]
We have $\c{\sc{\MM}{I}}\succeq \u{\sc{\MM}{I}}$ by \autoref{lem:complicated_matrix_SC} , so $\c{\sc{\MM}{I}}^\dag \preceq \u{\sc{\MM}{I}}^\dag$, which yields \autoref{eq:nested_schur_ineq2}.
\end{proof}

\begin{proof}[Proof of \autoref{lem:FhatBound}]

Because $\widehat{\mvar{F}}$ is a convex combination
of the ${\sc{\widetilde{\Smat}}{I_p}}$ over the $p$s
(see \autoref{eq:Fhat}),
it suffices to show
\[
\u{\sc{\widetilde{\Smat}}{I_p}}
\preceq  \widetilde{\Smat}^{\top} \widehat{\mvar{F}}^{\dag} \widetilde{\Smat}
\]
for each $p$.
Also, because of the (operator) monotonicity of Schur 
complements given in \autoref{lem:SchurLess},
we can instead show the stronger condition:
	\[
	\u{\sc{\widetilde{\Smat}}{I_p}} \preceq 
	\sc{\widetilde{\Smat}^{\top} \widehat{ \mvar{F} }^{\dag} \widetilde{\Smat}}{I_p}
	= \left( \left( \widetilde{\Smat}^{\top} \widehat{ \mvar{F} }^{\dag} \widetilde{\Smat}\right)^{\dag} \left[I_p\right] \right)^{\dag}.
	\]
	Inverting both sides then reduces it to
	$
	 \u{\sc{\widetilde{\Smat}}{I_p}}^{\dag}
	\succeq
	\left( \widetilde{\Smat}^{\dag} \widehat{\mvar{F}} \widetilde{\Smat}^{\dag \top} \right) \left[I_p\right].
	$
	Using the definition of $\widehat{\mvar{F}}$ gives
	\[
	\left( \widetilde{\Smat}^{\dag} \widehat{\mvar{F}} \widetilde{\Smat}^{\dag \top} \right) \left[I_p\right]
	= \sum_{0 \leq p' < p_{\max}} \theta_{p'} 
	\left( \widetilde{\Smat}^{\dag} \u{\sc{\widetilde{\Smat}}{I_{p'}}} \widetilde{\Smat}^{\dag \top} \right) \left[I_p\right].
	\]
	We now consider the terms separately and note that, by \autoref{lem:nested_schur_ineq}, $\left( \widetilde{\Smat}^{\dag} \u{\sc{\widetilde{\Smat}}{I_{p'}}} \widetilde{\Smat}^{\dag \top} \right) \left[I_p\right] \preceq \u{\sc{\widetilde{\Smat}}{I_p}}^{\dag}$ for every $p'$.   
	Taking a convex combination of these inequalities thus completes the proof.
\end{proof}

We can now conclude things formally.
\begin{proof}[Proof of \autoref{lem:CumulativeErrorNew}
Part~\ref{part:FNormBounds}]
	\todolow{Tighten the parameters of the above pieces.}
	By \autoref{lem:FhatBound}, $\widehat{\mvar{F}} \preceq \left(\mvar{L}^{(n)}\right)^{\top} \widehat{\mvar{F}}^{\dag} \mvar{L}^{(n)}$. By \autoref{lem:FhatAndF}, we have with probability $1-O(\delta)$ that $\Fhat \approx_{O(\log(n))}  \mvar{F}$. Thus, we have
$	
1/O(\log^2 n) \cdot \mvar{F} \preceq \left(\mvar{L}^{(n)}\right)^{\top} \mvar{F}^{\dag} \mvar{L}^{(n)}
$.
\end{proof}


\bibliographystyle{plain}
\bibliography{ref-dirlap,ref-tree,ref-simple,ref-lshard,ref}
\appendix

\section{Finding an $\alpha$-RCDD Block}
\label{sec:rcdd-analysis}

Here provide a basic result showing we can find large $\alpha$-RCDD blocks of vertices efficiently. This results is a natural analog of a result in \cite{LeePS15}. The main result of this section is the following theorem analyzing \autoref{alg:rcdd}.

\begin{theorem}\label{thm:rcdd-correctness}
Given a directed graph $G$, the function
$\textsc{FindRCDDBlock}(G,\alpha)$ (\autoref{alg:rcdd}) outputs
an $\alpha$-RCDD set of vertices of size at least $\frac{n}{16
  (1+\alpha)}$ in time $O(m \log(1/\delta))$ with probability at least
$1-O(\delta)$.
\end{theorem}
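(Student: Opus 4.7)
The plan is to analyze a straightforward random filtering procedure of the Lee--Peng--Spielman type~\cite{LeePS15}. I would have $\textsc{FindRCDDBlock}(G,\alpha)$ sample each vertex into a candidate set $S$ independently with probability $p = \frac{1}{4(1+\alpha)}$, compute for each $v \in S$ the restricted off-diagonal row and column sums $r_v = \sum_{j \in S \setminus \{v\}}|\mlap_{vj}|$ and $c_v = \sum_{j \in S \setminus \{v\}}|\mlap_{jv}|$ in the induced subgraph, and keep
\[
F = \{\, v \in S : r_v \leq \tfrac{\mlap_{vv}}{1+\alpha}\text{ and }c_v \leq \tfrac{\mlap_{vv}}{1+\alpha}\,\}.
\]
For amplification I repeat this $T = \Theta(\log(1/\delta))$ times independently and return the largest $F$.

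The first step is a per-vertex conditional expectation bound. Conditioning on $v \in S$, each other vertex $j$ lies in $S$ independently with probability $p$, so $\expct[r_v \mid v \in S] \leq p \cdot \sum_{j\neq v}|\mlap_{vj}| \leq p \cdot \mlap_{vv}$, where the last inequality uses that the input is Eulerian (weighted in-degree equals weighted out-degree $\mlap_{vv}$). The identical bound holds for $\expct[c_v \mid v \in S]$. Both conditional means are at most $\frac{1}{4}$ of the RCDD threshold $\frac{\mlap_{vv}}{1+\alpha}$, so by Markov and a union bound $\Pr[v \notin F \mid v \in S] \leq 1/2$; hence by linearity $\expct[|F|] \geq \frac{n}{8(1+\alpha)}$.

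The second step converts this expectation into a constant-probability bound. Since $|F| \leq n$, applying Markov to $n - |F|$ gives $\Pr\!\left[|F| \geq \frac{n}{16(1+\alpha)}\right] \geq \frac{1}{16(1+\alpha)}$, which for the fixed $\alpha$ used in the algorithm (e.g. $\alpha = 0.1$ in \textsc{SinglePhase}) is a positive absolute constant. Independently repeating the trial $T = \Theta(\log(1/\delta))$ times and returning the largest $F$ therefore fails with probability at most $O(\delta)$.

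For runtime, one trial performs $O(n)$ coin flips and then a single pass over the $m$ edges, crediting each edge $(u,w)$ to $r_u$ and $c_w$ exactly when $\{u,w\} \subseteq S$; checking the two inequalities per candidate takes another $O(n)$. Summing over trials gives the claimed $O(m \log(1/\delta))$. The main subtlety here is the expectation-to-probability conversion: the events $\{v \in F\}$ are correlated through the shared sample $S$, so a one-shot Chernoff bound on $|F|$ is not immediate. Reverse Markov plus $\log(1/\delta)$ independent repetitions sidesteps this at the cost of only a constant factor in the lower bound on $|F|$, which exactly matches the $\frac{n}{16(1+\alpha)}$ stated in the theorem.
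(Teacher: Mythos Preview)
Your argument is correct and follows essentially the same approach as the paper: sample a random subset, remove the vertices violating the $\alpha$-RCDD condition with respect to the sample, show via a first-moment/Markov argument that a constant fraction survives with constant probability, and then repeat $O(\log(1/\delta))$ times. The minor differences are that the paper samples a uniform subset of fixed size $k = n/(8(1+\alpha))$ rather than Bernoulli-$p$ sampling, and applies Markov once directly to the expected number of removed vertices (obtaining per-trial success probability $\tfrac12$) rather than your per-vertex Markov followed by reverse Markov on $n-|F|$ (which gives the weaker but still constant $\tfrac{1}{16(1+\alpha)}$); both variants yield the stated bound.
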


\begin{proof}
The proof of this theorem follows immediately from \autoref{lem:getRCDD} which in turn uses \autoref{lem:expected_size} which are proved in the remainder of this section.
\end{proof}

\begin{algorithm}								
\caption{$\textsc{FindRCDDBlock}(G,\alpha)$}
\label{alg:rcdd}
\SetAlgoVlined
					
\KwIn{a directed graph $G$ and a parameter $\alpha$}
\KwOut{an $\alpha$-RCDD set of vertices $F$ of size at least $n/16$}						

$F \gets \emptyset$

\While{$|F| < \frac{n}{16(1+\alpha)}$}{
Let $F$ be $k=\frac{n}{8(1+\alpha)}$ vertices sampled uniformly at random.

Remove from $F$ each vertex that is not $\alpha$-RCDD with respect to $F$.
}
\end{algorithm}


\begin{lemma}
	\label{lem:expected_size} Let $\mlap\in\R^{n\times n}$ be an Eulerian
	Laplacian and let $F\subseteq V$ be a random subset of size $k$.
	Then the expected number of $i\in F$ such that $\sum_{j\in F,j\neq i}|\mlap_{ij}|\geq\frac{1}{1+\alpha}|\mlap_{ii}|$
	is at most $k^{2}(1+\alpha)/n$.\end{lemma}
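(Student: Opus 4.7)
The plan is to apply Markov's inequality to each vertex individually and then use linearity of expectation to sum over all potential members of $F$. The crucial structural fact is that for an Eulerian Laplacian $\mlap$, the row sums are zero, and since off-diagonal entries are nonpositive, we have $\sum_{j\neq i}|\mlap_{ij}| = |\mlap_{ii}|$ for every $i$. This identity is what lets us relate the random quantity $\sum_{j \in F, j \neq i}|\mlap_{ij}|$ to the diagonal entry $|\mlap_{ii}|$ that appears on the right-hand side of the condition.

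Concretely, I would first fix a vertex $i$ and condition on $i \in F$. Under this conditioning, each other vertex $j$ lies in $F$ independently (well, not independently, but marginally) with probability $(k-1)/(n-1)$, so by linearity,
\[
\expec{}{\sum_{j \in F, j \neq i} |\mlap_{ij}| \;\middle|\; i \in F}
= \frac{k-1}{n-1}\sum_{j \neq i}|\mlap_{ij}|
= \frac{k-1}{n-1}|\mlap_{ii}|.
\]
Markov's inequality then gives
\[
\Pr\!\left[\sum_{j \in F, j \neq i} |\mlap_{ij}| \geq \tfrac{1}{1+\alpha}|\mlap_{ii}| \;\middle|\; i \in F\right]
\leq \frac{(k-1)(1+\alpha)}{n-1}.
\]

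To finish, I would take expectations over the random subset $F$: the expected number of bad indices is $\sum_i \Pr[i \in F]\cdot\Pr[i \text{ is bad} \mid i \in F]$. Since $\Pr[i \in F] = k/n$, summing over the $n$ vertices yields
\[
n \cdot \frac{k}{n} \cdot \frac{(k-1)(1+\alpha)}{n-1} = \frac{k(k-1)(1+\alpha)}{n-1} \leq \frac{k^2(1+\alpha)}{n},
\]
where the last inequality uses $\tfrac{k-1}{n-1} \leq \tfrac{k}{n}$ whenever $k \leq n$. I do not expect any step here to pose a real obstacle; the only subtlety is verifying the Eulerian identity $\sum_{j\neq i}|\mlap_{ij}| = |\mlap_{ii}|$, which follows directly from $\mlap\vones = \vzero$ together with the sign pattern of a directed Laplacian.
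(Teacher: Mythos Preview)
Your proposal is correct and follows essentially the same approach as the paper: condition on $i\in F$, compute $\Pr[j\in F\mid i\in F]=(k-1)/(n-1)$, use the Eulerian row-sum identity $\sum_{j\neq i}|\mlap_{ij}|=|\mlap_{ii}|$ to evaluate the conditional expectation, apply Markov, and then sum via linearity with $\Pr[i\in F]=k/n$ and the bound $(k-1)/(n-1)\leq k/n$.
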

\begin{proof}
	We have that for all $j\neq i$
	\[
	\Pr\left[j\notin F\,|\,i\in F\right]=\prod_{i\in[k-1]}\left(1-\frac{1}{n-i}\right)=\prod_{i\in[k-1]}\left(\frac{n-i-1}{n-i}\right)=\frac{n-k}{n-1}
	\]
	and therefore $\Pr\left[j\in F\,|\,i\in F\right]=\frac{k-1}{n-1}$.
	Since $\sum_{j\in F:j\neq i}|\mlap_{ij}|=\mlap_{ii}$ we have that
	$\E\left[\sum_{j\in F,j\neq i}|\mlap_{ij}|\,|\,\right]=\left(\frac{k-1}{n-1}\right)\mlap_{ii}$
	and that by Markov's inequality
	\[
	\Pr\left[\sum_{j\in F,j\neq i}|\mlap_{ij}|\geq\frac{1}{1+\alpha}\mlap_{ii}\,|\,i\in F\right]\leq\left(\frac{k-1}{n-1}\right)(1+\alpha)
	\]
	and since 
	\[
	\Pr\left[i\notin F\right]=\prod_{i\in[k]}\left(1-\frac{1}{n+1-i}\right)=\prod_{i\in[k]}\frac{n-i}{n+1-i}=\frac{n-k}{n}=1-\frac{k}{n}
	\]
	we have that $\Pr[i\in F]=\frac{k}{n}$ and the expected number $i\in F$
	such that $\sum_{j\in F,j\neq i}|\mlap_{ij}|\geq\frac{1}{1+\alpha}$
	is at most 
	\begin{align*}
	\sum_{i\in[n]}\Pr\left[i\in F,\sum_{j\in F,j\neq i}|\mlap_{ij}|\geq\frac{1}{1+\alpha}\right] &=\sum_{i\in[n]}\Pr\left[i\in F\right]\Pr\left[\sum_{j\in F,j\neq i}|\mlap_{ij}|\geq\frac{1}{1+\alpha}\mlap_{ii}\,|\,i\in F\right] \\
	&\leq k\left(\frac{k-1}{n-1}\right)(1+\alpha)
	\end{align*}
	Since $k\leq n$ we have $(k-1)/(n-1)\leq k/n$ and the result follows.\end{proof}

\begin{lemma}
\label{lem:getRCDD}
	Let $\mlap\in\R^{n\times n}$ be an Eulerian Laplacian, let $F\subseteq V$
	be a random subset of size $k$ and let $F'\subseteq V$ be the elements
	$i\in F$ for which $\sum_{j\in F,j\neq i}|\mlap_{ij}|\leq\frac{1}{1+\alpha}|\mlap_{ii}|$
	and $\sum_{j\in F,j\neq i}|\mlap_{ji}|\leq\frac{1}{1+\alpha}|\mlap_{ii}|$
	then with probability at least $1/2$ we have
	\[
	|F'|\geq k\left[1-\frac{4k}{(1+\alpha)n}\right]
	\]
	and therefore for $k=\frac{n}{8(1+\alpha)}$, $\mlap_{F'F'}$ is
	$\alpha$-RCDD with $|F'|\geq\frac{n}{16(1+\alpha)}$ with probability
	at least $1/2$.\end{lemma}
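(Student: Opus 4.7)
\textbf{Proof plan for \autoref{lem:getRCDD}.}
The plan is to reduce the lemma directly to \autoref{lem:expected_size}, applied twice -- once for the ``column'' condition on $\mlap$, and once for the ``row'' condition which we obtain by applying the previous lemma to $\mlap^{\top}$. The key observation that makes this work is that the transpose of an Eulerian Laplacian is itself an Eulerian Laplacian (the weighted in- and out-degrees swap but remain equal at each vertex), so \autoref{lem:expected_size} applies verbatim to $\mlap^{\top}$ and controls $\sum_{j \in F, j \neq i} |\mlap_{ji}|$ relative to $|\mlap_{ii}|$.

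Concretely, I would let $B_{\text{col}}$ be the (random) set of $i \in F$ violating the column condition $\sum_{j \in F,\, j \neq i} |\mlap_{ij}| \leq \tfrac{1}{1+\alpha} |\mlap_{ii}|$, and $B_{\text{row}}$ be the corresponding set for the row condition. Then $F \setminus F' \subseteq B_{\text{col}} \cup B_{\text{row}}$, so $|F \setminus F'| \leq |B_{\text{col}}| + |B_{\text{row}}|$. Applying \autoref{lem:expected_size} to $\mlap$ and to $\mlap^{\top}$ and summing gives $\E[|B_{\text{col}}| + |B_{\text{row}}|] \leq 2 k^2 (1+\alpha)/n$, hence by Markov's inequality, with probability at least $1/2$,
\[
|F \setminus F'| \leq \frac{4 k^2 (1+\alpha)}{n}.
\]
Equivalently, $|F'| \geq k \bigl(1 - \tfrac{4 k (1+\alpha)}{n}\bigr)$, which is the claimed lower bound on $|F'|$ in the form that comes directly out of the analysis.

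For the second assertion, I would simply plug in $k = n/(8(1+\alpha))$. This yields $\tfrac{4k(1+\alpha)}{n} = \tfrac{1}{2}$, so with probability at least $1/2$ we get $|F'| \geq k/2 = n/(16(1+\alpha))$. By construction every $i \in F'$ satisfies both $\sum_{j \in F,\, j \neq i} |\mlap_{ij}| \leq \tfrac{1}{1+\alpha} |\mlap_{ii}|$ and $\sum_{j \in F,\, j \neq i} |\mlap_{ji}| \leq \tfrac{1}{1+\alpha} |\mlap_{ii}|$, and since $F' \subseteq F$ the same bounds hold with $F'$ in place of $F$, certifying that $\mlap_{F' F'}$ is $\alpha$-RCDD.

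There is no real technical obstacle here; the only subtlety is checking that transposition preserves Eulerianness so that \autoref{lem:expected_size} applies to $\mlap^{\top}$, and then routing the two Markov-type bounds through a union bound to control $|F \setminus F'|$ (rather than just one side). Together with \autoref{alg:rcdd}, which repeats this experiment $O(\log(1/\delta))$ times to drive the failure probability down to $\delta$, this yields \autoref{thm:rcdd-correctness}.
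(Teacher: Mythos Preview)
Your proposal is correct and follows essentially the same approach as the paper: apply \autoref{lem:expected_size} to both $\mlap$ and $\mlap^{\top}$, sum the two expected counts of ``bad'' indices, and use Markov's inequality to bound $|F\setminus F'|$. You are also right that the displayed bound should read $k\bigl(1 - \tfrac{4k(1+\alpha)}{n}\bigr)$ rather than $k\bigl(1 - \tfrac{4k}{(1+\alpha)n}\bigr)$; this is exactly what the argument gives and what makes $k = \tfrac{n}{8(1+\alpha)}$ yield $|F'| \geq k/2$.
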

\begin{proof}
	Applying Lemma~\ref{lem:expected_size} to $\mlap$ and $\mlap^{\top}$
	we see that the expected number of elements $i\in F$ for which $\sum_{j\in F,j\neq i}|\mlap_{ij}|\geq\frac{1}{1+\alpha}|\mlap_{ii}|$
	is at most $k^{2}(1+\alpha)/n$ and the expected number of elements
	$i\in F$ for which $\sum_{j\in F,j\neq i}|\mlap_{ij}|\geq\frac{1}{1+\alpha}|\mlap_{ii}|$
	is at most $k^{2}(1+\alpha)/n$ consequently an expected $2k^{2}(1+\alpha)/n$
	are removed from $F$ to get $F'$. Consequently, by Markov's inequality
	with probability at least $1/2$ at most $2k^{2}(1+\alpha)/n$ are
	removed from $F$ to get $F'$
\end{proof}

\section{Matrix Facts}
\label{sec:mat_facts}

Here we provide some general matrix facts we use throughout the paper. 

\begin{lemma}\label{lem:psd_invert}
If $\mdir$ is a square matrix with $\mU_\mdir \succ \mzero$ then $\mdir$ is invertible.
\end{lemma}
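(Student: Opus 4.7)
The plan is to show $\mdir$ has trivial kernel by using the identity that the quadratic form of a matrix equals the quadratic form of its symmetric part. Specifically, for any vector $\xx \in \R^n$, we have
\[
\xx^\top \mdir \xx = \xx^\top \left(\frac{\mdir + \mdir^\top}{2}\right) \xx = \xx^\top \mU_\mdir \xx,
\]
since $\xx^\top \mdir \xx$ is a scalar equal to its own transpose.

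First I would suppose that $\mdir \xx = \vzero$ for some $\xx \in \R^n$. Then $\xx^\top \mdir \xx = 0$, which by the identity above gives $\xx^\top \mU_\mdir \xx = 0$. Since $\mU_\mdir \succ \mzero$ by hypothesis, this forces $\xx = \vzero$. Hence $\ker(\mdir) = \{\vzero\}$, and a square matrix with trivial kernel is invertible.

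There is no real obstacle here; the only subtlety is the observation that $\xx^\top \mdir \xx = \xx^\top \mU_\mdir \xx$, which is immediate from the fact that a scalar equals its transpose. The proof is essentially one line.
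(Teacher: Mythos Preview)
Your proof is correct and is essentially identical to the paper's own proof, which also supposes $\mdir d = \vzero$ for some $d \neq \vzero$, observes $0 = d^\top \mdir d = d^\top \mU_{\mdir} d$, and derives a contradiction with $\mU_{\mdir} \succ \mzero$.
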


\begin{proof}
If $d \neq \vzero$ with $\mdir d = \vzero$ then $0 = d^\top \mdir d = d^\top \mU_{\mdir} d$ contradicting $\mU_{\mdir} \succ \mzero$.
\end{proof}



\begin{lemma}\label{lem:complicated}
If $\mdir$ is a matrix with $\ker(\mdir) = \ker(\mdir^\top) = \ker(\mU_{\mdir})$ and $\mU_{\mdir} \succeq \mzero$ then $\mU_{\mdir} \preceq \mdir^\top \mU_{\mdir}^\dagger \mdir$.
\end{lemma}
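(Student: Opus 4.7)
The plan is to derive the inequality from the PSD expression $(\mdir - \mU_{\mdir})^\top \mU_{\mdir}^\dagger (\mdir - \mU_{\mdir}) \succeq \mzero$, which holds automatically because $\mU_{\mdir}^\dagger$ is PSD (as $\mU_{\mdir}$ is). Expanding this quadratic form gives a sum of four terms, and the goal is to see that the cross terms collapse to $-2\mU_{\mdir}$ while the ``square'' term $\mU_{\mdir} \mU_{\mdir}^\dagger \mU_{\mdir}$ simplifies to $\mU_{\mdir}$, leaving exactly $\mdir^\top \mU_{\mdir}^\dagger \mdir - \mU_{\mdir} \succeq \mzero$.

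First I would set up the relevant kernels and images. Writing $\mU \defeq \mU_{\mdir}$, the hypothesis $\ker(\mdir) = \ker(\mdir^\top) = \ker(\mU)$ gives, by taking orthogonal complements and using symmetry of $\mU$,
\[
\im(\mdir) = \ker(\mdir^\top)^\perp = \ker(\mU)^\perp = \im(\mU),
\]
and analogously $\im(\mdir^\top) = \im(\mU)$. Letting $\PP$ denote the orthogonal projection onto $\im(\mU)$, the Moore--Penrose identities give $\mU \mU^\dagger = \mU^\dagger \mU = \PP$, and from the image identifications above we get $\PP \mdir = \mdir$ and $\mdir^\top \PP = \mdir^\top$.

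Next I would expand the PSD expression
\[
\mzero \preceq (\mdir - \mU)^\top \mU^\dagger (\mdir - \mU) = \mdir^\top \mU^\dagger \mdir - \mdir^\top \mU^\dagger \mU - \mU \mU^\dagger \mdir + \mU \mU^\dagger \mU.
\]
Using the simplifications from the previous paragraph, $\mdir^\top \mU^\dagger \mU = \mdir^\top \PP = \mdir^\top$, $\mU \mU^\dagger \mdir = \PP \mdir = \mdir$, and $\mU \mU^\dagger \mU = \mU$. Since $\mdir + \mdir^\top = 2\mU$ by definition of the symmetrization, the right-hand side collapses to $\mdir^\top \mU^\dagger \mdir - \mdir^\top - \mdir + \mU = \mdir^\top \mU^\dagger \mdir - \mU$, giving the desired bound $\mU \preceq \mdir^\top \mU^\dagger \mdir$.

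The only genuine subtlety I anticipate is confirming that the kernel/image assumptions let the cross terms absorb cleanly into $\PP$; once that bookkeeping is in place the identity is immediate from $\mU \mU^\dagger \mU = \mU$. No further tools (Schur complements, spectral arguments, etc.) are needed.
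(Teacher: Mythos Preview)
Your proof is correct and is essentially the same argument as the paper's. The paper writes $\mdir = \mU + \mvar{V}$ with $\mvar{V}$ the skew-symmetric part, expands $\mdir^\top \mU^\dagger \mdir$ directly, and observes that the cross terms cancel while the leftover $\mvar{V}^\top \mU^\dagger \mvar{V}$ is PSD; since $\mdir - \mU = \mvar{V}$, your starting expression $(\mdir - \mU)^\top \mU^\dagger (\mdir - \mU)$ is exactly that leftover term, and the two expansions are the same algebra organized from opposite ends.
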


\begin{proof}
This was previously shown with slightly different hypotheses in Equation 2.1 and Theorem~2.2 in~\cite{mathias92} and~Lemma 13 in~\cite{cohen2016faster}, and in the current form in Lemma~B.9 in~\cite{CohenKPPRSV16}.  For completeness, we include the proof from~\cite{CohenKPPRSV16}, which we reproduce here almost verbatim.

Let $\mdir=\mU+\mvar{V}$, where 
\[
\mU:=\mU_{\mdir}=(\mdir+\mdir^\top)/2\text{ and }\mvar{V}=(\mdir-\mdir^\top)/2.
\]
We have 
\[
\mdir^\top \mU^\dagger \mdir 
=(\mU+\mvar{V})^\top \mU^\dagger (\mU + \mvar{V})
=\mU^\top \mU^\dagger \mU 
+\mU^\top \mU^\dagger \mvar{V}
+\mvar{V}^\top \mU^\dagger \mU 
+\mvar{V}^\top \mU^\dagger \mvar{V}.
\]
Our kernel assumptions imply that $\mU^\top \mU^\dagger \mvar{V}=\mvar{V}$ and $\mvar{V}^\top \mU^\dagger \mU =\mvar{V}^\top$, 
and $\mvar{V}^\top=-\mvar{V}$, so we obtain
\begin{align*}
\mdir^\top \mU^\dagger \mdir 
&=\mU^\top \mU^\dagger \mU 
+\mvar{V}
+\mvar{V}^\top 
+\mvar{V}^\top \mU^\dagger \mvar{V}
\\
&=\mU 
+\mvar{V}^\top \mU^\dagger \mvar{V}\succeq \mU,
\end{align*}
where the final inequality used the assumption that $\mU\succeq 0$ to guarantee that $\mvar{V}^\top \mU^\dagger \mvar{V}\succeq 0$.
\end{proof}

\begin{lemma}\label{lem:approx-implies-lfl}
Let $\widetilde{\mvar{L}},\mvar{L},\mvar{F}$ be arbitrary matrices with $\ker(\widetilde{\mvar{L}})=\ker(\widetilde{\mvar{L}}^\intercal)=\ker(\mvar{L})=\ker(\mvar{L}^\intercal)=\ker(\mvar{F})=\ker(\mvar{F}^\intercal)$. If $\|\mvar{F}^{+/2} (\mvar{L} - \widetilde{\mvar{L}}) \mvar{F}^{+/2}\| \leq \epsilon$ and $\gamma \mvar{F} \preceq \widetilde{\mvar{L}}^\intercal \mvar{F}^+ \widetilde{\mvar{L}}$ then $\mvar{L}^\intercal \mvar{F}^+ \mvar{L} \approx_{O\left(\frac{\epsilon}{\sqrt{\gamma}} + \frac{\epsilon^2}{\gamma} \right)} \widetilde{\mvar{L}}^\intercal \mvar{F}^+ \widetilde{\mvar{L}}$. 
\end{lemma}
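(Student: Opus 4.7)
The plan is to expand $\mvar{L}^\intercal \mvar{F}^+ \mvar{L}$ around $\widetilde{\mvar{L}}^\intercal \mvar{F}^+ \widetilde{\mvar{L}}$ by writing the perturbation $\mvar{E} \defeq \mvar{L} - \widetilde{\mvar{L}}$ and bounding the resulting cross terms and quadratic error term spectrally. Concretely, I would set $\mvar{Q} \defeq \widetilde{\mvar{L}}^\intercal \mvar{F}^+ \widetilde{\mvar{L}}$ and write
\[
\mvar{L}^\intercal \mvar{F}^+ \mvar{L} - \mvar{Q}
=
\widetilde{\mvar{L}}^\intercal \mvar{F}^+ \mvar{E} + \mvar{E}^\intercal \mvar{F}^+ \widetilde{\mvar{L}} + \mvar{E}^\intercal \mvar{F}^+ \mvar{E}.
\]

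First I would handle the pure error term. The spectral bound $\|\mvar{F}^{+/2}\mvar{E}\mvar{F}^{+/2}\| \leq \epsilon$ combined with the shared kernel hypothesis (which guarantees $\mvar{F}^{+/2}\mvar{F}^{1/2}$ acts as the identity on the relevant image) yields $\mvar{E}^\intercal \mvar{F}^+ \mvar{E} \preceq \epsilon^2 \mvar{F}$. Then the second hypothesis $\gamma \mvar{F} \preceq \mvar{Q}$ gives $\mvar{E}^\intercal \mvar{F}^+ \mvar{E} \preceq \tfrac{\epsilon^2}{\gamma}\mvar{Q}$.

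Next I would bound the cross terms via the standard operator AM--GM inequality: for any matrices $A,B$ of compatible shape and any $\lambda > 0$,
\[
\pm\bigl(A^\intercal B + B^\intercal A\bigr) \preceq \lambda\, A^\intercal A + \lambda^{-1}\, B^\intercal B.
\]
Applying this with $A = \mvar{F}^{+/2}\widetilde{\mvar{L}}$ and $B = \mvar{F}^{+/2}\mvar{E}$ (again using the shared kernel so $(\mvar{F}^{+/2})^\intercal \mvar{F}^{+/2} = \mvar{F}^+$ on the relevant subspace) gives
\[
\pm(\widetilde{\mvar{L}}^\intercal \mvar{F}^+ \mvar{E} + \mvar{E}^\intercal \mvar{F}^+ \widetilde{\mvar{L}}) \preceq \lambda\, \mvar{Q} + \lambda^{-1}\, \mvar{E}^\intercal \mvar{F}^+ \mvar{E} \preceq \bigl(\lambda + \tfrac{\epsilon^2}{\lambda \gamma}\bigr)\mvar{Q}.
\]
Optimizing the scalar $\lambda$ by setting $\lambda = \epsilon/\sqrt{\gamma}$ minimizes the right-hand side at $2\epsilon/\sqrt{\gamma}$. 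Combining with the pure error term gives
\[
-\tfrac{2\epsilon}{\sqrt{\gamma}}\mvar{Q} \preceq \mvar{L}^\intercal \mvar{F}^+ \mvar{L} - \mvar{Q} \preceq \bigl(\tfrac{2\epsilon}{\sqrt{\gamma}} + \tfrac{\epsilon^2}{\gamma}\bigr)\mvar{Q},
\]
which is exactly the claimed $O(\epsilon/\sqrt{\gamma} + \epsilon^2/\gamma)$ multiplicative spectral approximation (after taking logarithms, or interpreting $\approx_\alpha$ in the additive sense, depending on the paper's convention).

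I expect no serious obstacle. The only subtlety is keeping track of the pseudoinverse on the correct subspace, but because the hypothesis forces $\mvar{L},\widetilde{\mvar{L}},\mvar{F}$ (and their transposes) to share a common kernel, every occurrence of $\mvar{F}^+$ and $\mvar{F}^{+/2}$ composes cleanly with $\mvar{L},\widetilde{\mvar{L}},\mvar{E}$; in particular, $\mvar{E}^\intercal \mvar{F}^+ \mvar{E} = (\mvar{F}^{+/2}\mvar{E})^\intercal (\mvar{F}^{+/2}\mvar{E})$ and the operator AM--GM step requires no further care. The whole argument is a few lines once these hypotheses are unpacked.
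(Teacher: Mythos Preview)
Your proof is correct. Both you and the paper expand the perturbation and bound the resulting error, but the mechanics differ: the paper works pointwise with vector norms, using the reverse triangle inequality to get $\bigl|\,\|\mvar{L}x\|_{\mvar{F}^+} - \|\widetilde{\mvar{L}}x\|_{\mvar{F}^+}\bigr| \le (\epsilon/\sqrt{\gamma})\|\widetilde{\mvar{L}}x\|_{\mvar{F}^+}$ and then squares to reach the quadratic forms, whereas you stay at the matrix level and handle the cross terms via operator AM--GM with an optimized scale $\lambda$. The two routes land on the same constants. The paper's argument is slightly shorter (no explicit optimization over $\lambda$), while yours gives operator inequalities directly in the Loewner order rather than quantifying over test vectors; either presentation is fine here.
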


\begin{proof}
We have
\begin{align*}
\|\mvar{F}^{+/2} (\mvar{L} - \widetilde{\mvar{L}}) \mvar{F}^{+/2}\| &\leq \epsilon \\
\|(\mvar{L} - \widetilde{\mvar{L}}) x \|_{\mvar{F}^+} &\leq \epsilon \cdot \|x\|_{\mvar{F}} & \forall x \\
\left| \|\mvar{L}x\|_{\mvar{F}^+} - \|\widetilde{\mvar{L}} x \|_{\mvar{F}^+} \right| &\leq \epsilon / \sqrt{\gamma} \cdot \|\widetilde{\mvar{L}} x\|_{\mvar{F}^+} & \forall x \\
\left| x^\intercal \mvar{L}^\intercal \mvar{F}^+ \mvar{L} x - x^\intercal\widetilde{\mvar{L}}^\intercal \mvar{F}^+ \widetilde{\mvar{L}} x \right| &\leq O\left(\frac{\epsilon}{\sqrt{\gamma}} + \frac{\epsilon^2}{\gamma} \right) \cdot x^\intercal\widetilde{\mvar{L}}^\intercal \mvar{F}^+ \widetilde{\mvar{L}} x & \forall x, \\
\end{align*}
which is one definition of the desired condition.
\end{proof}

\begin{lemma}
\label{lem:SchurLess}
For any positive semi-definite matrix $\mvar{P} \succeq 0$
and any subset of variables $I$, we have
\[
\sc{\mvar{P}}{I} \preceq \mvar{P}.
\]
\end{lemma}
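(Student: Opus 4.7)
The plan is to establish this via the standard variational (minimization) characterization of the Schur complement. Write any $x \in \R^n$ as $x = (x_I, x_{\bar{I}})$ where $\bar{I} = [n] \setminus I$, following the partition convention used in \autoref{sec:preliminaries}. Because $\sc{\mvar{P}}{I}$ is nonzero only on the $I \times I$ block, one has $x^{\top} \sc{\mvar{P}}{I} x = x_I^{\top} (\mvar{P}_{II} - \mvar{P}_{I\bar{I}} \mvar{P}_{\bar{I}\bar{I}}^{\dagger} \mvar{P}_{\bar{I}I}) x_I$, where I use the pseudoinverse to handle the case that $\mvar{P}_{\bar{I}\bar{I}}$ is singular. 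The key fact I would invoke is that, for PSD $\mvar{P}$, this value equals $\min_{y \in \R^{\bar{I}}} [x_I;\, y]^{\top} \mvar{P} [x_I;\, y]$. Specializing the minimization to $y = x_{\bar{I}}$ yields $x^{\top} \sc{\mvar{P}}{I} x \leq x^{\top} \mvar{P} x$, which is the claim.

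To justify the variational identity cleanly I would just complete the square: setting $y \defeq x_{\bar{I}} + \mvar{P}_{\bar{I}\bar{I}}^{\dagger} \mvar{P}_{\bar{I}I} x_I$, direct expansion gives
\[
x^{\top} \mvar{P} x - x^{\top} \sc{\mvar{P}}{I} x = y^{\top} \mvar{P}_{\bar{I}\bar{I}} y,
\]
provided $\mvar{P}_{\bar{I}\bar{I}} \mvar{P}_{\bar{I}\bar{I}}^{\dagger} \mvar{P}_{\bar{I}I} = \mvar{P}_{\bar{I}I}$. The right-hand side is nonnegative because $\mvar{P}_{\bar{I}\bar{I}}$, being a principal submatrix of $\mvar{P} \succeq 0$, is itself PSD.

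The only step that requires any care is the range condition $\operatorname{range}(\mvar{P}_{\bar{I}I}) \subseteq \operatorname{range}(\mvar{P}_{\bar{I}\bar{I}})$ needed for the pseudoinverse identity above. This is a classical consequence of $\mvar{P} \succeq 0$: if some column of $\mvar{P}_{\bar{I}I}$ had a component in $\ker(\mvar{P}_{\bar{I}\bar{I}})$, then choosing a large scalar multiple of that kernel vector for $x_{\bar{I}}$ would drive $x^{\top} \mvar{P} x$ to $-\infty$, contradicting PSDness. With this observation in place, the completion of squares is an unconditional matrix identity and the inequality $\sc{\mvar{P}}{I} \preceq \mvar{P}$ follows for all $x \in \R^n$, completing the proof.
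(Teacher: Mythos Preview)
Your proof is correct and follows essentially the same approach as the paper: both rely on the variational characterization $\xx^{\top}\sc{\mvar{P}}{I}\xx=\min_{\xxhat:\,\xxhat_I=\xx_I}\xxhat^{\top}\mvar{P}\xxhat$, from which the inequality is immediate by specializing to $\xxhat=\xx$. The only difference is that the paper simply cites this characterization (referring to Lemma~B.2 of~\cite{MillerP13}), whereas you supply a self-contained completion-of-squares derivation together with the range condition $\operatorname{range}(\mvar{P}_{\bar{I}I})\subseteq\operatorname{range}(\mvar{P}_{\bar{I}\bar{I}})$ needed when $\mvar{P}_{\bar{I}\bar{I}}$ is singular.
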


\begin{proof}
This follows from the optimization definition
of Schur complements:
\[
\xx^{\top} \sc{\mvar{P}}{I} \xx
\defeq
\min_{\xxhat: \xxhat_{I} = \xx_{I}}
\xxhat^{\top} \mvar{P} \xxhat.
\]
A formal proof of this fact can be found in Lemma B.2
(proven in Appendix C) of~\cite{MillerP13}.
\end{proof}

\begin{lemma}
\label{lem:dlocalulocal}
Suppose $\aa$ is a vector with positive entries and $\DD$ is the
diagonal matrix with $\aa$ on the diagonal, and $d = \vecone^{\top}
\aa$, then $\UU = \DD - \frac{1}{d} \aa \aa^{\top}$ is satisfies
\[
  \UU^{\dagger} \preceq \DD^{-1}.
\]
\end{lemma}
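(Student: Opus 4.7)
The plan is to reduce the claim to explicit computations that exploit the rank-one structure of $\UU$ as a perturbation of the diagonal matrix $\DD$. First, I would establish that $\UU$ is symmetric PSD with one-dimensional kernel spanned by $\vecone$: the direct computation $\UU\vecone = \aa - (d/d)\aa = \zero$ shows $\vecone \in \ker \UU$, while conjugating by $\DD^{-1/2}$ yields $\DD^{-1/2}\UU\DD^{-1/2} = \mI - \vv\vv^\top/\|\vv\|^2$ with $\vv := \DD^{-1/2}\aa$ and $\|\vv\|^2 = \aa^\top\DD^{-1}\aa = \vecone^\top\aa = d$. The right-hand side is visibly the orthogonal projection onto $\vv^\perp$, hence PSD of rank $n-1$, so the same is true of $\UU$ after congruence. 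In particular $\DD^{1/2}\UU^\dagger\DD^{1/2}$ and $\UU^\dagger$ are well-defined and symmetric PSD with the same kernel as $\UU$.

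Next, I would derive a closed-form expression for $\UU^\dagger$. For any $y \perp \vecone$, solving $\UU x = y$ under $x \perp \vecone$ is straightforward: writing $\DD x = y + (\aa^\top x / d)\,\aa$ and using the identity $\DD^{-1}\aa = \vecone$, I obtain $x = \DD^{-1}y + c\vecone$ for a scalar $c$ determined by $\vecone^\top x = 0$, which gives $c = -\vecone^\top\DD^{-1}y / n$. Since $\UU^\dagger$ annihilates $\vecone$, extending to all of $\R^n$ yields the compact formula
\[
\UU^\dagger \;=\; \PP\,\DD^{-1}\,\PP,
\qquad \PP := \mI - \tfrac{1}{n}\vecone\vecone^\top,
\]
where $\PP$ is the orthogonal projection onto $\vecone^\perp$. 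One can double-check this formula by verifying $\UU\UU^\dagger = \PP$ using $\UU\DD^{-1}\vecone = \zero$, which follows directly from $\DD^{-1}\aa = \vecone$ and $\vecone^\top\DD^{-1}\aa = n$... actually the cleanest check is $\UU(\PP\DD^{-1}y) = \UU\DD^{-1}y = y$ for $y \perp \vecone$, since $\UU\vecone = \zero$ and $\UU\DD^{-1} = \mI - \aa\vecone^\top/d$ applied to $y \perp \vecone$ gives $y$.

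Finally, the desired inequality $\UU^\dagger \preceq \DD^{-1}$ reduces to showing
\[
\DD^{-1} - \PP\DD^{-1}\PP \;=\; \tfrac{1}{n}\bigl(\vv\vecone^\top + \vecone\vv^\top\bigr) - \tfrac{s}{n^2}\vecone\vecone^\top \;\succeq\; \zero,
\]
with $\vv := \DD^{-1}\vecone$ and $s := \vecone^\top\DD^{-1}\vecone$. This is a rank-at-most-two matrix supported on $\mathrm{span}\{\vecone,\vv\}$, so PSD-ness reduces to checking positivity of a concrete $2\times 2$ Gram matrix in this span. The main obstacle I foresee is precisely this verification step, as the cross-term $\vv\vecone^\top + \vecone\vv^\top$ is indefinite and the negative contribution $-sn^{-2}\vecone\vecone^\top$ must be large enough in the right direction to restore positivity; the natural tool to invoke is a Cauchy-Schwarz bound of the form $(\vv^\top y)^2 \leq s\cdot y^\top\DD^{-1}y$ applied carefully in the $\DD^{-1}$ inner product. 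Once that $2\times 2$ PSD check is in hand, the lemma follows immediately from the formula for $\UU^\dagger$ derived in the previous step.
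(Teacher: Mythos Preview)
Your derivation mirrors the paper's proof closely: both factor $\UU = \DD^{1/2}\VV\DD^{1/2}$ where $\VV$ is a rank-$(n{-}1)$ orthogonal projection, and both arrive (the paper via a pseudo-inverse-of-a-product identity, you by solving $\UU x = y$ directly) at the closed form $\UU^\dagger = \PP\DD^{-1}\PP$ with $\PP = \II - \tfrac{1}{n}\vecone\vecone^\top$. The paper then simply asserts $\PP\DD^{-1}\PP \preceq \DD^{-1}$ as its last step; you correctly flag exactly this inequality as the crux and propose a $2\times 2$ reduction plus Cauchy--Schwarz.

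The genuine gap is that this final inequality is \emph{false}, so the verification you outline cannot be completed. Take $n=2$ and $\aa = (1,4)^\top$, so $d=5$: one computes $\UU = \tfrac{4}{5}\left(\begin{smallmatrix}1&-1\\-1&1\end{smallmatrix}\right)$ and hence $\UU^\dagger = \tfrac{5}{16}\left(\begin{smallmatrix}1&-1\\-1&1\end{smallmatrix}\right)$, while $\DD^{-1} = \mathrm{diag}(1,\tfrac14)$. Then $\vecind_2^\top\UU^\dagger\vecind_2 = \tfrac{5}{16} > \tfrac{1}{4} = \vecind_2^\top\DD^{-1}\vecind_2$, so $\UU^\dagger \not\preceq \DD^{-1}$. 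Your instinct that the rank-two difference $\DD^{-1} - \PP\DD^{-1}\PP$ has an indefinite cross-term was right; subtracting the PSD matrix $\tfrac{s}{n^2}\vecone\vecone^\top$ only moves things further from positive semidefiniteness, not toward it. In short, the lemma as stated appears to be incorrect, and the paper's own final step $\PP\DD^{-1}\PP \preceq \DD^{-1}$ is in error. What \emph{does} follow from $\UU^\dagger = \PP\DD^{-1}\PP$ is the equality $y^\top\UU^\dagger y = y^\top\DD^{-1}y$ for every $y \perp \vecone$, so the claimed ordering holds on the image of $\UU$, but not on all of $\R^n$.
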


To prove this lemma, we will use a standard fact about
pseudo-inverses:
\begin{fact}
  Suppose $\AA$ is a symmetric matrix and $\XX$ is a non-singular
  matrix, and that $\PP$ is the projection onto the image of
  $\XX^{\top} \AA \XX$.
  Then, 
  \[
    (\XX^{\top} \AA \XX)^{\dagger}  = 
    \PP \XX^{-1} \AA^{\dagger} (\XX^{-1})^{\top} \PP
   \]
\end{fact}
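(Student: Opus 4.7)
The plan is to prove the Fact by verifying the four Moore--Penrose conditions for the candidate $X \defeq \PP \XX^{-1} \AA^{\dagger} (\XX^{-1})^{\top} \PP$ with respect to $M \defeq \XX^{\top} \AA \XX$, and then invoking uniqueness of the pseudoinverse. The two structural ingredients that drive everything are: (i) since $\AA$ is symmetric, so is $M$, and hence $\im(M) = (\ker M)^{\perp}$, which means $\PP$ is a symmetric idempotent satisfying $\PP M = M \PP = M$; (ii) since $\AA^{\dagger}$ is symmetric, a direct transposition shows $X$ itself is symmetric.

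With these in hand, the first two Penrose conditions become almost automatic. For $MXM = M$, I expand the triple product, collapse each $\PP$-next-to-$M$ into $M$ via (i), then cancel the adjacent $\XX^{\top}$--$\XX$ and $\XX$--$\XX^{-1}$ pairs, leaving $\XX^{\top} \AA \AA^{\dagger} \AA \XX = \XX^{\top} \AA \XX = M$ by the corresponding identity for $\AA$. For $XMX = X$ the same maneuver reduces the interior to $\AA^{\dagger} \AA \AA^{\dagger} = \AA^{\dagger}$, leaving $X$.

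For the two symmetry conditions (Penrose 3 and 4), the slick step is to identify $MX$ with $\PP$ itself. On the one hand, Penrose 1 just established $MXM = M$, so $MX$ acts as the identity on $\im(M)$. On the other hand, the trailing $\PP$ in $X$ forces $MX$ to annihilate $(\im M)^{\perp} = \ker M$. Since $MX$ agrees with $\PP$ on $\im(M)$ and on its orthogonal complement, we conclude $MX = \PP$, which is symmetric. Taking adjoints and using that $M$ and $X$ are symmetric yields $XM = (MX)^{\top} = \PP^{\top} = \PP$, which establishes symmetry of $XM$ as well (and in fact shows $MX = XM = \PP$, the expected projection characterization of $MM^{\dagger}$ for symmetric $M$).

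The main obstacle is the last step: a direct symbolic computation of $MX$ leaves a formula of the shape $\XX^{\top}(\AA \AA^{\dagger})(\XX^{-1})^{\top}\PP$ whose symmetry is not manifest, because $\XX$ need not be orthogonal. The key insight bypassing this difficulty is to avoid trying to massage $MX$ into a symmetric-looking product and instead characterize it by its action on the orthogonal decomposition $\mathbb{R}^{n} = \im(M) \oplus \ker(M)$; this identifies it with $\PP$ essentially for free once Penrose 1 is in hand. After verifying all four conditions, uniqueness of the Moore--Penrose pseudoinverse yields $X = M^{\dagger}$, which is precisely the identity stated in the Fact.
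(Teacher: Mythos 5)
Your proof is correct. Since the paper presents this Fact without proof (as ``a standard fact about pseudo-inverses''), the natural comparison is with the paper's proof of the more general Lemma~\ref{lem:pinvasymproduct} in Appendix~\ref{sec:pinv-facts}, of which this Fact is the symmetric special case (take $\AA \mapsto \XX^{\top}$, $\BB \mapsto \AA$, $\CC \mapsto \XX$; then $\PP_{\MM}=\PP_{\MM^{\top}}=\PP$). Both arguments are organized around verifying the four Moore--Penrose axioms and both reduce Penrose~1 and~2 to the corresponding identities for $\AA$ by collapsing $\PP$'s and canceling $\XX$'s, so that part is essentially the same. The genuine difference is in the two symmetry axioms: the paper's proof of Lemma~\ref{lem:pinvasymproduct} establishes $\NN\MM = \PP_{\MM}$ and $\MM\NN = \PP_{\MM^{\top}}$ by an algebraic expansion that relies on the auxiliary annihilation identities $\PP_{\MM}\CC^{-1}\QQ_{\BB} = \mzero$ and $\QQ_{\BB^{\top}}\AA^{-1}\PP_{\MM^{\top}} = \mzero$ of Claim~\ref{clm:projzero}. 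You instead observe that once Penrose~1 is known, $\MM X$ acts as the identity on $\im(\MM)$ and is annihilated by the trailing $\PP$ on $\ker(\MM) = (\im \MM)^{\perp}$ (the latter equality holds precisely because $\MM$ is symmetric), so $\MM X = \PP$ by the orthogonal decomposition; the symmetry of $X$ and $\MM$ then yields $X\MM = (\MM X)^{\top} = \PP$ without any separate computation. This shortcut buys you a cleaner argument and avoids the auxiliary claim entirely, but it exploits symmetry of $\AA$ in an essential way, which is why the paper's general lemma cannot use it and must do the algebra.
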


\begin{proof}[Proof of Lemma~\ref{lem:dlocalulocal}]
  Note that one can check that $\UU$ is in fact the undirected
  Laplacian of a weighted complete graph and so kernel $\UU$ is
  exactly the span of $\vecone$, and $\UU$ is PSD.
  Let $\PP = \II - \vecone\vecone^{\top}$ denote the projection onto
  the image of $\UU$.
  Let $\vv = \DD^{-1/2} \frac{1}{d^{1/2}} \aa$, and $\VV = \II  -
  \vv\vv^{\top}$.
  Note that $\UU = \DD^{1/2} \VV \DD^{1/2}$ and
  $\VV^{\dagger} = \VV$.
  Hence 
  \[
    \UU^{\dagger} = (\DD^{1/2} \VV \DD^{1/2})^{\dagger}
    =  \PP \DD^{-1/2} \VV^{\dagger} \DD^{-1/2} \PP
    =  \PP \DD^{-1/2} \VV \DD^{-1/2} \PP
    \preceq \PP \DD^{-1} \PP
    \preceq \DD^{-1}.
   \]
\end{proof}



\section{Pseudo-Inverses and Schur Complements}
\label{sec:pinv-facts}

In this appendix, we formally justify our view of Schur complements
as taking inverses of minors of inverses in the setting of pseudo-inverses.
Such characterization requires Definition~\ref{def:projcoordres} which
defines the appropriate way to 
restrict pseudo-inverses to a subset of coordinates.
Our main equivalence statement is:
\begin{lemma}
\label{lem:SchurPinv}
Consider any $\MM \in \rea^{[n] \times [n]}$,
and $F,C$ a partition of $[n]$,  where  $\MM_{FF}$ is invertible.
Then we have
\[
 \sc{\MM}{C}
 =\left(\MM^\dag\left[C\right]\right)^\dag.
\]
\end{lemma}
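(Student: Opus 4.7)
The plan is to exploit the invertible block-LU decomposition of $\MM$ afforded by the invertibility of $\MM_{FF}$. Writing $\MM = \LL \DD \UU$ with
\[
\LL = \begin{pmatrix} \mI & 0 \\ \MM_{CF}\MM_{FF}^{-1} & \mI \end{pmatrix},
\quad
\DD = \begin{pmatrix} \MM_{FF} & 0 \\ 0 & \SS_{CC} \end{pmatrix},
\quad
\UU = \begin{pmatrix} \mI & \MM_{FF}^{-1}\MM_{FC} \\ 0 & \mI \end{pmatrix},
\]
where $\SS_{CC} = \MM_{CC} - \MM_{CF}\MM_{FF}^{-1}\MM_{FC}$ is the Schur complement and $\LL,\UU$ are invertible, I first characterize the null spaces of $\MM$ and $\MM^\top$: a vector $z$ lies in $\ker \MM$ iff $z = \UU^{-1}(0, k)$ for some $k \in \ker \SS_{CC}$, and analogously $\ker \MM^\top$ arises from $\ker \SS_{CC}^\top$ through $\LL^{-\top}$. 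Taking orthogonal complements shows $(0, b_C) \in \im \MM$ iff $b_C \in \im \SS_{CC}$.

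For such $b_C \in \im\SS_{CC}$, I compute $(\MM^\dag)_{CC}\,b_C$ by solving $\MM x = (0, b_C)$ via block elimination, obtaining $x_F = -\MM_{FF}^{-1}\MM_{FC}\,x_C$ together with $\SS_{CC} x_C = b_C$. Subject to the minimum-norm condition $x \perp \ker\MM$, the coordinate $x_C$ is determined only up to an element of $\ker\SS_{CC}$: any two preimages of $(0, b_C)$ under $\MM$ differ by $\UU^{-1}(0, k)$ with $k \in \ker\SS_{CC}$, whose $C$-block is precisely $k$. Orthogonally projecting $x_C$ onto $(\ker\SS_{CC})^\perp = \im\SS_{CC}^\top$ therefore collapses all such representatives to the unique $\SS_{CC}^\dag b_C$, and this orthogonal projection is exactly the action of $\PP_\SS$ in Definition~\ref{def:projcoordres}. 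The ``input'' projection $\PP_{\SS^\top}$ reduces an arbitrary $b_C$ to its $\im\SS_{CC}$ component, which is harmless since $\SS_{CC}^\dag$ annihilates $\ker\SS_{CC}^\top$ anyway. Combining these observations yields $\MM^\dag[C] = \SS^\dag$ (with $\SS = \sc{\MM}{C}$ viewed zero-padded on $F$), and applying the pseudoinverse to both sides gives $(\MM^\dag[C])^\dag = \SS$, as claimed.

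The main obstacle is that the naive block-inverse identity does not lift verbatim to the pseudoinverse: in general $\MM^\dag \neq \UU^{-1}\DD^\dag\LL^{-1}$ when $\MM$ is singular, because the Moore--Penrose symmetry axioms fail for that product (the factor $\LL\DD\DD^\dag\LL^{-1}$ is not symmetric unless $\im(\MM_{CF}\MM_{FF}^{-1}) \subseteq \im\SS_{CC}$). The argument therefore cannot simply extract the $C$-block of $\UU^{-1}\DD^\dag\LL^{-1}$ and declare victory. What rescues the proof is that only the coset $(\MM^\dag)_{CC}\,b_C + \ker\SS_{CC}$ matters for the statement, and this coset is robustly determined by the block-solution equation $\SS_{CC} x_C = b_C$ independently of the $\UU^{-1}$-twisting introduced by the full orthogonality requirement; the projection $\PP_\SS$ then retrieves the canonical minimum-norm representative $\SS_{CC}^\dag b_C$, making the two-sided projected block of $\MM^\dag$ coincide with $\SS^\dag$.
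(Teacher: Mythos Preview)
Your argument is correct. Both you and the paper start from the same block-$\LL\DD\UU$ factorization and the same characterization of $\ker\MM$ in terms of $\ker\SS_{CC}$ (the paper's \autoref{lem:schurkernel}). The paper then proves a general identity $\MM^{\dagger} = \PP_{\MM}\,\UU^{-1}\DD^{\dagger}\LL^{-1}\,\PP_{\MM^{\top}}$ for any product with invertible outer factors (\autoref{lem:pinvasymproduct}), and uses an absorption lemma (\autoref{lem:composeproj}) to show that sandwiching by $\mathrm{diag}(\mzero,\PP_{\SS})$ on the left and $\mathrm{diag}(\mzero,\PP_{\SS^{\top}})$ on the right kills the outer projections and leaves $\mathrm{diag}(\mzero,\SS^{\dagger})$. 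You instead bypass any explicit formula for $\MM^{\dagger}$ by invoking its variational characterization: you observe that for $b_C\in\im\SS_{CC}$ the $C$-block of \emph{any} preimage of $(0,b_C)$ solves $\SS_{CC}x_C=b_C$, that this pins down $x_C$ modulo $\ker\SS_{CC}$, and that the projection $\PP_{\SS}$ onto $(\ker\SS_{CC})^{\perp}$ selects the canonical representative $\SS_{CC}^{\dagger}b_C$ regardless of which preimage the Moore--Penrose orthogonality constraint actually picks. This is a cleaner route for the specific statement at hand, since it sidesteps the need for the auxiliary \autoref{lem:pinvasymproduct}; the paper's approach has the advantage of producing that general pseudoinverse-of-a-product formula as a reusable byproduct.
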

Using this characterization, we can also show the next lemma, which
tells us that a Schur complement onto a set can be computed by
blockwise elimination, or by first eliminating a some variables and
then eliminating more. This is a well-known result for invertible
matrices, or symmetric matrices, but we extend it to the case of
singular, asymmetric matrices.
\begin{lemma}
\label{lem:SchurPinvCompose}
Consider any $\MM \in \rea^{[n] \times [n]}$,
and $F,C$ a partition of $[n]$,  
where  $\MM_{FF}$ is invertible.
Let $F_1,F_2$ be a partition of $F$, and let $C_1 = [n] \setminus F_1$.
Suppose $\MM_{F_1 F_1}$ is invertible.
Then we have
\[
 \sc{ \sc{\MM}{C_1} }{C} =  \sc{\MM}{C}
\]
and 
\[
\left(\MM^\dag[C_1]\right)\left[C\right]
=
\left(\MM^\dag\left[C_1\right]\right).
\]
\end{lemma}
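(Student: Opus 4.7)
The plan is to establish both equations via the classical quotient formula for Schur complements (for the first identity) and then to transport this to the pseudoinverse side using Lemma~\ref{lem:SchurPinv} (for the second identity). Throughout, I will read the second claimed equation as $(\MM^{\dag}[C_1])[C] = \MM^{\dag}[C]$, since the literal right-hand side appears to be a typo (it would otherwise make no further restriction onto $C$).

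First I would handle the Schur-complement composition identity $\sc{\sc{\MM}{C_1}}{C} = \sc{\MM}{C}$. Write $\MM$ as a $3\times 3$ block matrix indexed by the partition $(F_1, F_2, C)$. Eliminating $F_1$ yields $N \defeq \sc{\MM}{C_1}$, a $2\times 2$ block matrix on $(F_2, C)$, and a direct computation gives $N_{F_2 F_2} = \MM_{F_2 F_2} - \MM_{F_2 F_1}\MM_{F_1 F_1}^{-1}\MM_{F_1 F_2}$. This is precisely the Schur complement of $\MM_{F_1 F_1}$ in $\MM_{FF}$, so by the standard determinantal identity $\det \MM_{FF} = \det \MM_{F_1 F_1} \cdot \det N_{F_2 F_2}$, the hypothesis that both $\MM_{FF}$ and $\MM_{F_1 F_1}$ are invertible guarantees $N_{F_2 F_2}$ is invertible, so the inner Schur complement on the left-hand side is well-defined. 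The identity itself is then the classical quotient (Crabtree--Haynsworth) formula for Schur complements, which follows from direct block algebra once all the relevant inverses exist.

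Next I would deduce the pseudoinverse identity. Applying Lemma~\ref{lem:SchurPinv} to $\MM$ with partition $(F_1, C_1)$ gives $\sc{\MM}{C_1} = (\MM^{\dag}[C_1])^{\dag}$, i.e. $N = (\MM^{\dag}[C_1])^{\dag}$; taking pseudoinverses of both sides and using the involutive identity $(A^{\dag})^{\dag} = A$ yields $\MM^{\dag}[C_1] = N^{\dag}$. Similarly Lemma~\ref{lem:SchurPinv} applied to $N$ with partition $(F_2, C)$ gives $\sc{N}{C} = (N^{\dag}[C])^{\dag}$, and another inversion shows $N^{\dag}[C] = (\sc{N}{C})^{\dag}$. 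Combining with the composition identity from the previous paragraph, $\sc{N}{C} = \sc{\MM}{C}$, gives $N^{\dag}[C] = (\sc{\MM}{C})^{\dag}$. Finally, a third application of Lemma~\ref{lem:SchurPinv}, this time to $\MM$ with partition $(F, C)$, gives $\MM^{\dag}[C] = (\sc{\MM}{C})^{\dag}$. Chaining these equalities yields $(\MM^{\dag}[C_1])[C] = N^{\dag}[C] = (\sc{\MM}{C})^{\dag} = \MM^{\dag}[C]$.

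The main obstacle I anticipate is bookkeeping around the projection matrices $\PP$ that appear in Definition~\ref{def:projcoordres}. Unlike the invertible case, $\MM^{\dag}[C]$ is not simply the coordinate restriction $(\MM^{\dag})_{CC}$: it is that block conjugated by projections onto the image and coimage of $\sc{\MM}{C}$. In particular, verifying that $[C]$ applied to $N^{\dag}$ is well-defined requires confirming that $N_{F_2 F_2}$ is invertible (handled above) and that the left/right kernels of $\sc{N}{C}$ coincide with those of $\sc{\MM}{C}$ (which is immediate from $\sc{N}{C} = \sc{\MM}{C}$). Approaching the argument through the clean characterization $\MM^{\dag}[C] = (\sc{\MM}{C})^{\dag}$ rather than through direct manipulation of the block $(\MM^{\dag})_{CC}$ sidesteps most of this projection bookkeeping; if that fails to compile into a fully rigorous proof, the fallback is to unwind the projections explicitly and check that they agree on the relevant image and coimage using $(\MM^{\dag})_{C_1 C_1}$ as an intermediate object.
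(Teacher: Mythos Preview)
Your proposal is correct and takes a genuinely different route from the paper's proof. Both arguments begin by observing (as the paper notes explicitly) that Lemma~\ref{lem:SchurPinv} makes the two claimed identities equivalent, so it suffices to prove one. You choose to prove the Schur-complement side first---the classical quotient (Crabtree--Haynsworth) formula, via direct $3\times 3$ block algebra---and then pull the pseudoinverse identity back through three applications of Lemma~\ref{lem:SchurPinv}. The paper goes the other way: it proves the pseudoinverse identity $(\MM^\dag[C_1])[C] = \MM^\dag[C]$ directly, by expanding the projected-restriction definition, invoking \autoref{lem:composeproj} to peel off the intermediate projections $\PP_{\SS_1}$, and then using the kernel characterization of \autoref{lem:schurkernel} to identify $\PP_{\TT} = \PP_{\SS}$ (since both $\ker(\TT)$ and $\ker(\SS)$ arise as the $C$-coordinate restriction of $\ker(\MM)$).

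Your route is shorter and more conceptual: once the classical quotient formula is granted, the pseudoinverse side drops out by formal manipulation of Lemma~\ref{lem:SchurPinv} and the involutivity $(\AA^\dag)^\dag = \AA$, entirely bypassing the projection bookkeeping you flagged as the main obstacle. The paper's route is more self-contained within its own projection machinery but correspondingly heavier. Both arguments verify the invertibility of $(\sc{\MM}{C_1})_{F_2 F_2}$ the same way (it is $\sc{\MM_{FF}}{F_2}$, nonsingular because $\MM_{FF}$ and $\MM_{F_1 F_1}$ are; this is the paper's Claim~\ref{clm:twostepschurinvertible}). Your reading of the second equation as a typo for $\MM^\dag[C]$ is correct and matches what the paper actually proves.
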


\subsection{Pseudo-Inverse of a Product}

Given a real matrix $\MM \in \rea^{m \times n}$
with kernel $\ker(\MM)$,
we let $\PP_{\MM}$ denote the orthogonal projection onto
its columns pace $\ker(\MM)^{\bot}$, and
$\QQ_{\MM} = \II_{n \times n} - \PP_{\MM}$ denote the
orthogonal projection onto $\ker(\MM)$.
Recall that such orthogonal projections are symmetric matrices.
Note also that $\MM \PP_{\MM} = \MM$, and so $\MM \QQ_{\MM} = \mzero$.
Similarly, we can show $\QQ_{\MM^{\top}} \MM  = \mzero$.

In this subsection, we prove the following helpful lemma that
characterizes the pseudo-inverse of a product.
\begin{lemma}
\label{lem:pinvasymproduct}
  Consider real matrices
  $\AA \in \rea^{m \times m}$,
  $\BB \in \rea^{m \times n}$,
  and $ \CC \in \rea^{n \times n}$,
  where $\AA$ and $\CC$ are invertible.
  Let $\MM = \AA \BB \CC$
  Then $\MM^{\dagger} = \PP_\MM \CC^{-1} \BB^{\dagger} \AA^{-1} \PP_{\MM^{\top}}$.
\end{lemma}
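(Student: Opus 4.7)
The plan is to verify that the candidate matrix $\XX \defeq \PP_{\MM^{\top}} \CC^{-1} \BB^{\dagger} \AA^{-1} \PP_\MM$ satisfies the four Moore--Penrose conditions, from which $\XX = \MM^\dagger$ follows by uniqueness. (Note that this is the order forced by dimension matching, since $\PP_\MM \in \rea^{m \times m}$ and $\PP_{\MM^\top} \in \rea^{n \times n}$; I read the statement in the paper as having these projections in the positions consistent with the product being $n \times m$.) The key ingredients are the defining identities $\BB \BB^\dagger \BB = \BB$ and $\BB^\dagger \BB \BB^\dagger = \BB^\dagger$, together with the invertibility of $\AA$ and $\CC$, and the two elementary facts $\MM \PP_{\MM^\top} = \MM$ and $\PP_\MM \MM = \MM$ (the first since $\PP_{\MM^\top}$ is the orthogonal projection onto $\ker(\MM)^\perp$, the second since $\PP_\MM$ fixes $\mathrm{col}(\MM)$).

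First I would introduce the auxiliary ``uncorrected'' inverse $\widetilde{\XX} \defeq \CC^{-1} \BB^{\dagger} \AA^{-1}$ and verify that it is a generalized inverse of $\MM$: $\MM \widetilde\XX \MM = \AA\BB\CC\CC^{-1}\BB^\dagger\AA^{-1}\AA\BB\CC = \AA \BB \BB^\dagger \BB \CC = \MM$, and similarly $\widetilde\XX \MM \widetilde\XX = \widetilde\XX$. However $\widetilde\XX$ is not in general the pseudo-inverse, because $\MM \widetilde\XX = \AA \BB \BB^\dagger \AA^{-1}$ and $\widetilde\XX \MM = \CC^{-1} \BB^\dagger \BB \CC$ need not be symmetric unless $\AA$ and $\CC$ are orthogonal.

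Next, I would use $\MM \PP_{\MM^\top} = \MM$ and $\PP_\MM \MM = \MM$ to absorb the projectors and obtain the first two Moore--Penrose conditions for $\XX = \PP_{\MM^\top} \widetilde\XX \PP_\MM$:
\[
\MM \XX \MM = \MM \widetilde\XX \MM = \MM, \qquad \XX \MM \XX = \PP_{\MM^\top} \widetilde\XX \MM \widetilde\XX \PP_\MM = \PP_{\MM^\top} \widetilde\XX \PP_\MM = \XX.
\]
The heart of the proof is the symmetry conditions, which I would establish by showing the stronger statements $\MM \XX = \PP_\MM$ and $\XX \MM = \PP_{\MM^\top}$ (both obviously symmetric). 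For the first, $\MM \XX = \MM \widetilde\XX \PP_\MM$, and the identity $\MM \widetilde\XX \MM = \MM$ says that $\MM \widetilde\XX$ acts as the identity on $\mathrm{col}(\MM)$, so composing on the right with the projection onto $\mathrm{col}(\MM)$ yields $\MM \widetilde\XX \PP_\MM = \PP_\MM$. For the second, $\MM \widetilde\XX \MM = \MM$ rewritten as $\MM(\widetilde\XX \MM - \II) = 0$ shows $\mathrm{col}(\widetilde\XX \MM - \II) \subseteq \ker(\MM) = \mathrm{col}(\MM^\top)^\perp$, so projecting by $\PP_{\MM^\top}$ on the left kills the correction and gives $\PP_{\MM^\top} \widetilde\XX \MM = \PP_{\MM^\top}$.

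The main obstacle, I expect, will be getting the two projector-collapse identities $\MM \widetilde\XX \PP_\MM = \PP_\MM$ and $\PP_{\MM^\top} \widetilde\XX \MM = \PP_{\MM^\top}$ phrased cleanly, since these are where the asymmetry of $\widetilde\XX \MM$ and $\MM \widetilde\XX$ is cancelled by the left/right projections; everything else is a short algebraic rearrangement using the four pseudo-inverse identities for $\BB$ and the invertibility of $\AA$ and $\CC$.
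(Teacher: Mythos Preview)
Your proposal is correct and follows essentially the same route as the paper: both verify the four Moore--Penrose conditions for the candidate matrix, absorbing the outer projections via $\MM\PP_{\ker(\MM)^\perp}=\MM$ and $\PP_{\ker(\MM^\top)^\perp}\MM=\MM$ to get conditions (1) and (2), and for (3) and (4) both establish the stronger identities that $\XX\MM$ and $\MM\XX$ equal the respective orthogonal projections. The only difference is cosmetic: for the symmetry conditions the paper proves an auxiliary claim $\PP_\MM\CC^{-1}\QQ_\BB=\mzero$ (equivalently $\CC^{-1}\ker(\BB)\subseteq\ker(\MM)$) and uses it to collapse $\PP_\MM\CC^{-1}\PP_\BB\CC\PP_\MM$ to $\PP_\MM$, whereas you argue directly from the generalized-inverse identity $\MM\widetilde\XX\MM=\MM$ that $\MM\widetilde\XX$ fixes $\mathrm{col}(\MM)$ and that $\mathrm{col}(\widetilde\XX\MM-\II)\subseteq\ker(\MM)$; these are two phrasings of the same kernel fact. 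Your reading of the projector placement is also consistent with the paper's (nonstandard) convention that $\PP_\MM$ denotes projection onto $\ker(\MM)^\perp$.
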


Before proving this lemma we recall a standard fact about
pseudo-inverses (e.g. see \cite{horn1990matrix}, 2nd
ed. p. 453).

\begin{fact}
\label{fac:pinvbysymtest}
  The pseudo-inverse of $\MM$, denoted by $\MM^{\dagger}$, is the unique operator satisfying 
  \begin{enumerate}
  \item  \label{enu:mcancel}
    $\MM = \MM \MM^{\dagger}\MM$.
  \item 
\label{enu:pinvcancel}
$\MM^{\dagger} = \MM^{\dagger}\MM \MM^{\dagger}$.
\item
\label{enu:pisym}
$(\MM^{\dagger} \MM)^{\top} = \MM^{\dagger} \MM = \PP_{\MM}$.
\item
\label{enu:pitrpsym}
$(\MM \MM^{\dagger})^{\top} = \MM \MM^{\dagger} = \PP_{\MM^{\top}}$.
  \end{enumerate}
\end{fact}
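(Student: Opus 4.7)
The plan is to prove existence via the singular value decomposition and then prove uniqueness via a direct algebraic manipulation using the four defining properties. This is the textbook proof of the Moore--Penrose characterization, so the bulk of the work is bookkeeping rather than ingenuity.

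For existence, I would fix a (thin) SVD $\MM = \mU \Sigma \VV^{\top}$, where $\mU \in \rea^{m \times r}$ and $\VV \in \rea^{n \times r}$ have orthonormal columns and $\Sigma \in \rea^{r \times r}$ is the diagonal matrix of nonzero singular values of $\MM$, with $r = \operatorname{rank}(\MM)$. Define the candidate $\MM^{\dagger} \defeq \VV \Sigma^{-1} \mU^{\top}$. Direct multiplication gives $\MM \MM^{\dagger} = \mU \mU^{\top}$ and $\MM^{\dagger} \MM = \VV \VV^{\top}$. Since $\mU$ has orthonormal columns spanning the column space of $\MM$, the matrix $\mU \mU^{\top}$ is the orthogonal projection onto that subspace, which by definition equals $\PP_{\MM^{\top}}$ in the paper's convention; similarly $\VV \VV^{\top} = \PP_{\MM}$. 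This directly gives properties \ref{enu:pisym} and \ref{enu:pitrpsym} (including symmetry). Properties \ref{enu:mcancel} and \ref{enu:pinvcancel} then follow immediately: $\MM \MM^{\dagger} \MM = \mU \mU^{\top} \mU \Sigma \VV^{\top} = \mU \Sigma \VV^{\top} = \MM$, and analogously $\MM^{\dagger} \MM \MM^{\dagger} = \MM^{\dagger}$.

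For uniqueness, suppose $\mvar{X}$ and $\mvar{Y}$ both satisfy \ref{enu:mcancel}--\ref{enu:pitrpsym}. I would chase the identities: starting from $\mvar{X} = \mvar{X} \MM \mvar{X}$ and applying the symmetry in \ref{enu:pitrpsym} twice gives
\[
\mvar{X} = \mvar{X} (\MM \mvar{X})^{\top} = \mvar{X} \mvar{X}^{\top} \MM^{\top} = \mvar{X} \mvar{X}^{\top} (\MM \mvar{Y} \MM)^{\top} = \mvar{X} (\MM \mvar{X})^{\top} (\MM \mvar{Y})^{\top} = \mvar{X} \MM \mvar{X} \MM \mvar{Y} = \mvar{X} \MM \mvar{Y},
\]
and the symmetric computation starting from $\mvar{Y} = \mvar{Y} \MM \mvar{Y}$ and using \ref{enu:pisym} yields $\mvar{Y} = \mvar{X} \MM \mvar{Y}$, so $\mvar{X} = \mvar{Y}$.

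I do not anticipate any real obstacle: the only mildly subtle point is matching the notation, since the paper writes $\PP_{\MM}$ for the projection onto $\ker(\MM)^{\bot}$ (the row space of $\MM$) and $\PP_{\MM^{\top}}$ for the projection onto the column space, so one must be careful that $\MM^{\dagger} \MM$ (which sits in $\rea^{n \times n}$) corresponds to $\PP_{\MM}$ and $\MM \MM^{\dagger}$ (which sits in $\rea^{m \times m}$) corresponds to $\PP_{\MM^{\top}}$, which is exactly as asserted. Since this fact is a standard textbook result (cited to Horn--Johnson in the excerpt), I would likely keep the proof to a short paragraph or simply cite it.
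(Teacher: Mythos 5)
Your proof is correct, and it is the standard SVD-existence plus Penrose-identity-chasing uniqueness argument. The paper does not actually prove this fact at all---it simply cites it to Horn and Johnson (p.~453 of the 2nd edition) as background, so your self-contained write-up goes beyond what the paper supplies; your closing remark that one could "simply cite it" is exactly what the authors do. One small note on the paper's notation: the paper's prose describes $\PP_{\MM}$ as a projection onto the "column space" but the formula it gives, $\ker(\MM)^{\bot}$, is the row space, and your reading (row space, so that $\MM^{\dagger}\MM = \PP_{\MM}$ and $\MM\MM^{\dagger} = \PP_{\MM^{\top}}$) is the one consistent with how the fact is stated and used.
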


To prove \autoref{lem:pinvasymproduct},
we also need a simple observation about the projection operations
related to $\MM$ and $\BB$.
\begin{claim}
\label{clm:projzero}
\noindent
  \begin{enumerate}
  \item 
    \label{enu:projzero}
    $ \PP_\MM \CC^{-1} \QQ_{\BB} = \mzero$.
  \item 
    \label{enu:projzerotrp}
    $ \QQ_{\BB^{\top}} \AA^{-1}\PP_{\MM^{\top}}= \mzero$.
  \end{enumerate}
\end{claim}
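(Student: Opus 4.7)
The plan is to prove both parts by identifying the kernel and column space of $\MM = \AA\BB\CC$ in terms of those of $\BB$, and then observing that the relevant projections compose to zero. Both facts are essentially statements about how invertibility of $\AA$ and $\CC$ lets us transport $\ker(\BB)$ and $\ker(\BB^\top)$ through $\CC^{-1}$ and $\AA^{-1}$ respectively.

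For Part~\ref{enu:projzero}, first I would verify that $\ker(\MM) = \CC^{-1}\ker(\BB)$. This is immediate: since $\AA$ is invertible, $\AA\BB\CC x = 0 \iff \BB\CC x = 0 \iff \CC x \in \ker(\BB) \iff x \in \CC^{-1}\ker(\BB)$. Now for any vector $v$, the image of $\QQ_{\BB}v$ lies in $\ker(\BB)$, so $\CC^{-1}\QQ_{\BB}v \in \CC^{-1}\ker(\BB) = \ker(\MM)$. Since $\PP_{\MM}$ is the orthogonal projection onto $\ker(\MM)^{\perp}$, it annihilates everything in $\ker(\MM)$, and we conclude $\PP_{\MM}\CC^{-1}\QQ_{\BB}v = \mzero$ for every $v$.

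For Part~\ref{enu:projzerotrp}, I would use the dual characterization. The column space of $\MM$ satisfies $\mathrm{col}(\MM) = \AA\BB\CC\,\rea^n = \AA\BB\,\rea^n = \AA\cdot\mathrm{col}(\BB)$, using invertibility of $\CC$ for the second equality. Because $\mathrm{col}(\BB) = \ker(\BB^\top)^{\perp}$ and $\PP_{\MM^{\top}}$ projects onto $\mathrm{col}(\MM)$, for any $v$ we have $\PP_{\MM^\top} v \in \AA\cdot\ker(\BB^\top)^\perp$, hence $\AA^{-1}\PP_{\MM^\top}v \in \ker(\BB^\top)^\perp$. Then $\QQ_{\BB^\top}$, which projects onto $\ker(\BB^\top)$, vanishes on this vector, giving the claim.

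Both parts are short, and I do not anticipate a real obstacle; the only minor care required is keeping track of which ambient space each projection lives in (so that the dimensions match: $\PP_{\MM}, \QQ_{\BB} \in \rea^{n \times n}$, while $\PP_{\MM^\top}, \QQ_{\BB^\top} \in \rea^{m \times m}$), and remembering that $\PP_{\MM^\top}$ is orthogonal projection onto $\mathrm{col}(\MM) = \ker(\MM^\top)^\perp$ rather than onto $\mathrm{col}(\MM^\top)$.
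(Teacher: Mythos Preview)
Your proof is correct. The approach differs from the paper's: you argue geometrically by first identifying $\ker(\MM) = \CC^{-1}\ker(\BB)$ and $\mathrm{col}(\MM) = \AA\cdot\mathrm{col}(\BB)$, then observing that the relevant vectors land in the kernel (resp.\ image) before the final projection kills them. The paper instead proceeds purely algebraically, substituting $\PP_{\MM} = \MM^{\dagger}\MM$ and $\PP_{\MM^{\top}} = \MM\MM^{\dagger}$ from Fact~\ref{fac:pinvbysymtest} and then cancelling via $\BB\QQ_{\BB} = \mzero$ and $\QQ_{\BB^{\top}}\BB = \mzero$. Your version is arguably more transparent about \emph{why} the claim holds (invertible factors transport kernels), while the paper's one-line manipulations are slightly more concise and avoid ever writing down the kernel explicitly. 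Either way the content is the same, and your dimension bookkeeping is accurate.
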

\begin{proof}
  \[ 
\PP_\MM \CC^{-1} \QQ_{\BB} = \MM^{\dagger} \MM \CC^{-1} \QQ_{\BB} 
= \MM^{\dagger} \AA \BB \CC \CC^{-1} \QQ_{\BB} = \MM^{\dagger} \AA \BB
\QQ_{\BB} = \MM^{\dagger} \AA \mzero = \mzero.
\]
and 

  \[ 
\QQ_{\BB^{\top}} \AA^{-1}\PP_{\MM^{\top}}
=
\QQ_{\BB^{\top}} \AA^{-1} \AA \BB \CC  \MM^{\dagger} 
=
\QQ_{\BB^{\top}} \BB \CC  \MM^{\dagger} 
=
\mzero \CC  \MM^{\dagger} 
=
\mzero.
\]
\end{proof}

\begin{proof}[Proof of \autoref{lem:pinvasymproduct}.]
Let $\NN = \PP_\MM \CC^{-1} \BB^{\dagger} \AA^{-1} \PP_{\MM^{\top}}$.
We want to show $\NN = \MM^{\dagger}$.
We prove this by verifying the four conditions of Fact~\ref{fac:pinvbysymtest}.
First we verify Condition~\ref{enu:mcancel}:
\[
\MM \NN \MM = 
\MM
\PP_\MM \CC^{-1} \BB^{\dagger} \AA^{-1}
\PP_{\MM^{\top}} \MM
=
\MM
\CC^{-1} \BB^{\dagger} \AA^{-1}
\MM
=
\AA \BB \CC
\CC^{-1} \BB^{\dagger} \AA^{-1}
\AA \BB \CC
=\AA \BB \BB^{\dagger} \BB \CC
=\MM
.
\] 
Second, we similarly verify Condition~\ref{enu:pinvcancel}:
\[
\NN \MM \NN= 
\PP_\MM \CC^{-1} \BB^{\dagger} \AA^{-1}
\PP_{\MM^{\top}}
\MM
\PP_\MM \CC^{-1} \BB^{\dagger} \AA^{-1}
\PP_{\MM^{\top}}
=
\PP_\MM
\CC^{-1} \BB^{\dagger} \AA^{-1}
\PP_{\MM^{\top}}
=
\NN
.
\] 
Third, we verify Condition~\ref{enu:pisym}.
First, observe that 
\[
\NN \MM 
=
\NN \MM \PP_\MM
=
\PP_\MM \CC^{-1} \PP_{\BB} \CC \PP_\MM
=
\PP_\MM \CC^{-1} (\PP_{\BB} + \QQ_{\BB}) \CC \PP_\MM
= 
\PP_\MM
,
\]
where to obtain the last equality we used Claim~\ref{clm:projzero}, Part~\ref{enu:projzero}.
Hence $(\NN \MM)^{\top} =\PP_\MM^{\top} = \PP_\MM = \NN \MM$, which
establishes the condition.
Condition~\ref{enu:pitrpsym} can be verified similarly.
\end{proof}

\subsection{Pseudo-Inverses and Schur Complements}
We now utilize the above characterization of projections
to prove the full characterization of Schur complements
as pseudoinverses as stated in \autoref{lem:SchurPinv}
Throughout the rest of this section, we will use
$C$ and $F$ to denote the partition of variables:
\[
\MM =
\left[\begin{array}{cc}
 \MM_{FF} & \MM_{FC}\\
 \MM_{CF} & \MM_{CC}
\end{array}
\right].
\]
and furthermore assume $\MM_{FF}$ is invertible.
Note that this assumption allows us to invoke
$\MM_{FF}^{-1}$, and can write the Schur complement as:
\[
\sc{\MM}{C} = \MM_{CC} - \MM_{CF} \MM_{FF}^{-1}\MM_{FC}.
\]

\begin{lemma}
\label{lem:schurkernel}
Under the assumptions at the start of this subsection,
\[
\ker(\MM) = \setof{
  \begin{pmatrix}
    -\MM_{FF}^{-1} \MM_{FC} \bb \\ 
    \bb
  \end{pmatrix}
\mid
\sc{\MM}{C} \bb = \mzero}
\]
\end{lemma}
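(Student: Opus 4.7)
The plan is to derive this from the block LU factorization of $\MM$, which is well-defined because $\MM_{FF}$ is invertible. Explicitly, I will use
\[
\MM = \begin{pmatrix} \II & \mzero \\ \MM_{CF}\MM_{FF}^{-1} & \II \end{pmatrix}
\begin{pmatrix} \MM_{FF} & \mzero \\ \mzero & \sc{\MM}{C} \end{pmatrix}
\begin{pmatrix} \II & \MM_{FF}^{-1}\MM_{FC} \\ \mzero & \II \end{pmatrix}
\]
which appears already in the exposition at the beginning of Section~\ref{sec:algorithmOverview}. The two outer triangular matrices have inverses given by flipping the sign of the off-diagonal block, so they are nonsingular; hence $\ker(\MM)$ is the preimage under the upper triangular factor of the kernel of the middle block-diagonal factor.

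First I would write a general $\vv \in \R^{[n]}$ as $\vv = \begin{pmatrix} \aa \\ \bb \end{pmatrix}$ with $\aa \in \R^F$ and $\bb \in \R^C$. Applying the upper triangular factor gives $\begin{pmatrix} \aa + \MM_{FF}^{-1}\MM_{FC}\bb \\ \bb \end{pmatrix}$, and then the block-diagonal factor sends this to $\begin{pmatrix} \MM_{FF}(\aa + \MM_{FF}^{-1}\MM_{FC}\bb) \\ \sc{\MM}{C}\bb \end{pmatrix}$. Since the lower triangular factor is invertible, $\MM \vv = \mzero$ if and only if this last vector is zero. By invertibility of $\MM_{FF}$, the $F$-block being zero is equivalent to $\aa = -\MM_{FF}^{-1}\MM_{FC}\bb$, and the $C$-block being zero is exactly $\sc{\MM}{C}\bb = \mzero$. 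Combining, $\vv \in \ker(\MM)$ iff $\sc{\MM}{C}\bb = \mzero$ and $\aa = -\MM_{FF}^{-1}\MM_{FC}\bb$, which is exactly the parametrization on the right-hand side.

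The only subtlety is making sure the block factorization is genuinely valid without further assumptions on $\MM$; but since the only inverse appearing is $\MM_{FF}^{-1}$, and that exists by hypothesis, the identity is a purely algebraic one and no positive-semidefiniteness or symmetry is needed. I do not anticipate any real obstacle here — this is essentially bookkeeping around the block LU identity — so the proof will be short (a few lines) once the factorization is written down.
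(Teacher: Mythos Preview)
Your proposal is correct and is essentially the same as the paper's proof: the paper writes out the two block equations $\MM_{FF}\aa+\MM_{FC}\bb=\mzero$ and $\MM_{CF}\aa+\MM_{CC}\bb=\mzero$, solves the first for $\aa$ using invertibility of $\MM_{FF}$, and substitutes into the second to get $\sc{\MM}{C}\bb=\mzero$. Your block LU factorization is just a repackaging of these same two equations.
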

\begin{proof}
Consider a vector $[\aa; \bb]$ in the null space of $\MM$
where $\aa$ is on the $F$ coordinates and $\bb$
is on the $C$ coordinates.
Invoking the block-wise characterization of $\MM$ gives:
\begin{align*}
\MM_{FF} \aa + \MM_{FC}\bb &= \mzero,\\
\MM_{CF} \aa + \MM_{CC}\bb &= \mzero.
\end{align*}
Since $\MM_{FF}$ is invertible, the first condition is
equivalent to
\[
\aa  = -\MM_{FF}^{-1}\MM_{FC}\bb
\]
and substituting this,
as well as the characterization of Schur complement,
into the second condition gives
\[
\sc{\MM}{C} \bb
= \left(-\MM_{CF} \MM_{FF}^{-1}\MM_{FC}
  + \MM_{CC}\right)\bb
= \MM_{CF} \aa + \MM_{CC}\bb = \zero
.
\]
\end{proof}

\begin{lemma}
\label{lem:composeproj}
Consider the assumptions from the start of this subsection.
For convenience of notation, let $\SS = \sc{\MM}{C}$,
and $ \mzero_{FF} $, $\mzero_{FC}$ and $\mzero_{CF}$
denote the $|F| \times |F|$, $|F| \times |C|$, and
$|C| \times |F|$ all-zeros matrices respectively.
Then 
  \[
    \PP_{\MM} 
\left[\begin{array}{cc}
 \mzero_{FF} & \mzero_{FC}\\
\mzero_{CF} & \PP_{\SS}
\end{array}\right]
=
\left[\begin{array}{cc}
 \mzero_{FF} & \mzero_{FC}\\
\mzero_{CF} & \PP_{\SS}
\end{array}\right]
    \]
\end{lemma}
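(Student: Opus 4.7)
The plan is to show that the image of the block matrix $\NN \defeq \begin{bmatrix} \mzero_{FF} & \mzero_{FC}\\ \mzero_{CF} & \PP_{\SS} \end{bmatrix}$ is entirely contained in $\ker(\MM)^{\bot}$; since $\PP_{\MM}$ is by definition the orthogonal projection onto $\ker(\MM)^{\bot}$, it then acts as the identity on this image and the claimed equality follows.

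To establish the containment, first note that for any input vector $\begin{pmatrix} \vv_{F} \\ \vv_{C} \end{pmatrix}$, one has $\NN \begin{pmatrix} \vv_{F} \\ \vv_{C} \end{pmatrix} = \begin{pmatrix} \zero \\ \PP_{\SS} \vv_{C} \end{pmatrix}$, and by definition $\PP_{\SS} \vv_{C} \in \ker(\SS)^{\bot}$. Thus it suffices to show that any vector of the form $\begin{pmatrix} \zero \\ \bb \end{pmatrix}$ with $\bb \perp \ker(\SS)$ lies in $\ker(\MM)^{\bot}$.

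Here is where \autoref{lem:schurkernel} enters: it characterizes $\ker(\MM)$ precisely as the set of vectors $\begin{pmatrix} -\MM_{FF}^{-1}\MM_{FC} \bb' \\ \bb' \end{pmatrix}$ with $\bb' \in \ker(\SS)$. Taking the standard inner product,
\[
\left\langle \begin{pmatrix} \zero \\ \bb \end{pmatrix}, \begin{pmatrix} -\MM_{FF}^{-1}\MM_{FC} \bb' \\ \bb' \end{pmatrix} \right\rangle = \bb^{\top}\bb' = 0,
\]
since $\bb \perp \ker(\SS)$ and $\bb' \in \ker(\SS)$. This verifies $\begin{pmatrix} \zero \\ \bb \end{pmatrix} \in \ker(\MM)^{\bot}$, so $\PP_{\MM}$ fixes it, completing the argument.

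I do not expect any real obstacle; the lemma is essentially a bookkeeping consequence of the kernel description from \autoref{lem:schurkernel}. The only subtlety is to remember that $\PP_{\MM}$ is defined here as the projection onto $\ker(\MM)^{\bot}$ (the row space of $\MM$), not onto the column space, so the required containment is $\im(\NN) \subseteq \ker(\MM)^{\bot}$ rather than $\im(\NN) \subseteq \im(\MM)$.
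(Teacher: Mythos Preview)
Your proposal is correct and follows essentially the same route as the paper: both show that the image of the block matrix lies in $\ker(\MM)^{\bot}$ by invoking the kernel description of \autoref{lem:schurkernel} and checking the inner product $\bb^{\top}\PP_{\SS}\xx_{C}=0$ for $\bb\in\ker(\SS)$. Your closing remark about $\PP_{\MM}$ being the projection onto $\ker(\MM)^{\bot}$ (the row space) is also on point and matches how the paper actually uses it.
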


\begin{proof}
    
Recall that the column space of $\MM$ can also be characterized
as the vectors $\yy$ orthogonal to the null space of $\MM$,
or formally
\[
\vv^{\top} \yy = \mzero
\qquad
\text{ for all $\vv$ s.t. $\MM \vv = \vzero$}.
\]
By \autoref{lem:schurkernel}, such vectors $\vv$
can be written as
\[
\vv =   \begin{pmatrix}
    -\MM_{FF}^{-1} \MM_{FC} \bb \\ 
    \bb
  \end{pmatrix}
  \]
where $\sc{\MM}{C} \bb = \vzero$.
  
Now consider a vector $\xx$ whose partition into coordinates
in $F$ and $C$ we denote as $[\xx_{F}; \xx_{C}]$.
We have
  \[
    \PP_{\MM} 
\left[\begin{array}{cc}
 \mzero_{FF} & \mzero_{FC}\\
\mzero_{CF} & \PP_{\SS}
\end{array}\right]
  \begin{pmatrix}
    \xx_{F} \\ 
    \xx_{C}
  \end{pmatrix}
=
    \PP_{\MM} 
  \begin{pmatrix}
    \vzero \\ 
    \PP_{\SS} \xx_{C}
  \end{pmatrix},
  \]
while by definition of $\PP_{\SS}$,
\[
\bb^{\top}\PP_{\SS} \xx_{C}=\vzero
\]
for any $\bb$ in the null space of $\sc{\MM}{C}$.
So for any  $\vv$ s.t. $\MM \vv = \vzero$,
we have 
\[
\vv^{\top} \begin{pmatrix}
    \vzero \\ 
    \PP_{\SS} \xx_{C}
  \end{pmatrix}
=
\bb^{\top}  \PP_{\SS} \xx_{C} = \vzero.
\]
Thus $\PP_{\MM}   \begin{pmatrix}
    \vzero \\ 
    \PP_{\SS} \xx_{C}
  \end{pmatrix} = \begin{pmatrix}
    \vzero \\ 
    \PP_{\SS} \xx_{C} \end{pmatrix}$.

This means for any $\xx$,
  \[
    \PP_{\MM} 
\left[\begin{array}{cc}
 \mzero_{FF} & \mzero_{FC}\\
\mzero_{CF} & \PP_{\SS}
\end{array}\right]
\xx
=
    \PP_{\MM} 
  \begin{pmatrix}
    \vzero \\ 
    \PP_{\SS} \xx_{C}
  \end{pmatrix}
=
  \begin{pmatrix}
    \vzero \\ 
    \PP_{\SS} \xx_{C}
  \end{pmatrix}
=
\left[\begin{array}{cc}
 \mzero_{FF} & \mzero_{FC}\\
\mzero_{CF} & \PP_{\SS}
\end{array}\right]
\xx
,
    \]
and the claim follows.

\end{proof}

\begin{proof}(of \autoref{lem:SchurPinv})
Let  $\SS = \sc{\MM}{C}$.
  Recall the standard factorization
  \begin{align}
  \MM =
  \begin{pmatrix}
    \II & \mzero \\
    \MM_{CF}\MM_{FF}^{-1} & \II
  \end{pmatrix}
  \begin{pmatrix}
  \MM_{FF} & \mzero \\
  \mzero & \SS
  \end{pmatrix}
  \begin{pmatrix}
    \II & \MM_{FF}^{-1}\MM_{FC} \\
    \mzero & \II
  \end{pmatrix}
.
\end{align}
By \autoref{lem:pinvasymproduct}
\begin{align*}
  \MM^{\dag}
 &=
\PP_{\MM}
   \begin{pmatrix}
    \II & \MM_{FF}^{-1}\MM_{FC} \\
    \mzero & \II
  \end{pmatrix}^{-1}
 \begin{pmatrix}
  \MM_{FF}^{-1} & \mzero \\
  \mzero & \SS^{\dag}
  \end{pmatrix}
  \begin{pmatrix}
    \II & \mzero \\
    \MM_{CF}\MM_{FF}^{-1} & \II
  \end{pmatrix}^{-1}
\PP_{\MM^{\top}}
\end{align*}
One can show (e.g. simple multiplication or by applying the formula for blockwise inversion)  that 
\begin{align*}
  \begin{pmatrix}
    \II & \mzero \\
    \BB & \II
  \end{pmatrix}^{-1}
=
\begin{pmatrix}
\II
&
\mzero
\\
-\BB
&
\II
\end{pmatrix} 
\text{ and } 
  \begin{pmatrix}
    \II & \BB \\
    \mzero & \II
  \end{pmatrix}^{-1}
=
\begin{pmatrix}
\II
&
-\BB
\\
\mzero
&
\II
\end{pmatrix} 
\end{align*}
So
\begin{align*}
\MM^{\dag} &=
\PP_{\MM}
\begin{pmatrix}
\II
&
-\MM_{FF}^{-1}\MM_{FC} 
\\
\mzero
&
\II
\end{pmatrix}
  \begin{pmatrix}
  \MM_{FF}^{-1} & \mzero \\
  \mzero & \SS^{\dag}
  \end{pmatrix}
\begin{pmatrix}
\II
&
\mzero
\\
-\MM_{CF}\MM_{FF}^{-1}
&
\II
\end{pmatrix} 
\PP_{\MM^{\top}}
.
\end{align*}
Now, by applying \autoref{lem:composeproj} to get rid of $\PP_{\MM}$
and $\PP_{\MM^{\top}}$, we get
\begin{align}
\label{eq:productschurres}
&
\left[\begin{array}{cc}
 \mzero_{FF} & \mzero_{CF}\\
\mzero_{CF} & \PP_{\SS}
\end{array}\right] 
\MM^{\dag} 
\left[\begin{array}{cc}
 \mzero_{FF} & \mzero_{FC}\\
\mzero_{CF} & \PP_{\SS^{\top}}
\end{array}\right]
\\
\nonumber
&=
\left[\begin{array}{cc}
 \mzero_{FF} & \mzero_{FC}\\
\mzero_{CF} & \PP_{\SS}
\end{array}\right] 
\begin{pmatrix}
\II
&
-\MM_{FF}^{-1}\MM_{FC} 
\\
\mzero
&
\II
\end{pmatrix}
  \begin{pmatrix}
  \MM_{FF}^{-1} & \mzero \\
  \mzero & \SS^{\dag}
  \end{pmatrix}
\begin{pmatrix}
\II
&
\mzero
\\
-\MM_{CF}\MM_{FF}^{-1}
&
\II
\end{pmatrix} 
\left[\begin{array}{cc}
 \mzero_{FF} & \mzero_{FC}\\
\mzero_{CF} & \PP_{\SS^{\top}}
\end{array}\right]
\\
\nonumber
&=
\left[\begin{array}{cc}
 \mzero_{FF} & \mzero_{FC}\\
\mzero_{CF} & \PP_{\SS} \SS^{\dag}\PP_{\SS^{\top}}
\end{array}\right]
\\
\nonumber
&=
\left[\begin{array}{cc}
 \mzero_{FF} & \mzero_{FC}\\
\mzero_{CF} & \SS^{\dag}
\end{array}\right]
,
\end{align}
where in the last equality, we used the fact that
$\SS^{\dag}$ has the same kernel as $\SS^{\top}$,
and $(\SS^{\dag})^{\top}$ has the same kernel as $\SS$
(which are in turn consequences of Fact~\ref{fac:pinvbysymtest}).
But, directly computing the matrix product also tells us that 
\begin{align}
\label{eq:productcoordres}
&
\left[\begin{array}{cc}
 \mzero_{FF} & \mzero_{FC}\\
\mzero_{CF} & \PP_{\SS}
\end{array}\right] 
\MM^{\dag} 
\left[\begin{array}{cc}
 \mzero_{FF} & \mzero_{FC}\\
\mzero_{CF} & \PP_{\SS^{\top}}
\end{array}\right]
\\
\nonumber
&=
\left[\begin{array}{cc}
 \mzero_{FC} & \mzero_{FC}\\
\mzero_{CF} & \PP_{\SS} (\MM^{\dag})_{CC} \PP_{\SS^{\top}}
\end{array}\right].
\end{align}
By comparing Equations~\eqref{eq:productschurres}
and~\eqref{eq:productcoordres}, we arrive at the desired conclusion,
after noting that by \autoref{def:projcoordres}, we have
$\MM^{\dagger}[C] 
=\PP_{\SS} (\MM^{\dag})_{CC} \PP_{\SS^{\top}}$,
and $\AA = \BB$ if and only if $\AA^{\dag} = \BB^{\dag}$.
\end{proof}

\begin{claim}
\label{clm:twostepschurinvertible}
Consider any $\MM \in \rea^{[n] \times [n]}$,
and $F,C$ a partition of $[n]$,  
where  $\MM_{FF}$ is invertible.
Let $F_1,F_2$ be a partition of $F$, and let $C_1 = [n] \setminus F_1$.
Suppose $\MM_{F_1 F_1}$ is invertible.
Then $(\sc{\MM}{C_1})_{F_2 F_2}$ is invertible.
\end{claim}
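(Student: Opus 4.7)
The plan is to observe that the $F_2 \times F_2$ block of the Schur complement $\sc{\MM}{C_1}$ is itself a Schur complement of the invertible matrix $\MM_{FF}$, and then invoke the standard determinantal identity for Schur complements of invertible matrices.

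First I would write out $C_1 = F_2 \cup C$ (disjoint union) and compute directly from the definition of Schur complement that, for any $i,j \in F_2 \subseteq C_1$,
\[
(\sc{\MM}{C_1})_{ij} = \MM_{ij} - \MM_{i F_1} \MM_{F_1 F_1}^{-1} \MM_{F_1 j},
\]
since the only indices being eliminated that lie in $F_2$ are those in $F_1$. Collecting this into a matrix identity gives
\[
(\sc{\MM}{C_1})_{F_2 F_2} \;=\; \MM_{F_2 F_2} - \MM_{F_2 F_1} \MM_{F_1 F_1}^{-1} \MM_{F_1 F_2} \;=\; \sc{\MM_{FF}}{F_2},
\]
i.e.\ the $F_2 \times F_2$ principal block of $\sc{\MM}{C_1}$ is precisely the Schur complement obtained by eliminating $F_1$ inside the submatrix $\MM_{FF}$.

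Next, since $\MM_{F_1 F_1}$ is invertible, I would use the block LU factorization of $\MM_{FF}$ with respect to the partition $F = F_1 \cup F_2$:
\[
\MM_{FF}
=
\begin{pmatrix} \II & \mzero \\ \MM_{F_2 F_1}\MM_{F_1 F_1}^{-1} & \II \end{pmatrix}
\begin{pmatrix} \MM_{F_1 F_1} & \mzero \\ \mzero & \sc{\MM_{FF}}{F_2} \end{pmatrix}
\begin{pmatrix} \II & \MM_{F_1 F_1}^{-1}\MM_{F_1 F_2} \\ \mzero & \II \end{pmatrix}.
\]
Taking determinants yields $\det(\MM_{FF}) = \det(\MM_{F_1 F_1}) \cdot \det(\sc{\MM_{FF}}{F_2})$. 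Since both $\MM_{FF}$ and $\MM_{F_1 F_1}$ are invertible by hypothesis, the left side and the first factor on the right are nonzero, so $\det(\sc{\MM_{FF}}{F_2}) \neq 0$. Combining with the identification above, $(\sc{\MM}{C_1})_{F_2 F_2} = \sc{\MM_{FF}}{F_2}$ is invertible, which is the desired conclusion.

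There is no real obstacle here; the statement is essentially the classical ``quotient formula'' for Schur complements restricted to the invertible setting of the principal block $\MM_{FF}$. The only thing to be careful about is that Schur complements in this paper are defined as $n \times n$ zero-padded matrices (see the preliminaries), so when I write $(\sc{\MM}{C_1})_{F_2 F_2}$ I mean the principal submatrix on the actual $F_2$ coordinates (where the zero padding on $F_1$ plays no role), making the identification with $\sc{\MM_{FF}}{F_2}$ unambiguous.
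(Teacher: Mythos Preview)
Your proof is correct and essentially identical to the paper's own argument: both identify $(\sc{\MM}{C_1})_{F_2 F_2} = \sc{\MM_{FF}}{F_2}$ by direct computation and then use the block LU factorization of $\MM_{FF}$ to conclude invertibility. The only cosmetic difference is that you take determinants while the paper phrases the last step as a rank argument.
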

\begin{proof}
The key observation we need is that 
\[
(\sc{\MM}{C_1})_{F_2 F_2} = 
\sc{\MM_{FF}}{F_2}.
\]
This holds because 
\[
(\sc{\MM}{C_1})_{F_2 F_2} 
=(\MM_{C_1 C_1} - \MM_{C_1 F_1 } \MM_{F_1 F_1 }^{-1} \MM_{F_1 C_1
})_{F_2 F_2} 
=
\MM_{F_2 F_2} - \MM_{F_2 F_1 } \MM_{F_1 F_1 }^{-1} \MM_{F_1 F_2}
=
\sc{\MM_{FF}}{F_2}
.
\]
By a standard factorization, we have
\[
  \MM_{FF} =
  \begin{pmatrix}
    \II & \mzero \\
    \MM_{F_2F_1}\MM_{F_1F_1}^{-1} & \II
  \end{pmatrix}
  \begin{pmatrix}
  \MM_{F_1F_1} & \mzero \\
  \mzero & \sc{\MM_{FF}}{F_2}
  \end{pmatrix}
  \begin{pmatrix}
    \II & \MM_{F_1F_1}^{-1}\MM_{F_1F_2} \\
    \mzero & \II
  \end{pmatrix}
,
\]
from which we conclude that unless both $\MM_{F_1F_1}$ and
$\sc{\MM_{FF}}{F_2}$ are full rank, $\MM_{FF}$ is not full rank.
As $\MM_{FF}$ is invertible and hence full rank, we conclude
$\sc{\MM_{FF}}{F_2}$ is full rank and hence invertible.
\end{proof}

\begin{proof}[Proof of \autoref{lem:SchurPinvCompose}.]
Note that by \autoref{lem:SchurPinv}, 
\[
 \sc{ \sc{\MM}{C_1} }{C} =  \sc{\MM}{C}
\]
is equivalent to  
\[
\left(\MM^\dag[C_1]\right)\left[C\right]
=
\left(\MM^\dag\left[C\right]\right),
\]
so it suffices to show the latter.
We assume that $\MM_{FF}$ and $\MM_{F_1F_1}$ are both invertible. 
Hence by Claim~\ref{clm:twostepschurinvertible}, 
$(\sc{\MM}{C_1})_{F_2 F_2}$ is invertible, which ensure the Schur
complement $\sc{ \sc{\MM}{C_1} }{C}$ is well-defined.
Let $\SS = \sc{\MM}{C}$ and  $\SS_1 = \sc{\MM}{C_1}$, and  $\TT =
\sc{\sc{\MM}{C_1}}{C} =\sc{\SS_1}{C} $.
Next we observe that
\begin{align*}
(\MM^\dag[C_1])\left[C\right]= \SS_1^{\dag} \left[C\right]
=
\PP_{\TT} (\SS_1^{\dag})_{C C}\PP_{\TT^{\top}}
&=
\PP_{\TT} (
\PP_{\SS_1} (\MM^{\dag})_{C_1 C_1} \PP_{\SS_1^{\top}}
)_{C C}\PP_{\TT^{\top}}
\\
&=
\left(
\begin{pmatrix}
 \mzero_{F_2 F_2} & \mzero_{F_2 C}\\
\mzero_{C F_2} & \PP_{\TT}
\end{pmatrix}
\PP_{\SS_1} (\MM^{\dag})_{C_1 C_1} \PP_{\SS_1^{\top}}
\begin{pmatrix}
 \mzero_{F_2 F_2} & \mzero_{F_2 C}\\
\mzero_{C F_2} & \PP_{\TT^{\top}}
\end{pmatrix}
\right)_{C C}
\end{align*}
But, by \autoref{lem:composeproj}, 
\[
\PP_{\SS_1^{\top}}
\begin{pmatrix}
 \mzero_{F_2 F_2} & \mzero_{F_2 C}\\
\mzero_{C F_2} & \PP_{\TT^{\top}}
\end{pmatrix}
=
\begin{pmatrix}
 \mzero_{F_2 F_2} & \mzero_{F_2 C}\\
\mzero_{C F_2} & \PP_{\TT^{\top}}
\end{pmatrix},
\text{ and } 
\PP_{\SS_1}
\begin{pmatrix}
 \mzero_{F_2 F_2} & \mzero_{F_2 C}\\
\mzero_{C F_2} & \PP_{\TT}
\end{pmatrix}
=
\begin{pmatrix}
 \mzero_{F_2 F_2} & \mzero_{F_2 C}\\
\mzero_{C F_2} & \PP_{\TT}
\end{pmatrix}
.
\]
So
\begin{align*}
(\MM^\dag[C_1])\left[C\right]
&=
\left(
\begin{pmatrix}
 \mzero_{F_2 F_2} & \mzero_{F_2 C}\\
\mzero_{C F_2} & \PP_{\TT}
\end{pmatrix}
(\MM^{\dag})_{C_1 C_1} 
\begin{pmatrix}
 \mzero_{F_2 F_2} & \mzero_{F_2 C}\\
\mzero_{C F_2} & \PP_{\TT^{\top}}
\end{pmatrix}
\right)_{C C}
&=
\PP_{\TT} (\MM^{\dag})_{C C} \PP_{\TT^{\top}}
\end{align*}
We can show by \autoref{lem:schurkernel}, that 
$\PP_{\TT} = \PP_{\SS}$, 
because $\ker(\TT) = \ker(\SS)$ as both kernel arise as coordinate
restrictions of $\ker(\MM)$.
Formally $\ker(\TT) = \setof{ \bb_{C} : \bb \in \ker(\SS_1) }$
, and $\ker(\SS_1)= \setof{ \bb_{C_1} : \bb \in \ker(\MM) }$, so
$\ker(\TT) = \setof{ \bb_{C} : \bb \in \ker(\MM) } = \ker(\SS)$.
Similarly, we get $\PP_{\TT^{\top}} = \PP_{\SS^{\top}}$.
Thus 
$(\MM^\dag[C_1])\left[C\right] = \PP_{\SS} (\MM^{\dag})_{C C}
\PP_{\SS^{\top}} = \MM^\dag[C]$.
\end{proof}


\end{document}